%% file: main_SPM.tex
\pdfoutput=1
\documentclass[runningheads]{llncs}

\usepackage{amsmath,amssymb}
\usepackage{xspace}
\usepackage{xcolor}
\usepackage{textcomp} %
\usepackage[most]{tcolorbox}
\usepackage{enumitem}
\usepackage{booktabs}

\usepackage{thmtools}
\usepackage{thm-restate}
\usepackage[nospace]{cite}
\usepackage{hyperref}
\usepackage{ulem}
\usepackage[expansion=false]{microtype}

\setlist{nosep, leftmargin=*}
\setlist{nosep, leftmargin=*}

\newcommand{\ia}{\textit{i}}
\newcommand{\ib}{\textit{ii}}
\newcommand{\ic}{\textit{iii}}

\newcommand{\calA}{\mathcal{A}}

\newcommand{\calF}{\mathcal{F}}

\newif\ifsubmitting
\submittingtrue
\ifsubmitting
  \newcommand{\udi}[1]{}
  \newcommand{\ohad}[1]{}
  \newcommand{\idit}[1]{}
\else
  \newcommand{\udi}[1]{\textcolor{olive}{[Udi: #1]}}
  \newcommand{\ohad}[1]{\textcolor{blue}{\textbf{[Ohad:} #1\textbf{]}}}
  \newcommand{\idit}[1]{\textcolor{teal}{[Idit: #1]}}
\fi

\newcommand{\temph}[1]{\textbf{#1}}

\newcommand{\mypara}[1]{\smallskip\noindent\textbf{#1.}}

\newif\iffull
\fulltrue
\newcommand{\full}[2]{\iffull#1\else#2\fi}
\newcommand{\fullversion}{the full version~\cite{eitan2026securing}}

\iffull\else
\makeatletter
\g@addto@macro\normalsize{%
  \abovedisplayskip 1\p@ \@plus 1\p@ \@minus 1\p@
  \belowdisplayskip \abovedisplayskip
  \abovedisplayshortskip 1\p@ \@plus 1\p@
  \belowdisplayshortskip 3\p@ \@plus 1\p@ \@minus 1\p@}
\def\@spthm#1#2#3#4{\topsep 1\p@ \@plus 1\p@ \@minus 1\p@
\refstepcounter{#1}%
\@ifnextchar[{\@spythm{#1}{#2}{#3}{#4}}{\@spxthm{#1}{#2}{#3}{#4}}}
\def\@Thm#1#2#3{\topsep 1\p@ \@plus 1\p@ \@minus 1\p@
\@ifnextchar[{\@Ythm{#1}{#2}{#3}}{\@Xthm{#1}{#2}{#3}}}
\renewcommand\section{\@startsection{section}{1}{\z@}%
                       {-6\p@ \@plus -2\p@ \@minus -2\p@}%
                       {2\p@ \@plus 2\p@ \@minus 2\p@}%
                       {\normalfont\large\bfseries\boldmath
                        \rightskip=\z@ \@plus 8em\pretolerance=10000 }}
\renewcommand\subsection{\@startsection{subsection}{2}{\z@}%
                       {-6\p@ \@plus -2\p@ \@minus -2\p@}%
                       {2\p@ \@plus 2\p@ \@minus 2\p@}%
                       {\normalfont\normalsize\bfseries\boldmath
                        \rightskip=\z@ \@plus 8em\pretolerance=10000 }}
\renewcommand\subsubsection{\@startsection{subsubsection}{3}{\z@}%
                       {-9\p@ \@plus -3\p@ \@minus -3\p@}%
                       {-0.5em \@plus -0.22em \@minus -0.1em}%
                       {\normalfont\normalsize\bfseries\boldmath}}
\renewcommand\paragraph{\@startsection{paragraph}{4}{\z@}%
                       {-8\p@ \@plus -3\p@ \@minus -3\p@}%
                       {-0.5em \@plus -0.22em \@minus -0.1em}%
                       {\normalfont\normalsize\itshape}}
\def\@listI{\leftmargin\leftmargini \parsep \z@ \@plus 1\p@
  \topsep 3\p@ \@plus 1\p@ \@minus 1\p@ \partopsep 1\p@
  \itemsep 1\p@ \@plus .5\p@ \@minus .5\p@}
\let\@listi\@listI \@listi
\makeatother
\fi

\begin{document}

\title{Securing People and their Machines Against Major Faults\full{ (Full Version)}{}}
\titlerunning{Securing People and Machines}

\author{Ohad Eitan\inst{1}\full{}{\orcidID{0009-0007-5651-6080}} \and
Idit Keidar\inst{1}\full{}{\orcidID{0000-0002-6417-1250}} \and \full{}{\\}
Ehud Shapiro\inst{2}\full{}{\orcidID{0009-0002-8266-3125}}
}

\authorrunning{Eitan et al.}
\institute{Technion -- Israel Institute of Technology, Haifa, Israel\full{}{\\
\email{ohad.eitan@campus.technion.ac.il}, \email{idish@technion.ac.il}} \and
London School of Economics and Political Science, UK and Weizmann Institute of Science, Israel\full{}{, 
\email{e.shapiro2@lse.ac.uk}}
}

\maketitle

\idit{Udi, please cut to 15 pages}

\begin{abstract}
We consider grassroots platforms---distributed systems of agents consisting of people identified by self-chosen public keys and their machines (smartphones)---and wish to make them secure against \emph{major faults}: the loss of their private keys and/or their smartphones.  As grassroots platforms have no global resource to rely on for recovery, our peer-based solution is based on: (\ia) \emph{a grassroots social graph} in which agents establish and maintain friendships; 
(\ib)  \emph{identity custodians}, designated by each person, and (\ic) \emph{state custodians}, which are grassroots platform-specific.  Upon a person experiencing identity loss, and given a willing supermajority of their identity custodians, the person's friends  replace the old public key with the new one across the graph and restore friendships, where all friends serve as state custodians for the social graph.

We specify the social graph and its secure version as guarded multiagent atomic transactions, and implement the secure social graph via communicating volitional agents, an asynchronous message-passing model one step closer to implementation. We prove the implementation maps runs with recoverable faults to correct runs of the specification. 

We follow a similar path for grassroots coins and bonds, showing a common core as well as the platform-specific aspects of state recovery: a currency's single-writer log is recovered exactly, the recovered sovereign resuming without double-spending.
\keywords{Grassroots Protocols \and Multiagent Transition Systems \and Atomic Transactions \and Major Faults \and Key Recovery \and Social Networks}
\end{abstract}

\input{sections/introduction}

\input{sections/transition-systems}

\input{sections/recovery}

\input{sections/social-graph}

\input{sections/cva}

\input{sections/cva-implementation}

\input{sections/coins-implementation}

\input{sections/related-work}

\section*{Acknowledgements}
We are grateful to Sadegh Keshavarzi, Gregory Chockler, and Alexey Gotsman for fruitful discussions on trusted execution environments.

\bibliographystyle{splncs04}
\bibliography{bib}

\iffull
\appendix
\input{sections/appendix}
\input{sections/transition-systems-appendix}

\input{sections/appendix-gsg-cva}        %
\input{sections/appendix-coins-bonds}   %
\input{sections/appendix-proofs}

\fi

\end{document}

%% file: sections/introduction.tex
\section{Introduction}\label{section:intro}

\mypara{Grassroots platforms} The Internet today is dominated by global platforms: centralised---social networks, Internet commerce, `sharing-economy'---with autocratic control~\cite{zuboff2019age,zuboff2022surveillance},  and decentralised---blockchains and cryptocurrencies~\cite{ethereum:dao,faqir2021comparative,nakamoto2008peer,ethereum,wang2018overview,wang2021ethereum}---with plutocratic control~\cite{vitalikplutocracy}.
Grassroots platforms~\cite{shapiro2023grassrootsBA,shapiro2023gsn,shapiro2024gc,shapiro2025GF} aim to offer an egalitarian alternative.  Grassroots platforms can have multiple instances that emerge and operate independently of any global resource except the network, yet interoperate and coalesce once interconnected into ever-larger instances, possibly a single global one.  Key grassroots platforms include grassroots social networks~\cite{shapiro2023gsn,lewis2026volitional}, grassroots coins~\cite{shapiro2024gc,lewis2023grassroots,shapiro2026bonds}, and grassroots democratic federations~\cite{halpern2024federated,shapiro2025GF,keidar2025constitutional}.

\mypara{Recovery from major faults}
Grassroots platforms consist of people operating their personal machines (smartphones).  A person's identity is a self-chosen cryptographic key, and a platform's state resides only on the personal machines of its participants.  A participant may suffer a \emph{major fault}: the loss of their private key (\emph{identity loss}) and/or their machine state (\emph{state loss}).  As grassroots platforms have no central authority or global infrastructure to recover from, recovery must be peer-based; this is the subject of the paper.

\looseness=-1 Grassroots platforms have been specified via \emph{volitional multiagent atomic transactions}~\cite{lewis2026volitional}, each agent being a person-machine pair and each atomic transaction guarded by the volitions of the participating persons.  We use this foundation to specify recovery from major faults. 

Upon choosing a key, each person designates a set of trusted \emph{identity custodians}, together with a supermajority threshold; the custodians authorise replacement of the person's key should the key be lost or compromised.  Recovery is then guarded by a willing supermajority of a person's identity custodians, upon which the friends of the person replace the old public key with the new one across the social graph, preserving their friendships.  A fault that leaves the key intact---loss of machine state---needs no custodian authorisation: the person proves their identity by signing with the retained key, and the social graph is restored from the friends that still record it, with every friend serving as a \emph{state custodian} of the social graph.  These custodians are the person's real-world friends, known off-chain, so a recovering person can reach a subset of them out of band to initiate recovery.

The social graph is the foundational platform: it carries identities and recovers them.  Every higher platform is built upon it, delegating identity recovery to the social graph and adding only the recovery of its own state, held by platform-specific \emph{state custodians}; we give such a protocol for grassroots coins.  Recovery has a \emph{common core}: it is peer-based, carried by state custodians, delegates identity recovery to the social graph, is grassroots by construction, and is privacy-preserving.  Here we show two instances of this common core: the social graph and grassroots coins.

\mypara{Fault model}  We assume that every agent's machine runs the same attested protocol code inside a trusted execution environment~\cite{sabt2015trusted}, with mutual remote attestation among agents' machines~\cite{coker2011principles}.  An attested machine executes the protocol as written; an attacker controlling the host can still (1) roll the enclave back to an earlier state~\cite{matetic2017rote,keshavarzi2025tee}, (2) manipulate its clock, or (3) run several instances of the same enclave~\cite{briongos2023forking}. Our protocol is robust to these attacks, since (1) we store nothing outside the enclave, so there is no stored state to roll back to; (2) no transaction reads a clock; (3) the protocol is robust to duplication and reordering, so if messages are partitioned between multiple enclaves, the code still behaves correctly \full{(Lemma~\ref{lemma:gsg-cva-stale-inert})}{(\fullversion)}. Since the protocol overcomes state loss, it is also robust to message loss. An agent therefore fails only by losing its private key and/or its machine state, not by deviating from the protocol or misreporting its state. As in~\cite{keshavarzi2025tee},
recovery is peer-based--- it relies on the friends and custodians of a recovering agent reporting their state faithfully. Liveness requires that a supermajority of agents' identity custodians, and currency's state custodians, are functional. 
Every fault is eventually repaired, after a finite but unbounded delay, and a fault occurring during a recovery simply re-triggers it.  The results we state at quiescence are over runs with finitely many faults.
The only adversarial capability we admit is an attacker holding a person's private key, whose effect is confined to the compromised identity and ends once the key is replaced \full{(Appendix~\ref{section:secure-gsg-cva-replace})}{(\fullversion)}. 
Communication among correct agents 
is reliable, whereas agent failure may cause transient message loss to and from the failed agent. There is no bound on message latency, i.e., the system is asynchronous.

\mypara{Secure grassroots coins}
In grassroots coins~\cite{shapiro2024gc,shapiro2026bonds}, each agent is the \emph{sovereign} of their own currency: they mint coins---a coin with a future maturity date is a \emph{bond}---and every transaction in their coins requires their participation.  The Grassroots Flash payment system~\cite{lewis2023grassroots} realises this by having the sovereign approve each payment in their coins, with the sovereign's personal blockchain serving as the authoritative ledger for their currency.  This ledger is append-only and hence monotonic.  If the sovereign loses this ledger, they may inadvertently approve a double-spend---approving payment of a coin that was already paid to someone else.  Designated friends---\emph{state custodians}---participate in every transaction and thereby maintain up-to-date copies of the sovereign's ledger.  Since the ledger is monotonic, a single available state custodian suffices when updates are atomic; the asynchronous implementation instead couples finality to a supermajority of state custodians, recovering the log exactly \full{(Appendices~\ref{section:secure-bonds} and~\ref{section:secure-coins-cva})}{(\fullversion)}.

\mypara{Theft}  An identity thief can transact as the person at all grassroots platforms---befriending, unfriending, paying and minting.  As a practical urgent step, and prior to identity custodians willing key replacement, the victim may personally warn their friends and request them to withhold transactions with the stolen identity; they can abide, for a while, without violating the protocol.  Upon recovery, the person can restore any damage to their social graph. Recovering from final fraudulent transactions in other people's coins is impossible. However, recovering from fraudulent transactions in one's own currency  (i.e., not re-issuing them in the new currency) can be done with the approval of the old currency's custodians.  (Section~\ref{section:coins-impl-overview}).

\mypara{What we prove}  \looseness=-1 We develop each platform at three levels of abstraction: an abstract specification as guarded multiagent atomic transactions; a \emph{secure} specification, again as guarded transactions, adding recovery from major faults; and an implementation as communicating volitional agents (CVA), an asynchronous message-passing model one step closer to implementation.
For the social graph we prove that the secure specification---in which each agent also records its `friends-of-friends', for state recovery---is a fault-resilient implementation of the abstract one (Theorem~\ref{thm:ssg-resilient}), and that its CVA implementation realises the secure specification at quiescence (Theorem~\ref{theorem:secure-gsg-cva-refinement}), and hence the social graph (Corollary~\ref{corollary:secure-gsg-cva-implements-sg}): eventually every quiescent state maps correctly to a specification state.
At quiescence every friendship still recorded by a friend of either party is preserved; a friendship whose only records, at the moment of an identity loss, are the two friends themselves, is lost.
We further establish the soundness of every reachable implementation state, and that every level is grassroots.

We then follow a similar path for grassroots coins and bonds, 
showing a common core as well as the platform-specific aspects of state recovery: a totally-ordered, single-writer transaction log requires \emph{exact} recovery, obtained by coupling finality to a supermajority of state custodians, so the recovered sovereign resumes without double-spending.

\mypara{Paper outline}
Section~\ref{section:dts} recalls guarded multiagent atomic transactions, the formalism in which grassroots platforms and their recovery are specified, \full{with formal details deferred to Appendix~\ref{app:dts}}{with formal details in \fullversion}.
Section~\ref{section:recovery} sets out the recovery framework: identity recovery as the foundational service of the social graph, and platform-specific state recovery layered upon it.
Section~\ref{section:social-graph} specifies the social graph and the secure social graph, and proves that the secure social graph implements the social graph.
Section~\ref{sec:cva} introduces \emph{communicating volitional agents}: the specifications above are atomic multiagent transactions, which an implementation on networked smartphones cannot provide, so CVA refines them to an asynchronous message-passing model one step closer to implementation.
Section~\ref{section:cva-impl-overview} summarises the CVA implementation of the secure social graph, 
\full{with the full development and proofs in Appendix~\ref{section:gsg-cva}}{with the full development and proofs in \fullversion}.
Section~\ref{section:coins-impl-overview} follows a similar path for grassroots coins, the recovery framework's second instance, summarising their specification and CVA implementation.
Section~\ref{section:related-work} discusses related work.
\full{Appendix~\ref{appendix:persons} surveys formal models of persons in concurrent systems.  Appendices~\ref{section:secure-bonds} and~\ref{section:secure-coins-cva} develop 
secure grassroots coins, with their ordinary and secure specification as volitional multiagent atomic transactions and implementation as communicating volitional agents, with recoverable finality and exact recovery.}{A survey of formal models of persons in concurrent systems, the development of secure grassroots coins, and all proofs are in \fullversion.}

%% file: sections/transition-systems.tex
\section{Grassroots Guarded  Multiagent Atomic Transactions}\label{section:dts}

This section recalls guarded multiagent atomic transactions~\cite{lewis2026volitional}, a high-level formalism for specifying  grassroots platforms as systems of \emph{agents}, each consisting of a \emph{person} operating a \emph{machine} (e.g., smartphone).  A guarded atomic transaction specifies how the machine states of the participating agents change, and which of the persons (if any) participating in the transaction must be willing for the transaction to occur.

We assume a potentially infinite set of \emph{agents} $\Pi$, considering only finite subsets $P \subset \Pi$.  
We use $S^P$ to denote the set of all total functions from $P$ to $S$, and for $c \in S^P$ use $c_p$ for the member of $c$ indexed by $p$.

\begin{definition}[Machine State, Configuration, Transaction, Guarded Transaction]\label{definition:mt} 
Given an arbitrary set $S$ of \temph{machine states}, with a designated \temph{initial state} $s0 \in S$, and agents $Q \subset \Pi$, a \temph{machine configuration} over $Q$ is a member of $S^Q$, and a \temph{machine transaction} over \temph{participants} $Q$ is a pair $c\rightarrow c' \in (S^Q)^2$ such that $c\ne c'$. Given such a machine transaction $t$, a 
\temph{guarded transaction} over $t$ is a pair $(t,Q')$ where $Q'\subseteq Q$ are its \temph{guards}.
\end{definition}

Guarded atomic transactions 
can be carried out by their participants regardless of the states of non-participants.  Participants include both active agents (whose state changes) and stationary agents (whose state is a precondition but does not change).  When a transaction is ``guarded by $\{p,q\}$,'' both must be willing; when it is ``guarded by either $p$ or $q$,'' there are two guarded transactions over the same machine transaction, $(t,\{p\})$ and $(t,\{q\})$, so that either person's volition suffices.  

The operational semantics of guarded multiagent atomic transactions~\cite{lewis2026volitional} maps them to volitional multiagent atomic transactions~\cite{shapiro2025atomic}, \full{recalled in Appendix~\ref{app:dts}}{recalled in \fullversion}.  A guarded transaction is \emph{enabled} when each participant is in the precondition machine state and every guard wills it \full{(Definition~\ref{definition:enabled})}{(\fullversion)}; a transaction whose guard is empty is \emph{unguarded}, enabled whenever its machine precondition holds.

A set of multiagent atomic transactions in turn induces a multiagent transition system~\cite{shapiro2021multiagent}, a special case of ``standard'' transition systems. A family of multiagent transition systems, parametrised over the set of participating agents, defines a \emph{protocol}.
We are particularly interested in ~\emph{grassroots protocols}~\cite{shapiro2023grassrootsBA,shapiro2025atomic,lewis2026volitional}, defined informally as follows.  Runs of two disjoint sets can be \emph{interleaved} into a run of their union.  A protocol is \temph{oblivious} if every interleaving of two correct runs of disjoint sets of agents is a correct run of the union, and \temph{interactive} if some correct run of the union is not an interleaving of any two such independent runs, because it includes a step whose participants span both sets; it is \temph{grassroots} if both~\cite{lewis2026volitional}, capturing formally the informal idea that grassroots platforms (specified by grassroots protocols) can have multiple instances that can form and operate independently, yet may coalesce into ever-larger instances, possibly (but not necessarily) into a single global platform.

%% file: sections/recovery.tex
\section{Recovery of Identity and State after Major Faults}\label{section:recovery}

The social graph is the foundational platform: it carries identities and recovers them as well as its own state, and every higher platform is built upon it.  A higher platform does not recover identities itself; it delegates identity recovery to the social graph, and adds only the recovery of its own platform-specific state.

\mypara{Identity recovery} 
Upon choosing a key, each person designates a set of \emph{identity custodians} together with a supermajority threshold, fixed at identity creation.  Upon identity loss, the person chooses a fresh keypair off-chain, obtains a new machine if needed, and convinces a supermajority of their identity custodians to authorise the replacement; the friends of the person then replace the old public key with the new one across the social graph, preserving their friendships. 

A higher platform recovers an identity 
by delegating to the social graph: its agents are social-graph identities, and once the social graph has replaced a lost key, the higher platform's recovery proceeds against the renamed agent.

\mypara{State recovery}  Recovery from state loss is carried by \emph{state custodians}: peers that hold the state from which a recovering agent restores.  State recovery has three components on every platform: (\ia) the state custodians and what each holds; (\ib) the maintenance discipline that keeps the custodians' holdings current; and (\ic) the read-back by which a recovering agent reconstructs its lost state.
What is platform-specific is the filling of the three---who the custodians are, what discipline maintains their copies, and how a recovering agent reads its state back.  State recovery is privacy-preserving, in that the machine of a state custodian stores a friend's state without its person having access to it.

The approaches used in the social graph and grassroots coins are summarised in Table~\ref{table:recovery-instances}.
The secure social graph maintains friends-of-friends lists.
In the message-passing setting, direct friendships are updated via distributed transactions, whereas friend-of-friend lists are periodically disseminated in a lazy manner. 
At the abstract level, a single friend acting as state custodian is enough for recovering the agent's full friend list.  In the message-passing setting, each friend's periodic checkpoint restores the friendship it co-owns, and friend-of-friend lists are recovered in the background using the normal mode of operation. 
Note that even without failures,  lazy updates imply that the secure social graph is only \emph{eventually consistent}. Namely, it
\emph{converges} to the abstract graph (specified using atomic transactions) in quiescence, but may temporarily diverge from it while updates are taking place. These convergence semantics are preserved under failure and recovery. 

\begin{table}[t]
\footnotesize
\renewcommand{\arraystretch}{1.0}
\begin{tabular}{@{}p{0.20\textwidth}p{0.35\textwidth}@{\hskip 10mm}p{0.35\textwidth}@{}}
\toprule
  & \textbf{Social graph} & \textbf{Grassroots coins} \\
\midrule
 \textbf{Recoverable state} & friendships, each co-owned by two friends & the sovereign's transaction log, a single-writer total order \\
  \textbf{State \mbox{custodians}} & every friend & a designated subset of the sovereign's friends \\
 \textbf{Maintenance} & atomic friendship formation, lazy dissemination of friend-of-friend records & atomic log update across a super-majority of custodians \\
  \textbf{Read-back} & each friendship from its friend & from a supermajority of custodians, adopting the longest \\
\bottomrule
\end{tabular}
\caption{\textbf{State recovery:} a shared core filled differently by each platform.}
\label{table:recovery-instances}
\end{table}

%% file: sections/social-graph.tex
\section{The Social Graph}\label{section:social-graph}

The \emph{grassroots social graph} is the backbone of a grassroots platform: an evolving symmetric friendship relation among agents, each an autonomous person-machine pair.  We specify it as guarded multiagent atomic transactions (Section~\ref{section:dts}) at two levels of fault-tolerance.  The \emph{social graph} maintains friendships under fault-free operation.  The \emph{secure social graph} adds recovery from the loss, compromise, or corruption of a person's key or machine.  We then show that the secure social graph implements the social graph: under fault-free operation, and under recovery that does not lose all records of a friendship, the secure social graph exhibits exactly the friendships of the social graph at quiescence, except when a fault erases a friendship's last surviving record.

\subsection{The Social Graph}\label{section:gsg-abstract}

Each agent $p\in\Pi$ is a person-machine pair, identified by a self-chosen public key.  No further attributes are associated with an agent at this level.
The machine state of $p\in P$ is its \emph{friend set} $F_p\subseteq P$, initialised to $\emptyset$.

\begin{definition}[Social Graph Transactions~{\cite{lewis2026volitional}}]\label{definition:gsg-transactions}
The \temph{social graph transactions} are:
\begin{enumerate}
    \item \textbf{Befriend$(p,q)$}: $F'_p := F_p \cup \{q\}$, $F'_q := F_q \cup \{p\}$, provided $q \notin F_p$.  Guarded by $\{p,q\}$.
    \item \textbf{Unfriend$(p,q)$}: $F'_p := F_p \setminus \{q\}$, $F'_q := F_q \setminus \{p\}$, provided $q \in F_p$.  Guarded by either $p$ or $q$.
\end{enumerate}
\end{definition}

Befriending requires both persons to will the transaction; unfriending can be initiated by either. 
\full{In Appendix~\ref{appendix:proofs}}{In \fullversion}, we prove the following:

\begin{restatable}[Friendship Mutuality]{lemma}{ThmGsgMutuality}\label{lemma:gsg-mutuality}
In any run of the social graph, $q\in F_p \iff p\in F_q$ for all $p,q\in P$.
\end{restatable}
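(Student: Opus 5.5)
The plan is to argue by induction on the length of the run, taking as invariant the mutuality statement itself: for every configuration $c$ in the run and all $p,q\in P$, $q\in F_p \iff p\in F_q$. In the base case every agent is in the initial state $F_p=\emptyset$, so both sides are false and the invariant holds trivially.

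For the inductive step, I will consider a transition $c\to c'$ of the run. In the induced multiagent transition system, every step is the application of some enabled guarded transaction from Definition~\ref{definition:gsg-transactions}. Agents that do not participate keep their state, since the transaction leaves non-participants unchanged. I will then do a case split on the transaction and on the pair $\{x,y\}$ under consideration.

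\begin{itemize}
\item If the transaction is Befriend$(p,q)$, then $F'_p=F_p\cup\{q\}$ and $F'_q=F_q\cup\{p\}$. For the pair $\{p,q\}$ both memberships become true, so the equivalence holds. For a pair $\{p,z\}$ with $z\ne q$, membership of $z$ in $F_p$ is unchanged and $F_z$ is unchanged, so the induction hypothesis carries over. The same holds for pairs containing $q$ but not $p$, and for pairs disjoint from $\{p,q\}$.
\item Unfriend$(p,q)$ is symmetric: both memberships become false, and all other pairs are unaffected.
\end{itemize}

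Guards play no role here, because they only restrict which steps are enabled. The preconditions ($q\notin F_p$, resp.\ $q\in F_p$) are likewise irrelevant to preservation. The degenerate case $p=q$ does not break the argument: both sides of the equivalence are the same statement.

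The main obstacle is the bookkeeping of the transition semantics rather than the combinatorics. I must make sure that a step of the multiagent transition system updates exactly the participants' states and leaves all others fixed, and that every step arises from one of the two transactions. This is where I would cite the operational semantics recalled in Appendix~\ref{app:dts}. Once that frame condition is stated, the case analysis above is routine.
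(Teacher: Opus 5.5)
Your proof is correct and follows the paper's own argument: induction on run length, with empty initial friend sets as the base case and the symmetric effects of Befriend and Unfriend (unrelated pairs unchanged) as the step. One small point: not every step arises from one of the two transactions, since runs also contain change-volition transactions (Definition~\ref{definition:vmat}); these leave machine states untouched, so they trivially preserve the invariant.
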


\begin{restatable}[The Social Graph is Grassroots]{theorem}{ThmGsgGrassroots}\label{theorem:gsg-grassroots}
The social graph is grassroots.
\end{restatable}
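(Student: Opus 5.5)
The plan is to establish the two defining properties separately: obliviousness, then interactivity. Grassroots is defined in Section~\ref{section:dts} as the conjunction of the two.

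\emph{Obliviousness.} I will first show that every transaction is local. Befriend$(p,q)$ and Unfriend$(p,q)$ have participants exactly $\{p,q\}$. Both the precondition and the effect read and write only $F_p$ and $F_q$, and each guard is a subset of $\{p,q\}$. So whether a transaction is enabled depends only on the participants' machine states and on their volitions.

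Next, fix disjoint $P_1,P_2$ with correct runs $r_1$ over $P_1$ and $r_2$ over $P_2$, and let $r$ be any interleaving of them over $P_1\cup P_2$. I will argue by induction along $r$ that each agent $p\in P_i$ has the same friend set in $r$ as at the corresponding point of $r_i$. The base case is the initial configuration, where every friend set is $\emptyset$. For the inductive step, a step taken from $r_i$ has participants in $P_i$, so it touches only agents of $P_i$. Its precondition therefore held in $r$ exactly as it did in $r_i$, and by locality it is enabled in $r$. Its effect on $P_i$ is identical to its effect in $r_i$, and agents of $P_{3-i}$ are unchanged.

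A subtle point is that friend sets range over the agent set. I must check that a state of $r_1$, which is a subset of $P_1$, is a legitimate state over $P_1\cup P_2$. This holds because friend sets are just subsets, and the transactions quantify only over participants. If the formal definition in Appendix~\ref{app:dts} requires it, I will also check that volitions of agents in $P_i$ are unaffected by the interleaving. That follows from the operational semantics, under which each guard's volition is a local predicate of the guard.

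\emph{Interactivity.} I will take $p\in P_1$ and $q\in P_2$ and consider the one-step run Befriend$(p,q)$ from the initial configuration. It is enabled because $q\notin F_p=\emptyset$ and both guards may will it. This step has participants spanning both sets, so the run is not an interleaving of runs of $P_1$ and $P_2$. Equivalently, its final configuration has $q\in F_p$, which no run over $P_1$ alone can produce.

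The main obstacle is matching these informal arguments to the precise definitions of runs, interleavings and volitional enabledness in~\cite{lewis2026volitional}. The cleanest route is likely to invoke the general result there that any protocol whose transactions are all local (read, write and guard only their participants) with a fixed initial state is oblivious. The remaining work is then only to verify locality as above and to exhibit the cross-set Befriend step.
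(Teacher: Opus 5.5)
\textbf{The gap is in obliviousness.} Your interactivity argument is essentially the paper's. The only missing detail is the two change-volition steps that must precede the Befriend; once the step is taken its class is discharged and nothing is enabled, so the run is live. The problem is obliviousness, which requires every interleaving of two correct runs to be a \emph{correct} run of the union, i.e.\ safe \emph{and live}. Your induction shows only safety: every step taken from $r_i$ is enabled in the interleaving. It says nothing about transactions that are never taken. Liveness fails if some class is enabled throughout a suffix of the interleaving without being taken. The dangerous classes are exactly the cross-group ones, which neither $r_1$ nor $r_2$ ever takes.
\begin{itemize}
\item Cross-group Unfriend$(p,q)$, with $p\in P_1$ and $q\in P_2$, is harmless. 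Your induction gives $F_p\subseteq P_1$, so its precondition $q\in F_p$ fails.
\item Cross-group Befriend$(p,q)$ is not harmless. Its machine precondition $q\notin F_p$ holds at \emph{every} configuration of \emph{every} interleaving. The only thing that keeps it disabled is its guard. In the interleaving, $p$'s volitional state is the one it has in $r_1$, a set of classes of transactions of $\mathcal{F}(P_1)$, so $p$ never wills a transaction having $q\notin P_1$ as a participant.
\end{itemize}
That is Guarded Obliviousness, the entire obliviousness step of the paper's proof, and your proposal does not use it. Your remark that ``each guard's volition is a local predicate of the guard'' does not supply this step.

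For the same reason, the general result you plan to invoke is false in this setting. That result says a protocol whose transactions read, write and are guarded only by their participants is oblivious. As a counterexample, replace Befriend by an unguarded variant with the same precondition and effect. It is still local in your sense. Yet in any interleaving of two correct runs the cross-group Befriend stays enabled forever and is never taken, so the interleaving is not live. This is also why, for the secure social graph, the paper must argue separately that the unguarded Recover is never cross-group-enabled.

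To repair the proof, add the missing liveness step: any class enabled throughout a suffix of the interleaving has all its participants in a single $P_i$. Cross-group Unfriend is excluded by its machine precondition and cross-group Befriend by its guard. By your locality argument such a class is then enabled throughout the corresponding suffix of $r_i$, so it is taken there, and hence in the interleaving. Alternatively, simply cite Guarded Obliviousness, since both social-graph transactions have nonempty guards.
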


\subsection{The Secure Social Graph}\label{section:secure-gsg-abstract}

The secure social graph recovers from two kinds of major fault.  \emph{Identity loss} is the loss of a person's private key, possibly together with the machine, triggering a Replace transition.
\emph{State loss} is the loss of machine state with the key retained, which is fixed by a Recover transition.
Recovery is friendship-based, drawing only on the state held by the recovering agent's friends, with no global resource.

Two ingredients are added.  First, each agent carries an intrinsic, immutable \emph{identity record}: a finite nonempty set of \emph{identity custodians} and a supermajority threshold, fixed at identity creation.  A supermajority of the identity custodians authorises the replacement of a person's key.  Second, each agent records not only its own friends but, for each friend, that friend's friend set---so that a recovering agent can read its friendships back from any one friend.  Every friend thereby serves as a \emph{state custodian} of the agent's friendships.  Because the transactions are atomic, every friend holds the same, current record; one friend therefore suffices for recovery.

\mypara{Agents}  Each agent $p\in\Pi$ carries an intrinsic immutable \emph{identity record} $\mathit{IR}_p=(K_p,\sigma_p)$, where $K_p\subset\Pi$ is a finite nonempty set of \emph{identity custodians} and $\sigma_p\in(1/2,1]$ a \emph{supermajority threshold}.  A \emph{supermajority} of $K_p$ is a subset $G\subseteq K_p$ with $|G|\ge\lceil\sigma_p\cdot|K_p|\rceil$.  We assume each person befriends their identity custodians and keeps at least a supermajority of them as friends, so a supermajority of $K_p$ lies within $F_p$ and the guard of Replace can be chosen within its participants.
Immutability means that no transaction adds or removes a custodian or changes the threshold; Replace$(p,p')$ substitutes $p'$ for $p$ in the custodian sets naming $p$, as in every other record of $p$.

\mypara{Machine state}  The machine state of $p\in P$ is a pair $(\mathit{IR}_p,N_p)$: the immutable identity record $\mathit{IR}_p$ and the \emph{friend-of-friend map} $N_p:P\rightharpoonup 2^{P}$, a partial map in which $N_p(q)$ is $p$'s record of $q$'s friend set.  The \emph{friend set} $F_p:=\mathrm{dom}(N_p)$ is the set of agents $p$ records as friends; initially $N_p$ is the empty map, so $F_p=\emptyset$.

\begin{definition}[Secure Social Graph Transactions]\label{definition:secure-gsg-transactions}
The \temph{secure social graph transactions} are:
\begin{enumerate}
    \item \textbf{Befriend$(p,q)$}: with participants $\{p,q\}\cup F_p\cup F_q$, provided $q\notin F_p$:
    \begin{itemize}
        \item $F'_p := F_p\cup\{q\}$ and $F'_q := F_q\cup\{p\}$;
        \item $N'_p(q) := F'_q$ and $N'_q(p) := F'_p$;
        \item for each $r\in F_p$: $N'_r(p) := F'_p$; \quad for each $r\in F_q$: $N'_r(q) := F'_q$.
    \end{itemize}
    Guarded by $\{p,q\}$.

    \item \textbf{Unfriend$(p,q)$}: with participants $\{p,q\}\cup F_p\cup F_q$, provided $q\in F_p$:
    \begin{itemize}
        \item $F'_p := F_p\setminus\{q\}$ and $F'_q := F_q\setminus\{p\}$;
        \item for each $r\in F'_p$: $N'_r(p) := F'_p$; \quad for each $r\in F'_q$: $N'_r(q) := F'_q$.
    \end{itemize}
    Guarded by either $p$ or $q$.

    \item \textbf{Recover$(p,q)$}: with participants $\{p,q\}\cup N_q(p)$, unguarded, provided $p\in F_q$:
    \begin{itemize}
        \item $F'_p := N_q(p)$;
        \item $N'_p(r) := F_r$ for each $r\in N_q(p)$.
    \end{itemize}

    \item \textbf{Replace$(p,p')$}: with participants $\{p,p'\}\cup F_p\cup\bigcup_{r\in F_p}F_r$, provided $F_{p'}=\emptyset$ and $p'$ is a fresh agent not yet befriended by anyone:
    \begin{itemize}
        \item $F'_{p'} := F_p$ and $N'_{p'} := N_p$, and $F'_p := \emptyset$, $N'_p := {}$ the empty map;
        \item for each $r\in F_p$: $F'_r := (F_r\setminus\{p\})\cup\{p'\}$, and $N'_r(p') := N_r(p)$;
        \item for each $f\in F_p$ with $p\in K_f$: $K'_f := (K_f\setminus\{p\})\cup\{p'\}$;
        \item for each $r\in F_p$ and each $s\in F_r$ with $p\in F_s$: $N'_s(r) := F'_r$.
    \end{itemize}
    Guarded by a supermajority $G\subseteq K_p$ of $p$'s identity custodians.
\end{enumerate}
\end{definition}

\begin{definition}[State-Loss Fault]\label{def:state-loss-fault}
A \temph{state-loss fault} at $p$ sets $N_p$ to the empty map, so $F_p=\emptyset$, leaving $\mathit{IR}_p$ and the state of every other agent unchanged.
\end{definition}

The set of state-loss faults is disjoint from the secure social graph transactions (Definition~\ref{definition:secure-gsg-transactions}): a fault erases state rather than transforming it under the protocol.

Befriend and Unfriend update the friend-of-friend map atomically: whenever $p$'s friend set changes, every friend's record $N_r(p)$ is updated in the same transaction, so the map stays exact.  Recover rebuilds the friend set from a single friend's record $N_q(p)$, which---being exact and unchanged during the loss---is $p$'s pre-loss friend set.  Replace models identity loss: a supermajority of $p$'s identity custodians authorises substituting a fresh $p'$ for $p$ across the graph, $p'$ adopting $p$'s friendships and every friend rewriting $p$ to $p'$.  Setting $F'_p:=\emptyset$ retires the abandoned identity abstractly; it is not a write to the old machine, whose disk may still hold $p$'s data under the dead key.

\full{In Appendix~\ref{appendix:proofs}}{In \fullversion}, we prove the following:

\begin{restatable}[Friend-of-Friend Exactness]{lemma}{ThmSsgFofExact}\label{lem:ssg-fof-exact}
In any run of the secure social graph in which no state loss occurs, for every $p\in P, q\in F_p$: $N_p(q)=F_q$.
\end{restatable}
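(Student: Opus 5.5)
The proof is by induction on the length of the run. We strengthen the statement to a joint invariant $\mathcal{I}$ on configurations:
\[
\text{(M)}\ \ q\in F_p \iff p\in F_q, \qquad\qquad \text{(E)}\ \ q\in F_p \implies N_p(q)=F_q,
\]
for all $p,q\in P$. Mutuality (M) is carried along because the update clauses of each transaction write $N_r(p)$ only for $r\in F_p$ (or $r\in F'_p$). These writes reach every holder of a record of $p$ only if the holders of such records are exactly $p$'s friends. This is the secure analogue of Lemma~\ref{lemma:gsg-mutuality}, but it must be re-proved here, since the secure transactions act on $N$ rather than on $F$ directly. The base case is immediate: every $N_p$ is the empty map, so both clauses hold vacuously.

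For the inductive step we assume $\mathcal{I}$ before a step and check each transaction of Definition~\ref{definition:secure-gsg-transactions}. State-loss faults are excluded by hypothesis. The general pattern is the same in every case. Let $C$ be the set of agents whose friend set changes. Then (E) can only fail for a pair $(s,x)$ with $x\in C$ and $s\in F'_x$. By (M) after the step, these are precisely the agents the transaction rewrites. For agents outside $C$, both $F_x$ and each holder's record $N_s(x)$ are untouched, so (E) is inherited.

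\emph{Befriend$(p,q)$:} here $C=\{p,q\}$ and $F'_p=F_p\cup\{q\}$. The transaction writes $N'_q(p)=F'_p$ and $N'_r(p)=F'_p$ for $r\in F_p$, which covers all of $F'_p$; the case of $q$ is symmetric. (M) is preserved because both directions of the new edge are added together.

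\emph{Unfriend$(p,q)$:} since $F=\mathrm{dom}(N)$, removing $q$ from $F_p$ means $q$ leaves $\mathrm{dom}(N_p)$, and symmetrically. The remaining holders $r\in F'_p$ are rewritten with $F'_p$.

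\emph{Recover$(p,q)$:} the precondition gives $p\in F_q$. By (E) applied at $q$ we have $N_q(p)=F_p$, so $F'_p=F_p$ and $N'_p(r)=F_r$, which is already the current value. The step changes no friend set, and it is either a no-op or not a genuine transaction at all (since $c\neq c'$ is required). In either case $\mathcal{I}$ persists.

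The main obstacle is \emph{Replace$(p,p')$}. Here $C=\{p,p'\}\cup F_p$: $p'$ takes over $F_p$, $p$ becomes empty, and each $r\in F_p$ swaps $p$ for $p'$. Because $p'$ is fresh, no agent held $p'$ before the step. (M) follows: $p'\in F'_r \iff r\in F_p=F'_{p'}$, and no one keeps $p$. For (E) I would verify three families of records.
\begin{enumerate}
\item The records of $p'$: $N'_r(p')=N_r(p)=F_p=F'_{p'}$ for each $r\in F_p$.
\item The records held by $p'$: $N'_{p'}=N_p$. Here I would use the convention stated in the Agents paragraph, that Replace substitutes $p'$ for $p$ in \emph{every} record naming $p$. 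Under that convention $N'_{p'}(r)=F_r[p'/p]=F'_r$.
\item The records of each $r\in F_p$ held by $s\in F'_r$: these are set to $F'_r$ by the last clause. Where that clause's side condition $p\in F_s$ does not apply literally, the same global renaming convention is what closes the gap.
\end{enumerate}
I expect making this renaming reading explicit, and checking that it yields exactly $F'_r$ with no stale occurrence of $p$, to be the delicate point. Everything else is bookkeeping of the pattern described above.
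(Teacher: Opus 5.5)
Your proposal is correct and follows the paper's own route: induction on run length with a transaction-by-transaction check. Where the paper dismisses Recover as enabled only after a state loss, you justify the same fact by showing that under (E) Recover is the identity and so induces no transition. Your version is also tighter than the paper's in two places. It makes explicit the mutuality conjunct (M), which the paper's claim that ``every record of a changed set is updated'' tacitly relies on. It also invokes the Agents-paragraph convention that Replace substitutes $p'$ for $p$ in every record naming $p$, and this closes exactly the Replace cases the paper's proof passes over: $N'_{p'}(r)=N_p(r)$, and the records $N_s(r)$ for $s\in F_r$ with $p\notin F_s$.
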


\begin{restatable}[The Secure Social Graph is Grassroots]{theorem}{ThmSecureGsgGrassroots}\label{theorem:secure-gsg-grassroots}
The secure social graph is grassroots.
\end{restatable}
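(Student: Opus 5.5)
Following the definition recalled in Section~\ref{section:dts}, the proof splits into \emph{obliviousness} and \emph{interactiveness}. For obliviousness I use the sufficient condition from~\cite{lewis2026volitional}: a protocol specified by guarded multiagent atomic transactions is oblivious if its transactions are \emph{local}. Concretely, whether a guarded transaction over participants $Q$ is enabled, and its effect, depend only on the machine states of $Q$ and the volitions of the guards $G\subseteq Q$. Given that, any transaction of a run of $P_1$ applied in the interleaved run of $P_1\cup P_2$ sees the same local states for its participants, so it remains enabled there and has the same effect. The interleaving is then a correct run of the union by induction on its length.

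The main step is checking locality for each of the four transactions of Definition~\ref{definition:secure-gsg-transactions}, plus the state-loss fault. I also need the participant set of every transaction to stay inside the agent set of the run in which it occurs. I argue this by an invariant: in a run of $P_1$ alone, every agent named in any $N_p$, $F_p$, or $K_p$ record that was created by a transaction lies in $P_1$, since Befriend and Replace introduce only participants. For Befriend and Unfriend the participants are $\{p,q\}\cup F_p\cup F_q$, and all updates read and write only their states. For Recover the participants are $\{p,q\}\cup N_q(p)$, and it reads $F_r$ only for $r\in N_q(p)$. For Replace the participants include $F_p$ and $\bigcup_{r\in F_p}F_r$, which covers every $s$ written. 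The guard $G\subseteq K_p$ lies within $F_p$ by the standing assumption on identity custodians. Faults touch only $p$.

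The delicate point is the side condition of Replace that $p'$ is ``fresh, not yet befriended by anyone'', which looks global. I would read it locally as $F_{p'}=\emptyset$ with $p'$ a participant whose state is initial. I would also argue that if no agent of $P_1\cup P_2$ has befriended $p'$ in the interleaving, then none has in either component run, since records mentioning $p'$ arise only from transactions that include $p'$ as a participant. Immutable identity records naming custodians outside $P_i$ are harmless: they matter only as guards of Replace, which lie within $F_p\subseteq P_i$.

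For interactiveness I take $p\in P_1$ and $q\in P_2$ with initial states and the run consisting of a single Befriend$(p,q)$. Its participant set is $\{p,q\}$, since $F_p=F_q=\emptyset$, and its guard is $\{p,q\}$, so it is enabled. No interleaving of runs of $P_1$ and $P_2$ can contain it, because every step of such an interleaving has participants within one side. The step I expect to be the main obstacle is the freshness and locality argument for Replace.
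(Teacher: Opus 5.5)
Your interactivity argument is the paper's (apart from the two change-volition steps by which $p$ and $q$ must first will Befriend$(p,q)$). The obliviousness half, however, has a genuine gap: you show only that the interleaving is \emph{safe}, never that it is \emph{live} as a run of $\mathcal{F}(P_1\cup P_2)$.

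Locality is already guaranteed by the framework, since a machine transaction is a pair of configurations over its participants. It ensures two things: each step of a component run is still a transition of the union, and a class over $P_1$ alone is enabled in the interleaving exactly when it is in the $P_1$-run. Your ``induction on its length'' proves precisely this. But correctness also requires that no class of the union be enabled throughout a suffix without being taken. The union has classes that no component run can ever take: the cross-group transactions, whose participants meet both $P_1$ and $P_2$, and locality says nothing about them. For instance, were Befriend unguarded it would be just as local, yet for $p\in P_1$, $q\in P_2$ the transaction Befriend$(p,q)$ would be enabled forever in the interleaving and never taken. So ``local implies oblivious'' is not a valid criterion here. The one recalled from~\cite{lewis2026volitional} (Appendix~\ref{app:dts}) is that no cross-group transaction is ever enabled in an interleaving of two correct runs.

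Discharging that condition is the actual content of the paper's proof. Befriend, Unfriend and Replace have nonempty guards, so Guarded Obliviousness applies. Each guard acts in one group, whose volitional states contain only classes over that group, so it never wills a cross-group instance. For Replace this holds even when $K_p$ names agents of the other group.

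Recover is unguarded and needs a separate argument, and your own invariant is exactly what is required. In a $P_i$-run every $F_q$ and every $N_q(\cdot)$ lies within $P_i$. So the precondition $p\in F_q$ places $p$ and $q$ in the same group, and $N_q(p)\subseteq P_i$. Hence Recover is never cross-group-enabled, and a state-loss fault touches a single agent. You had the ingredient but aimed it at the safety of component steps rather than at excluding cross-group enabling. Likewise, the freshness of $p'$ in Replace, which you single out as the main obstacle, is not where the difficulty lies.
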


\subsection{The Secure Social Graph Implements the Social Graph}\label{section:ssg-implements-sg}

We show that the secure social graph implements the social graph of Section~\ref{section:gsg-abstract} (Definition~\ref{definition:gsg-transactions}), \full{in the sense of Definition~\ref{def:implementation}}{in the sense of implementation of \fullversion}.  The one fault no friendship-based recovery can repair is a state loss in which two friends are the only ones recording their friendship, and both lose their state before either recovers. 

\begin{definition}[Friendship-Preserving Run]\label{def:friendship-preserving}
A friendship is \emph{unrecoverable} in a run of the secure social graph if a state loss erases its last surviving record, or an identity loss replaces one party's key while no third agent records the friendship; otherwise it is \emph{recoverable}.
A run is \emph{friendship-preserving} if every Recover$(p,q)$ restores all the recoverable friendships in 
$p$'s  pre-loss friend set, and every Replace$(p,p')$ reaches every recoverable friend of $p$. 
\end{definition}

\begin{definition}[Projection]\label{def:ssg-projection}
Let $\sigma$ map a secure-social-graph configuration $c$ to the social-graph configuration $\sigma(c)$ with the same agents, the same volitional states, and machine state
\[
  \sigma(c)_p := F_p \cup \{r\in P : p\in F_r\}
\]
for each $p$ (discarding $\mathit{IR}_p$ and $N_p$), with each replaced agent $q'$ relabelled to the original $q$ it replaced.
\end{definition}

The projected friend set of $p$ is the set of agents $p$ records as friends together with those that record $p$ as a friend.  This is symmetric by construction---$q\in\sigma(c)_p \iff q\in F_p \lor p\in F_q \iff p\in\sigma(c)_q$---so every projected configuration satisfies Friendship Mutuality (Lemma~\ref{lemma:gsg-mutuality}) and is a configuration the social graph reaches.  An agent $p$ that has lost its state, with $F_p=\emptyset$, projects to $\{r : p\in F_r\}$, the friendships still recorded at its friends; the projection thus reads a state-lost agent's friendships from the friends that survive, so a state loss and the Recover that follows it leave $\sigma(c)$ unchanged whenever some friend records the edge.  When two friends that are each other's only record both lose their state, neither records the edge, so it is absent on both sides of $\sigma(c)$---symmetric, the consistent state the social graph reaches by Unfriend.

\begin{restatable}[Implementation]{theorem}{ThmSsgImplements}\label{thm:ssg-implements-sg}
With the projection $\sigma$ of Definition~\ref{def:ssg-projection}, the secure social graph is an implementation of the social graph \full{in the sense of Definition~\ref{def:implementation}, correct and complete in the sense of Definition~\ref{def:implementation-properties}}{as defined in \fullversion}.
\end{restatable}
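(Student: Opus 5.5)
The plan follows the standard shape of an implementation proof in the multiagent transition-system framework. Definition~\ref{def:implementation} is not visible here, so I assume it asks for three things. First, $\sigma$ maps initial configurations to initial configurations. Second, every step of the secure social graph maps under $\sigma$ either to a stutter (no change) or to a step of the social graph. Third, correctness and completeness (Definition~\ref{def:implementation-properties}) say that the image of every correct run is a correct run of the social graph, and that every correct social-graph run is the image of some secure run.

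The initial case is immediate: every $N_p$ is empty, so $\sigma(c)_p=\emptyset$ for all $p$. For steps, I would carry the invariant of Lemma~\ref{lem:ssg-fof-exact}, together with a mutuality invariant $q\in F_p\iff p\in F_q$ that holds outside state-lost agents. I would then go through the transactions of Definition~\ref{definition:secure-gsg-transactions} one by one.

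\textbf{Befriend$(p,q)$.} Its precondition $q\notin F_p$ gives $q\notin\sigma(c)_p$. This uses mutuality: if $p$ has lost its state, the edge could still be recorded at $q$, and that case needs separate care, which I would handle by noting that such a Befriend then projects to a stutter. The step maps to social-graph Befriend$(p,q)$ with the same guard $\{p,q\}$. The extra stationary participants $F_p\cup F_q$ are harmless, since only $p$'s and $q$'s projected sets change.

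\textbf{Unfriend$(p,q)$.} This is symmetric. It removes the edge on both sides and maps to Unfriend$(p,q)$ with guard $p$ or $q$.

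\textbf{Recover$(p,q)$.} By exactness, $N_q(p)$ is $p$'s pre-loss friend set. Each $r$ in it still records $p$, so $\sigma$ already contains those edges, and Recover projects to a stutter.

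\textbf{Replace$(p,p')$.} Under the relabelling of $p'$ to $p$ in Definition~\ref{def:ssg-projection}, the friend sets are rewritten verbatim, so Replace also projects to a stutter.

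\textbf{State-loss fault at $p$.} This also projects to a stutter when every edge of $p$ is still recorded at the other endpoint. When some edge $\{p,r\}$ has no surviving record, the fault projects to Unfriend$(p,r)$, one for each such edge. Volitional states are copied unchanged, so guards (willingness) carry over.

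For correctness, I would apply the step lemma along a run, drop the stutters, and check fairness and liveness. This is where friendship preservation (Definition~\ref{def:friendship-preserving}) matters: in a friendship-preserving run, Recover and Replace never silently drop a recoverable edge. I would also show that an enabled social-graph transaction whose projected precondition persists is matched by an enabled secure transaction with the same guards.

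For completeness, I would simulate each social-graph Befriend or Unfriend by the same-named secure transaction, performing no faults. Lemma~\ref{lem:ssg-fof-exact} keeps the preconditions aligned, and each image is exactly the simulated run.

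The main obstacle I expect is the state-loss case. A single fault may erase several last records at once, which produces a batch of Unfriends rather than one step. One way to handle this is to treat a fault as a sequence of projected steps, which needs Definition~\ref{def:implementation} to allow a step to map to a finite sequence. The alternative, which I would try first, is to note that the fault is not a protocol transaction, so its image only has to be a configuration reachable by unguarded-equivalent Unfriends. A related subtlety is checking that the Unfriend guard (either party) is justified by a fault nobody willed. I would argue that the paper counts such unrecoverable losses as permitted environment steps, consistent with the discussion after Definition~\ref{def:ssg-projection}.
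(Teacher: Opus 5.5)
Your skeleton is the paper's: initial configuration, each secure transaction projecting to Befriend, Unfriend or a stutter, liveness by carrying enabledness of a projected class back to the secure class with the same guards, and completeness by replaying same-named transactions. But you have put the difficulty in the wrong place.

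State-loss faults are not transitions of the secure social graph: Definition~\ref{def:fault-resilient} draws faults from $C'^2\setminus T'$, so a correct run of $TS'$ contains none. Consequently Friend-of-Friend Exactness (Lemma~\ref{lem:ssg-fof-exact}) and mutuality hold unconditionally along every correct run, and three things follow:
\begin{itemize}
\item Befriend's precondition gives $q\notin\sigma(c)_p$ outright.
\item Recover$(p,q)$ has $N_q(p)=F_p$ and $N_p(r)=F_r$ for $r\in F_p$, so its effect is the identity. There is no ``pre-loss friend set'' to reason about.
\item No friendship-preservation hypothesis is needed; the theorem is claimed for all correct runs.
\end{itemize}
Your proposed handling of an edge-erasing fault would fail anyway. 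The definition removes stutters but never splits one step into several, so a fault cannot map to a batch of Unfriends. Nor is an Unfriend that no guard willed, and that consumes no volition, a social-graph transition. This is exactly why the paper, in Theorem~\ref{thm:ssg-resilient}, restricts to friendship-preserving runs so that every fault and its Recover project to stutters, rather than mapping such faults to anything.

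The genuine gap is in completeness. Showing $\sigma(\rho')=\rho$ is not enough: Definition~\ref{def:implementation-properties} requires $\rho'$ itself to be a \emph{correct}, hence live, run of the secure social graph. That system has transactions with no social-graph counterpart, so you must also show:
\begin{itemize}
\item[(i)] Every Befriend or Unfriend class enabled throughout a suffix of $\rho'$ is enabled throughout the matching suffix of $\rho$, since friend sets, volitions and guards coincide; it is therefore taken.
\item[(ii)] Replace is never enabled. $\rho'$ copies the volitional states of $\rho$, which never contain a Replace class, and Replace's guard is a nonempty supermajority of $K_p$.
\item[(iii)] Recover imposes no liveness obligation, even though it is unguarded and its precondition $p\in F_q$ holds whenever an edge exists. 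By exactness its effect is the identity, and a volitional machine transaction requires $c\neq c'$ (Definition~\ref{definition:vmat}), so it induces no transition at all.
\end{itemize}
Without (iii), your replayed run would contain an unguarded class that is enabled forever and never taken, and it would not be live.
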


\begin{restatable}[Fault-Resilience]{theorem}{ThmSsgResilient}\label{thm:ssg-resilient}
With the projection $\sigma$ of Definition~\ref{def:ssg-projection} and $F$ the state-loss faults (Definition~\ref{def:state-loss-fault}), the secure social graph is an $F$-resilient implementation of the social graph \full{(Definition~\ref{def:fault-resilient})}{(\fullversion)}, restricted to friendship-preserving runs (Definition~\ref{def:friendship-preserving}).
\end{restatable}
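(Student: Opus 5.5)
The plan is to reduce fault-resilience to the fault-free implementation result, Theorem~\ref{thm:ssg-implements-sg}, plus a stuttering argument for faults and recoveries. I take Definition~\ref{def:fault-resilient} to mean the following. Every run of the secure social graph interleaved with faults from $F$ maps, under $\sigma$ and after removing stuttering steps, to a correct run of the social graph. Moreover, at quiescence, after all faults have been repaired, the projected state matches the abstract state. Here the abstract state is reached with the unrecoverable friendships treated as abstract Unfriends. Since the claim is restricted to friendship-preserving runs (Definition~\ref{def:friendship-preserving}), I proceed by induction on the length of the run. The invariant is that $\sigma(c)$ is a reachable social-graph configuration and equals the abstract configuration of the mapped prefix.

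I will classify each step of the implementation run and show that its image under $\sigma$ is either a stutter or a single abstract transaction.

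\textbf{Befriend and Unfriend.} These are handled exactly as in Theorem~\ref{thm:ssg-implements-sg}. Since $\sigma(c)_p=F_p\cup\{r:p\in F_r\}$, a Befriend$(p,q)$ adds the edge $\{p,q\}$ in the projection and nothing else. The edge was absent before, because $q\notin F_p$ and, by the invariant, the edge is absent on both sides. Unfriend$(p,q)$ removes the edge. One subtlety is that after a state loss, $F$ may be asymmetric. I must therefore check that Unfriend removes both records. It does, since the definition deletes $q$ from $F_q$'s side as well.

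\textbf{State-loss fault at $p$.} This sets $F_p=\emptyset$. For every recoverable edge $\{p,r\}$, the agent $r$ still records $p$, because $r$ is the surviving record. Hence $\sigma$ is unchanged and the step is a stutter. If the fault erases the last record of an edge, the projection loses exactly that edge. This is an unrecoverable friendship, and I map it to an abstract Unfriend$(p,r)$. That Unfriend is enabled, since the edge was present abstractly, and it is guarded by either party, so a single volition suffices.

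\textbf{Recover$(p,q)$.} This sets $F'_p:=N_q(p)$. The run is friendship-preserving, so every recoverable pre-loss friend of $p$ is restored. Each such friend $r$ still records $p$, so the edge was already present in $\sigma$ and the step is a stutter. The remaining risk is that Recover re-adds a stale edge, one that was unfriended or lost in the meantime. I plan to exclude this by a lemma that adapts Lemma~\ref{lem:ssg-fof-exact} to runs with faults. The lemma states that for any $q$ with $q\in F_p$ who has not lost state since its last update, $N_q(p)$ agrees with $p$'s last pre-loss friend set. Any Unfriend involving $p$ during the loss updates $F_r$ for the other party, and it must also keep $N_q(p)$ consistent. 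Proving this bookkeeping is where I expect the main obstacle. Unfriend updates only $N_r(p)$ for $r\in F'_p$, and $F'_p$ is empty at a lost $p$. So I may need to argue that such a re-added edge is recoverable in the sense of Definition~\ref{def:friendship-preserving}, or that it is excluded by the restriction to friendship-preserving runs. My first attempt will be to fold this case into the restriction: a Recover that restores a non-friend violates ``restores all the recoverable friendships'' read as restoring exactly them.

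\textbf{Replace$(p,p')$.} After relabelling $p'$ to $p$ in $\sigma$, the new agent inherits $F_p$ and $N_p$, and every friend rewrites $p$ to $p'$. The projection is therefore unchanged, a stutter, except for friendships with no third-party record. Those are unrecoverable, and again each maps to an abstract Unfriend.

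\textbf{Conclusion.} Finally, I combine the cases. Every mapped step is a stutter, a Befriend, or an Unfriend with its guards respected. Since volitional states are preserved by $\sigma$, the image is a correct social-graph run. At quiescence, with all faults recovered, every recoverable edge is recorded by both parties, so the projection agrees with the endpoints' own friend sets.
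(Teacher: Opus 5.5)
Your step classification matches the paper's skeleton: Befriend, Unfriend and Replace are handled as in Theorem~\ref{thm:ssg-implements-sg}, and faults and Recover are stutters. There are two genuine gaps.

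\textbf{The unrecoverable residue.} You map a state loss (or Replace) that erases the last record of an edge to an abstract Unfriend, on the grounds that ``a single volition suffices''. But no one willed that Unfriend: the fault is not a volitional act, and $\sigma$ preserves volitional states. A social-graph Unfriend step needs its class $[t]$ in some guard's volitional state in $\sigma(c)$, and removed from every volitional state in $\sigma(c')$ (Definitions~\ref{definition:enabled} and~\ref{definition:vmat}). Since the fault leaves volitional states unchanged, these two requirements cannot both hold, so $\sigma(c)\to\sigma(c')$ is not a social-graph transition. Moreover, one fault can erase several such edges at once, which is not a single transition at all. That the resulting configuration is \emph{reachable} by Unfriends is not what Definition~\ref{def:fault-resilient} asks. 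The paper does not map the residue to anything. It treats the residue as excluded by the friendship-preserving restriction, so at a state-loss fault at $p$ every friend $r$ still has $p\in F_r$, $\sigma(c)_p=\{r : p\in F_r\}$ is unchanged, and the fault is a stutter.

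\textbf{Recover.} You rightly spot the stale-edge danger, but folding it into the restriction fails. An edge $\{p,r\}$ that $r$ unfriends while $p$ is state-lost was still recorded at $r$ when the loss occurred. It is therefore recoverable in the sense of Definition~\ref{def:friendship-preserving}. Friendship preservation---even read as ``exactly the recoverable ones''---then \emph{forces} Recover to put $r$ back into $F_p$, re-creating in $\sigma$ an edge nobody willed. The missing ingredient, which the paper uses, is that a state-lost agent takes no step until it recovers. Every Befriend, Unfriend or Replace that could change $p$'s friendships, or those of a friend $r$ with $p\in F_r$, has $p$ among its participants, so none fires during the loss. Hence the memberships $p\in F_r$ and the records $N_q(p)$ are frozen. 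Since Befriend and Unfriend update both endpoints symmetrically, $N_q(p)=F_p^-=\{r : p\in F_r\}$ at recovery, which makes Recover a stutter.

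Finally, Definition~\ref{def:fault-resilient} concerns live runs only. Your closing sentence addresses safety but not liveness of $\sigma(r')$. For liveness the paper reuses the argument of Theorem~\ref{thm:ssg-implements-sg}, with faults and Recovers as stutters.
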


\noindent With $F=\emptyset$ this is the correctness of Theorem~\ref{thm:ssg-implements-sg}; the theorem extends it to runs that perform state-loss faults and recover from them, every recoverable friendship being restored and the unrecoverable residue excluded by friendship-preservation.

%% file: sections/cva.tex
\section{Communicating Volitional Agents}\label{sec:cva}

The framework of Section~\ref{section:dts} abstracts over \emph{how} a guarded multiagent atomic transaction is carried out: a $k$-ary transaction simultaneously updates the machine states of all its participants, with no notion of how participating machines come to act in concert.  An implementation on networked smartphones has no such simultaneity: every exchange between two agents is a separate asynchronous message-passing event, and a $k$-ary transaction must be realised by a protocol composed of such events.

Here we present \temph{communicating volitional agents} (CVA)~\cite{lewis2026volitional}, an intermediary level between the abstract guarded-transaction specification and its implementation on networked smartphones.  A CVA protocol's guarded machine transactions are drawn from four syntactic forms only: binary \temph{discover} transactions by which an agent comes to know of another; binary \temph{communicate} transactions that copy a message from one agent's outbox to another's inbox; 
and unary \temph{platform} transactions.  Thus, a CVA platform need only specify its platform transactions, changing an agent's platform-specific state.  Moreover a CVA protocol is grassroots by construction (Theorem~\ref{thm:cva-grassroots}).

\mypara{Local states} A CVA agent's local state packages five components: a set of \emph{known peers} with whom the agent can communicate, an \emph{outbox} of messages awaiting delivery, an \emph{inbox} of messages received, and a \emph{platform state} in which each platform specifies whatever data it requires (friend sets, groups, bonds, feeds, and so on).
A \emph{message} is a triple of sender, recipient, and cargo; outboxes and inboxes are sets of messages.  The cargo space $C$ and the platform state space $A$ are parameters of the CVA platform.

The formal definitions---CVA local states and configurations, and the CVA protocol, comprising Discover, Communicate, and the platform transactions---are \full{recalled in Appendix~\ref{app:dts}, together with the Known-Peers and Outbox Containment lemmas and the following, proven in Appendix~\ref{appendix:proofs}}{recalled in \fullversion, together with the Known-Peers and Outbox Containment lemmas and the following}:
\begin{restatable}[CVA Grassroots]{theorem}{ThmCvaGrassroots}\label{thm:cva-grassroots}
Every CVA protocol is grassroots.
\end{restatable}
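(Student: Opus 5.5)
We prove the two halves of the grassroots definition separately: first that every CVA protocol is oblivious, then that it is interactive. Both arguments rest on the syntactic shape of CVA transactions. Every transaction is unary (platform) or binary (Discover, Communicate). A binary transaction between $p$ and $q$ can fire only if $q$ is among $p$'s known peers, or $p$ has a message addressed to $q$ in its outbox. By the Known-Peers and Outbox Containment lemmas, in any run over an agent set $P$ the known peers and the outbox recipients of each agent lie within $P$.

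\textbf{Obliviousness.} Let $r_1$ be a correct run over $P_1$ and $r_2$ a correct run over $P_2$, with $P_1\cap P_2=\emptyset$. Take any interleaving $r$ of the two, viewed as a run over $P_1\cup P_2$. We argue by induction on the steps of $r$ that each step is a transaction of the protocol over $P_1\cup P_2$, enabled in the current configuration of $r$.

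Consider a step taken from $r_i$. Its participants all lie in $P_i$. Since the agents of $P_i$ are touched only by steps of $r_i$, their local states in the current configuration of $r$ coincide with their states at the corresponding point of $r_i$. So the machine precondition holds, and the guards' volitions are the same as in $r_i$. The step is therefore enabled in $r$ and yields the configuration that extends $r_i$'s on $P_i$ while leaving $P_{3-i}$ unchanged. The one point needing care is that the CVA protocol's transaction sets are parametrised by $P$: a Discover or Communicate transaction defined over $P_i$ must remain a transaction over $P_1\cup P_2$. This holds because the transactions are defined uniformly for any pair of agents, and moving to the larger set only adds transactions; it removes none.

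\textbf{Interactivity.} Pick disjoint nonempty sets $P_1,P_2$ with $p\in P_1$ and $q\in P_2$. Starting from the initial configuration over $P_1\cup P_2$, take a Discover step in which $p$ comes to know $q$, assuming the person wills it if the step is guarded. This is a correct run of the union. Its first step has participants spanning both sets, so it cannot be an interleaving of runs over $P_1$ and $P_2$: in such runs, by the Known-Peers lemma, $q$ never appears in $p$'s known peers. If the Discover transaction requires some precondition, such as $p$ already having the address of $q$, we adapt by first reaching that precondition within the union.

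\textbf{Main obstacle.} I expect the main obstacle to be the parametrisation check in the obliviousness half, not the interactivity witness. Concretely, we must confirm that the CVA transactions, with their guards, are monotone in the agent set, so that a transaction over $P_i$ remains one over $P_1\cup P_2$. We also need the step to have no hidden dependence on non-participants. That independence is exactly what atomic transactions provide in the definition of being enabled (Definition~\ref{definition:enabled}).
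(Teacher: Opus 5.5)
\textbf{Gap: the obliviousness half never checks liveness.} Your argument shows only that the interleaving $r$ is a \emph{safe} run of $\mathcal{F}(P_1\cup P_2)$, meaning each step is a transaction of the union and is enabled where it is taken. Obliviousness asks for a \emph{correct} run, so $r$ must also be live, and that is where the substance lies. Every step of $r$ comes from $r_1$ or $r_2$, so no transaction whose participants span $P_1$ and $P_2$ is ever taken in $r$. If such a transaction were enabled throughout a suffix of $r$, then $r$ would fail liveness even though $r_1$ and $r_2$ are live. Within-group classes are harmless, since enabledness depends only on the participants' states, which evolve in $r$ exactly as in $r_i$. So you must show that no cross-group transaction is ever enabled in an interleaving; this is the general criterion the paper's proof applies.

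A sign that something is missing: your argument uses nothing specific to CVA beyond monotonicity of the transaction sets in $P$. It would equally ``prove'' obliviousness of a variant with an \emph{unguarded} Discover. There, Discover$(p,q)$ for $p\in P_1$, $q\in P_2$ is enabled from the initial configuration onward in every interleaving and is never taken. The parametrisation check you flag as the main obstacle is the routine part.

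\textbf{How to close it.} The containment lemmas you set up in your first paragraph are exactly the tool for the missing step; you just never deploy them. The cases are:
\begin{itemize}
\item Platform transactions are unary, so they pose no cross-group case.
\item Communicate is unguarded, so it carries a liveness obligation whenever its precondition $\mathsf{message}(p,q,c)\in o_p$ holds. For $p\in P_1$, $q\in P_2$ this never holds in $r$: $p$'s outbox there is its outbox in the $P_1$-run, whose recipients lie in $P_1$ by Outbox Containment.
\item Discover cannot be handled by containment. Contrary to your opening claim, Discover$(p,q)$ does not require $q\in\mathit{known}_p$; it requires the opposite, and your own interactivity witness fires it from an empty known set. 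It is excluded by its guard instead. $p$'s volitional state in $r$ is its volitional state in the $P_1$-run, which contains only classes of transactions over $P_1$, so $p$ never wills Discover$(p,q)$ (Guarded Obliviousness).
\end{itemize}
With this case analysis added, the proof goes through. Your interactivity half matches the paper's, and the hedge about a possible Discover precondition is unnecessary.
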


%% file: sections/cva-implementation.tex
\section{CVA Implementation of the Secure Social Graph}\label{section:cva-impl-overview}

We implement the secure social graph as communicating volitional agents (Section~\ref{sec:cva}): the abstract Befriend, Unfriend, StateLoss, Recover, and Replace transactions of Section~\ref{section:secure-gsg-abstract} are realised over CVA's asynchronous message discipline, with no global resource.  This section summarises the construction and states its main results; the full development---platform state, transaction specifications, supporting invariants, and proofs---\full{is given in Appendix~\ref{section:gsg-cva}}{is given in \fullversion}.

\mypara{Construction}  A friendship is encoded by an epoch counter whose parity gives its status: an odd epoch is active, an even epoch inactive.  Befriend is an offer/accept handshake at a fresh odd epoch; Unfriend is unilateral, advancing the epoch to the next even value; and every change an agent makes to one of its friendships is disseminated to its friends as a stream update, so that friend-of-friend views converge. 

On top of this friendship machinery the security layer adds each agent's intrinsic, immutable identity record---its custodian set and supermajority threshold---which is carried on Befriend and stored by each friend alongside the friendship.

\mypara{One recovery mechanism}  Both faults are repaired by a single mechanism, periodic re-broadcast and absorption, with identity loss recovery prepending one step.  Each agent periodically re-broadcasts a snapshot of its friendships, carrying its identity record, to its friends, and a recipient absorbs it monotonically.  A recovered
agent that lost its machine state but kept its key is still recorded by its friends, so their checkpoints keep arriving and it rebuilds passively \full{(Restore, Appendix~\ref{section:secure-gsg-cva-restore})}{(Restore, \fullversion)}---state recovery with no further machinery.  An agent that lost its key recovers as a fresh identity $p'$ that no friend yet records; the one extra step is to \emph{install} $p'$ at the friends, a custodian-authorised rename of $p$ to $p'$ (the Replace cascade below), after which $p'$ is in the position of a state-lost agent and rebuilds by the same absorption.  State loss is thus the special case in which the install is a no-op, the key being retained; identity loss is the case in which the install is custodian-authorised.

\mypara{Replace: installing the recovered identity}  Identity loss is repaired by the Replace cascade \full{(Appendix~\ref{section:secure-gsg-cva-replace})}{(\fullversion)}, the conceptual heart of the construction.  It proceeds in three phases.  Phase~1 is off-protocol: the person behind the lost agent $p$ chooses a fresh agent $p'$ and convinces a supermajority of $p$'s custodians to authorise the replacement.  Phase~2 is on-protocol vouching: each willing custodian sends $p'$ a $\mathsf{vouch}$ carrying its own view of $p$'s friends, bootstrapping $p'$'s knowledge of whom to notify.  Phase~3 is the cascade: once a supermajority has vouched, $p'$ sends $\mathsf{new\_identity}$ to every friend of $p$ it has learned of; each recipient verifies the vouching set against the identity record it stored for $p$ on befriend, renames $p$ to $p'$ preserving the epoch, and returns a $\mathsf{rebind}$ carrying its own view of $p$'s friends---which may reveal friends no voucher named.  The cascade thus reaches every friend of $p$ recorded by a custodian or an already-reached friend.  
Verification is local: each friend checks the vouchers against the immutable identity record it received on befriend, with no global custodian registry.

\mypara{Implementation}  The implementation realises the secure social graph specification at the granularity of completed protocols at quiescence \full{(Definition~\ref{definition:gsg-cva-quiescence})}{(\fullversion)}, and hence realises the social graph.  A run recovering from major faults reaches such a quiescent configuration under eventual delivery, the periodic re-broadcast \full{(Appendix~\ref{section:secure-gsg-cva-rebroadcast})}{(\fullversion)} eventually delivering every checkpoint.  
The abstract friend set $\widetilde F_p$ a configuration is read against \full{(Definition~\ref{definition:secure-gsg-cva-abstract-friend-set})}{(\fullversion)} is the set of friendships of $p$ as recorded in $p$'s state. 

\full{The following are proven in Appendix~\ref{appendix:proofs}}{The following are proven in \fullversion}:

\begin{restatable}[Secure Quiescent Correspondence]{theorem}{ThmSecureCorrespondence}\label{theorem:secure-gsg-cva-refinement}
Each completed transaction protocol of the secure social graph CVA implementation realises, under the mapping $F_p\mapsto\widetilde F_p$, the corresponding transaction of the abstract secure social graph (Section~\ref{section:secure-gsg-abstract}); the correspondence is at the granularity of completed protocols at quiescence, not a step-by-step bisimulation:
\begin{enumerate}
    \item A completed Befriend handshake in CVA realises the abstract Befriend$(p,q)$.
    \item A completed Unfriend in CVA realises the abstract Unfriend$(p,q)$.
    \item A completed Replace cascade in CVA (Vouch, Announce new identity, Integrate new identity, Integrate rebind) realises the abstract Replace$(p,p')$ on every friendship it reaches; on a friendship-preserving run (Definition~\ref{def:friendship-preserving}) it reaches every recoverable friend.  The one unrecoverable friendship---recorded, at the fault, only by the two friends, the identity of one then lost---is not recovered; the specification does not require its recovery (Definition~\ref{def:friendship-preserving}).
    \item A completed Restore in CVA, after a state loss at $p$, realises the abstract Recover$(p,q)$: it sets $\widetilde F_p$ to the friends that still record $p$, which at quiescence is the full set of friendships recorded at $p$ before the state loss.
\end{enumerate}
\end{restatable}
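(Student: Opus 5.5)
The proof goes clause by clause. Each clause compares the quiescent configurations just before and just after a completed protocol, under the mapping $F_p\mapsto\widetilde F_p$; we do not simulate intermediate steps. First I would fix a small set of invariants of the CVA platform state, all provable by induction over Discover, Communicate and the platform transactions:
\begin{itemize}
\item per-friendship epochs are monotone, and absorption of stream updates and checkpoints is monotone (it keeps the maximum epoch), so a stale or duplicated message is inert (Lemma~\ref{lemma:gsg-cva-stale-inert});
\item at quiescence, every message in an outbox has been delivered and absorbed;
\item at quiescence, the two endpoints of a friendship hold the same epoch, so $q\in\widetilde F_p\iff p\in\widetilde F_q$ (the analogue of Lemma~\ref{lemma:gsg-mutuality});
\item at quiescence, each friend's stored view of $p$'s friendships equals $\widetilde F_p$ (the analogue of Lemma~\ref{lem:ssg-fof-exact}).
\end{itemize}
The last two invariants let us read a quiescent CVA configuration as an abstract secure-social-graph state $(\mathit{IR}_p,N_p)$. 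It then suffices to show that each completed protocol moves this state exactly as the corresponding clause of Definition~\ref{definition:secure-gsg-transactions} prescribes.

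For Befriend and Unfriend the argument is direct. A completed offer/accept handshake sets both endpoints to a fresh common odd epoch, so $q$ enters $\widetilde F_p$ and $p$ enters $\widetilde F_q$. The resulting stream updates, delivered by quiescence, make every $r\in F_p$ record the new $F'_p$, and likewise for $q$; this matches the $N'$ updates of Befriend. Unfriend advances the epoch to the next even value at the initiator. The update reaches the other endpoint and all friends, so at quiescence both sides are inactive and every friend-of-friend record is current, which matches Unfriend. In both cases monotone absorption rules out interference from concurrent older messages.

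For Restore after a state loss at $p$, the key retained means friends still address $p$. Their periodic checkpoints are eventually delivered (Appendix~\ref{section:secure-gsg-cva-rebroadcast}), and monotone absorption rebuilds each friendship at the epoch its co-owner holds. At quiescence $\widetilde F_p=\{q: p\in\widetilde F_q\}$, which by the exactness invariant just before the loss equals $N_q(p)$ for any surviving friend $q$. This is the abstract Recover$(p,q)$. The friends' records of $\widetilde F_p$ are then refreshed by the ordinary stream, giving $N'_p(r)=F_r$.

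The Replace clause is the main obstacle. I would argue reachability by induction on the cascade. The seed set is the union of the views carried by the $\mathsf{vouch}$ messages of a supermajority of $K_p$. Each $\mathsf{new\_identity}$ recipient $r$ checks the vouchers against its stored $\mathit{IR}_p$, renames $p$ to $p'$ preserving the epoch, and returns a $\mathsf{rebind}$ with its view of $p$'s friends, which enlarges the set $p'$ notifies. The cascade therefore closes under ``recorded by a custodian or by a reached friend''. On a friendship-preserving run, every recoverable friend of $p$ is recorded by some third agent, and these recorders are connected to the seed through friends of $p$. I expect this to be the hardest step: I would try showing that any recoverable friend $r$ appears in the view of some already-reached friend, arguing on the friend-of-friend structure, and otherwise fall back on the clause of Definition~\ref{def:friendship-preserving} that assumes exactly this reach. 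Once every recoverable friend is renamed, $p'$ is in the position of a state-lost agent, and the Restore argument gives $\widetilde F_{p'}=F_p$ together with updated records $N'_s(r)$ at quiescence. Custodian sets naming $p$ are rewritten at the friends, matching the remaining updates of Replace. The unrecoverable friendship, recorded only by its two ends, is absent on both sides and is excluded by the definition.
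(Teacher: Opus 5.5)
Your skeleton is the paper's: clause by clause at quiescence, a mutuality fact for Befriend and Unfriend, skeleton-plus-direct-heal absorption of checkpoints for Restore, and reach for Replace drawn from the friendship-preserving hypothesis. But the invariant underlying your reading of the state is false exactly where clause~3 applies. Friendship-preserving runs may still contain the identity-loss residue, and there it is not ``absent on both sides.'' No $\mathsf{new\_identity}$ ever reaches $w$, so nothing renames or retires $\mathrm{FMap}_w[p]$, and it keeps its odd epoch; this is the ``stale record of the retired key'' the paper's proof mentions. Under the relabelling $p'\mapsto p$ you get $p\in\widetilde F_w$ but $w\notin\widetilde F_{p'}$. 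So ``at quiescence both endpoints hold the same epoch'' fails, and with it your global reading of a quiescent configuration as an abstract state $(\mathit{IR}_p,N_p)$. A property asserted only at quiescent states is also not directly inductive over Discover, Communicate and the platform transactions. The paper needs neither. It uses Half Friendship Mutuality (Lemma~\ref{lemma:secure-gsg-cva-half-mutuality}), an inductive invariant whose extra disjuncts admit a message of epoch $\ge x$ in transit or a pending $\mathsf{new\_identity}$. It specialises this at quiescence to the two parties of the completed handshake, and states clause~3 locally, for the friendships the cascade reaches.

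Your first-choice argument for reach cannot succeed, so the fallback is the actual argument. By Definition~\ref{def:friendship-preserving}, $(p,w)$ is recoverable as soon as any third agent records it, for instance a friend $s$ of $w$ who is not a friend of $p$ and learned the edge from $w$'s stream. The cascade, however, only reads the vouchers' $\mathrm{FoFMap}_c[p]$ and the renamed friends' $\mathrm{FoFMap}_q[p']$, never $s$'s record. The paper therefore takes reach directly from the friendship-preserving hypothesis. It cites Replace Convergence (Theorem~\ref{theorem:secure-gsg-cva-replace-convergence}: quiescence at $p$ before Replace with $V=K_p$ places every $w\in F_p$ in some $L_c$) as a concrete sufficient condition, and rebuilds $\widetilde F_{p'}$ via Integrate rebind. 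Your alternative, letting renamed friends' checkpoints rebuild $p'$, works equally well.

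Finally, matching $N$ is more than the statement asks, since the mapping is only $F_p\mapsto\widetilde F_p$. For Restore you also cite the wrong mechanism. Integrate checkpoint emits no stream updates, and the stream carries only changes and on-join snapshots, neither of which a restore triggers. So $N'_p(r)=F_r$ comes from the observer-heal clause of $r$'s checkpoint (Definition~\ref{definition:secure-gsg-cva-rebroadcast}). The friends' records $N_r(p)$ are untouched by abstract Recover and need no refresh.
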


\begin{restatable}[The CVA Implementation Securely Realises the Social Graph]{corollary}{ThmSecureImplementsSG}\label{corollary:secure-gsg-cva-implements-sg}
The secure social graph CVA implementation securely realises the social graph (Section~\ref{section:gsg-abstract}), composing the Secure Quiescent Correspondence with the implementation of the social graph by the secure social graph (Theorem~\ref{thm:ssg-implements-sg}) and transitivity of implementation, on friendship-preserving runs at quiescence.
\end{restatable}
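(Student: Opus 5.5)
The corollary follows by composing two mappings and checking that the run restrictions line up. Define the composite projection $\sigma\circ\tau$. Here $\tau$ maps a quiescent CVA configuration to a secure-social-graph configuration: each agent $p$ gets the identity record it carries and the friend-of-friend map $N_p$ whose domain is $\widetilde F_p$ (Definition~\ref{definition:secure-gsg-cva-abstract-friend-set}), with each $N_p(q)$ read from the stream-updated view $p$ holds of $q$. Then $\sigma$ is the projection of Definition~\ref{def:ssg-projection}.

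First, fix a friendship-preserving CVA run with finitely many faults, and restrict to its quiescent configurations. Cut the run at these points. Between two consecutive quiescent points, the steps decompose into completed protocols: Befriend handshakes, Unfriends, Replace cascades and Restores. By Theorem~\ref{theorem:secure-gsg-cva-refinement}, each completed protocol maps under $\tau$ to the corresponding abstract secure transaction. For Replace, the realisation holds on every friendship the cascade reaches, and friendship-preservation ensures it reaches all recoverable friends. So the sequence of $\tau$-images forms a run of the secure social graph, interleaved with state-loss faults. This abstract run is itself friendship-preserving in the sense of Definition~\ref{def:friendship-preserving}. The reason is that Definition~\ref{def:friendship-preserving} refers only to which records survive and whom Recover and Replace reach, and these properties transfer through $\tau$ by items~3 and~4 of the theorem.

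Second, apply Theorem~\ref{thm:ssg-resilient} to the abstract run; this is where faults appear, and it subsumes Theorem~\ref{thm:ssg-implements-sg} in the fault-free case. It says the $\sigma$-image is a correct run of the social graph, where correctness means the faults are absorbed and unrecoverable friendships are excluded. Transitivity of implementation then gives that $\sigma\circ\tau$ maps the quiescent CVA states to correct social-graph states. By definition, this is what "securely realises" means. For completeness, every social-graph Befriend and Unfriend is the $\sigma$-image of a secure transaction (Theorem~\ref{thm:ssg-implements-sg}), and that transaction is in turn realisable by a completed CVA protocol (items~1 and~2).

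The main obstacle is transitivity. The general implementation notion (Definition~\ref{def:implementation}) is stated step-wise, whereas the CVA correspondence holds only at quiescence, at the granularity of completed protocols. The composition must therefore be phrased over the subsequence of quiescent configurations. My approach is to define a "quiescent-stuttering" implementation: the CVA run's sequence of quiescent states is required to map to a run of the abstract system, with completed protocols collapsed to single steps. I would then check that composing it with an ordinary implementation yields the same kind of relation. This requires two facts. The first is that relabelling a replaced $q'$ back to $q$ in $\sigma$ commutes with the CVA rename $p\mapsto p'$. The second is that quiescent points exist infinitely often, or at the end of the run, under eventual delivery and periodic re-broadcast, which holds since only finitely many faults occur.
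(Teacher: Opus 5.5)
Your outline is the paper's argument: compose the Secure Quiescent Correspondence (Theorem~\ref{theorem:secure-gsg-cva-refinement}) with the abstract-level implementation by transitivity, on friendship-preserving runs at quiescence. You differ in where faults are handled.

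The paper removes faults at the CVA level. CVA Fault-Resilience (Theorem~\ref{thm:secure-gsg-cva-resilient}) shows that a state loss and its Restore return $\widetilde F_p$ to its pre-fault value. The quiescent image is therefore a fault-free abstract run. The fault-free Theorem~\ref{thm:ssg-implements-sg} then suffices, composed via Lemma~\ref{lemma:implementation-transitivity}, with the map being only $F_p\mapsto\widetilde F_p$.

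You instead lift the faults into the abstract run and invoke Theorem~\ref{thm:ssg-resilient}. This creates two obligations the paper never incurs:
\begin{itemize}
\item You must show the abstract image is friendship-preserving in the sense of Definition~\ref{def:friendship-preserving}. That definition's notion of ``record'' includes third-party friend-of-friend entries such as $N_q(p)$. Checkpoint-based Restore never consults these, since a checkpoint carries only the sender's direct friendships. So ``transfers through items~3 and~4'' needs a real argument reconciling the two notions of recoverability.
\item Because your $\tau$ rebuilds $N_p$ from FoFMap views, you must show those views are exact at every quiescent point. This would need something like Repair after Faults (Lemma~\ref{lemma:secure-gsg-cva-repair}).
\end{itemize}
Dropping $N_p$ from $\tau$, since $\sigma$ discards it anyway, and netting faults at the CVA level as the paper does removes both obligations. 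On the other hand, you are right that Lemma~\ref{lemma:implementation-transitivity} is step-wise while the correspondence holds only at the granularity of completed protocols at quiescence. The paper applies the lemma without adapting it. Your reformulation over the subsequence of quiescent configurations, using that stutter removal composes, is the natural repair and an improvement.

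Two smaller points. First, finitely many faults do not make quiescent configurations recur. Persons may keep willing Offer or End friendship forever, so that at every point some volition is enabled or some message is in flight. Periodic re-broadcast does not help here: it is always enabled and keeps emitting messages. State the result, as the paper does, for whatever quiescent configurations occur, rather than asserting that they recur. Second, your completeness remark runs in the wrong direction. Items~1 and~2 map completed CVA protocols to abstract transactions; they do not give a correct CVA run realising a given abstract run. Building one would need Friendship Establishment and Unfriend Propagation (Theorems~\ref{theorem:gsg-cva-establishment} and~\ref{theorem:gsg-cva-unfriend-propagation}) plus an explicit construction. The corollary only asserts the correctness direction, so you can simply drop the remark.
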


\mypara{Safety} 
Abstract-state correspondence is guaranteed only at quiescence (Theorem~\ref{theorem:secure-gsg-cva-refinement}), so the simulation is a liveness property.  Safety, by contrast, holds at every reachable state, and continues to hold under finitely many failures, which the periodic re-broadcast of checkpoints repairs \full{(Lemma~\ref{lemma:secure-gsg-cva-repair})}{(\fullversion)}: we guarantee \emph{soundness} at all times---a friendship $(p,q)$ is reported only if both parties at some point wanted to befriend each other \full{(Appendix~\ref{section:secure-gsg-cva-properties})}{(\fullversion)}.

\mypara{The unrecoverable fault}  Recovery is exact except on the one information-theoretically unrecoverable condition (Definition~\ref{def:friendship-preserving}\full{; Lemma~\ref{lemma:secure-gsg-cva-irrecoverable}}{}): a major fault erases the last surviving record of a friendship. 
This has two instances.  Under \emph{state loss}, both friends lose their state before recovery: no friend still records the edge, so nothing remains to recover from (the friendship-preserving exclusion of Section~\ref{section:ssg-implements-sg}).  Under \emph{identity loss}, one friend's key is replaced and the edge was disseminated to no other agent, so the Replace cascade never reaches the surviving friend \full{(Appendix~\ref{section:secure-gsg-cva-replace})}{(\fullversion)}.  
The specification does not require the recovery of an unrecoverable friendship (Definition~\ref{def:friendship-preserving}).  The person behind the surviving friend may re-establish it out of band.

%% file: sections/coins-implementation.tex
\section{CVA Implementation of Secure Grassroots Coins}\label{section:coins-impl-overview}

The second instance of the recovery framework (Section~\ref{section:recovery}) is grassroots coins~\cite{shapiro2024gc,lewis2023grassroots,shapiro2026bonds}: each agent is the \emph{sovereign} of its own currency, mints its own coins---a coin with a future maturity date is a \emph{bond}---and authorises every transaction in them.  Where the social graph's recoverable state is friendships, each co-owned by two friends and recovered lazily, a currency's recoverable state is the sovereign's \emph{transaction log}: a totally ordered record under a single authoritative writer, in which every recorded payment matters, demanding \emph{exact} recovery.  We summarise the specification and its CVA implementation; the full development, with proofs, \full{is in Appendix~\ref{section:secure-bonds} (specification) and Appendix~\ref{section:secure-coins-cva} (implementation)}{is in \fullversion}.

\mypara{Specification}  Each sovereign $p$ maintains an append-only \emph{transaction log} $L_p$ recording every transaction in $p$-coins---Mint, Pay, Redeem, Swap---following Grassroots Flash~\cite{lewis2023grassroots}, where the sovereign approves each payment and its personal blockchain is the authoritative ledger.  To recover from a fault, $p$ designates a set of its friends as \emph{state custodians} at the formation of its currency, fixed and immutable thereafter, each holding a copy of $L_p$.  Because the transactions are atomic, every custodian's copy is updated in the same step as the sovereign's, so all agree and a single custodian suffices to recover the log, from which every $p$-coin holding is derived \full{(Appendix~\ref{section:secure-bonds})}{(\fullversion)}.

\mypara{From atomicity to a supermajority}  An asynchronous implementation has no such atomicity: a custodian's copy may lag the sovereign's.  Recovering from a single lagging custodian could reinstate a log that omits an already-approved payment, and the recovered sovereign might then approve the same coin a second time---an inadvertent double-spend.  We instead restore exactness by coupling finality to recoverability through a supermajority: a payment is \emph{final}, and a payee may rely on it, only once a supermajority of the sovereign's state custodians hold its block. 

To facilitate recovery, the sovereign proceeds through monotonically increasing \emph{epochs}, with each log block carrying the sovereign's epoch at its creation. In addition, each custodian keeps the latest epoch number proposed by the sovereign, initially $1$. A recovering sovereign has no memory of the epochs its currency has used and learns it from its custodians. Upon recovery, it first queries them for the epoch they have recorded and for their copies of the log, and waits for a supermajority of replies. It then sets its new epoch to exceed the largest epoch it heard, proposes the new epoch number to the custodians, and waits for a supermajority of acks, indicating that a supermajority of custodians have recorded the new epoch. Next, the sovereign
selects the maximum epoch that appears in blocks in the log replicas it received, and adopts the log copy that has the most blocks in this epoch.  (Note that the maximal epoch in the log might be lower than the maximum proposed epoch number). Finally, the sovereign sends the adopted log to all custodians and resumes operation. 

Custodians, in turn, align their logs with the one adopted by the sovereign, which may involve discarding blocks from earlier epochs. Note that such abandoned blocks never reached a supermajority (and hence were not recovered in an earlier recovery), they are not final. 
Since a supermajority records the new epoch before it is used to create blocks and any two supermajorities intersect, every future recovery is guaranteed discover this epoch number and initiate a higher one. Thus, epochs in the log increase monotonically, and within the latest epoch, all copies  are prefixes of one log. 

Any two supermajorities of the same fixed set intersect, so every final payment is held by some custodian the recovery reaches and is never lost.  This is the supermajority-intersection structure of All-to-All Flash~\cite{lewispye2023flash}, here in a simpler crash-only setting with a single log writer, so the quorums are plain intersecting read and write quorums and no consensus is invoked.

\mypara{Construction}  The protocol realises each spec transaction as a sequence of unary CVA transactions: a holder requests a payment, the sovereign approves it---checking the coins are unspent in its log---appends a block and disseminates it to its custodians, each custodian appends the block in order and acknowledges, and the sovereign binds finality once a supermajority have acknowledged, notifying the payee, who may then rely on the payment.  A holding is debited from the payer at request and credited to the payee at finality; the authoritative record is the sovereigns' logs, of which each agent's local holdings are a view.

\mypara{Recovery}  State custodians, like the identity record, are intrinsic: the set $S_p$ and the threshold survive a fault, so a recovering sovereign knows from whom to collect before it has recovered any log. State loss is handled by adopting the longest ledger in the latest epoch, as described above. Under identity loss, the sovereign is first replaced by a fresh agent $p'$ through the social graph (Section~\ref{section:cva-impl-overview}), which inherits the intrinsic custodian set and then collects from a supermajority exactly as under state loss; the social graph recovers the identity, the currency layer adds only the recovery of its own log.  A holder that loses its local account recovers it from each friend sovereign, which answers from its log; no custodian is involved.  Recovery presumes that a supermajority of a sovereign's state custodians do not concurrently lose state within a recovery window; with attestation and no persistent storage, this bounds the failures tolerated while a recovery completes.

\mypara{Recovery restores the ledger, not the value}  Identity recovery restores a sovereign's \emph{log}, not the \emph{value} of its currency, which rests on the integrity and solvency of its sovereign.  A key lost without abuse leaves the currency intact: the recovered sovereign resumes it unchanged.  A key held by an attacker who paid or minted fraudulently leaves valid, final transactions in the log---a faithfully recorded debt---to the extent of which the sovereign is insolvent.  Finality is not revoked: the recovered log stands, and every payment a payee relied upon remains final.  However, the sovereign may mitigate the loss in its own currency: any mitigation a supermajority of its state custodians approve is possible, and it is up to the sovereign to convince them.  For example, the sovereign may reissue the currency from a final prefix of the log not containing fraudulent transactions.

\mypara{The irrecoverable loss is in other currencies}  Reissue is a remedy only over a sovereign's own currency.  A key compromise does its principal damage elsewhere: the attacker transfers the victim's holdings in other sovereigns' coins to third parties, each payment final at its issuing sovereign, against whom the victim is an ordinary holder.  These transfers cannot be undone---recovery restores the person's identity and the ledgers they issue, not wealth alienated in currencies they do not control.

\mypara{Implementation and recovery results}  The implementation realises the secure coins specification at quiescence, and recovers the log exactly across a fault.

\begin{restatable}[Correspondence at Quiescence]{theorem}{ThmCorrespondenceQ}\label{theorem:secure-coins-cva-correspondence}
At a quiescent configuration \full{(Definition~\ref{definition:gsg-cva-quiescence})}{(\fullversion)} of any run of the CVA protocol in which every custodian has registered the current epoch, the abstract state read off the configuration is a reachable state of the secure coins specification \full{(Appendix~\ref{section:secure-bonds})}{(\fullversion)}: every custodian copy equals its sovereign's log, every log block is final, and each agent's holdings equal the holdings derived from the logs.
\end{restatable}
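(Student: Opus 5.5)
The plan is to argue by invariants over the reachable CVA configurations and then specialise them to quiescence. The abstract state is read off a configuration as follows:
\begin{itemize}
\item the sovereign's log $L_p$;
\item each custodian's copy of it;
\item the custodians' registered epochs;
\item each agent's holdings.
\end{itemize}
We show it equals a state reached by a run of the secure coins specification. The run is built by replaying, in log order, the spec transaction (Mint, Pay, Redeem, Swap) recorded in each block of each sovereign's adopted log. Because the spec transactions act on a single sovereign's log and its custodians atomically, the logs of different sovereigns can be replayed independently and interleaved. A cross-currency Swap is placed once both logs contain its block. This uses the same interleaving reasoning as the grassroots results (Theorem~\ref{thm:cva-grassroots}).

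The key steps, in order, are the following.

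\textbf{(1) Epoch monotonicity.} A sovereign creates blocks in epoch $e$ only after a supermajority of $S_p$ has acked $e$. Any two supermajorities of the fixed, intrinsic $S_p$ intersect, and each later recovery waits for a supermajority of epoch replies. Hence every later recovery proposes an epoch strictly greater than $e$, and block epochs in any adopted log are nondecreasing.

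\textbf{(2) Prefix invariant.} Within the latest epoch, every custodian copy is a prefix of the sovereign's log. This holds because custodians append in order and align to the adopted log after recovery.

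\textbf{(3) Finality preservation.} A block is final once a supermajority holds it. The recovery read quorum intersects that write quorum. The adopted log, taken as the longest among those in the maximal epoch, therefore extends every final block, using (1) and (2). So final blocks are never discarded, and every adopted log is a sequence of spec-valid transactions. In particular, the sovereign's unspent-coin check is made against a log containing every final payment, which rules out double-spending.

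\textbf{(4) Holdings invariant.} At every reachable state, an agent's local holdings equal the holdings derived from the logs, corrected by pending debits (requested but not final) and pending credits (final but not yet notified).

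Quiescence then closes the argument. With no messages in transit and no enabled non-trivial platform transactions, the following hold:
\begin{itemize}
\item every disseminated block has been appended and acked by every custodian;
\item the hypothesis that every custodian registered the current epoch means no custodian is left unaligned, so every copy equals $L_p$;
\item every acked block has had finality bound, so every block is final;
\item every finality notification has been delivered, so the pending corrections in (4) vanish and holdings equal the log-derived holdings.
\end{itemize}
The replay of (3) then exhibits the read-off state as reachable in the specification.

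The main obstacle will be step (3) across interleaved recoveries. A recovery may itself be interrupted by a fault, and stale blocks from abandoned epochs may linger in some custodians' copies. The first approach to try is a strengthened invariant, proved by induction on the run, stated in terms of \emph{committed epochs}, meaning epochs acked by a supermajority: every final block lies in a committed epoch, and for the maximal epoch $e^*$ appearing in blocks held by any supermajority-intersecting read quorum, the longest copy in $e^*$ contains every final block of $e^*$. Blocks of abandoned epochs were never held by a supermajority, so they are not final and may be dropped harmlessly. If the induction fails because of concurrent state loss at custodians, I would fall back on the paper's stated presumption that a supermajority of custodians does not lose state within a recovery window.
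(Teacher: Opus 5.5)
Your closing step ``every finality notification has been delivered, so the pending corrections in (4) vanish'' does not follow. Delivered $\mathsf{final}$ messages discharge only the payee-side corrections. A payer-side debit (requested but not final) vanishes only if its request produced a block that is still in $L_s$ at quiescence. Your bullets show that every block in the log is final, not that every request has such a block.

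\emph{Runs without state loss.} Here you must add that Approve never drops a request. The reason is that $B_q$ agrees with the derived holdings except for $q$'s own pending requests, so requested coins are always unspent in $L_s$. This is the argument the paper gives inside its proof of Conservation of Money (Lemma~\ref{lemma:secure-coins-cva-conservation}), and your invariant (4) must be strong enough to deliver it.

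\emph{Runs with a sovereign state loss.} Suppose a sovereign loses state after approving a payment but before a supermajority holds its block. These are exactly the runs your steps (1) and (3) are built for. Recovery discards that block, and the debit made at Request pay is never refunded. At quiescence $B_q$ therefore lacks coins that the recovered log still assigns to $q$. The holdings clause fails there, so your argument must be confined to runs in which no approved block is discarded. This is also why the paper states Conservation only for runs with no state-loss fault.

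Apart from this, your closing coincides with the paper's proof:
\begin{itemize}
\item drain all messages;
\item combine the registered-epoch hypothesis with Prefix Consistency (Lemma~\ref{lemma:secure-coins-cva-prefix}) to get every copy equal to $L_s$, hence every block held by all of $S_s$ and final;
\item obtain reachability from Log Validity (Lemma~\ref{lemma:secure-coins-cva-log-validity}).
\end{itemize}
The paper needs nothing more, and in particular not your step (3). A log is valid because Approve checks each block against its predecessors and an adopted copy is a prefix of such a log, which your (1)--(2) supply. This holds whether or not final blocks survive recovery. The quorum-intersection argument you flag as the main obstacle is the content of Exact Recovery (Theorem~\ref{theorem:secure-coins-cva-exact-recovery}) and is off the critical path here.

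One caveat on your global replay: nothing coordinates different sovereigns. Two of them may approve two cross-currency Swaps in opposite orders, and then ``place a Swap once both logs contain its block'' deadlocks. The paper does not build a global interleaving and appeals only to per-log validity. If you claim one, you must exclude or handle this case.
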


The correspondence is at quiescence, not at every state, and necessarily so: between a payment's request and its finalisation a coin is debited from the payer and not yet credited to any payee, a configuration the atomic specification never exhibits.

\begin{restatable}[Conservation of Money]{lemma}{ThmCoinsConservation}\label{lemma:secure-coins-cva-conservation}
At every quiescent configuration of a run with no state-loss fault, the $s$-coins across all holdings are exactly those recorded as minted in $L_s$.
\end{restatable}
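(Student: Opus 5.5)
We establish Conservation of Money (Lemma~\ref{lemma:secure-coins-cva-conservation}) by an invariant over reachable configurations of a fault-free run, specialised at quiescence. For a sovereign $s$, call $s$-coins \emph{in flight} those debited from a payer at request but not yet credited to a payee, because the corresponding block has not yet reached finality. The invariant is that, at every reachable configuration of a run with no state-loss fault, the multiset of $s$-coins held across all agents' holdings, together with the $s$-coins in flight (carried by pending request or finality messages in outboxes, inboxes, or the sovereign's pending-approval state), equals exactly the coins recorded as minted in $L_s$, with no coin counted twice.

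The proof is by induction on the run, one case per unary CVA transaction and for Communicate. \textbf{Mint} appends a mint block to $L_s$ and credits the minted coins to the minter in the same unary step, so both sides grow equally. \textbf{Pay request} debits the payer and places the coins in flight. \textbf{Approve} appends a block to $L_s$ after checking the coins are unspent in $L_s$. Here we argue that, absent state loss, $L_s$ is the single authoritative append-only log, so an unspent check rules out a second in-flight copy of the same coin. \textbf{Custodian append/ack} changes only custodian copies and acknowledgement sets, not holdings. \textbf{Finality} moves the coins from in flight to the payee's holding. \textbf{Redeem} and \textbf{Swap} are compositions of these cases, and a redeemed coin is removed both from holdings and from the live minted set recorded in $L_s$, in the same block. \textbf{Communicate} only copies messages between agents, and by the Outbox Containment lemma, together with the idempotent handling of duplicates, a duplicated finality notification does not credit twice.

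At quiescence no message is pending and every request has been finalised. Here I would invoke the quiescence clause of Theorem~\ref{theorem:secure-coins-cva-correspondence}: every log block is final and holdings equal those derived from the logs. This makes the in-flight set empty, so the invariant collapses to the statement of the lemma.

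The main obstacle is the no-double-counting part of the invariant under asynchrony: reordering and duplication of finality messages, and a pay request racing an approval. I would handle this by keying each credit to its block identifier in $L_s$ and showing that the payee's holdings record is a function of the set of finalised block identifiers received. Idempotence of that set then makes duplicates harmless. Reordering is harmless because finality for a block is issued only after the block is in $L_s$, which is append-only in a fault-free run.
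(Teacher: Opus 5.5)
Your invariant (holdings plus in-flight coins equal the coins minted in $L_s$) matches the first half of the paper's argument. The step that empties the in-flight set at quiescence, however, has a genuine gap. Quiescence guarantees only that every $\mathsf{request}$ has been delivered and processed. Approve processes a request in one of two ways: it appends a block, or, when the coins are not unspent in $L_s$, it drops the request without appending. Your Approve case covers only the first branch. In the second, Request pay has already removed the coins from $B_q$, and they are never credited to anyone; the paper states that such a payment is not refunded. At a quiescent configuration those coins lie in no holding and in no pending block, so the conclusion fails.

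You need to show that, in a run without state loss, Approve never drops a request. The paper does this by relating the holder's local account to the authoritative log. Absent state loss, $B_q$ is credited only on finality of a block of $L_s$ and debited only by $q$'s own requests. So $B_q$ agrees with $q$'s holdings derived from $L_s$, except for the coins of $q$'s pending requests. Since Request pay draws $x$ from $B_q$, those coins are derived-held by $q$ and committed to no other request, hence unspent in $L_s$ when Approve reaches the request. This is where the no-state-loss hypothesis does real work; your proof uses it only to exclude double counting.

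Invoking Correspondence at Quiescence does not close the gap. Its clause that every log block is final says nothing about a request that never produced a block. Its clause that holdings equal those derived from the logs is, together with Log Validity, essentially the lemma itself: in a fault-free run it amounts to the claim that no request was dropped. With state loss it can fail, since a recovery that discards a non-final block leaves the payer debited and never refunded, even though that theorem carries no no-state-loss hypothesis. This is why the paper's proof borrows only the delivery, finality and acceptance parts of that argument and then treats dropped requests separately.

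A smaller slip: Redeem destroys no coin. The $s$-bond moves from the redeemer to $s$, and a coin moves back from $B_s$. Moreover, ``recorded as minted in $L_s$'' is a fixed record, so there is no shrinking ``live minted set''. An invariant that subtracted redeemed coins would not match the statement.
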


\begin{restatable}[Exact Recovery]{theorem}{ThmExactRecovery}\label{theorem:secure-coins-cva-exact-recovery}
If a sovereign recovers---under state loss, or as $p'$ under identity loss---by adopting the longest of the log copies in the maximal epoch collected from a supermajority $G\subseteq S_p$, then the recovered log contains every block of $L_p$ that was final at any time before recovery.
\end{restatable}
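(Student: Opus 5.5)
The argument is a quorum-intersection argument in the style of the recovery phase of Paxos/ABD. I would state and prove two invariants and combine them with the intersection of supermajorities. Throughout, fix a block $b$ of $L_p$ that became final at some time $t$ before the recovery, and let $e_b$ be the epoch stamped on $b$.

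\textbf{Step 1: quorum intersection.} Finality of $b$ means a supermajority $W\subseteq S_p$ acknowledged $b$, so each custodian in $W$ appended $b$ to its copy at or before $t$. The recovery reads from a supermajority $G\subseteq S_p$. Since $\sigma_p>1/2$ and $|W|,|G|\ge\lceil\sigma_p|S_p|\rceil$, we get $|W\cap G|>0$. So some custodian $c\in G$ held $b$ at time $t$.

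\textbf{Step 2: no discarding.} I must show $c$ still holds $b$ when it replies to the recovering sovereign. A custodian discards blocks only when it aligns with a log adopted by an intervening recovery. By induction on the sequence of recoveries, I would prove that every recovery's adopted log contains every block final before it. The present theorem is the inductive step, applied to earlier recoveries. Hence an intervening alignment never removes $b$. State-loss faults at $c$ are excluded by the standing assumption that a supermajority does not concurrently lose state within a recovery window. If that assumption is only about concurrency, I would instead pick $c$ from $W\cap G$ among the non-faulty custodians.

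\textbf{Step 3: epoch discipline.} This step supplies the key invariant and is the main obstacle. Let $e^*$ be the maximal block epoch among the copies in $G$, and let $\ell$ be the adopted copy, the one with the most blocks in epoch $e^*$. I need $b\in\ell$. I would establish two facts.

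\begin{enumerate}[label=(\alph*)]
\item Any block in epoch $e'>e_b$ was created by a sovereign that first had epoch $e'$ recorded by a supermajority. That sovereign therefore collected from a supermajority whose members all held the then-final blocks. By the induction of Step 2, its adopted log extended a log containing $b$, and all its epoch-$e'$ blocks were appended on top of that log. So every copy containing an epoch-$e'$ block also contains $b$ as a prefix element.
\item Within a single epoch there is one writer appending in order, and custodians append in order. So all copies' epoch-$e^*$ segments are prefixes of one sequence, and any copy holding a final block of epoch $e^*$ has at least as many epoch-$e^*$ blocks as that block's position.
\end{enumerate}

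Now split into cases. If $e^*>e_b$, fact (a) gives $b\in\ell$. If $e^*=e_b$, then $c$ holds $b$, fact (b) makes $\ell$ at least as long as $c$'s copy in epoch $e^*$, and by the prefix property $\ell$ contains $b$ together with its preceding blocks. The case $e^*<e_b$ is impossible, because $c\in G$ holds $b$.

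The delicate point is fact (a): that the adopted log at a higher epoch is aligned to contain earlier final blocks, together with the alignment step by custodians. I would handle it by a strong induction over epochs, with the invariant ``every copy's blocks up to its last epoch-$e$ block form a prefix of the log adopted at epoch $e$.''

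Identity loss reduces to state loss. The fresh $p'$ inherits the intrinsic $S_p$ and threshold via Replace, and the currency layer performs the same collection. So the argument above applies verbatim with $p'$ as the recovering sovereign.
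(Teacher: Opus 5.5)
\textbf{Gap: uniqueness and ordering of reserved epochs.} Your architecture matches the paper's proof. You induct over the recoveries of the currency and intersect the finality witness $W$ with the read quorum $G$ to find a custodian $c$ holding $b$. You use the induction hypothesis to show that alignments never strip $b$ from $c$, and you split on whether the maximal epoch is $b$'s own or a later one. What is missing is the lemma both of your facts rest on: each epoch is reserved by exactly one recovery, and reserved epochs increase along the run.

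Fact (b) simply asserts ``within a single epoch there is one writer''. Fact (a) tacitly assumes that the recovery reserving $e'>e_b$ ran after $b$ became final, so that its collection already contained $b$. Neither follows from choosing $e'$ as one more than the largest epoch heard. Two successive recoveries can then compute the same $e'$, the second reading a supermajority none of whose members has yet registered the first's installation. They install different logs under $e'$, and a later recovery's longest copy in $e'$ can come from the installation that lacks a final block of the other.

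You mention that the new epoch is first recorded by a supermajority, but you never use it. The paper does:
\begin{itemize}
\item a custodian raises $\widehat\pi_c[p]$ only to a strictly larger value, so promises never decrease and no custodian promises an epoch twice;
\item a recovery reserves $e'$ only after a supermajority has promised it;
\item every later recovery reads a supermajority intersecting that one, hears a promise of at least $e'$, and reserves strictly above.
\end{itemize}
Hence each epoch is reserved once, in run order. Every copy registered in an epoch is the log adopted at its reservation, extended by a prefix of that sovereign's blocks in it. The recovery reserving $e'>e_b$ follows the fault that ended the $e_b$ stint, hence follows $b$'s finality, so your induction hypothesis applies to its adopted log. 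Prove this before Step~3; it is what makes your closing ``strong induction over epochs'' well founded.

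\textbf{Smaller points.} The fallback in Step~2 does not work: $W\cap G$ may be a single custodian, so there may be no non-faulty one to pick. What you need, and what the paper uses, is the custodian-availability assumption of the recovery section, under which a copy changes only by appending or by alignment to an adopted log. The threshold here is the state-custodian threshold $\sigma^S$, not the identity-record threshold $\sigma_p$.

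\textbf{Reading of ``maximal epoch''.} You take it as the maximal block epoch, the main-text procedure, which obliges you to carry the cross-epoch prefix invariant you sketch. The paper's proof uses the maximal registered epoch $\max_{c\in G}\widehat e_c[p]$, for which the one-log property is immediate from the uniqueness lemma. It then splits on whether $c$ itself is registered in that epoch.
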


\begin{restatable}[No Inadvertent Double-Spend]{corollary}{ThmNoDoubleSpend}\label{corollary:secure-coins-cva-no-double-spend}
After recovery, the sovereign never approves a payment of a coin already spent in a final payment.
\end{restatable}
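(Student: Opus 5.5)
The plan is to derive the corollary from Exact Recovery (Theorem~\ref{theorem:secure-coins-cva-exact-recovery}), together with the sovereign's approval rule. That rule approves a payment only if the coin is unspent in the sovereign's current log.

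\textbf{Step 1: the spend appears in the recovered log.} Fix a coin $c$ spent in a final payment $\pi$, and let the sovereign recover at some time, either as $p$ under state loss or as $p'$ under identity loss, yielding log $L'$. I split into two cases.
\begin{itemize}
\item If $\pi$ became final before recovery, then by Theorem~\ref{theorem:secure-coins-cva-exact-recovery} the block of $\pi$ lies in $L'$.
\item If $\pi$ is approved after recovery, then its block is appended to the sovereign's post-recovery log.
\end{itemize}
In both cases the block recording the spend of $c$ is in the sovereign's log from that point on.

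\textbf{Step 2: the log only grows between recoveries.} I will check that the sovereign's log is append-only between recoveries. Truncation happens only at custodians aligning to an adopted log, never at the sovereign outside recovery. The block of $\pi$ therefore persists in the sovereign's log until the next recovery. At that next recovery, $\pi$ is final, hence again retained by Theorem~\ref{theorem:secure-coins-cva-exact-recovery}. Induction over the finitely many recoveries shows that once final, the spend of $c$ is in every later sovereign log.

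\textbf{Step 3: the approval check rejects a second spend.} Any later approval of a payment of $c$ checks $c$ against a log containing $\pi$'s block. The check therefore finds $c$ spent and rejects the payment.

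\textbf{Main obstacle: concurrency near recovery.} The hard step is a payment that is in flight at the moment of recovery. The sovereign may have approved a block before the fault that was not yet final. That block can be discarded during recovery, after which the coin is re-approved to someone else.

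I must show this cannot produce two final spends. The discarded block never reached a supermajority in the maximal epoch. Custodians that have adopted the new epoch refuse old-epoch blocks, since a supermajority recorded the new epoch before it was used. Any two supermajorities intersect, so that discarded block can never later become final.

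Hence the only final spends of $c$ are in the recovered lineage, and Steps 1--3 apply. I would make this precise by an invariant: every final block is in the sovereign's current log or in the log adopted at the last recovery. This is the same quorum-intersection argument underlying Theorem~\ref{theorem:secure-coins-cva-exact-recovery}.
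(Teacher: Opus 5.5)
Your Steps 1--3 are the paper's proof: Exact Recovery puts every block that was final before a recovery into the recovered log, and Approve's unspent check then rejects a second spend. Your Step 2 spells out what the paper leaves implicit. A block, once final, survives all later recoveries, because the sovereign's log only grows between recoveries and each recovery invokes Exact Recovery anew. In Step 3, state precisely what the check sees: the holdings derived from the log no longer assign $c$ to the payer of $\pi$, so a re-spend by that payer fails. The payee may legitimately spend $c$ onward, so not every later payment of $c$ is rejected.

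The ``main obstacle'' paragraph, however, is wrong as stated, though it is also unnecessary.

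\emph{Why it fails.} Append and acknowledge refuses an old-epoch block only once the custodian has \emph{registered} the new epoch ($\widehat e_c[s]$), not once it has \emph{promised} it ($\widehat\pi_c[s]$). A custodian that has promised $e'$ but not yet received the adopted log is still registered in the old epoch. It can therefore append an in-flight old-epoch block. So the promising supermajority does not stop the discarded block from coming to be held by a supermajority. Moreover, suppose the sovereign faults again before those custodians register $e'$. A later recovery whose collected copies are all still registered in the old epoch then readopts the block. After that it is a block of $L_s$ held by a supermajority, i.e.\ final. So ``that discarded block can never later become final'' is false.

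\emph{Why you do not need it.} The corollary requires only the invariant that a block, once final, lies in every later log of the sovereign, and that is exactly Steps 1--2. This works because finality (Definition~\ref{definition:secure-coins-cva-finality}) is a property of blocks of the current $L_s$, and Bind finality counts only acks of the current epoch. If you also want to rule out two conflicting final spends, combine this invariant with Log Validity (Lemma~\ref{lemma:secure-coins-cva-log-validity}). Two final blocks spending $c$ from the same holding would eventually sit together in one log, contradicting its validity. The paper handles the dropped non-final block more simply: no $\mathsf{final}$ was ever sent for it, so no payee relied on it.
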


\mypara{Liveness}  In a correct run, a requested payment whose sovereign and a supermajority of its custodians stay live becomes final.

\begin{restatable}[Payment Liveness]{theorem}{ThmPaymentLiveness}\label{theorem:secure-coins-cva-payment-liveness}
In a correct run in which the sovereign $s$ does not suffer a state loss after receiving a valid pay request and a supermajority of $S_s$ registered in $s$'s current epoch does not suffer a state loss, the requested payment eventually becomes final and its payee receives notification of finality.
\end{restatable}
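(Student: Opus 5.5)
The plan is to prove Payment Liveness by following the payment through the unary CVA steps of the construction: request, approval, dissemination, custodian append and ack, finality binding, and payee notification. At each step I will show that the next step is eventually enabled and taken, using reliable communication among correct agents, eventual delivery, and weak fairness of enabled platform transactions in a correct run. Let $G\subseteq S_s$ be the supermajority registered in $s$'s current epoch $e$ that suffers no state loss.

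\textbf{Approval.} Since $s$ has received a valid pay request and suffers no state loss afterwards, its log $L_s$ still contains whatever made the request valid. The coins were unspent when the request arrived. The only appends to $L_s$ between then and now are $s$'s own approvals, and the request was debited at the holder at request time. Hence no competing approval of the same coins can intervene. So the approval transaction remains enabled until it is taken, and $s$ appends a block $b$ stamped with epoch $e$ and sends it to every custodian in $S_s$.

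\textbf{Custodian acks.} Fix a custodian $c\in G$. Because $c$ has registered epoch $e$ and loses no state, it keeps epoch $e$ and its log copy. I need an invariant: within the current epoch, each custodian's copy is a prefix of $L_s$ (the prefix property stated in the recovery overview). Messages may be reordered or duplicated, and lost for failed agents. So $c$ may receive $b$ before earlier blocks, and it appends in order. I would argue that $s$'s periodic re-dissemination, or the in-order buffering of blocks, eventually delivers every missing earlier block of epoch $e$ to $c$. Then $c$ appends them and $b$, and acks $b$. This is the main obstacle. It requires showing that gaps in a non-faulty custodian's copy are always eventually filled, with no dependence on custodians outside $G$. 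I would first try an induction on the position of $b$ in $L_s$. Each earlier block was disseminated to $c$, so it is eventually appended, and its ack is sent. The robustness to duplication and reordering comes from Lemma~\ref{lemma:gsg-cva-stale-inert}-style reasoning, with stale or duplicate messages being inert.

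\textbf{Finality and notification.} Every $c\in G$ eventually acks $b$, and $|G|$ meets the threshold. So $s$, which keeps its state, eventually holds a supermajority of acks for $b$. It binds finality and sends the finality notice to the payee. Reliable delivery between correct agents then gives receipt, which completes the proof.
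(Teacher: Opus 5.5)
Your proposal is correct and follows essentially the same route as the paper's proof. The request is approved; each custodian in the surviving supermajority receives every block up to the payment's position by eventual delivery and appends them in order as predecessors arrive; its acks raise $\ell_s(c)$ until the threshold is met; and Bind finality then sends $\mathsf{final}$ to the payee. One correction: the coins layer has no periodic re-dissemination of blocks (the paper assumes reliable delivery between sovereign and custodians and leaves re-synchronisation to future work). Filling gaps therefore rests only on reliable delivery of each disseminated block, with out-of-order blocks retained in $i_c$. That is exactly the induction you settle on, and it is the paper's argument.
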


The platform is grassroots, like every CVA protocol (Theorem~\ref{thm:cva-grassroots}); finality soundness, log validity, and the prefix consistency of custodian copies, together with the proofs of the results above, \full{are in Appendix~\ref{section:secure-coins-cva-properties}}{are in \fullversion}.

%% file: sections/related-work.tex
\section{Related Work}\label{section:related-work}

\mypara{Volitional multiagent atomic transactions}
\looseness=-1 The formal framework for grassroots platforms used in this paper---agents consisting of people and machines, with machine transactions guarded by people's volitions---was introduced in~\cite{lewis2026volitional}.  That work decomposes each agent's state into a machine state and a volitional state, and introduces equivalence classes of machine transactions and their consumption upon satisfaction; a full survey of the prior formal tradition of modelling people in concurrent systems, and a discussion of how volitional multiagent atomic transactions depart from it, appears there \full{(see also Appendix~\ref{appendix:persons})}{(see also \fullversion)}.  The contribution of the present work is to extend that framework to cover major faults: identity loss---the loss of a person's private key and/or machine---and state loss---the loss of machine state with the key retained.

\mypara{Key recovery and social recovery of identity}
\looseness=-1 Recovering a cryptographic identity after the loss or compromise of a private key is a long-standing problem.  Classical solutions rely on threshold cryptography---Shamir's secret sharing~\cite{shamir1979share} distributes a secret among custodians such that a threshold subset can reconstruct it.  A distinct approach, which underlies the present work, does not share the secret at all: instead, the person chooses a fresh keypair off-chain and a designated set of friends collectively authorise the substitution of the new public key for the old one across the person's social context.  Shahaf et al.~\cite{shahaf2020genuine} employ such a mechanism for identifier recovery via mutual sureties.  Similar designs appear in social-recovery wallets in the cryptocurrency space, where designated guardians authorise a wallet's key change~\cite{buterin2021social}. For example, Maram, Kelkar and Eyal~\cite{maram2024multicredential} and Mouallem and Eyal~\cite{mouallem2024asynchronous} decide which of two parties presenting credentials is their true owner, assuming the credentials still exist.  In 
contrast, we consider major faults where the key and device are gone and nothing can be presented, so recovery must be performed by the person's custodians rather than by the person.  Here, we give a formal specification of friendship-based major-fault recovery as a guarded multiagent atomic transaction over the social graph, and extend the same approach to the recovery of higher-level state (the sovereign's transaction log in grassroots coins) via state custodians.

\mypara{Comparison}
\looseness=-1 Deployed systems in the same design space identify a person with a keypair and offer no protocol for recovery from its loss: in Secure Scuttlebutt~\cite{tarr2019secure} and Nostr~\cite{nostr2024nip01} a lost key is a lost identity, and its social connections can be re-established only out of band.  Threshold schemes~\cite{shamir1979share} reconstruct the lost secret from shares; social-recovery wallets~\cite{buterin2021social} replace the key but record the change on a blockchain, a global resource.  Here the person assumes a fresh key, authorised by custodians, and the rename is carried by the social graph itself, with no global resource.

%% file: sections/appendix.tex
\section{Formal Models of Persons in Concurrent Systems}\label{appendix:persons}

The formal methods tradition has a long lineage of modeling human agents as sources of nondeterminism alongside deterministic machines, but---to the best of our knowledge---without decomposing an agent into person and machine as components of its state.

\mypara{Turing's choice machines}
Before defining what we now call Turing machines (automatic machines, or a-machines), Turing~\cite{turing1936computable} introduced \emph{choice machines} (c-machines), ``whose motion is only partially determined by the configuration''---at designated states, the machine ``cannot go on until some arbitrary choice has been made by an external operator.''  The external operator is a person who freely chooses between alternatives; the sequence of choices determines which computation unfolds.  Turing immediately set c-machines aside, showing that any c-machine computation can be enumerated by an a-machine.  His 1939 oracle machines~\cite{turing1939systems} extend this further: the oracle ``cannot be a machine'' and provides answers the computation cannot derive internally.  Both formalisms model the person as \emph{external} to the machine, providing input at designated points.

\mypara{Process algebras}
Hoare's CSP~\cite{hoare1985communicating} provides the cleanest process-algebraic encoding of human choice.  External choice ($\Box$) offers the environment---potentially a person---a selection among initial events, while internal choice ($\sqcap$) is resolved by the system.  However, the person remains \emph{outside} the system boundary: CSP models what the person \emph{does}, not what the person \emph{is willing to do}.
Milner's CCS~\cite{milner1980calculus} uses a single summation operator without formally separating internal from external nondeterminism at the syntactic level.  In his later work on bigraphs~\cite{milner2009space}, Milner makes the scope explicit: agents ``can be artificial, as in computing systems\ldots\ or they can be natural, e.g.\ communicating humans.''  Both CSP and CCS model agents uniformly---there is no formal distinction between a person and a machine within the same agent.

\mypara{I/O automata and reactive systems}
Lynch and Tuttle's I/O automata~\cite{lynch1989introduction} partition actions into input (environment-controlled), output (automaton-controlled), and internal actions.  The key property is \emph{input-enabling}: an automaton cannot block input actions, so the environment---potentially a human operator---can act at any moment.  The Hybrid I/O Automata extension~\cite{lynch2003hybrid} explicitly states that HIOAs are ``intended to model all components of hybrid systems, including\ldots\ humans.''  The person, however, is part of the environment, not a component of the automaton's state.
Harel and Pnueli's reactive systems paradigm~\cite{harel1985development} draws the foundational dichotomy between transformational systems (batch, terminating) and reactive systems (ongoing interaction with environment).  The system is deterministic; all nondeterminism is attributed to the environment.  This paradigm was explicitly motivated by human-machine interaction, yet the formalism treats the human as environment rather than as a component of the system.

\mypara{Angelic and demonic nondeterminism}
The distinction between angelic and demonic nondeterminism, developed by Back and von Wright~\cite{back1998refinement} in the refinement calculus, provides a semantic treatment relevant to the person/machine boundary.  Demonic nondeterminism models adversarial environments (the worst-case choice is made); angelic nondeterminism models cooperative choices (the best-case choice is made).  This duality is the closest precursor to the volition/obligation distinction in volitional multiagent atomic transactions~\cite{lewis2026volitional}: a volitional transaction guarded by both parties (both must be willing) versus one guarded by either party (either can force it).  Two distinctions separate the frameworks.  First, the refinement calculus operates within a sequential program framework, not a multiagent transition system.  Second, angelic nondeterminism is a point-of-choice semantics: a choice is resolved locally at each transition, with no residue carried forward.  Volitions, in contrast, are persistent, inspectable state that accumulates across transitions; a guard condition reads an agent's record of willing over the history of the run, not a single local resolution.  This shift---from choice as a point-semantic primitive to choice as state---is what lets volitions be shared, compared, and reasoned about within the transition system, rather than external to it.

\mypara{Game structures and alternating-time temporal logic}
Module checking~\cite{kupferman2001module} models open systems with the environment fully adversarial.  Alternating-time temporal logic (ATL)~\cite{alur2002alternating} interprets formulas over concurrent game structures where multiple agents simultaneously choose actions.  Game semantics~\cite{abramsky2000full} models computation as dialogue between Proponent (program, following a deterministic strategy) and Opponent (environment, making free moves).  These frameworks treat agents as symmetric players but do not decompose a single agent into person and machine components.

\mypara{Ceremony analysis and human-interactive verification}
Ellison's ceremony analysis~\cite{ellison2007ceremony} extends security protocol analysis to include human participants as protocol nodes.  Bolton's Enhanced Operator Function Model (EOFM)~\cite{bolton2013formally} translates hierarchical human task models into state machines for model checking, with the human as the sole source of nondeterminism.  Both treat human nondeterminism as a source of error to be verified against, rather than as a source of legitimate volition to be formally recorded.

\mypara{Normative multiagent systems and electronic institutions}
Electronic institutions~\cite{esteva2001formal} model multiagent interaction as dialogical frameworks where human and software agents are treated uniformly as role-playing entities.  Normative multiagent systems~\cite{artikis2009specifying} use Event Calculus to specify societies where agents ``may fail to, or even choose not to, conform to the specifications.''  These approaches model norms that constrain agents, but do not decompose an agent's state into machine and volitional components, nor do they formalise the distinction between transactions requiring all parties to be willing and those that are obligatory once initiated.

%% file: sections/transition-systems-appendix.tex
\section{Guarded Multiagent Atomic Transactions}\label{app:dts}

This section recalls essential definitions of the theory of volitional multiagent atomic transactions~\cite{lewis2026volitional}, which provides a formal foundation for specifying grassroots platforms as systems of people operating machines.  The full mathematical development---transition systems, runs, liveness, closure, grassroots protocols, and proofs---appears in~\cite{lewis2026volitional}; here we recall only what is needed for the specifications that follow.

Each agent consists of a \emph{person} and a \emph{machine} (e.g., smartphone) operated by the person.  A \emph{machine transaction} specifies what machines do; a \emph{guard} specifies which persons must be willing for the transaction to occur.

We assume a potentially infinite set of \emph{agents} $\Pi$, considering only finite subsets $P \subset \Pi$.  We use $\subset$ for the strict subset relation and $\subseteq$ when equality is also possible.  We use $S^P$ to denote the set of all total functions from $P$ to $S$, and $c_p$ for the member of $c \in S^P$ indexed by $p$.

Machine states, configurations, machine transactions, and guarded transactions are as in Definition~\ref{definition:mt} (Section~\ref{section:dts}), which also recalls the reading of guards and of the unguarded case.

\mypara{Notation for transaction specifications}  Protocol specifications in the sections that follow describe a machine transaction $c\to c'$ (Definition~\ref{definition:mt}) by naming its participants $Q$, giving a precondition on $c$ introduced by ``provided,'' and specifying the updates that produce $c'$ from $c$.  Where the machine state has a short name we use primed notation: ``$F'_p := F_p\cup\{q\}$'' means $c'_p = c_p\cup\{q\}$, reading $F_p$ as $c_p$ and $F'_p$ as $c'_p$. Where the machine state is structured, as in CVA (Section~\ref{sec:cva}), we use a procedural update: ``add $\mathsf{message}(\ldots)$ to $o_p$,'' ``set $a_p.\mathrm{FMap}[q]:=\ldots$,'' each understood as modifying only the component named.  In both styles, components outside the participants $Q$ and components of participants' states not mentioned are unchanged.

Distinct machine transactions can represent ``the same action'' in different configurations, captured by an equivalence relation on machine transactions.
\begin{definition}[Transaction Equivalence]\label{definition:equivalence}
Given a set of machine transactions $R$, a \temph{transaction equivalence} is an equivalence relation $\sim$ on $R$ such that $t \sim t'$ implies $t$ and $t'$ have the same participants.  We write $[t]$ for the equivalence class of $t$ under $\sim$.
\end{definition}

\begin{definition}[Agent State and Configuration]\label{definition:agent-state}
Given agents $P$, states $S$ with initial state $s0$, a set of machine transactions $T$ each over its own participants $Q\subseteq P$ and $S$, and equivalence $\sim$ on $T$, an \temph{agent state} is a pair $(V,m)\in \calA = (2^{T/\sim}  \times S)$ where  $V$ is its \temph{volitional state} and $m \in S$ its  \temph{machine state}.  The \temph{initial agent state} is $(\emptyset,s0)$.  An \temph{agent configuration} $c$ over $P$, $S$, $T$, and $\sim$ is a member $c\in \calA^P$, in which case we write $c^v_p$  for the volitional state and $c^m_p$ for the machine state of agent $p$ in $c$.
\end{definition}

\begin{definition}[Enabled]\label{definition:enabled}
A guarded transaction $(t,Q')$ with $t=d\rightarrow d'$ over participants $Q$ is \temph{enabled} in an agent configuration $c$ if $c^m_p=d_p$ for every participant $p\in Q$ (the machine precondition) and $[t]\in c^v_q$ for every guard $q\in Q'$ (the volitional condition).  An equivalence class is \temph{enabled} when some representative is.  A transaction with empty guard ($Q'=\emptyset$) is enabled whenever its machine precondition holds.
\end{definition}

\begin{definition}[Volitional Multiagent Atomic Transaction]\label{definition:vmat}
Given agents $P$, states $S$, machine transactions $T$ over $P$ and $S$, and equivalence $\sim$ on $T$:
\begin{enumerate}
    \item A \temph{change-volition transaction of agent $p\in P$} is a pair $c\rightarrow c'$ of agent configurations over $\{p\}$, $S$, $T$, and $\sim$ such that $c^v_p \ne c'{^v_p} \subseteq T/\sim$ and $c^m_p = c'{^m_p}$.
    \item A \temph{volitional machine transaction} induced by a guarded machine transaction $(t,Q')$, for some $t= (d\rightarrow d')\in T$ over $Q\subseteq P$ and $Q'\subseteq Q$, is a pair $c\rightarrow c'$ where $c\ne c'$ are agent configurations over $P$, $S$, $T$, and $\sim$ such that $(t,Q')$ is enabled in $c$ (Definition~\ref{definition:enabled}); $c'{^m_p} = d'_p$ for every $p\in Q$; $c^m_p = c'{^m_p}$ for every $p\in P\setminus Q$; and $c'{^v_p} = c^v_p \setminus \{[t]\}$ for every $p\in P$.
    \item A \temph{volitional multiagent atomic transaction} is a change-volition transaction or a volitional machine transaction.
\end{enumerate}
\end{definition}
When a volitional machine transaction induced by $(t,Q')$ is taken, the class $[t]$ is removed from every agent's volitional state.  Volitions are thus discharged upon satisfaction---a person wills a class of transactions, and once any transaction in the class is taken, the will is fulfilled and the class is removed.  A person may independently change their volitional state via change-volition transactions, which may add or remove classes; beyond these, the framework removes a class from $c^v_p$ only via its satisfaction.

\mypara{Protocols, correctness, and the grassroots property}
A \temph{protocol} $\calF$ is a family of multiagent transition systems, one for each finite $P \subset \Pi$, over a common local-states function assigning to each $P$ a set of local states containing the initial state $s0$~\cite{lewis2026volitional}.  A run of $\calF(P)$ is \temph{safe} if every two consecutive configurations form a transition of $\calF(P)$, and \temph{live} if no equivalence class is enabled (Definition~\ref{definition:enabled}) throughout some suffix with no member of the class taken in that suffix; it is \temph{correct} if it is safe and live.  In the general theory~\cite{lewis2026volitional} the relation on transitions is a \emph{partial} equivalence, leaving change-volition transactions (Definition~\ref{definition:vmat}) outside any class and hence under no liveness obligation; here every machine-transaction class is total and the obligation falls on these classes alone.  Unguarded transactions carry the liveness obligation with no volitional prerequisite: once the machine precondition holds, some class representative must eventually be taken.

Runs of two disjoint groups can be \emph{interleaved} into a run of their union.  $\calF$ is \temph{oblivious} if every interleaving of two correct runs is a correct run of the union, and \temph{interactive} if some correct run of the union is not an interleaving of two independent runs, because it includes a step whose participants span both groups; it is \temph{grassroots} if both~\cite{lewis2026volitional}.  A set of guarded transactions induces a \temph{transactions-based protocol}, and we use two obliviousness results from~\cite{lewis2026volitional}.  General criterion: such a protocol is oblivious if no cross-group transaction---one whose participants span both groups---is ever \emph{enabled} in an interleaving of two correct runs.  \temph{Guarded Obliviousness}: a guarded cross-group transaction is never so enabled, since the guarding person, acting in one group, never wills it; guarding every cross-group transaction therefore suffices.

\mypara{Implementations}
We recall when one transition system implements another~\cite{lewis2026volitional}.  The notions below refer to runs and to stutter-removal, not to the equivalence $\sim$ directly (which enters only through ``correct run''), so we write a transition system as the triple $(C,c_0,T)$.

\begin{definition}[Implementation]\label{def:implementation}
Given transition systems $TS=(C,c_0,T)$, the \temph{specification}, and $TS'=(C',c_0',T')$, an \temph{implementation of $TS$ by $TS'$} is a function $\sigma:C'\to C$ with $\sigma(c_0')=c_0$.  For a computation $r'=c_1'\to c_2'\to\cdots$ of $TS'$, $\sigma(r')$ is the computation $\sigma(c_1')\to\sigma(c_2')\to\cdots$ with every \temph{stutter transition} $c\to c$ removed.  The mapping need not preserve transitions: an implementation transition may map to a stutter, and several to a single specification transition.
\end{definition}

\begin{definition}[Correct and Complete Implementation]\label{def:implementation-properties}
An implementation $(TS',\sigma)$ of $TS$ is \temph{correct} if $\sigma$ maps every correct run of $TS'$ to a correct run of $TS$, and \temph{complete} if every correct run of $TS$ is $\sigma(r')$ for some correct run $r'$ of $TS'$.
\end{definition}

\begin{lemma}[Transitivity of Implementation]\label{lemma:implementation-transitivity}
If $(TS',\sigma)$ is an implementation of $TS$ and $(TS'',\tau)$ an implementation of $TS'$, then $(TS'',\sigma\circ\tau)$ is an implementation of $TS$; it is correct if both are correct, and complete if both are complete.
\end{lemma}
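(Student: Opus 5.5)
The plan is to unfold the definitions and compose the two mappings. Write $TS=(C,c_0,T)$, $TS'=(C',c_0',T')$, $TS''=(C'',c_0'',T'')$. First, $\sigma\circ\tau:C''\to C$ is a function, and $(\sigma\circ\tau)(c_0'')=\sigma(\tau(c_0''))=\sigma(c_0')=c_0$. By Definition~\ref{def:implementation}, this already makes $(TS'',\sigma\circ\tau)$ an implementation of $TS$.

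The key auxiliary step is that stutter removal commutes with composition: for every computation $r''$ of $TS''$,
\[
(\sigma\circ\tau)(r'')=\sigma(\tau(r'')).
\]
To prove it, let $\hat\tau(r'')$ be the pointwise image of $r''$ before stutters are removed. Stutter removal collapses each maximal block of equal consecutive configurations to a single occurrence. Applying $\sigma$ pointwise sends equal configurations to equal configurations. So each block of $\hat\tau(r'')$ maps into a single block of $\sigma(\hat\tau(r''))$, and blocks of $\sigma(\hat\tau(r''))$ are unions of consecutive images of blocks of $\hat\tau(r'')$. Hence removing stutters before or after applying $\sigma$ yields the same sequence. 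One edge case needs care: an infinite run whose image is eventually constant. I would adopt the convention that a final stutter block is kept as a single last configuration, and check that this convention is applied uniformly at both stages.

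Correctness then follows directly. If $r''$ is a correct run of $TS''$, then $\tau(r'')$ is a correct run of $TS'$ because $\tau$ is correct. Then $\sigma(\tau(r''))$ is a correct run of $TS$ because $\sigma$ is correct. By the commutation step, this run equals $(\sigma\circ\tau)(r'')$.

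Completeness runs in the reverse direction. Given a correct run $r$ of $TS$, completeness of $\sigma$ gives a correct run $r'$ of $TS'$ with $\sigma(r')=r$. Completeness of $\tau$ then gives a correct run $r''$ of $TS''$ with $\tau(r'')=r'$. By commutation, $(\sigma\circ\tau)(r'')=\sigma(r')=r$.

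I expect the commutation lemma to be the main obstacle. The remaining steps are one-line chasings of definitions, whereas this one hinges on stating precisely what ``with every stutter transition removed'' means when a run ends with an infinite stutter suffix. I would first fix the block-collapsing definition, then prove idempotence: removing stutters from an already stutter-free-then-mapped sequence changes nothing beyond what collapsing the original would.
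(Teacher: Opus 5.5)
Your proof is correct and follows the paper's argument. You check the initial configuration, then chain the correctness of $\tau$ and then $\sigma$ forward, and the completeness of $\sigma$ and then $\tau$ backward. The only difference is that you explicitly justify why stutter removal commutes with composition: collapsing a sequence is unchanged by duplicating its entries, and an infinite final stutter block collapses to a single configuration. The paper simply asserts this as ``stutter-removal composes.''
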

\begin{proof}
$\sigma\circ\tau(c_0'')=\sigma(c_0')=c_0$, so it is an implementation.  If both are correct, $\tau$ maps a correct run of $TS''$ to a correct run of $TS'$, which $\sigma$ maps to a correct run of $TS$, and stutter-removal composes, so $\sigma\circ\tau$ is correct.  If both are complete, a correct run of $TS$ is $\sigma(r')$ for a correct run $r'$ of $TS'$, itself $\tau(r'')$ for a correct run $r''$ of $TS''$, so $(\sigma\circ\tau)(r'')$ is that run of $TS$; thus $\sigma\circ\tau$ is complete.
\end{proof}

\begin{definition}[Safety Fault, Fault-Resilient Implementation]\label{def:fault-resilient}
Let $(TS',\sigma)$ be an implementation of $TS$, with $TS'=(C',c_0',T')$.  A \temph{safety fault} is a set $F\subseteq C'^2\setminus T'$ of \temph{faulty transitions}, and a computation \temph{performs} $F$ if it includes a transition in $F$.  The implementation is \temph{$F$-resilient} if $\sigma$ maps every live run $r'\subseteq T'\cup F$ of $TS'$ to a correct run of $TS$.
\end{definition}
Faulty transitions belong to no transaction class, so they impose no liveness requirement; with $F=\emptyset$ the live runs $r'\subseteq T'$ are exactly the correct runs, so $F$-resilience extends correctness (Definition~\ref{def:implementation-properties}) to runs that perform and recover from faults in $F$.

\mypara{Communicating volitional agents: formal definitions}
We recall the formal definitions of communicating volitional agents (Section~\ref{sec:cva})~\cite{lewis2026volitional}.

\begin{definition}[CVA Local States and Configurations]\label{def:cva-state}
Given a set $C$ of \temph{cargoes} and a \temph{platform state space} $A$ with initial state $a_0\in A$, the set of \temph{messages} over $P\subset \Pi$ is
\[
M(P) \;:=\; \{\mathsf{message}(s,r,c) \mid s\ne r\in P,\; c\in C\},
\]
with $\mathsf{message}(s,r,c)$ a message of \temph{sender} $s$, \temph{recipient} $r$, and \temph{cargo} $c$.  A \temph{CVA local state} over $P$ is a tuple
\[
(\mathit{known},\, o,\, i,\, a) \;\in\; S(P) \;:=\; 2^{P} \times 2^{M(P)} \times 2^{M(P)} \times A ,
\]
where $\mathit{known}\subseteq P$ is a set of \temph{known peers}, $o\subseteq M(P)$ is an \temph{outbox}, $i\subseteq M(P)$ is an \temph{inbox}, and $a\in A$ is a \temph{platform state};
the \temph{initial local state} is  $(\emptyset,\emptyset,\emptyset,a_0)$.  A \temph{CVA configuration} over $P$ is a member $c\in S(P)^P$; for $p\in P$ we write $c_p = (\mathit{known}_p,\, o_p,\, i_p,\, a_p)$ for the local state of $p$ in $c$.  The \temph{initial configuration} over $P$, denoted $c0(P)$, assigns the initial local state to every $p\in P$.
\end{definition}

The local-states function $S$ satisfies $P\subseteq P' \implies S(P)\subseteq S(P')$, as required of a transactions-based protocol.  A CVA protocol consists of two ``built in'' binary transactions---\emph{discover} and \emph{communicate}---and platform-specific unary transactions:

\begin{definition}[CVA Protocol]\label{def:cva-transactions}
A \emph{CVA protocol} consists of every binary transaction $c\rightarrow c'$ over $\{p,q\}$ for every $p\ne q\in \Pi$ such that $c'=c$ except that:
\begin{enumerate}
    \item \textbf{Discover}: guarded by $\{p\}$, $\mathit{known}'_p := \mathit{known}_p \cup \{q\}$.

    \item \textbf{Communicate}: unguarded, $i'_q := i_q \cup M$, 
    $o'_p := o_p \setminus  M$, provided $M=\{\mathsf{message}(p,q,c)\} \subseteq o_p$.
\end{enumerate}
In addition it has for each $p\in P$ guarded unary platform transactions:
\begin{enumerate}\setcounter{enumi}{2}
    \item \textbf{Platform transactions}: unary transactions $c\rightarrow c'$ with participant $\{p\}$ and guard $\emptyset$ or $\{p\}$, with a precondition over the local state of $p$---it may inspect $\mathit{known}_p$, $i_p$, and $a_p$---each modifying $a_p$ to $a'_p$  and/or adding one or more $\mathsf{message}(s,r,c)$ to $o_p$ resulting in $o'_p$, provided $s=p$, $r\in \mathit{known}_p$ and $c\in C$.  A platform transaction does not remove messages from $i_p$.
\end{enumerate}
\end{definition}

Inboxes and outboxes are sets (Definition~\ref{def:cva-state}).  A reactive platform transaction reads its inbox in its precondition but does not consume from it; each such transaction is idempotent by its own precondition, which a monotone guard (an epoch, a frontier, or a domain-membership test) renders unsatisfiable once the transaction has fired, so the same retained message cannot enable it again.  Inboxes are therefore monotone non-decreasing, and bounding them is an implementation concern below the CVA abstraction.

The following are proven in Appendix~\ref{appendix:proofs}:
\begin{restatable}[Known-Peers Containment]{lemma}{ThmKnownContainment}\label{lem:known-containment}
In any run of $\calF(P)$, for every configuration $c$ along the run and every $p\in P$: $\mathit{known}_p \subseteq P$.
\end{restatable}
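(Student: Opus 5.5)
The plan is induction on the length of the run, over the invariant ``for every configuration $c$ along the run and every $p\in P$, $\mathit{known}_p\subseteq P$.'' The statement concerns only the $\mathit{known}$ component of CVA local states (Definition~\ref{def:cva-state}), so the induction will track that component under each of the transaction forms of Definition~\ref{def:cva-transactions}.

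\emph{Base case.} The initial configuration $c0(P)$ gives every agent the initial local state $(\emptyset,\emptyset,\emptyset,a_0)$. Hence $\mathit{known}_p=\emptyset\subseteq P$ for every $p$.

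\emph{Inductive step.} Assume the invariant holds at $c$ and consider a step $c\to c'$ of $\calF(P)$. The step is one of the following kinds, which I treat one at a time.
\begin{enumerate}
\item A \emph{Discover} transaction over $\{p,q\}$, with $p\ne q$. Since it is a transaction of $\calF(P)$, its participants lie in $P$, so $q\in P$. The only change is $\mathit{known}'_p=\mathit{known}_p\cup\{q\}$, which is a subset of $P$ by the hypothesis.
\item A \emph{Communicate} transaction. It changes only $i_q$ and $o_p$, so every $\mathit{known}$ set is unchanged.
\item A \emph{platform transaction}. It modifies only $a_p$ and $o_p$, so every $\mathit{known}$ set is unchanged.
\item A \emph{change-volition transaction} (Definition~\ref{definition:vmat}). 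It leaves machine states intact, so every $\mathit{known}$ set is unchanged.
\end{enumerate}
In every case the invariant holds at $c'$, which completes the induction.

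The only step that needs care is Discover: one must justify that the transition system $\calF(P)$ contains only transactions whose participants are drawn from $P$. This holds because configurations are members of $S(P)^P$, so a transaction of $\calF(P)$ is a pair of configurations over subsets of $P$. Equivalently, the local-states function already has $\mathit{known}\in 2^P$ by the type of $S(P)$, so the lemma could also be read off directly from the fact that every configuration of $\calF(P)$ lies in $S(P)^P$. I will present the induction, since it makes the Discover case explicit.
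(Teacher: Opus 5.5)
Your proposal is correct and follows the paper's own proof: induction on run length, starting from $\mathit{known}_p=\emptyset$ and using that only Discover writes $\mathit{known}$, and that a Discover of $\calF(P)$ has both participants in $P$. Your extra case analysis of Communicate, platform and change-volition steps, and your remark that the claim also follows directly from the typing $S(P)=2^P\times\cdots$, are both sound but add nothing essential.
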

\begin{restatable}[Outbox Containment]{lemma}{ThmOutboxContainment}\label{lem:outbox-containment}
In any run of $\calF(P)$, for every configuration $c$ along the run, every $p\in P$, and every $\mathsf{message}(s,r,\cdot)\in o_p$: $r\in P$.
\end{restatable}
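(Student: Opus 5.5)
The plan is an induction on the length of the run. The invariant is that every configuration $c$ along the run satisfies $r\in P$ for every $p\in P$ and every $\mathsf{message}(s,r,\cdot)\in o_p$. It holds in the initial configuration $c0(P)$ because every outbox is empty. For the inductive step I examine each kind of transition of $\calF(P)$ from Definition~\ref{def:cva-transactions}, together with the change-volition transactions of Definition~\ref{definition:vmat}.

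\textbf{Transitions that only shrink or leave outboxes.} Discover does not touch any outbox. Change-volition transactions leave the machine state, and hence every outbox, unchanged. Communicate only removes a message from $o_p$ and adds it to $i_q$. In all three cases every outbox after the step is a subset of the corresponding outbox before it, so the invariant is preserved immediately.

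\textbf{Platform transactions.} A platform transaction of $p$ may add $\mathsf{message}(s,r,c)$ to $o_p$, but only under the proviso $r\in\mathit{known}_p$. By Known-Peers Containment (Lemma~\ref{lem:known-containment}), applied to the pre-transition configuration, $\mathit{known}_p\subseteq P$, so $r\in P$. The messages already in $o_p$ satisfy the invariant by hypothesis, and the other agents' outboxes are unchanged.

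I expect no real obstacle; the argument is routine. The one point to state explicitly is that the proof relies on the earlier lemma rather than only on the typing of messages. Definition~\ref{def:cva-state} already places outbox messages in $M(P)$, which would give the claim directly, but the proof should not depend on that if the local-states function is read over the larger $\Pi$. I would therefore argue from the transactions, citing Lemma~\ref{lem:known-containment} for the only step that introduces new recipients, and note that no transition of $\calF(P)$ involves agents outside $P$.
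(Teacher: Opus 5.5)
Your proof is correct and follows the same route as the paper: the only step that adds messages to an outbox is a platform transaction, whose proviso $r\in\mathit{known}_p$ combined with Known-Peers Containment (Lemma~\ref{lem:known-containment}) yields $r\in P$. Your explicit induction, together with your separate treatment of Communicate (which only removes from $o_p$) and of change-volition steps, spells out what the paper's one-line argument leaves implicit.
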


%% file: sections/appendix-gsg-cva.tex
\input{sections/gsg-cva}

\input{sections/gsg-cva-properties}

\input{sections/secure-gsg-cva}

\input{sections/secure-gsg-cva-restore}

\input{sections/secure-gsg-cva-replace}

\input{sections/secure-gsg-cva-properties}

%% file: sections/gsg-cva.tex
\section{CVA Implementation of the Secure Social Graph (Full Development)}\label{section:gsg-cva}

We present the grassroots social graph as a CVA platform (Section~\ref{sec:cva}).  The Befriend and Unfriend transactions of the abstract spec (Section~\ref{section:gsg-abstract}) are realised as unary platform transactions.  The realisation uses an epoch encoding, an offer/accept handshake, and stream dissemination, established directly over CVA's message discipline.

Communication uses a single primitive: a CVA message $(\mathit{sender},\mathit{recipient},\mathit{cargo})$.  A broadcast to $p$'s friends is realised as one such message per recipient in $p$'s outbox.  Every recipient is a friend in $\mathrm{FMap}_p$ (added on every befriend and never removed), so the recipient set is exactly $\mathrm{dom}(\mathrm{FMap}_p)$ and needs no separate state.

\subsection{Platform State}\label{section:gsg-cva-state}

The \emph{epoch} of a friendship is an integer in $\mathbb{N}$ that encodes both a counter and a status by parity: an \emph{odd} epoch denotes an \emph{active} friendship, an \emph{even} epoch (including $0$) denotes an \emph{inactive} one.  We treat $0$ as the distinguished value $\bot$ denoting ``no connection has ever been made.''  The \emph{Friendship Order} on $\mathbb{N}$ is the natural ordering, corresponding to the friendship lifecycle
\[
  \bot = 0 \;<\; 1_{\textit{active}} \;<\; 2_{\textit{inactive}} \;<\; 3_{\textit{active}} \;<\; 4_{\textit{inactive}} \;<\; \cdots
\]

The platform state of $p\in P$ is a pair
\[
a_p \;=\; (\mathrm{FMap}_p,\; \mathrm{FoFMap}_p),
\]
where
\begin{itemize}
    \item $\mathrm{FMap}_p : \Pi \rightharpoonup \mathbb{N}\times D$ is a partial map, each entry $\mathrm{FMap}_p[q]=(e,d)$ recording the epoch $e$ of the friendship between $p$ and $q$ together with an opaque \emph{application-data} field $d\in D$ ($D$ a platform parameter with default $\bot\in D$; Section~\ref{section:gsg-cva-data});
    \item $\mathrm{FoFMap}_p : \Pi\rightharpoonup (\Pi\rightharpoonup \mathbb{N})$ is a partial nested map, each entry $\mathrm{FoFMap}_p[q][f]=e$ recording $p$'s observed epoch of the friendship between $q$ and $f$.
\end{itemize}
The initial platform state is $(\emptyset,\emptyset)$.

We write $\mathrm{epoch}_p(q) := \mathrm{FMap}_p[q].\mathit{epoch}$ when $q\in\mathrm{dom}(\mathrm{FMap}_p)$ and $\mathrm{epoch}_p(q) := 0$ otherwise.  Similarly $\mathrm{epoch}_p(q,f) := \mathrm{FoFMap}_p[q][f]$ when defined and $0$ otherwise.  The \emph{recipient set} of $p$ is $\mathrm{Rec}_p := \mathrm{dom}(\mathrm{FMap}_p)$.

\subsection{Messages}\label{section:gsg-cva-cargo}

The cargo space $C$ includes the following tagged messages, for $x\in\mathbb{N}$ and $f\in\Pi$:
\begin{itemize}
    \item $\mathsf{friend\_request}(x)$: friendship offer at epoch $x$ (odd).
    \item $\mathsf{accept}(x)$: acceptance of a pending $\mathsf{friend\_request}(x)$.
    \item $\mathsf{unfriend}(x)$: unilateral unfriend at epoch $x$ (even).
    \item $\mathsf{stream\_update}(f,x)$: stream update about third party $f$ at epoch $x$.
\end{itemize}

\subsection{Befriend}\label{section:gsg-cva-befriend}

Befriending between $p$ and $q$ proceeds in three steps: $p$ sends a $\mathsf{friend\_request}$ at a fresh odd epoch, $q$ responds with $\mathsf{accept}$, and $p$ integrates the acceptance.  A simultaneous offer from $q$ at the same epoch is resolved by name order on $\Pi$: the agent with the larger name accepts the incoming request, the agent with the smaller name ignores it.

Fix a strict total order $<$ on $\Pi$.

\begin{definition}[Befriend Platform Transactions]\label{definition:gsg-cva-befriend}
\leavevmode

\mypara{Offer friendship}  A unary transaction at $p$, guarded by $\{p\}$, for $q\in\mathit{known}_p$ such that $\mathrm{epoch}_p(q)$ is even:
\begin{itemize}
    \item Let $x := \mathrm{epoch}_p(q) + 1$.
    \item Add $\mathsf{message}(p,q,\mathsf{friend\_request}(x))$ to $o_p$.
\end{itemize}

\mypara{Accept offer}  A unary transaction at $q$, guarded by $\{q\}$, provided $\mathsf{message}(p,q,\mathsf{friend\_request}(x))\in i_q$, $\mathrm{epoch}_q(p)$ is even and $\mathrm{epoch}_q(p)<x$, and no $\mathsf{message}(q,p,\mathsf{friend\_request}(x))\in o_q$:
\begin{itemize}
    \item Set $\mathrm{FMap}_q[p] := x$.
    \item Add $\mathsf{message}(q,p,\mathsf{accept}(x))$ to $o_q$.
    \item For each $r\in\mathrm{Rec}_q$ (after the update), add $\mathsf{message}(q,r,\mathsf{stream\_update}(p,x))$ to $o_q$.
    \item For each $f\in\mathrm{Rec}_q$ with $f\ne p$ (after the update), add $\mathsf{message}(q,p,\mathsf{stream\_update}(f,\mathrm{epoch}_q(f)))$ to $o_q$ (the on-join snapshot).
\end{itemize}

\mypara{Resolve simultaneous offer}  A unary transaction at $q$, unguarded, provided $\mathsf{message}(p,q,\mathsf{friend\_request}(x))\in i_q$, $\mathsf{message}(q,p,\mathsf{friend\_request}(x))\in o_q$, $\mathrm{epoch}_q(p)$ is even and $\mathrm{epoch}_q(p)<x$, and $p<q$:
\begin{itemize}
    \item Set $\mathrm{FMap}_q[p] := x$.
    \item Add $\mathsf{message}(q,p,\mathsf{accept}(x))$ to $o_q$.
    \item For each $r\in\mathrm{Rec}_q$ (after the update), add $\mathsf{message}(q,r,\mathsf{stream\_update}(p,x))$ to $o_q$.
    \item For each $f\in\mathrm{Rec}_q$ with $f\ne p$ (after the update), add $\mathsf{message}(q,p,\mathsf{stream\_update}(f,\mathrm{epoch}_q(f)))$ to $o_q$ (the on-join snapshot).
\end{itemize}

\mypara{Integrate accept}  A unary transaction at $p$, unguarded, provided $\mathsf{message}(q,p,\mathsf{accept}(x))\in i_p$, $\mathrm{epoch}_p(q)$ is even and $\mathrm{epoch}_p(q)<x$:
\begin{itemize}
    \item Set $\mathrm{FMap}_p[q] := x$.
    \item For each $r\in\mathrm{Rec}_p$ (after the update), add $\mathsf{message}(p,r,\mathsf{stream\_update}(q,x))$ to $o_p$.
    \item For each $f\in\mathrm{Rec}_p$ with $f\ne q$ (after the update), add $\mathsf{message}(p,q,\mathsf{stream\_update}(f,\mathrm{epoch}_p(f)))$ to $o_p$ (the on-join snapshot).
\end{itemize}
\end{definition}

Offer friendship and Accept offer are volitional, guarded by the willing agent's person.  Resolve simultaneous offer fires only at the agent with the larger name; the symmetric scenario with $q<p$ is handled by $p$'s own instance of this transaction firing on $q$'s incoming request.  Integrate accept is reactive at the offerer: once the offer has been willed, consuming the acceptance requires no further volition.  In each of the three activating transactions, the activating agent both broadcasts the new edge to its existing friends (the $\mathsf{stream\_update}(\cdot,x)$ to each $r\in\mathrm{Rec}$) and sends the new friend a snapshot of its existing friendships (the $\mathsf{stream\_update}(f,\mathrm{epoch}(f))$ to the new friend); the snapshot is discussed in Section~\ref{section:gsg-cva-stream}.

\subsection{Unfriend}\label{section:gsg-cva-unfriend}

Unfriending is unilateral: either party may initiate.  The initiator increments its FMap entry from an odd epoch to the next even epoch, sends $\mathsf{unfriend}$ to the counterpart, and emits a stream update.  The counterpart integrates on receipt, discarding messages from non-friends.

\begin{definition}[Unfriend Platform Transactions]\label{definition:gsg-cva-unfriend}
\leavevmode

\mypara{End friendship}  A unary transaction at $p$, guarded by $\{p\}$, for $q\in\mathrm{Rec}_p$ with $\mathrm{epoch}_p(q)$ odd:
\begin{itemize}
    \item $\mathrm{FMap}_p[q] := \mathrm{epoch}_p(q) + 1$.  Let $x := \mathrm{FMap}_p[q]$ (even).
    \item Add $\mathsf{message}(p,q,\mathsf{unfriend}(x))$ to $o_p$.
    \item For each $r\in\mathrm{Rec}_p$, add $\mathsf{message}(p,r,\mathsf{stream\_update}(q,x))$ to $o_p$.
\end{itemize}

\mypara{Integrate unfriend}  A unary transaction at $q$, unguarded, provided $\mathsf{message}(p,q,\mathsf{unfriend}(x))\in i_q$, $p\in\mathrm{dom}(\mathrm{FMap}_q)$, and $\mathrm{epoch}_q(p)<x$:
\begin{itemize}
    \item Set $\mathrm{FMap}_q[p] := x$.
    \item For each $r\in\mathrm{Rec}_q$, add $\mathsf{message}(q,r,\mathsf{stream\_update}(p,x))$ to $o_q$.
\end{itemize}
\end{definition}

The precondition $p\in\mathrm{dom}(\mathrm{FMap}_q)$ on Integrate unfriend ensures that a spurious $\mathsf{unfriend}$ from a non-friend has no effect.  
The precondition $\mathrm{epoch}_q(p)<x$ rejects stale and duplicate unfriend messages.

\subsection{Stream Dissemination}\label{section:gsg-cva-stream}

Every Befriend and Unfriend transaction that updates $\mathrm{FMap}_p$ emits a $\mathsf{stream\_update}$ to each $r\in\mathrm{Rec}_p$, as specified in the effects above.  A recipient integrates the update subject to the Friendship Order.

In addition, when a friendship with a new friend is activated --- by Accept offer, Resolve simultaneous offer, or Integrate accept --- the activating agent sends that new friend a one-shot \emph{snapshot} of its current friendships, one $\mathsf{stream\_update}(f,\mathrm{epoch}(f))$ per existing friend $f$, as specified in the effects above.  The snapshot catches the new friend up on friendships formed before it joined, so that an observer view never depends on the order in which friendships were established; without it, a friend that joined after a third-party edge was formed would never learn that edge, as the edge emits no further updates unless it changes again.

\begin{definition}[Integrate stream update]\label{definition:gsg-cva-stream-integrate}
A unary transaction at $r$, unguarded, provided $\mathsf{message}(q,r,\mathsf{stream\_update}(f,x))\in i_r$, $q\in\mathrm{dom}(\mathrm{FMap}_r)$, and $\mathrm{epoch}_r(q,f)<x$:
\begin{itemize}
    \item $\mathrm{FoFMap}_r[q][f] := x$.
    \item $\mathrm{FoFMap}_r[f][q] := x$.
\end{itemize}
\end{definition}

The second assignment preserves FoFMap Symmetry (Invariant~\ref{invariant:gsg-fofmap-symmetry}): $r$'s observed epoch of $(q,f)$ and of $(f,q)$ are equal at every reachable state.  The Friendship Order guard ensures monotone integration: no stream update can drive an observed epoch backward.  Per-edge updates and snapshot entries are integrated by the same transaction, so the snapshot adds no new integration rule.

\subsection{Application Data}\label{section:gsg-cva-data}

The application-data field is an optional extension; the social graph operates unchanged without it, and a platform needing no per-friend data omits $D$.  Each direct-friend record carries an opaque \emph{application-data} field $d\in D$, where $D$ is a platform parameter with default $\bot\in D$.  The social graph transports and stores $d$ but never inspects it, disseminates it, or lets it influence epoch or status.  The Befriend transactions of Section~\ref{section:gsg-cva-befriend}, written ``Set $\mathrm{FMap}_p[q]:=x$'', set the epoch component to $x$ and the data component to $\bot$; the Unfriend transactions of Section~\ref{section:gsg-cva-unfriend} and every other transaction modify only the epoch and preserve the data component.  Two purely-local operations expose $d$ to the layered application:

\begin{definition}[Application-Data Operations]\label{definition:gsg-cva-data}
\leavevmode

\mypara{Get data}  A read-only query at $p$, for $q\in\mathrm{dom}(\mathrm{FMap}_p)$: return $\mathrm{FMap}_p[q].\mathit{data}$ to the application; no state change, no message.

\mypara{Set data}  A unary transaction at $p$, guarded by $\{p\}$, for $q\in\mathrm{dom}(\mathrm{FMap}_p)$ and a value $d\in D$ supplied by the application: set $\mathrm{FMap}_p[q].\mathit{data}:=d$.
\end{definition}

Neither generates a protocol message nor affects epoch or status.  The semantics of $D$ are defined by the layered application and are outside the scope of this paper.

%% file: sections/gsg-cva-properties.tex
\subsection{Properties}\label{section:gsg-cva-properties}

We state the CVA-level invariants of the social graph.  The invariants below are stated over CVA's transactions and message discipline; their proofs are in Appendix~\ref{appendix:proofs}.  These are the properties of the fault-free base graph; the security layer (Appendix~\ref{section:secure-gsg-cva-properties}) inherits the invariants and adds the recovery results, re-establishing only what faults affect. 

\begin{definition}[Quiescence]\label{definition:gsg-cva-quiescence}
A configuration is \emph{quiescent} if (1) no volitional transaction is enabled and (2) there are no new (yet unprocessed) messages in inboxes and outboxes. 
\end{definition}

\noindent Note that periodic rebroadcasts of checkpoints continue to occur in quiescent states, but they repeatedly send the same message while new messages are sent.

Importantly, at a quiescent configuration, every message that has been sent has been delivered and processed at least once 
and no agent holds a willed transaction class that has not been consumed.  The abstract social graph (Section~\ref{section:gsg-abstract}), whose transactions are instantaneous, is quiescent exactly when no agent holds an unconsumed volition; its quiescent friend-set assignments are, by Friendship Mutuality (Lemma~\ref{lemma:gsg-mutuality}), precisely the mutual ones, and every mutual assignment over $P$ is reached by a run that befriends the corresponding pairs.

\subsubsection{Invariants}\label{section:gsg-cva-invariants}

\begin{restatable}[Knowledge Monotonicity]{lemma}{ThmKnowledgeMono}\label{lemma:gsg-knowledge-mono}
In any run of the social graph CVA implementation, once $q\in\mathrm{dom}(\mathrm{FMap}_p)$ or $f\in\mathrm{dom}(\mathrm{FoFMap}_p[q])$ becomes true, it remains true at all later states.
\end{restatable}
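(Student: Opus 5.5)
The plan is an induction on the length of the run, checking each transaction of the social graph CVA implementation against the two predicates $q\in\mathrm{dom}(\mathrm{FMap}_p)$ and $f\in\mathrm{dom}(\mathrm{FoFMap}_p[q])$. The initial configuration is vacuous, since both maps are empty. For the step, it suffices to show that no single transaction removes a key from either map. The transactions fall into three groups.

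The first group is the built-in CVA transactions. Discover and Communicate modify only $\mathit{known}$, inboxes and outboxes (Definition~\ref{def:cva-transactions}), so $a_p$ is untouched.

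The second group is the platform transactions acting on $\mathrm{FMap}$: Offer friendship, Accept offer, Resolve simultaneous offer, Integrate accept, End friendship, Integrate unfriend and Set data. Each of them either leaves $\mathrm{FMap}_p$ unchanged or performs an assignment $\mathrm{FMap}_p[q]:=x$ (or to the data component) on an existing or new key. An assignment to a partial map only enlarges or preserves its domain, so $\mathrm{dom}(\mathrm{FMap}_p)$ is non-decreasing. No befriend or unfriend rule deletes an entry; unfriending is encoded by advancing the epoch to an even value, not by removal. This matches the remark in Section~\ref{section:gsg-cva} that recipients are added on every befriend and never removed.

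The third group is Integrate stream update (Definition~\ref{definition:gsg-cva-stream-integrate}). It is the only transaction that writes $\mathrm{FoFMap}$, and it does so only by the two assignments $\mathrm{FoFMap}_r[q][f]:=x$ and $\mathrm{FoFMap}_r[f][q]:=x$. These create inner maps if absent and never shrink the domain of any existing inner map. The Get data query changes no state. Hence both domains are monotone under every step, and the claim follows by induction.

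The main obstacle is scope rather than difficulty. The lemma is stated for the fault-free base graph, whereas the secure layer introduces state-loss faults that empty the maps, and possibly Replace renames that delete keys. I would therefore state explicitly that the induction ranges only over the transactions of Definitions~\ref{definition:gsg-cva-befriend}--\ref{definition:gsg-cva-data} and the built-in CVA transactions. In the secure layer the lemma is then re-established only between faults, as the properties preamble indicates.
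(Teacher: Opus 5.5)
Your proof is correct and follows the paper's argument: induction on the run, observing that every transaction writing platform state is an assignment to a possibly new key of $\mathrm{FMap}_p$ or $\mathrm{FoFMap}_p[q]$, and that none deletes a key. You also list a few cases the paper leaves implicit (Discover, Communicate, Set data). Your scope caveat is apt but slightly imprecise: in the secure layer, Integrate new identity deletes the key $p$ even in fault-free runs, so there monotonicity holds only modulo relabelling of replaced identities, not merely between faults.
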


\begin{restatable}[Friendship Monotonicity]{lemma}{ThmFriendshipMono}\label{lemma:gsg-friendship-mono}
In any run of the social graph CVA implementation, $\mathrm{epoch}_p(q)$ and $\mathrm{epoch}_p(q,f)$ are monotonically non-decreasing in the natural order on $\mathbb{N}$.
\end{restatable}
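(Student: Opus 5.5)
The plan is to prove Friendship Monotonicity by induction over the steps of an arbitrary run of the social graph CVA implementation. In the initial configuration every $\mathrm{FMap}_p$ and $\mathrm{FoFMap}_p$ is empty, so all epochs are $0$. It then suffices to show that no single transaction decreases any $\mathrm{epoch}_p(q)$ or $\mathrm{epoch}_p(q,f)$. Discover and Communicate touch only $\mathit{known}$, inboxes and outboxes, never the platform state $a_p$ (Definition~\ref{def:cva-transactions}), so they leave every epoch unchanged. Get data changes nothing, and Set data modifies only the data component (Definition~\ref{definition:gsg-cva-data}). The remaining work is a case analysis over the platform transactions of Definitions~\ref{definition:gsg-cva-befriend}, \ref{definition:gsg-cva-unfriend} and~\ref{definition:gsg-cva-stream-integrate}.

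For $\mathrm{epoch}_p(q)$ I will go through each transaction that writes $\mathrm{FMap}$ and read the required inequality off its precondition.
\begin{itemize}
\item Accept offer, Resolve simultaneous offer and Integrate accept each set $\mathrm{FMap}[\cdot]:=x$ under the precondition that the old epoch is $<x$, so the epoch strictly increases.
\item End friendship sets the entry to $\mathrm{epoch}_p(q)+1$, again a strict increase.
\item Integrate unfriend sets the entry to $x$ under the precondition $\mathrm{epoch}_q(p)<x$.
\item Offer friendship writes only to the outbox, so it changes no epoch.
\end{itemize}
Each of these writes touches only the single named entry. Every other entry of $\mathrm{FMap}$ is unchanged, and so is its derived $\mathrm{epoch}$ value, using the convention that an absent entry has epoch $0$. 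Since entries are only ever assigned, never removed, an epoch can never fall back to the default $0$ (this is consistent with Knowledge Monotonicity, Lemma~\ref{lemma:gsg-knowledge-mono}).

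For $\mathrm{epoch}_r(q,f)$, the only writer is Integrate stream update, which assigns both $\mathrm{FoFMap}_r[q][f]:=x$ and $\mathrm{FoFMap}_r[f][q]:=x$ under the precondition $\mathrm{epoch}_r(q,f)<x$. This is the step I expect to be the main obstacle. The guard covers the first assignment directly, but the mirrored entry $[f][q]$ is not checked explicitly. To close the gap I will strengthen the induction hypothesis with FoFMap Symmetry (Invariant~\ref{invariant:gsg-fofmap-symmetry}), namely $\mathrm{epoch}_r(q,f)=\mathrm{epoch}_r(f,q)$ at every reachable state, and prove it jointly with monotonicity.

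\begin{itemize}
\item \emph{Base case.} Symmetry holds trivially, since all FoF epochs are $0$.
\item \emph{Symmetry is preserved.} Integrate stream update writes the same value $x$ to both mirrored entries, and no other transaction touches $\mathrm{FoFMap}$.
\item \emph{Monotonicity of the mirrored entry.} Under symmetry, the old value of $\mathrm{epoch}_r(f,q)$ equals $\mathrm{epoch}_r(q,f)<x$, so that entry also increases.
\end{itemize}

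The case $q=f$ does not arise, because messages and stream updates are about distinct agents; even if it did, both assignments would coincide. With these cases the simultaneous induction completes, and both epoch families are non-decreasing along the run.
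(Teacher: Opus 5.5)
Your proof is correct and follows the paper's own argument: induction over the run, checking that every transaction that writes an epoch (Accept offer, Resolve simultaneous offer, Integrate accept, End friendship, Integrate unfriend, Integrate stream update) writes a value strictly above the stored one. Your use of FoFMap Symmetry (Invariant~\ref{invariant:gsg-fofmap-symmetry}) for the mirrored assignment $\mathrm{FoFMap}_r[f][q]:=x$ tightens a point the paper's proof leaves implicit, since it cites only the guard on $\mathrm{epoch}_r(q,f)$; because the symmetry invariant's own proof does not use monotonicity, you could also cite it directly rather than prove the two jointly.
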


\begin{restatable}[Message Bounds]{lemma}{ThmMessageBounds}\label{lemma:gsg-message-bounds}
For every message that has been sent but not yet fully processed by its recipient:
\begin{enumerate}
    \item if $\mathsf{friend\_request}(x)$ is in transit from $B$ to $A$, then $\mathrm{epoch}_B(A)\ge x-1$;
    \item if $\mathsf{accept}(x)$ is in transit from $B$ to $A$, then $\mathrm{epoch}_B(A)\ge x$;
    \item if $\mathsf{unfriend}(x)$ is in transit from $B$ to $A$, then $\mathrm{epoch}_B(A)\ge x$;
    \item if $\mathsf{stream\_update}(f,x)$ is in transit from $B$, then $\mathrm{epoch}_B(f)\ge x$.
\end{enumerate}
\end{restatable}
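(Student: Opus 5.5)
We prove the Message Bounds lemma by induction on the length of the run, as an invariant over reachable configurations of the social graph CVA implementation. A message counts as "in transit from $B$" from the step that adds it to $o_B$, through its Communicate into the recipient's inbox, until its recipient has fully processed it. The argument rests on two facts. First, each bound is established at the moment the message is placed in $o_B$. Second, by Friendship Monotonicity (Lemma~\ref{lemma:gsg-friendship-mono}), $\mathrm{epoch}_B(\cdot)$ never decreases afterwards, so a bound of the form $\mathrm{epoch}_B(\cdot)\ge k$ cannot be falsified later. Moreover, Discover and Communicate leave $\mathrm{FMap}_B$ unchanged: Communicate only moves the message from $o_B$ to the recipient's inbox. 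The content of the sending agent's state thus persists, and the invariant is preserved by every step that does not create a new message.

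It therefore suffices to check each platform transaction that adds a message to an outbox, reading off $\mathrm{FMap}_B$ in the post-state. We go clause by clause.

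\textbf{Clause (1).} Offer friendship sends $\mathsf{friend\_request}(x)$ with $x=\mathrm{epoch}_B(A)+1$ and leaves $\mathrm{FMap}_B$ unchanged, so $\mathrm{epoch}_B(A)=x-1$.

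\textbf{Clause (2).} Accept offer and Resolve simultaneous offer both set $\mathrm{FMap}_B[A]:=x$ in the same step that sends $\mathsf{accept}(x)$, so the bound holds with equality.

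\textbf{Clause (3).} End friendship sets $\mathrm{FMap}_B[A]:=x$ and then sends $\mathsf{unfriend}(x)$.

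\textbf{Clause (4).} There are three cases.
\begin{itemize}
\item The broadcasts $\mathsf{stream\_update}(f,x)$ in Accept offer, Resolve, Integrate accept, End friendship and Integrate unfriend all carry the value $x$ just written to $\mathrm{FMap}_B[f]$.
\item The on-join snapshot sends $\mathsf{stream\_update}(f,\mathrm{epoch}_B(f))$, evaluated in the post-state, so equality holds.
\item Integrate stream update emits no messages. Nor does Set data, which alters only the data component and never the epoch.
\end{itemize}

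The main obstacle is being precise about which state each "after the update" clause reads. In particular, the snapshot epochs and the broadcast epoch must be evaluated in the post-state rather than the pre-state. If they were read in the pre-state, an entry could momentarily lag the value carried, but this is still fine, since the entry is updated in the same atomic step. A second point to check is the data-component convention of Section~\ref{section:gsg-cva-data}: a befriend rewrite resets $d$ but sets the epoch to $x$ greater than the old epoch, so it does not decrease the epoch. This is consistent with Lemma~\ref{lemma:gsg-friendship-mono}, which we invoke rather than reprove.

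With these checks in place, each clause holds at creation and is preserved forever after by monotonicity, which proves the lemma.
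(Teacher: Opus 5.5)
Your proof is correct and matches the paper's argument. Each bound holds at the step that emits the message: with equality, or as $x-1$ for a $\mathsf{friend\_request}$. Friendship Monotonicity (Lemma~\ref{lemma:gsg-friendship-mono}) then preserves it, and your clause-by-clause check of Offer friendship, Accept offer/Resolve simultaneous offer, End friendship, and the per-edge and snapshot stream updates is the paper's own (your pre- versus post-state worry is moot, since snapshot entries concern friends $f$ other than the newly activated one, whose epochs that step leaves unchanged).
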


\begin{restatable}[Observer Bound]{lemma}{ThmObserverBound}\label{lemma:gsg-observer-bound}
For any observer $A$ and any pair $(p,q)$, at all times $\mathrm{epoch}_A(p,q)\le\max(\mathrm{epoch}_p(q),\,\mathrm{epoch}_q(p))$.
\end{restatable}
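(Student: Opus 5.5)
The plan is an induction over the steps of an arbitrary run of the social graph CVA implementation, proving the Observer Bound jointly with clause~4 of the Message Bounds (Lemma~\ref{lemma:gsg-message-bounds}), which I take as already established. Set $M_{pq}:=\max(\mathrm{epoch}_p(q),\mathrm{epoch}_q(p))$. In the initial configuration every FoFMap is empty, so $\mathrm{epoch}_A(p,q)=0$ and the bound holds trivially. For the inductive step, the right-hand side $M_{pq}$ never decreases, by Friendship Monotonicity (Lemma~\ref{lemma:gsg-friendship-mono}). Hence any transaction that leaves $\mathrm{FoFMap}_A$ unchanged preserves the bound. This disposes of Discover, Communicate, every Befriend and Unfriend transaction, and the application-data operations.

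The only transaction that writes $\mathrm{FoFMap}_A$ is Integrate stream update (Definition~\ref{definition:gsg-cva-stream-integrate}). It fires at $A$ on a message $\mathsf{message}(q',A,\mathsf{stream\_update}(f,x))\in i_A$ and sets both $\mathrm{FoFMap}_A[q'][f]$ and $\mathrm{FoFMap}_A[f][q']$ to $x$. So I must show $x\le M_{q'f}$ at the moment of integration; this single inequality covers both assignments, since $M_{q'f}=M_{fq'}$. Clause~4 of the Message Bounds gives $\mathrm{epoch}_{q'}(f)\ge x$ while the message is in transit, hence $x\le\mathrm{epoch}_{q'}(f)\le M_{q'f}$. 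Writing to the sender's own view is what makes the max form of the bound the natural invariant.

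The main obstacle is the phrase ``sent but not yet fully processed'' in the Message Bounds. Inboxes are monotone sets and messages are never consumed, so I need the inequality $\mathrm{epoch}_{q'}(f)\ge x$ to hold whenever the message sits in $i_A$, including long after delivery. I would argue this directly rather than through the lemma's wording. The inequality holds at emission: every emitter of $\mathsf{stream\_update}(f,x)$ (Accept offer, Resolve simultaneous offer, Integrate accept, End friendship, Integrate unfriend, and the on-join snapshot) sends either the value it has just written to $\mathrm{FMap}_{q'}[f]$ or the current $\mathrm{epoch}_{q'}(f)$. After emission, $\mathrm{epoch}_{q'}(f)$ only grows, by Lemma~\ref{lemma:gsg-friendship-mono}. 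So the inequality is stable for every message ever placed in any outbox or inbox. A message $\mathsf{message}(q',A,\cdot)$ in $i_A$ must originate in $o_{q'}$, by the Communicate rule together with Outbox Containment (Lemma~\ref{lem:outbox-containment}). This closes the induction.
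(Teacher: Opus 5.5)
Your proof is correct and follows the paper's argument: induction on the run, with Integrate stream update as the only writer of $\mathrm{FoFMap}_A$, Message Bounds~(4) bounding the written value by the sender's own epoch (covering both mirrored assignments), and Friendship Monotonicity keeping the right-hand side non-decreasing. Your treatment of messages retained in inboxes matches the paper's proof of Message Bounds (the bound holds at emission and persists by monotonicity). One small citation fix: the fact that a message with sender $q'$ originated in $o_{q'}$ follows from the Communicate rule and the sender condition $s=p$ on platform transactions, not from Outbox Containment.
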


\begin{restatable}[FoFMap Symmetry]{invariant}{ThmFofMapSymmetry}\label{invariant:gsg-fofmap-symmetry}
For every $p,q,f\in\Pi$ at every reachable state of the social graph CVA implementation, $\mathrm{epoch}_p(q,f) = \mathrm{epoch}_p(f,q)$.
\end{restatable}

The following lemma establishes robustness to message reordering and duplication. 

\begin{restatable}[Monotone Integration]{lemma}{ThmStaleInert}\label{lemma:gsg-cva-stale-inert}
Let $m$ be a message from $p$ to $q$ concerning the friendship between $p$ and $r$ (possibly $r=q$) with epoch $m.e$, and let $e$ be the epoch $q$ holds for that friendship. If $m.e\le e$, $m$ is not consumed and does not change $q$'s state.
\end{restatable}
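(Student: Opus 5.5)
The plan is a case analysis over the message kinds that concern a friendship and carry an epoch: $\mathsf{friend\_request}(x)$, $\mathsf{accept}(x)$, $\mathsf{unfriend}(x)$ and $\mathsf{stream\_update}(f,x)$. For each kind I identify the unique platform transaction of Definitions~\ref{definition:gsg-cva-befriend}, \ref{definition:gsg-cva-unfriend} and~\ref{definition:gsg-cva-stream-integrate} whose precondition mentions that message in $i_q$. I then show that, when $m.e\le e$, the precondition contains a strict comparison $e<m.e$ that fails, so the transaction is not enabled. By Definition~\ref{def:cva-transactions}, platform transactions never remove messages from inboxes and may only be enabled through their preconditions. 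Hence a disabled transaction means the message triggers no state change, and it is ``not consumed'' in the sense that no reactive transaction fires on it. The only other transactions are Discover and Communicate, which do not read message cargo, and Offer friendship and End friendship, which do not read $i_q$. None of these is triggered by $m$.

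The cases run as follows, with $e$ the epoch $q$ holds for the friendship.
\begin{itemize}
\item For $\mathsf{friend\_request}(x)$ from $p$ to $q$, $e=\mathrm{epoch}_q(p)$. Both Accept offer and Resolve simultaneous offer require $\mathrm{epoch}_q(p)<x$, so both are disabled when $x\le e$.
\item For $\mathsf{accept}(x)$, Integrate accept requires $\mathrm{epoch}_q(p)<x$ and is disabled in the same way.
\item For $\mathsf{unfriend}(x)$, Integrate unfriend requires $\mathrm{epoch}_q(p)<x$ (as well as $p\in\mathrm{dom}(\mathrm{FMap}_q)$), so it is disabled too.
\item For $\mathsf{stream\_update}(r,x)$ from $p$ to $q$ with $r\ne q$, the relevant held epoch is $e=\mathrm{epoch}_q(p,r)$, and Integrate stream update requires $\mathrm{epoch}_q(p,r)<x$. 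When $r=q$ the update concerns $q$'s own friendship with $p$. Here I will check that no transaction uses it to change $\mathrm{FMap}_q$. The only candidate is Integrate stream update, which writes $\mathrm{FoFMap}_q[p][q]$ and is again guarded by a strict inequality on that entry. I will therefore read $e$ as that entry, and otherwise note that the message is inert with respect to $\mathrm{FMap}_q$.
\end{itemize}

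Next I argue that the conclusion persists, so that a stale message stays inert at later states and not only at the moment it is examined. By Friendship Monotonicity (Lemma~\ref{lemma:gsg-friendship-mono}), $q$'s held epoch never decreases. So if $m.e\le e$ holds at some state, it holds at every later state, and the disabling inequality continues to fail. Because inboxes are sets, a duplicated copy of $m$ is the same element, and reordering does not change which comparisons hold. This delivers the robustness claim.

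The main obstacle I expect is pinning down ``the epoch $q$ holds for that friendship'' uniformly across message kinds. The difficulty is the $\mathsf{stream\_update}$ case with $r=q$, and the stream-update case where the writes go to both $\mathrm{FoFMap}_q[p][r]$ and $\mathrm{FoFMap}_q[r][p]$. For the latter I will invoke FoFMap Symmetry (Invariant~\ref{invariant:gsg-fofmap-symmetry}) so that the guard $\mathrm{epoch}_q(p,r)<x$ governs both entries. A secondary check concerns the security-layer transactions (restore, vouch, rebind), which are not yet formally available at this point. I restrict the lemma to the base transactions listed above and note that the later layer must preserve the same strict-inequality guard discipline.
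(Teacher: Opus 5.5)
Your proposal is correct and follows essentially the paper's argument. You enumerate the transactions that consume each message kind, observe that each is guarded by the strict inequality $e<m.e$, and use Friendship Monotonicity to show the stale message stays inert at later states. Your handling of a $\mathsf{stream\_update}$ with $r=q$ is more careful than the paper's proof. That proof takes $e=\mathrm{epoch}_q(p)$ for $r=q$ and lists Integrate stream update only for $r\ne q$. Yet Integrate accept at $p$ broadcasts $\mathsf{stream\_update}(q,x)$ to $q$ itself. On a first befriending $q$ consumes that message, writing $\mathrm{FoFMap}_q[p][q]$, even though $x=\mathrm{epoch}_q(p)$. Reading $e$ as the $\mathrm{FoFMap}$ entry, as you propose, is what makes the statement hold in that case.
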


\begin{definition}[Active Friend Set]\label{definition:gsg-cva-active-friend-set}\label{lemma:gsg-mutuality-cva}
The \emph{active friend set} of $p$ is $\widetilde F_p:=\{q\in\mathrm{dom}(\mathrm{FMap}_p):\mathrm{epoch}_p(q)\text{ is odd}\}$.
\end{definition}

\subsubsection{Safety Properties}\label{section:gsg-cva-safety}

We state the intent-based safety guarantees of the social graph CVA implementation, which hold at every reachable state---a configuration on some run from the initial configuration---not only at quiescence. They are phrased in terms of an agent \emph{wanting to befriend} another, the CVA counterpart of the volitional act of offering or accepting friendship.

We denote by Person$_p$ the person behind agent $p$. 

\begin{definition}[Wanting to Befriend]\label{definition:gsg-cva-wants}
An agent $p$ \emph{wants to befriend } an agent $q$ at time $t$ if there is a time $t'\le t$ at which Person$_p$ wills either Offer friendship$(q')$ for an agent $q'$ s.t.\ Person$_q$ = Person$_{q'}$, or the acceptance of a friend request from such an agent $q'$ (the volition guarding Accept offer or Resolve simultaneous offer with $q'$), and  $p$ does not will End friendship$(q')$ for an agent $q'$ of Person$_q$ 
at any time in $(t',t]$.
\end{definition}

Note that a major fault changes an agent's identity but not the person behind it, so wanting is a relation between the two persons rather than between the identities they currently hold.

\begin{restatable}[Post-Accept State]{lemma}{ThmPostAccept}\label{lemma:gsg-cva-post-accept}
Suppose $p$ sent $\mathsf{accept}(x)$ to $q$ at time $t_a$ (via Accept offer or Resolve simultaneous offer), and neither $p$ nor $q$ willed End friendship on the other during $(t_a,t]$, for some $t\ge t_a$. Then $\mathrm{epoch}_q(p)\in\{x-1,x\}$ at $t$.
\end{restatable}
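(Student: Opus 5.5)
The plan is to track $\mathrm{epoch}_q(p)$ from time $t_a$ to $t$, using Friendship Monotonicity (Lemma~\ref{lemma:gsg-friendship-mono}) for the lower bound and a case analysis over all transactions that can write $\mathrm{FMap}_q[p]$ for the upper bound. First I establish the value at $t_a$. Both Accept offer and Resolve simultaneous offer at $p$ require $\mathsf{message}(q,p,\mathsf{friend\_request}(x))\in i_p$. So $q$ executed Offer friendship at some earlier time with $\mathrm{epoch}_q(p)=x-1$ at that moment. By Friendship Monotonicity, $\mathrm{epoch}_q(p)\ge x-1$ at every later time, and in particular throughout $[t_a,t]$. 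This gives the lower bound.

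For the upper bound I show $\mathrm{epoch}_q(p)\le x$ on $[t_a,t]$. The transactions that modify $\mathrm{FMap}_q[p]$ are Accept offer, Resolve simultaneous offer, Integrate accept, End friendship, and Integrate unfriend. End friendship at $q$ on $p$ is excluded by hypothesis. For Integrate unfriend, $q$ would need an $\mathsf{unfriend}(y)$ from $p$. Any such message sent before $t_a$ carries $y\le\mathrm{epoch}_p(q)$ at its sending time, which is $<x$ because $p$ accepted with $\mathrm{epoch}_p(q)<x$ even. So it would move $q$ at most to a value below $x$; and since the epoch is already $\ge x-1$ and such a $y$ is even, the only possible case is $y=x-1$ with epoch $x-1$, which the strict guard $\mathrm{epoch}_q(p)<y$ rejects. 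No new $\mathsf{unfriend}$ is sent by $p$ in $(t_a,t]$, since End friendship is not willed there.

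The befriend-side writes are Accept offer, Resolve simultaneous offer, and Integrate accept at $q$, each setting the epoch to some odd $z$ received from $p$. For Integrate accept, an $\mathsf{accept}(z)$ from $p$ satisfies $z\le\mathrm{epoch}_p(q)$ by Message Bounds (Lemma~\ref{lemma:gsg-message-bounds}, item 2). After $t_a$, $\mathrm{epoch}_p(q)$ stays $x$: $p$ does not end the friendship, and its other activating transactions need an even current epoch, while $x$ is odd. Hence $z\le x$. The same item handles accepts sent before $t_a$, whose epochs were $<x$ at sending. For incoming $\mathsf{friend\_request}(z)$ at $q$, Message Bounds item 1 gives $z\le\mathrm{epoch}_p(q)+1$. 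Requests sent after $t_a$ are impossible, because Offer friendship needs an even epoch at $p$. Requests sent before $t_a$ had $z\le x-1+1$, with equality only if $z=x$. So every write sets the epoch to a value $\le x$, and combined with the lower bound $\mathrm{epoch}_q(p)\in\{x-1,x\}$.

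The main obstacle is the bookkeeping of messages sent \emph{before} $t_a$ but delivered after it. I expect to handle these uniformly with Message Bounds and Monotone Integration (Lemma~\ref{lemma:gsg-cva-stale-inert}): such messages carry epochs at most $x$, and any attempted write to $\mathrm{FMap}_q[p]$ with a value $\le$ the current epoch is blocked by the strict guards. If the security-layer transactions (checkpoint absorption, rebind) also write $\mathrm{FMap}$, I would have to add them to the case list and check that they too are monotone and bounded by $p$'s own record.
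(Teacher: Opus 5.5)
\textbf{Gap: the hypothesis does not exclude End friendship by $q$ before $t_a$.} Your route is the paper's: a lower bound from $q$'s Offer friendship at epoch $x-1$ plus Friendship Monotonicity, and an upper bound by enumerating the writers of $\mathrm{FMap}_q[p]$. In places you are more careful than the paper's one-line proof. You bound pre-$t_a$ messages by their emission epochs. You also rule out fresh offers by $p$ via its odd epoch, whereas the paper declares ``a fresh Offer'' excluded by hypothesis, though the hypothesis says nothing about offers.

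The gap sits at your sentence ``End friendship at $q$ on $p$ is excluded by hypothesis.'' Your upper bound is an invariant over \emph{every} write to $\mathrm{epoch}_q(p)$, including writes between $q$'s offer and $t_a$. The hypothesis excludes End friendship only on $(t_a,t]$. That earlier interval is not harmless, and the statement as written is false:
\begin{enumerate}
\item Take $x=1$, with both epochs $0$. $p$ and $q$ each Offer friendship at epoch $1$, and both requests are communicated. Now neither outbox holds its own request, so plain Accept offer is enabled on both sides.
\item $q$ accepts $p$'s request, so $\mathrm{epoch}_q(p)=1$.
\item $q$ then wills End friendship, so $\mathrm{epoch}_q(p)=2$ and $\mathsf{unfriend}(2)$ is sent.
\item Only then, at $t_a$, does $p$ accept $q$'s request and send $\mathsf{accept}(1)$.
\end{enumerate}
With $t=t_a$ the hypothesis holds vacuously, yet $\mathrm{epoch}_q(p)=2\notin\{0,1\}$. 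It stays $2$, since Integrate accept on $\mathsf{accept}(1)$ needs $\mathrm{epoch}_q(p)<1$. The paper's proof (``any value above $x$ requires End friendship or a fresh Offer, both excluded on $(t_a,t]$'') overlooks exactly this crossed-offers case.

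\textbf{Consequence for $p$'s epoch, and the repair.} The same case breaks your claim that $\mathrm{epoch}_p(q)$ stays $x$ after $t_a$. You rule out End friendship and the activating transactions at $p$, but not Integrate unfriend. That transaction is unguarded, so the hypothesis about willed End friendship does not cover it. In the scenario, $q$'s $\mathsf{unfriend}(x+1)$ lifts $p$ to $x+1$. Then $p$ may offer afresh at $x+2$ and $q$ may accept, giving $\mathrm{epoch}_q(p)=x+2$ with no End friendship willed in $(t_a,t]$.

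The fix is to strengthen the hypothesis: $q$ wills no End friendship toward $p$ from the moment it sent $\mathsf{friend\_request}(x)$. No extra condition on $p$ is needed, since its epoch is below $x$ before $t_a$. Under this hypothesis:
\begin{itemize}
\item every $\mathsf{unfriend}$ from $q$ to $p$ carries an epoch at most $x-1$;
\item Integrate unfriend at $p$ is disabled once $p$ holds $x$, so $p$'s epoch is frozen at $x$ on $[t_a,t]$;
\item the rest of your case analysis then goes through as written.
\end{itemize}
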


\begin{restatable}[Post-Unfriend State]{lemma}{ThmPostUnfriend}\label{lemma:gsg-cva-post-unfriend}
Suppose $\mathrm{epoch}_p(q)=\mathrm{epoch}_q(p)=x$ with $x$ odd at time $t_0$, at least one of $p,q$ willed End friendship on the other at some $t_1\ge t_0$, and neither willed Offer friendship on the other during $[t_0,t]$. Then $\mathrm{epoch}_p(q),\mathrm{epoch}_q(p)\in\{x,x+1\}$ for all $t\ge t_1$.
\end{restatable}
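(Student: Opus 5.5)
We prove the claim as an invariant over the steps of the run from $t_0$ to $t$. The invariant is
\[
\mathrm{epoch}_p(q),\ \mathrm{epoch}_q(p)\in\{x,x+1\}.
\]
The lower bound is immediate. Both epochs equal $x$ at $t_0$, and by Friendship Monotonicity (Lemma~\ref{lemma:gsg-friendship-mono}) they never decrease, so both stay $\ge x$ throughout. The whole argument is therefore about the upper bound. We show by induction on steps in $[t_0,t]$ that neither epoch exceeds $x+1$. This holds in particular for every time after $t_1$. The role of $t_1$ is only to indicate when the value $x+1$ can first appear.

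For the inductive step, we list the platform transactions that can write $\mathrm{FMap}_p[q]$ (and symmetrically $\mathrm{FMap}_q[p]$). Offer friendship sends a request but does not write FMap. Moreover, it is guarded by the offerer, so by hypothesis it does not fire between $p$ and $q$ during $[t_0,t]$. The remaining writers are Accept offer, Resolve simultaneous offer, Integrate accept, End friendship and Integrate unfriend; we treat them in turn.

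End friendship fires only from an odd epoch and adds one. By the invariant the only odd value available is $x$, so it can only move the epoch from $x$ to $x+1$, and the $\mathsf{unfriend}$ it emits carries epoch $x+1$.

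Integrate unfriend sets the epoch to the value $y$ of some $\mathsf{unfriend}(y)$ message, and only when the current epoch is $<y$. There are two kinds of such messages.
\begin{itemize}
\item Messages emitted after $t_0$ carry $y=x+1$ by the previous case.
\item Messages already in transit at $t_0$ satisfy $y\le x$ by Message Bounds (Lemma~\ref{lemma:gsg-message-bounds}, item 3), because the sender's epoch was $x$ at $t_0$. Such a message is stale and cannot fire once the epoch is $\ge x$ (cf.\ Lemma~\ref{lemma:gsg-cva-stale-inert}).
\end{itemize}
So Integrate unfriend can only produce $x+1$.

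The three activating transactions each set the epoch to the value $y$ of a $\mathsf{friend\_request}(y)$ or $\mathsf{accept}(y)$ message, and each requires the current epoch to be $<y$ (and even, for the accepting ones). A request or accept between $p$ and $q$ can have only two origins.
\begin{itemize}
\item It was in transit at $t_0$. Then Message Bounds items 1 and 2 bound its epoch by the sender's epoch at $t_0$, which is $x$: item 1 gives $y-1\le x$, hence $y\le x$ since $y$ is odd, and item 2 gives $y\le x$. Since the current epoch is already $\ge x\ge y$, the precondition fails and the message is inert.
\item It was created in $[t_0,t]$. An $\mathsf{accept}$ is created only by Accept offer or Resolve simultaneous offer on an incoming request. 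A request must originate from an Offer friendship, which by hypothesis does not occur in the interval. An accept emitted in the interval therefore answers a stale request, and by the previous point such an accept never fires.
\end{itemize}
Hence no activating transaction fires between $p$ and $q$ in $[t_0,t]$. The epoch can never reach $x+2$, and the invariant is preserved.

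The main obstacle is the careful handling of messages that were already in flight at $t_0$ and, transitively, of replies generated in response to them. Lemma~\ref{lemma:gsg-message-bounds} is exactly the tool for this, but it is stated per in-transit message. I would first make precise that every message sent after $t_0$ on the $(p,q)$ edge is either an $\mathsf{unfriend}(x+1)$ or an $\mathsf{accept}$ replying to a pre-$t_0$ request, and only then run the case analysis above. Stream updates, snapshots and application-data operations touch only FoFMap or the data component, never $\mathrm{epoch}_p(q)$, so they can be dismissed at the outset.
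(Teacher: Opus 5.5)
Your proof is correct and follows the same route as the paper's. Both enumerate the transactions that write the direct epochs, note that End friendship and Integrate unfriend can only produce $x+1$, and exclude any activation because every request must originate from an Offer friendship, which the hypothesis forbids. Yours is more careful: the paper's ``advancing past $x+1$ needs a fresh Offer'' silently assumes that requests and accepts sent before $t_0$ are stale, and your use of Message Bounds, together with the oddness of request epochs, shows they carry epoch at most $x$ and so cannot fire; your invariant also holds on all of $[t_0,t]$ without using the End-friendship hypothesis.
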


\begin{restatable}[Observed-Epoch Provenance]{lemma}{ThmFofProvenance}\label{lemma:gsg-cva-fof-provenance}
If $\mathrm{epoch}_p(q,f)=x\ge 1$ at time $t_1$, then there is a time $t_2\le t_1$ at which $\mathrm{epoch}_q(f)=x$ or $\mathrm{epoch}_f(q)=x$.
\end{restatable}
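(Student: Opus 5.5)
The plan is to prove Observed-Epoch Provenance by induction along the run: the invariant holds at every reachable state, and we check each transaction that can change $\mathrm{epoch}_p(q,f)$. The only transaction that writes $\mathrm{FoFMap}_p$ is Integrate stream update (Definition~\ref{definition:gsg-cva-stream-integrate}). It sets both $\mathrm{FoFMap}_p[q][f]$ and $\mathrm{FoFMap}_p[f][q]$ to $x$ on receipt of $\mathsf{message}(q',p,\mathsf{stream\_update}(f',x))$ with $\{q',f'\}=\{q,f\}$. In the initial configuration every observed epoch is $0$, so the claim is vacuous there. A step that does not touch the entry leaves the value, and hence the earlier witness time $t_2$, unchanged. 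So it suffices to show the following: whenever Integrate stream update sets an entry to $x\ge1$, there is an earlier time at which $\mathrm{epoch}_{q'}(f')=x$.

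The key step is a sender-side provenance property of stream updates. Every $\mathsf{stream\_update}(f',x)$ placed in $o_{q'}$ carries a value equal to $\mathrm{epoch}_{q'}(f')$ at the moment of sending. To establish this, we inspect each emitting transaction. In Accept offer, Resolve simultaneous offer, Integrate accept, End friendship and Integrate unfriend, the per-edge broadcast sends $x$ immediately after setting $\mathrm{FMap}[\cdot]:=x$. The on-join snapshot sends $\mathrm{epoch}_{q'}(f)$, read in the same post-state. Messages move unchanged through Communicate into the recipient's inbox, so the message that $p$ integrates was emitted at some time $t_2\le t_1$ at which $\mathrm{epoch}_{q'}(f')=x$. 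If $q'=q$ this gives $\mathrm{epoch}_q(f)=x$; if $q'=f$ it gives $\mathrm{epoch}_f(q)=x$. This is why the statement is a disjunction. By FoFMap Symmetry (Invariant~\ref{invariant:gsg-fofmap-symmetry}) the two orientations of the entry are treated uniformly.

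The step I expect to be the main obstacle is making the sender-side property robust to everything beyond the base social graph. Rebroadcast checkpoints, Restore absorption and the Replace cascade (rename of $p$ to $p'$ preserving epochs, and rebind) also produce or modify stream-like information. State loss can reset an agent's epochs, so a witness $t_2$ may predate the fault. That is fine, because the lemma only asks for existence of some earlier time. I would first try to phrase the auxiliary claim as ``every cargo carrying epoch $x$ for the pair $(q',f')$ was created from a read of $\mathrm{epoch}_{q'}(f')$ at creation time''. I would then check that each recovery transaction either copies such a value read from its own $\mathrm{FMap}$, or forwards a previously received value, in which case the argument recurses on the received message's provenance. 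Termination of that recursion comes from well-foundedness of the run's time order.

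Under Replace, identities are relabelled, and the witness should be read modulo the renaming of $p$ to $p'$, in line with the relabelling convention of Definition~\ref{def:ssg-projection}. I would state the lemma up to this relabelling if needed.
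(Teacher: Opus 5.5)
Your proposal is correct and follows the paper's proof. The paper likewise argues by induction on the run, notes that Integrate stream update is the only writer of $\mathrm{FoFMap}_p$, and obtains the witness $t_2$ from the fact that every $\mathsf{stream\_update}(f',x)$ carries the sender's own $\mathrm{epoch}_{q'}(f')$ at emission (the emission rule behind Message Bounds~(4)); the disjunction comes from the symmetric assignment, as you say. Your discussion of checkpoints, rebind and renaming is not needed, since the lemma is stated for the base social graph CVA implementation, but your sketch of that extension is sound.
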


\begin{restatable}[Friend List Soundness]{theorem}{ThmFriendListSoundness}\label{theorem:gsg-cva-friend-list-soundness}
If $q\in\widetilde F_p$ at time $t_3$, then there are times $t_1,t_2\le t_3$ such that $p$ wants to befriend $q$ at $t_1$ and $q$ wants to befriend $p$ at $t_2$.
\end{restatable}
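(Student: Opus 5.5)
The plan is to trace the odd epoch $x:=\mathrm{epoch}_p(q)$ at $t_3$ back to the transaction that installed it in $\mathrm{FMap}_p[q]$, and from there to the two volitions required. Since $q\in\widetilde F_p$, $x$ is odd and $q\in\mathrm{dom}(\mathrm{FMap}_p)$. By Friendship Monotonicity (Lemma~\ref{lemma:gsg-friendship-mono}), consider the last transaction at or before $t_3$ that set $\mathrm{FMap}_p[q]$. End friendship and Integrate unfriend only write even values, so this write was an odd-epoch write: Accept offer, Resolve simultaneous offer, or Integrate accept, all with epoch $x$. Call its time $t_w\le t_3$.

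Case analysis on the writing transaction:
\begin{itemize}
\item \textbf{Accept offer at $p$.} This is guarded by $\{p\}$, so Person$_p$ willed acceptance of $q$'s request at some $t'_p\le t_w$. Its precondition puts $\mathsf{friend\_request}(x)$ from $q$ in $i_p$. Since Communicate only copies outbox messages, and only Offer friendship places a $\mathsf{friend\_request}$ in $q$'s outbox, Person$_q$ willed Offer friendship$(p)$ at some earlier $t'_q$.
\item \textbf{Resolve simultaneous offer at $p$.} The precondition gives both $\mathsf{friend\_request}(x)$ from $q$ in $i_p$ and $p$'s own $\mathsf{friend\_request}(x)$ in $o_p$. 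Hence Person$_q$ willed Offer$(p)$ and Person$_p$ willed Offer$(q)$.
\item \textbf{Integrate accept at $p$.} The $\mathsf{accept}(x)$ in $i_p$ was emitted by $q$ via Accept offer or Resolve, giving a volition of Person$_q$ as in the first two cases. That acceptance required $\mathsf{friend\_request}(x)$ from $p$ in $i_q$, which only Offer friendship at $p$ emits, giving a volition of Person$_p$.
\end{itemize}
Message provenance is a simple induction on runs. Every message in an inbox was in the sender's outbox, and each cargo type has a unique emitting transaction. I would state it as a small auxiliary claim.

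It then remains to show that the volition has not been revoked, i.e.\ that neither person willed End friendship on the other in $(t',t_i]$ for suitable $t_i$. For $p$, take $t_1:=t'_p$ itself; the definition of wanting only needs the interval $(t',t_1]$, which is empty. Symmetrically take $t_2:=t'_q$. So it suffices to exhibit the willing times, and both are $\le t_w\le t_3$.

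The main obstacle is the identity relabelling in the secure setting. If Replace renamed $q$ to $q'$ in $\mathrm{FMap}_p$, the installing write was a rename rather than a befriend transaction. Definition~\ref{definition:gsg-cva-wants} is phrased over persons precisely for this case: a rename preserves the epoch, so I would extend the trace-back through rename steps to the original befriend at epoch $x$ between $p$ and the pre-replacement identity $q_0$, where Person$_{q_0}=$ Person$_q$. I would check that no other transaction writes an odd epoch; I expect the absorbed checkpoint/Restore path to need the same trace-back, with the Monotone Integration lemma (Lemma~\ref{lemma:gsg-cva-stale-inert}) ensuring that absorption never fabricates a fresh odd epoch.
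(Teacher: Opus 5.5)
Your proof is correct and follows essentially the same route as the paper's: take the last write of the odd value $\mathrm{epoch}_p(q)$, note it must be Accept offer, Resolve simultaneous offer, or Integrate accept, recover each person's willed offer or acceptance by message provenance, and satisfy the definition of wanting via the empty interval $(t',t']$. Two differences are worth noting. Your handling of Resolve simultaneous offer, which is unguarded, derives $p$'s volition from its own pending $\mathsf{friend\_request}(x)$ in $o_p$, and this is more careful than the paper's ``$p$ willed the transaction.'' Your secure-layer paragraph is not needed, since the theorem concerns the base implementation, whose only odd-epoch writers are the three you list; if you did extend it, the right tool would be the emission rule behind Message Bounds rather than Monotone Integration.
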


\begin{restatable}[Friend-of-Friend Soundness]{theorem}{ThmFofSoundness}\label{theorem:gsg-cva-fof-soundness}
If $\mathrm{epoch}_p(q,r)$ is odd at time $t_3$, then there are times $t_1,t_2\le t_3$ such that $q$ wants to befriend $r$ at $t_1$ and $r$ wants to befriend $q$ at $t_2$.
\end{restatable}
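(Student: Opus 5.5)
The plan is to reduce the claim to Observed-Epoch Provenance (Lemma~\ref{lemma:gsg-cva-fof-provenance}) followed by Friend List Soundness (Theorem~\ref{theorem:gsg-cva-friend-list-soundness}). Suppose $\mathrm{epoch}_p(q,r)=x$ with $x$ odd at $t_3$. Since $x\ge 1$, Lemma~\ref{lemma:gsg-cva-fof-provenance} yields a time $t_2'\le t_3$ at which $\mathrm{epoch}_q(r)=x$ or $\mathrm{epoch}_r(q)=x$. By FoFMap Symmetry (Invariant~\ref{invariant:gsg-fofmap-symmetry}) the two cases are symmetric in $q$ and $r$, so we may assume without loss of generality that $\mathrm{epoch}_q(r)=x$ at $t_2'$.

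Because $x$ is odd and $r\in\mathrm{dom}(\mathrm{FMap}_q)$ (an epoch $x\ge1$ is recorded only on an FMap entry), we have $r\in\widetilde F_q$ at $t_2'$ by Definition~\ref{definition:gsg-cva-active-friend-set}. Friend List Soundness applied at time $t_2'$ then gives times $t_1,t_2\le t_2'\le t_3$ at which $q$ wants to befriend $r$ and $r$ wants to befriend $q$, respectively. This is exactly the conclusion, since the times required by the statement only need to precede $t_3$. No claim is made about wanting persisting up to $t_3$.

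The step needing care is whether Lemma~\ref{lemma:gsg-cva-fof-provenance} really returns the \emph{same} value $x$ at the source rather than merely a value at least $x$. If it gave only $\ge x$, an odd observed epoch could come from an even source epoch, and the argument would fail. The lemma as stated gives equality, so the parity is transferred exactly; I would double-check this against its proof.

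The route rests on the fact that stream updates carry only epochs that the sender actually held. There are two sources of such updates. Per-edge updates are emitted with the freshly set epoch. The on-join snapshot sends $\mathrm{epoch}_q(f)$ as currently held, and Message Bounds (Lemma~\ref{lemma:gsg-message-bounds}, item~4) together with Friendship Monotonicity (Lemma~\ref{lemma:gsg-friendship-mono}) pin the value to one held at emission time.

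If the statement must also be checked under the security layer, the remaining obstacle is renaming by Replace. A Replace relabels an identity but not its person, so wanting, which Definition~\ref{definition:gsg-cva-wants} phrases over persons, transfers across the rename unchanged.
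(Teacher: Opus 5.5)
Your proposal is correct and matches the paper's proof: Observed-Epoch Provenance transfers the odd value $x$ to a direct epoch $\mathrm{epoch}_q(r)$ or $\mathrm{epoch}_r(q)$ at some $t'\le t_3$, and Friend List Soundness applied at $t'$ yields the two wanting times. The ``without loss of generality'' step rests on the conclusion being symmetric in $q$ and $r$, not on FoFMap Symmetry, and your closing remarks about Replace are not needed for this base-layer statement.
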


\begin{restatable}[Addressability]{theorem}{ThmChannelValidity}\label{theorem:gsg-cva-channel-validity}
If $q\in\widetilde F_p$ at time $t$, then $q\in\mathit{known}_p$ and $p\in\mathit{known}_q$ at $t$.
\end{restatable}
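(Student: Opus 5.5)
The statement to prove is Addressability (Theorem~\ref{theorem:gsg-cva-channel-validity}): if $q\in\widetilde F_p$ then $q\in\mathit{known}_p$ and $p\in\mathit{known}_q$. I would prove it by induction on the run, together with a strengthened invariant. By Definition~\ref{def:cva-transactions} only Discover changes $\mathit{known}$, and only by adding a peer. So $\mathit{known}_p$ is monotone non-decreasing, and once a membership holds it holds forever. It therefore suffices to show both memberships hold at the first time $q$ enters $\widetilde F_p$, that is, at the step that sets $\mathrm{epoch}_p(q)$ to an odd value.

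By inspecting Definitions~\ref{definition:gsg-cva-befriend} and~\ref{definition:gsg-cva-unfriend}, only three transactions set an odd epoch: Accept offer, Resolve simultaneous offer, and Integrate accept. End friendship and Integrate unfriend write even values, and stream updates touch only FoFMap. The first half, $q\in\mathit{known}_p$, is local. In Accept offer and Resolve simultaneous offer at $p$, the agent $p$ adds $\mathsf{message}(p,q,\mathsf{accept}(x))$ to $o_p$. The CVA platform-transaction rule requires the recipient to lie in $\mathit{known}_p$, so $q\in\mathit{known}_p$ at that step. For Integrate accept at $p$, the received $\mathsf{accept}(x)$ answers a $\mathsf{friend\_request}(x)$ from $p$. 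I would cite Message Bounds (Lemma~\ref{lemma:gsg-message-bounds}) or trace provenance: an accept at odd $x$ from $q$ is sent only upon receiving $p$'s request at $x$. Offer friendship requires $q\in\mathit{known}_p$ explicitly, and monotonicity carries it forward.

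The second half, $p\in\mathit{known}_q$, is the main obstacle, because it is a fact about the other agent's state. I would strengthen the induction with a provenance invariant: whenever a message $\mathsf{message}(s,r,\cdot)$ exists in some outbox or inbox, then $r\in\mathit{known}_s$. This holds by the platform-transaction rule, since Communicate only moves the message and $\mathit{known}$ is monotone. With it, the cases go as follows. In Accept offer and Resolve simultaneous offer at $p$, the precondition has $\mathsf{message}(q,p,\mathsf{friend\_request}(x))\in i_p$, so $p\in\mathit{known}_q$. In Integrate accept at $p$, the accept $\mathsf{message}(q,p,\mathsf{accept}(x))\in i_p$ gives $p\in\mathit{known}_q$ directly.

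The remaining worry is the security layer, where Restore and Replace (renaming $p$ to $p'$, absorbing checkpoints) may also set odd epochs. For those I would check that each is triggered by a received message from the new friend: a checkpoint, $\mathsf{new\_identity}$, or a $\mathsf{rebind}$. I would then apply the same provenance invariant, and check that the sender's own outgoing messages require the recipient in its $\mathit{known}$. Where a rename installs $p'$ at $r$ without $r$ having discovered $p'$, I would verify that the implementation performs or requires a Discover. I expect this to be the delicate point.
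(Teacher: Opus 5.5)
Your proof is correct and is essentially the paper's argument: monotonicity of $\mathit{known}$ (only Discover writes it, by union), a case split on the three odd-epoch writers, and the CVA rule that an emitted message's recipient is a known peer of its sender. That rule gives $q\in\mathit{known}_p$ from the $\mathsf{accept}$ or earlier $\mathsf{friend\_request}$ that $p$ sent, and $p\in\mathit{known}_q$ from the $\mathsf{friend\_request}$ or $\mathsf{accept}$ from $q$ sitting in $i_p$; your provenance invariant just makes the second step explicit.

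The security-layer paragraph is not needed, since the theorem concerns the fault-free base implementation, and your plan would not extend as stated anyway. A state loss resets $\mathit{known}_p$, which breaks both monotonicity and your provenance invariant, and Integrate checkpoint can write an odd epoch with no Discover.
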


\subsubsection{Liveness Properties}\label{section:gsg-cva-liveness}

The liveness guarantees hold under the eventual-delivery property of correct runs: a message placed in an outbox is eventually delivered to its recipient and integrated, since the communicate transaction carrying it is unguarded and remains enabled until it fires (Section~\ref{section:dts}).

\begin{restatable}[Mutual Friendship Liveness]{lemma}{ThmMutualLiveness}\label{lemma:gsg-cva-mutual-liveness}
If $p$ wants to befriend $q$ at every time $t'\ge t$ and $q$ wants to befriend $p$ at every time $t'\ge t$, then there is a time $t^*>t$ such that, from $t^*$ onward, $\mathrm{epoch}_p(q)=\mathrm{epoch}_q(p)$ and this common value is odd.
\end{restatable}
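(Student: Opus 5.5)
The plan is a stabilisation argument: first show that after $t$ the friendship between $p$ and $q$ can only move forward through befriend steps, then show that the live run drives it to a common odd epoch, which is then absorbing.

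\emph{Step 1: no new unfriending, and stale messages drain.} Since $p$ and $q$ want to befriend each other at every $t'\ge t$, Definition~\ref{definition:gsg-cva-wants} says neither person wills End friendship on the other after $t$. End friendship is guarded by its initiator, so it never fires on the pair after $t$. The only remaining ways to increase an epoch by an even step are Integrate unfriend steps on $\mathsf{unfriend}(x)$ messages already sent by time $t$. There are finitely many such messages. By eventual delivery each is delivered, and it is either integrated or rendered inert by Monotone Integration (Lemma~\ref{lemma:gsg-cva-stale-inert}). So there is a time $t_0\ge t$ after which every change to $\mathrm{epoch}_p(q)$ or $\mathrm{epoch}_q(p)$ comes from Accept offer, Resolve simultaneous offer, or Integrate accept. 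Each of these sets the epoch to an odd value. By Friendship Monotonicity (Lemma~\ref{lemma:gsg-friendship-mono}) the epochs are non-decreasing, and each step goes from an even epoch to a strictly larger odd one.

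\emph{Step 2: case analysis at $t_0$.}
\begin{itemize}
\item If both epochs are equal and odd, we are done at once.
\item If one side, say $q$, holds an odd $x$ and $p$ holds something smaller, then $q$ reached $x$ either by accepting or by integrating an accept. In the first case the $\mathsf{accept}(x)$ is in transit or already delivered, and Post-Accept State (Lemma~\ref{lemma:gsg-cva-post-accept}) gives $\mathrm{epoch}_p(q)\in\{x-1,x\}$. Eventual delivery then makes the unguarded Integrate accept fire, and $p$ reaches $x$. In the second case $p$ itself sent $\mathsf{accept}(x)$, so $p$'s epoch is already $x$, again by Lemma~\ref{lemma:gsg-cva-post-accept} applied with roles swapped.
\item If both epochs are even (possibly $\bot$), then $p$'s standing volition keeps Offer friendship enabled, since $q\in\mathit{known}_p$ is obtained via Discover, guarded by $p$'s will. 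Liveness of the run forces an offer at $x=\mathrm{epoch}_p(q)+1$, and eventual delivery brings the $\mathsf{friend\_request}(x)$ to $q$. Because $q$ wants to befriend $p$, Accept offer is enabled, or Resolve simultaneous offer is if $q$ has a crossing request at the same $x$. One side therefore sets its epoch to $x$ and emits $\mathsf{accept}(x)$, which reduces to the previous case.
\end{itemize}

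\emph{Step 3: absorption.} Once $\mathrm{epoch}_p(q)=\mathrm{epoch}_q(p)=y$ with $y$ odd, the state is stable. Offer, Accept, and Integrate accept all require an even epoch. Integrate unfriend would need some $\mathsf{unfriend}(x)$ with $x>y$ in transit, but Message Bounds (Lemma~\ref{lemma:gsg-message-bounds}) give $x\le y$ for any such message, and no new one is created after $t_0$. So the common odd value persists from some $t^*$ onward.

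\emph{Main obstacle.} I expect the hardest step to be the even case of Step 2. There one must rule out handshakes that are perpetually pre-empted, for instance crossing offers at \emph{different} epochs, or a request that arrives after the receiver's epoch has already moved past it. The approach I would try first is a potential argument on $\max(\mathrm{epoch}_p(q),\mathrm{epoch}_q(p))$. By the Observer Bound and Lemma~\ref{lemma:gsg-message-bounds} this maximum bounds every in-flight befriend message. Every offer that does not complete must be superseded by a step that raises an epoch to an odd value at least as large. After $t_0$, at most one further even-to-odd transition per side can occur before the two sides meet at a common odd value. Hence only finitely many offers are needed, and the name-order tie-break guarantees that equal-epoch crossings resolve.
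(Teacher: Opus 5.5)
Steps~1 and~3 are sound. Step~1 correctly accounts for $\mathsf{unfriend}$ messages sent before $t$, which can still be integrated after $t$. The gap is in Step~2: the case split is incomplete, and the missing case is not cosmetic. You never treat $\mathrm{epoch}_q(p)=x$ odd with $\mathrm{epoch}_p(q)=y$ even and $y>x$. By your own Step~1 this would be a permanent deadlock. After $t_0$ only even-to-odd writes remain, so $q$ is frozen at $x$. Every $\mathsf{friend\_request}$ or $\mathsf{accept}$ that $q$ ever sends carries an epoch at most $x<y$, so $p$ is frozen at $y$ as well.

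Your both-even case has the same hole, and it is exactly the ``main obstacle'' you flag. If $\mathrm{epoch}_q(p)>\mathrm{epoch}_p(q)$, then $p$'s offer at $\mathrm{epoch}_p(q)+1$ is stale at $q$. The potential argument does not close this. ``At most one even-to-odd transition per side'' is compatible with zero transitions. Also, the Observer Bound concerns $\mathrm{FoFMap}$ entries, not direct epochs.

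The missing idea is a domination claim that follows from your Step~1: at $t_0$, if $\mathrm{epoch}_a(b)=y$ is even and positive, then $\mathrm{epoch}_b(a)\ge y$. The value $y$ was written at $a$ in one of two ways:
\begin{itemize}
\item By Integrate unfriend on an $\mathsf{unfriend}(y)$ from $b$. Then Message Bounds~(3) and monotonicity give $\mathrm{epoch}_b(a)\ge y$.
\item By $a$'s own End friendship, which fired before $t$. The $\mathsf{unfriend}(y)$ it emitted has, by $t_0$, been integrated at $b$ or become inert there; either way $\mathrm{epoch}_b(a)\ge y$.
\end{itemize}
This rules out the odd-versus-larger-even configuration. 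It also shows that two even epochs at $t_0$ are equal, say to $e$. Then every request or accept either side ever sends carries an epoch at most $e+1$, so $e+1$ is the only odd value either side can reach, and your handshake argument goes through with no potential function.

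Two smaller corrections in case~(b). In the first subcase, $\mathrm{epoch}_p(q)\ge x-1$ follows directly from Message Bounds~(1) and monotonicity. This avoids having to verify Post-Accept State's no-End-friendship hypothesis on an interval that may begin before $t$. In the second subcase, Message Bounds~(2) gives $\mathrm{epoch}_p(q)\ge x$, so that subcase is vacuous. Post-Accept State with roles swapped bounds $q$'s epoch, not $p$'s.

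With the domination claim added, your route is a more explicit version of the paper's. The paper drains all direct messages, takes both epochs to be odd once a handshake completes, and obtains equality from the transaction that last wrote the larger epoch, via Message Bounds~(2).
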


\begin{restatable}[Friendship Establishment]{theorem}{ThmFriendshipEstablishment}\label{theorem:gsg-cva-establishment}
If $p$ wants to befriend $q$ at every time $t'\ge t$ and $q$ wants to befriend $p$ at every time $t'\ge t$, then there is a time $t^*>t$ such that, from $t^*$ onward, $q\in\widetilde F_p$ and $p\in\widetilde F_q$.
\end{restatable}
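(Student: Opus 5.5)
The plan is to derive Friendship Establishment directly from Mutual Friendship Liveness (Lemma~\ref{lemma:gsg-cva-mutual-liveness}), using the definition of the active friend set (Definition~\ref{definition:gsg-cva-active-friend-set}). The hypotheses of the theorem coincide with those of the lemma. So we first invoke the lemma to obtain a time $t^*>t$ after which $\mathrm{epoch}_p(q)=\mathrm{epoch}_q(p)$, and this common value is odd.

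Next we must check that odd epochs really place each agent in the other's active friend set. An odd epoch is at least $1$, so $\mathrm{epoch}_p(q)\ne 0$. By the convention $\mathrm{epoch}_p(q):=0$ when $q\notin\mathrm{dom}(\mathrm{FMap}_p)$, this forces $q\in\mathrm{dom}(\mathrm{FMap}_p)$. Hence $q\in\widetilde F_p$ at every time from $t^*$ onward, and the symmetric argument gives $p\in\widetilde F_q$. Knowledge Monotonicity (Lemma~\ref{lemma:gsg-knowledge-mono}) is not needed for this step, since the lemma already delivers the epoch property at every time after $t^*$. It does confirm that domain membership is never retracted.

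The only substantive work lies in Mutual Friendship Liveness itself, which we assume. Were we to reprove it, the main obstacle would be excluding repeated crossing offers and stale messages. For this we would argue as follows.
\begin{enumerate}
\item A continuously willed Offer or Accept is eventually taken, because the run is live.
\item Eventual delivery of the Communicate transactions gets the $\mathsf{friend\_request}$ and $\mathsf{accept}$ messages to their recipients.
\item Integrate accept, or Resolve simultaneous offer with the name-order tie-break, then fires.
\item Message Bounds (Lemma~\ref{lemma:gsg-message-bounds}) and Friendship Monotonicity (Lemma~\ref{lemma:gsg-friendship-mono}) show the two epochs converge to a common odd value.
\item Since neither party wills End friendship after $t$, no later transaction raises that value to an even one.
\end{enumerate}

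For the present theorem, though, the proof is a direct corollary of the lemma.
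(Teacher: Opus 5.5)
Your proposal is correct and matches the paper's proof: invoke Mutual Friendship Liveness (Lemma~\ref{lemma:gsg-cva-mutual-liveness}) to obtain a common odd epoch from some $t^*>t$ onward, then read off $q\in\widetilde F_p$ and $p\in\widetilde F_q$ from Definition~\ref{definition:gsg-cva-active-friend-set}. Your extra observation, that an odd epoch forces domain membership because $\mathrm{epoch}_p(q):=0$ off the domain, is a correct detail that the paper leaves implicit.
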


\begin{restatable}[Friend-of-Friend Visibility]{theorem}{ThmFofVisibility}\label{theorem:gsg-cva-fof-visibility}
If $p$ wants to befriend $q$ at every time $t'\ge t_1$, $q$ wants to befriend $p$ at every time $t'\ge t_1$, $q$ wants to befriend $r$ at every time $t'\ge t_2$, and $r$ wants to befriend $q$ at every time $t'\ge t_2$, then there is a time $t^*>\max(t_1,t_2)$ such that, from $t^*$ onward, $\mathrm{epoch}_p(q,r)$ is odd, so $(q,r)$ appears in the friend-of-friend view at $p$.
\end{restatable}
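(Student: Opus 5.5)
The plan is to reduce Friend-of-Friend Visibility to Friendship Establishment (Theorem~\ref{theorem:gsg-cva-establishment}) and Mutual Friendship Liveness (Lemma~\ref{lemma:gsg-cva-mutual-liveness}), plus the eventual delivery of stream updates. Applying Lemma~\ref{lemma:gsg-cva-mutual-liveness} to the pair $(q,r)$ gives a time $t_{qr}>t_2$ and an odd $x$ such that $\mathrm{epoch}_q(r)=\mathrm{epoch}_r(q)=x$ from $t_{qr}$ onward. Applying it to $(p,q)$ gives a time $t_{pq}>t_1$ and an odd $y$ such that $\mathrm{epoch}_p(q)=\mathrm{epoch}_q(p)=y$ from $t_{pq}$ onward; in particular $q\in\mathrm{dom}(\mathrm{FMap}_p)$ from then on. By Knowledge Monotonicity (Lemma~\ref{lemma:gsg-knowledge-mono}), $p\in\mathrm{Rec}_q$ permanently from the moment $q$ first activated its friendship with $p$.

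The core step is to show that $q$ emits a $\mathsf{stream\_update}(r,x)$ addressed to $p$ at some point. Consider the two orders of events. Suppose first that the activation of $q$'s entry for $r$ at epoch $x$ happens when $p\in\mathrm{Rec}_q$ already holds. That transaction is Accept offer, Resolve simultaneous offer, or Integrate accept, and each of them broadcasts $\mathsf{stream\_update}(r,x)$ to every $r'\in\mathrm{Rec}_q$, so $p$ receives it. Suppose instead that $q$ activates its entry for $p$ at epoch $y$ after $\mathrm{epoch}_q(r)=x$ is set. Then $q$'s on-join snapshot sends $\mathsf{stream\_update}(r,\mathrm{epoch}_q(r))=\mathsf{stream\_update}(r,x)$ to $p$. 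The first activation of the final epoch $y$ is what matters here. If an earlier epoch of the $p$--$q$ friendship was activated before $x$, then $p$ is already in $\mathrm{Rec}_q$ and the first case applies. The symmetric argument, with $r$ broadcasting to its own recipients, is not needed, since $p$ need not be $r$'s friend.

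Next comes delivery and integration. The Communicate transaction is unguarded and remains enabled, so the message eventually reaches $i_p$. Once both this delivery has happened and $t_{pq}$ has passed, Integrate stream update at $p$ is enabled: $q\in\mathrm{dom}(\mathrm{FMap}_p)$ holds, and either $\mathrm{epoch}_p(q,r)<x$, or the value is already $\ge x$. Liveness of correct runs forces it to fire, giving $\mathrm{epoch}_p(q,r)\ge x$. For the upper bound, the Observer Bound (Lemma~\ref{lemma:gsg-observer-bound}) gives $\mathrm{epoch}_p(q,r)\le\max(\mathrm{epoch}_q(r),\mathrm{epoch}_r(q))=x$ after $t_{qr}$. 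Friendship Monotonicity (Lemma~\ref{lemma:gsg-friendship-mono}) then keeps the value pinned at $x$, which is odd, from the resulting $t^*$ onward. FoFMap Symmetry (Invariant~\ref{invariant:gsg-fofmap-symmetry}) gives the same value for $(r,q)$.

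The main obstacle is the case analysis in the core step. It has to rule out the scenario in which the update about $(q,r)$ at epoch $x$ was sent to $p$ while $p$ was not yet a friend of $q$ at $p$'s side, so that $p$ checks $q\in\mathrm{dom}(\mathrm{FMap}_p)$ and discards it. This is handled by noting that messages are retained in the inbox rather than consumed: the precondition is re-evaluated once $q$ enters $\mathrm{dom}(\mathrm{FMap}_p)$. So a message sent early is not lost, merely delayed in taking effect.
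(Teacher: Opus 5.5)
Your proof is correct and follows essentially the paper's own argument. You apply Mutual Friendship Liveness to $(q,r)$ and $(p,q)$, and split on whether $p\in\mathrm{Rec}_q$ when $q$ sets $\mathrm{epoch}_q(r):=x$ (per-edge broadcast) or $p$ joins only afterwards (on-join snapshot); you then close with eventual delivery, Friendship Monotonicity for the lower bound and the Observer Bound for the upper bound. Your remark that a $\mathsf{stream\_update}$ arriving before $q\in\mathrm{dom}(\mathrm{FMap}_p)$ is retained in $i_p$ and integrated later makes explicit a step the paper's proof passes over silently.
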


\begin{restatable}[Unfriend Propagation]{theorem}{ThmUnfriendPropagation}\label{theorem:gsg-cva-unfriend-propagation}
If from some time $t$ onward $p$ does not want to befriend $q$, or from $t$ onward $q$ does not want to befriend $p$, then there is a time $t^*\ge t$ such that, from $t^*$ onward:
\begin{enumerate}
    \item $q\notin\widetilde F_p$ and $p\notin\widetilde F_q$; and
    \item for every $r$ that is a sustained mutual friend of $p$ or of $q$, $\mathrm{epoch}_r(p,q)$ is even, so $(p,q)$ does not appear in the friend-of-friend view at $r$.
\end{enumerate}
\end{restatable}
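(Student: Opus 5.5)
The plan is to show that the unfriending side's epoch advances to an even value, the other party integrates it, and stream updates carry it to every sustained mutual friend. By symmetry, assume that from $t$ onward $p$ does not want to befriend $q$; the case for $q$ is identical with the roles swapped. By Definition~\ref{definition:gsg-cva-wants}, at every time after $t$ either $p$ never willed Offer friendship or accept toward $q$ in the relevant window, or $p$ has since willed End friendship on $q$.

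\textbf{Step 1: $\mathrm{epoch}_p(q)$ becomes and stays even.} If $\mathrm{epoch}_p(q)$ is odd at or after $t$, then End friendship is enabled at $p$ and willed. By liveness of correct runs it fires, and $\mathrm{epoch}_p(q)$ becomes even. Afterwards $p$ wills no new Offer or Accept toward $q$, since otherwise $p$ would want to befriend $q$ again. Integrate accept requires a prior offer by $p$ at the same odd epoch $x>\mathrm{epoch}_p(q)$. By Message Bounds (Lemma~\ref{lemma:gsg-message-bounds}) and Monotone Integration (Lemma~\ref{lemma:gsg-cva-stale-inert}), any such pending accept concerns an epoch already superseded, so it is inert. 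Friendship Monotonicity (Lemma~\ref{lemma:gsg-friendship-mono}) then gives that $\mathrm{epoch}_p(q)$ eventually stabilises at an even value $y$, and $q\notin\widetilde F_p$ from then on.

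\textbf{Step 2: $q$ catches up.} When End friendship fired, $\mathsf{unfriend}(y)$ was placed in $o_p$. By eventual delivery it reaches $q$, and Integrate unfriend sets $\mathrm{epoch}_q(p)=y$ unless $q$ already holds an epoch $\ge y$. If $q$ later offers an odd epoch $y+1$, $p$ never accepts it. So by the Observer Bound (Lemma~\ref{lemma:gsg-observer-bound}) and Post-Unfriend State (Lemma~\ref{lemma:gsg-cva-post-unfriend}), $\mathrm{epoch}_q(p)\in\{y,y+1\}$ can be odd only through $q$'s own unanswered offer. However, Offer friendship only emits a request and does not modify $\mathrm{FMap}_q$, so $\mathrm{epoch}_q(p)$ stays even. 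Hence $p\notin\widetilde F_q$ from some point on.

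\textbf{Step 3: friend-of-friend views.} Both the End friendship at $p$ and the Integrate unfriend at $q$ emit $\mathsf{stream\_update}(q,y)$ (respectively $(p,y)$) to every $r$ in $\mathrm{Rec}$. If $r$ is a sustained mutual friend of $p$, Friendship Establishment (Theorem~\ref{theorem:gsg-cva-establishment}) puts $p\in\mathrm{dom}(\mathrm{FMap}_r)$. So when the update is delivered, Integrate stream update sets $\mathrm{epoch}_r(p,q)\ge y$, and by FoFMap Symmetry (Invariant~\ref{invariant:gsg-fofmap-symmetry}) the same holds for $(q,p)$. If $r$ became a friend only after the update was sent, the on-join snapshot delivers $\mathrm{epoch}_p(q)=y$ instead. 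The Observer Bound caps $\mathrm{epoch}_r(p,q)\le\max(\mathrm{epoch}_p(q),\mathrm{epoch}_q(p))=y$, which is even once Step 2 has stabilised. Combined with monotonicity, $\mathrm{epoch}_r(p,q)=y$ from some $t^*$ onward. Taking $t^*$ as the maximum over the finitely many relevant times completes the argument.

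\textbf{Main obstacle.} The delicate part is Step 2's stabilisation claim, namely that no odd epoch above $y$ is ever adopted by either side. Any race between a late $\mathsf{accept}$ or $\mathsf{friend\_request}$ and the unfriend must be excluded. I would handle this via Message Bounds together with the fact that every epoch-raising-to-odd transaction at $p$ is guarded by (or causally depends on) $p$'s volition, which is absent after $t$.
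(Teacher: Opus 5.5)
Your argument is correct at the paper's level of rigour. Your Step~3 is essentially the paper's part~(2): the transaction that set the even epoch (or, for a later joiner, the on-join snapshot) carries it to $r$. Friendship Establishment gives $p\in\mathrm{dom}(\mathrm{FMap}_r)$, and Friendship Monotonicity plus the Observer Bound pin $\mathrm{epoch}_r(p,q)$ at the common even value.

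The real difference is in part~(1). The paper argues by safety. If $\mathrm{epoch}_p(q)$ were odd at some $t''\ge t$, Friend List Soundness would supply a willed offer or acceptance by $p$ with no End friendship since, so $p$ would want to befriend $q$ at some time $\ge t$. That gives $p$'s side from $t$ itself, with no liveness. It also packages, for $p$'s side, the case analysis over odd-writing transactions that you flag as your main obstacle.

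You argue operationally instead. The willed End friendship eventually fires. No fresh odd epoch can be written, because every odd write at $p$ traces back to a volition of $p$ and older requests and accepts are stale by Message Bounds. Then $q$ catches up once the $\mathsf{unfriend}$ is delivered. This costs the case analysis, and it tacitly uses Friend List Soundness anyway, to know that an odd $\mathrm{epoch}_p(q)$ stems from a willed offer or acceptance of $p$. In return, it buys two things:
\begin{itemize}
\item It does not presuppose that a willed End friendship has already taken effect.
\item It treats $q$'s side as the eventual property it really is. With an $\mathsf{unfriend}$ still in transit at $t$, $\mathrm{epoch}_q(p)$ can remain odd past $t$, so $q$'s side needs eventual delivery. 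The paper brings this in only in part~(2), through Post-Unfriend State.
\end{itemize}

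One misattribution in Step~2: neither lemma you cite does the work there. The Observer Bound concerns observed, not direct, epochs. Post-Unfriend State assumes neither party offers, which fails if $q$ re-offers. What you actually need is that Offer friendship never writes $\mathrm{FMap}_q$. Every transaction that writes an odd value to $\mathrm{epoch}_q(p)$ consumes a $\mathsf{friend\_request}$ or $\mathsf{accept}$ from $p$, which Message Bounds~(1)--(2) tie to $p$'s own epoch. This is the same bookkeeping you already do in Step~1.
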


%% file: sections/secure-gsg-cva.tex
\subsection{The Security Layer: Identity Records and Recovery}\label{section:secure-gsg-cva}

We present the secure social graph as a CVA platform extending the social graph CVA implementation with identity-record transport on Befriend, storage of friends' identity records in $\mathrm{FMap}$ entries, and the recovery protocols Replace and Restore that implement the abstract Replace and Recover transactions (Section~\ref{section:secure-gsg-abstract}).  Replace and Restore are given in Sections~\ref{section:secure-gsg-cva-replace} and~\ref{section:secure-gsg-cva-restore}.

\subsection{Platform State Extension}\label{section:secure-gsg-cva-state}

An \emph{identity record} of $p\in\Pi$ is the pair $\mathit{IR}_p=(K_p,\sigma_p)$ of $p$'s custodian set $K_p\subset\Pi$ and supermajority threshold $\sigma_p\in(1/2,1]$, intrinsic and immutable (Section~\ref{section:secure-gsg-abstract}).  Being intrinsic, $\mathit{IR}_p$ is accessible at $p$ as a read-only datum throughout; in particular, it survives a state-loss fault, in which only platform state is reset.

The platform state of Secure GSG-CVA extends GSG-CVA's by widening each $\mathrm{FMap}$ entry with the counterpart's identity record:
\[
    \mathrm{FMap}_p : \Pi \rightharpoonup (\mathbb{N} \times D \times 2^\Pi \times (1/2,1])
\]

with each entry $\mathrm{FMap}_p[q]=(\mathit{epoch},\mathit{data},K,\sigma)$ recording $p$'s epoch of the friendship with $q$ and $p$'s view of $q$'s identity record as received on befriend.  The data component is inherited from GSG-CVA (Section~\ref{section:gsg-cva-data}), opaque to the security layer, and preserved by all secure transactions.  For brevity, $\mathrm{FMap}_p[q].\mathit{epoch}$, $\mathrm{FMap}_p[q].K$, $\mathrm{FMap}_p[q].\sigma$ denote the components of the record; $$\mathit{IR}_p[q]:=(\mathrm{FMap}_p[q].K,\mathrm{FMap}_p[q].\sigma).$$

The shorthand $\mathrm{epoch}_p(q)$ refers to $\mathrm{FMap}_p[q].\mathit{epoch}$, with convention $\mathrm{epoch}_p(q):=0$ when $q\notin\mathrm{dom}(\mathrm{FMap}_p)$.  $\mathrm{FoFMap}_p$ is unchanged from GSG-CVA.  A write to an $\mathrm{FMap}$ entry shown as a triple $(\mathit{epoch},K,\sigma)$ sets those three components and leaves the data component unchanged---$\bot$ for a newly created entry; the full entry is the 4-tuple of the state definition above.

\subsection{Cargo Extension}\label{section:secure-gsg-cva-cargo}

The GSG-CVA cargo is extended so that friendship-offer messages transport the sender's identity record, for $R=(K,\sigma)$ an identity record:
\begin{itemize}
    \item $\mathsf{friend\_request}(x,R)$: friendship offer at epoch $x$ carrying the offerer's identity record.
    \item $\mathsf{accept}(x,R)$: acceptance of a pending $\mathsf{friend\_request}(x,\cdot)$, carrying the accepter's identity record.
\end{itemize}
The $\mathsf{unfriend}$ and $\mathsf{stream\_update}$ cargos are unchanged.  Additional cargo for Replace is introduced in Section~\ref{section:secure-gsg-cva-replace-cargo}.

\subsection{Befriend Extension}\label{section:secure-gsg-cva-befriend}

The Befriend transactions of Section~\ref{section:gsg-cva-befriend} are augmented to transport and store identity records.  The structure is unchanged; only the cargo and the effect on $\mathrm{FMap}$ entries are extended.

\begin{definition}[Befriend Platform Transactions, Secure Extension]\label{definition:secure-gsg-cva-befriend}
\leavevmode

\mypara{Offer friendship}  As in Section~\ref{section:gsg-cva-befriend}, sending $\mathsf{message}(p,q,\mathsf{friend\_request}(x,\mathit{IR}_p))$ in place of $\mathsf{message}(p,q,\mathsf{friend\_request}(x))$.

\mypara{Accept offer}  As in Section~\ref{section:gsg-cva-befriend}, on receipt of $\mathsf{message}(p,q,\mathsf{friend\_request}(x,R))$, setting $\mathrm{FMap}_q[p]:=(x,K,\sigma)$ where $R=(K,\sigma)$, and sending $\mathsf{accept}(x,\mathit{IR}_q)$.

\mypara{Resolve simultaneous offer}  As in Section~\ref{section:gsg-cva-befriend}, with cargo and effect extended analogously.

\mypara{Integrate accept}  As in Section~\ref{section:gsg-cva-befriend}, on receipt of $\mathsf{message}(q,p,\mathsf{accept}(x,R))$, setting $\mathrm{FMap}_p[q]:=(x,K,\sigma)$ where $R=(K,\sigma)$.
\end{definition}

\subsection{Unfriend, Inherited}\label{section:secure-gsg-cva-unfriend}

The Unfriend transactions are inherited from Section~\ref{section:gsg-cva-unfriend} unchanged, with the convention that End friendship and Integrate unfriend preserve the $K$ and $\sigma$ fields of the $\mathrm{FMap}$ entry while modifying only the epoch.

\subsection{Stream Dissemination, Inherited}\label{section:secure-gsg-cva-stream}

Inherited from Section~\ref{section:gsg-cva-stream} unchanged.  The $\mathsf{stream\_update}$ cargo does not carry identity records; $\mathrm{FoFMap}$ entries do not record them.

\subsection{Periodic Re-broadcast}\label{section:secure-gsg-cva-rebroadcast}

The fault model (Section~\ref{section:secure-gsg-abstract}) allows a state loss to erase an agent's machine state, including its outbox, so a friendship change in transit can be lost and a counterpart left stale.  To recover without a global resource, each agent periodically re-broadcasts a snapshot of its own friendships to its friends, and a recipient absorbs it monotonically, catching up on whatever it missed.  
This single mechanism serves both purposes of state recovery: it repairs an agent that lost buffered messages, and it rebuilds an agent that lost its whole machine state, which recovers passively by absorbing the checkpoints its friends continue to send (Section~\ref{section:secure-gsg-cva-restore}).

The cargo is extended with $\mathsf{checkpoint}(R,L)$: a re-broadcast  carrying the sender's identity record $R=\mathit{IR}$ and a snapshot $L=\{(f,e)\}$ of the sender's friendships, each a (friend, epoch) pair.  The identity record lets a recipient that has lost its record of the sender re-establish it, the mechanism on which state-loss Restore rests (Section~\ref{section:secure-gsg-cva-restore}).

\begin{definition}[Periodic Re-broadcast Transactions]\label{definition:secure-gsg-cva-rebroadcast}
\leavevmode

\mypara{Re-broadcast}  A unary transaction at $p$, unguarded, provided $\mathrm{Rec}_p\ne\emptyset$, %
with $L:=\{(f,\mathrm{epoch}_p(f)) : f\in\mathrm{Rec}_p\}$:
\begin{itemize}
    \item For each $r\in\mathrm{Rec}_p$, add $\mathsf{message}(p,r,\mathsf{checkpoint}(\mathit{IR}_p,L))$ to $o_p$.
\end{itemize}

\mypara{Integrate checkpoint}  A unary transaction at $r$, unguarded, provided\\ $\mathsf{message}(p,r,\mathsf{checkpoint}(R,L))\in i_r$ and the state change below is non-trivial:
\begin{itemize}
    \item If $p\notin\mathrm{dom}(\mathrm{FMap}_r)$, writing $(K,\sigma):=R$, set $\mathrm{FMap}_r[p]:=(0,\bot,K,\sigma)$, a skeleton entry at the inactive epoch $0$ carrying the identity record of the sender; if instead $p\in\mathrm{dom}(\mathrm{FMap}_r)$ and $\mathrm{FMap}_r[p]$ is a stub (identity-record fields $\bot$), writing $(K,\sigma):=R$, set those fields to $K,\sigma$, promoting the stub to a full entry and preserving its epoch.
    \item If $(r,e)\in L$ for some $e$ with $\mathrm{epoch}_r(p)<e$: set $\mathrm{FMap}_r[p].\mathit{epoch}:=e$, preserving the $K$ and $\sigma$ fields of the entry.
    \item For each $(f,e)\in L$ with $f\ne r$ and $\mathrm{epoch}_r(p,f)<e$: set $\mathrm{FoFMap}_r[p][f]:=e$ and $\mathrm{FoFMap}_r[f][p]:=e$.
\end{itemize}
\end{definition}

Re-broadcast is unguarded: it is mechanical, like the advance of local time, and requires no volition. As long as there are potential receivers, it is always enabled. 
Thus, every agent re-broadcasts infinitely often, the property on which passive state recovery rests (Section~\ref{section:secure-gsg-cva-restore}).  

Integrate checkpoint heals two kinds of staleness under the epoch order.  The pair $(r,e)\in L$ carries the epoch $p$ holds for the direct $p$--$r$ friendship; absorbing it when it exceeds $\mathrm{epoch}_r(p)$ repairs a direct edge on which $r$ fell behind, for instance after $r$ lost an $\mathsf{accept}$ or $\mathsf{unfriend}$ from $p$.  The direct heal advances only the epoch, preserving the identity-record fields, exactly as Unfriend does (Section~\ref{section:secure-gsg-cva-unfriend}).  The remaining pairs carry the other friendships of $p$ and are absorbed exactly as $\mathsf{stream\_update}$ messages, refreshing $\mathrm{FoFMap}_r[p]$ and preserving FoFMap Symmetry.  When the sender $p$ is one $r$ does not yet record, Integrate checkpoint first installs a skeleton entry for $p$ at the inactive epoch $0$ carrying the sender's identity record from $R$, the mechanism on which state-loss Restore rests (Section~\ref{section:secure-gsg-cva-restore}); it is inert in fault-free operation, where every checkpoint sender is already a recorded friend.  If instead $r$ holds only a stub for $p$---installed by a Replace rebind (Section~\ref{section:secure-gsg-cva-replace}) whose acknowledgement to $r$ was lost---the identity record carried by any later checkpoint from $p$ promotes the stub to a full entry, so periodic re-broadcast completes a friend recovered only as a stub.  The epoch order guards ensure no absorption drives an epoch backward.  A checkpoint never establishes an active friendship the recipient did not consent to: $\mathrm{epoch}_p(r)$ is odd only after $r$ itself offered or accepted at that epoch, so absorbing it restores a friendship $r$ has already willed.

%% file: sections/secure-gsg-cva-restore.tex
\subsection{Restore Protocol}\label{section:secure-gsg-cva-restore}

We present the CVA implementation of the abstract Recover transaction (Section~\ref{section:secure-gsg-abstract}): the recovery by which an agent $p$, after a state loss that reset its machine state, rebuilds $\mathrm{FMap}_p$ and $\mathrm{FoFMap}_p$ from its friends.  Recovery is entirely passive: it requires no custodian authorisation, no dedicated recovery message, and no action by $p$ beyond resuming under its retained key.  It rests on the periodic re-broadcast already specified (Section~\ref{section:secure-gsg-cva-rebroadcast}).

\mypara{Fault model}  A state loss at $p$ (the abstract state-loss fault, Definition~\ref{def:state-loss-fault}) resets its machine state---$\mathit{known}_p$, $o_p$, $i_p$, $\mathrm{FMap}_p$, $\mathrm{FoFMap}_p$---to the initial empty values.  Only the intrinsic identity record $\mathit{IR}_p=(K_p,\sigma_p)$ survives, as it is an immutable attribute of the agent rather than machine state.  The retained key lets $p$ resume under the same identity; no custodian action is required, in contrast with the key-replacement Replace of Section~\ref{section:secure-gsg-cva-replace}.

\mypara{Recovery}  After the state loss $p$ resumes with empty $\mathrm{FMap}_p$ and $\mathrm{FoFMap}_p$.  Each friend $q$ of $p$ still records $p$---state loss touches only $p$'s own state---so $q$ continues to re-broadcast its periodic $\mathsf{checkpoint}(\mathit{IR}_q,L_q)$ to $p$, with $p\in\mathrm{Rec}_q$ and $(p,\mathrm{epoch}_q(p))\in L_q$.  On the first such checkpoint $p$ receives from $q$, Integrate checkpoint (Section~\ref{section:secure-gsg-cva-rebroadcast}) finds $q\notin\mathrm{dom}(\mathrm{FMap}_p)$ and installs the skeleton $\mathrm{FMap}_p[q]:=(0,\bot,K_q,\sigma_q)$ from the carried identity record $\mathit{IR}_q=(K_q,\sigma_q)$; the direct-heal clause then absorbs $(p,\mathrm{epoch}_q(p))\in L_q$, advancing $\mathrm{epoch}_p(q)$ to $\mathrm{epoch}_q(p)$ and preserving the identity-record fields, and the observer-heal clause rebuilds $\mathrm{FoFMap}_p[q]$ from the remaining pairs of $L_q$.  Restore adds no platform transaction and no cargo of its own: recovery is carried entirely by Integrate checkpoint.

\mypara{Coverage}  Every friend $q$ that still records $p$ re-broadcasts to $p$ infinitely often (Section~\ref{section:secure-gsg-cva-rebroadcast}), so under reliable delivery $p$ receives a checkpoint from each; at quiescence $\mathrm{FMap}_p$ holds $(q,\mathrm{epoch}_q(p))$ with $q$'s identity record, and $\mathrm{FoFMap}_p$ holds $q$'s friendships, for every such $q$.  Recovery is therefore complete and needs no custodian: $p$ recovers every friendship still recorded at the friend's side, with no residue.  This is the asymmetry with Replace, whose new identity is recorded by no friend and so cannot be recovered passively, requiring the active cascade of Section~\ref{section:secure-gsg-cva-replace}.  A checkpoint carries an odd epoch for the $q$--$p$ edge only when $q$ records that edge active, which $q$ did only after $p$'s own offer or acceptance (Message Bounds); so every active friendship $p$ recovers reflects one $p$ itself willed before the fault, and Friend List Soundness is preserved.

%% file: sections/secure-gsg-cva-replace.tex
\subsection{Replace Protocol}\label{section:secure-gsg-cva-replace}

We present the CVA implementation of the abstract Replace transaction (Section~\ref{section:secure-gsg-abstract}): the protocol by which an agent $p$ whose person has suffered identity loss is replaced by a fresh agent $p'$ across the social graph, authorised by a supermajority of $p$'s custodians.

\mypara{Protocol outline}  Replace proceeds in three phases.  Phase~1 is off-chain: the person behind $p$ chooses a new keypair, obtains a new machine if needed, and convinces a supermajority of $p$'s custodians to authorise the replacement.  Phase~2 is on-protocol vouching: each willing custodian $c\in K_p$ sends a $\mathsf{vouch}$ to $p'$ carrying its own view of $p$'s friends (to bootstrap $p'$'s knowledge of whom to notify).  Phase~3 is the cascade: $p'$, once a supermajority has vouched, sends $\mathsf{new\_identity}$ to every friend of $p$ it has learned about; each recipient $q$ verifies the vouches against its stored $\mathit{IR}_q[p]$, renames $p\to p'$ in $\mathrm{FMap}_q$ and $\mathrm{FoFMap}_q$ (preserving the epoch), and returns a $\mathsf{rebind}$ to $p'$; $p'$ integrates the rebinds to rebuild $\mathrm{FMap}_{p'}$ and $\mathrm{FoFMap}_{p'}$.  Each rebind also carries the rebinding friend's own view of $p$'s friends, which may reveal friends of $p$ that no voucher named; $p'$ notifies these in turn, so the cascade reaches every friend of $p$ recorded by a custodian or an already-reached friend.

\mypara{Phase 1}  Outside the protocol.  The person behind $p$ selects $p'$ (a fresh agent, not yet befriended by anyone) together with $\mathit{IR}_{p'}=(K_{p'},\sigma_{p'})$, and convinces custodians off-chain.

\subsubsection{Messages}\label{section:secure-gsg-cva-replace-cargo}

The cargo space $C$ is extended with:
\begin{itemize}
    \item $\mathsf{vouch}(p,p',L)$: custodian authorisation for the replacement of $p$ by $p'$, carrying a list $L$ of $(f,e)$ pairs drawn from the custodian's $\mathrm{FoFMap}[p]$, for bootstrapping $p'$'s knowledge of $p$'s friends.
    \item $\mathsf{new\_identity}(p,p',V,R')$: announcement that $p$ is being replaced by $p'$; $V\subseteq\Pi$ is the set of custodians that have vouched, $R'=\mathit{IR}_{p'}$ is $p'$'s identity record.
    \item $\mathsf{rebind}(e,R,L,L_p)$: acknowledgement sent by a friend $q$ to $p'$, carrying $q$'s epoch $e$ with $p$ (now renamed as with $p'$), $q$'s identity record $R$, a list $L$ of $(f,e')$ pairs drawn from $q$'s $\mathrm{FMap}$ snapshot (excluding $p'$), and a list $L_p$ of $(w,e'')$ pairs drawn from $q$'s view of the replaced identity's friends, $\mathrm{FoFMap}_q[p']$ (after the rename), for propagating the recovery to friends of $p$ that no voucher named.
\end{itemize}

\subsubsection{Platform Transactions}\label{section:secure-gsg-cva-replace-transactions}

\begin{definition}[Replace Platform Transactions]\label{definition:secure-gsg-cva-replace}
\leavevmode

\mypara{Vouch}  A unary transaction at $c\in\Pi$, guarded by $\{c\}$, for a person-chosen $p'\in\Pi\setminus\{c\}$ and a $p\in\mathrm{dom}(\mathrm{FMap}_c)$:
\begin{itemize}
    \item Let $L_c:=\{(f,e) : f\in\mathrm{dom}(\mathrm{FoFMap}_c[p]),\,f\ne p,\,e=\mathrm{FoFMap}_c[p][f]\}$.
    \item Add $\mathsf{message}(c,p',\mathsf{vouch}(p,p',L_c))$ to $o_c$.
\end{itemize}

\mypara{Announce new identity}  A unary transaction at $p'$, guarded by $\{p'\}$, for $p\in\Pi$ with $p\ne p'$, provided there exists $V\subseteq K_p$ with $|V|\ge\lceil\sigma_p\cdot|K_p|\rceil$ and, for every $c\in V$, some $\mathsf{message}(c,p',\mathsf{vouch}(p,p',L_c))\in i_{p'}$ (the $L_c$'s fixed by these messages).  Let the \emph{known friends of $p$} be
\[
  N \;:=\; V \;\cup\; \bigcup_{c\in V}\{f : (f,e)\in L_c\} \;\cup\; \{w : \mathrm{FMap}_{p'}[w]\text{ is a stub entry}\},
\]
where a \emph{stub} entry is one whose identity-record fields are $\bot$ (installed by Integrate rebind below).  For each $w\in N\setminus\{p,p'\}$ with $w\in\mathit{known}_{p'}$, $\mathrm{FMap}_{p'}[w]$ not a full entry, and no $\mathsf{message}(p',w,\mathsf{new\_identity}(p,p',V,\mathit{IR}_{p'}))\in o_{p'}$:
\begin{itemize}
    \item Add $\mathsf{message}(p',w,\mathsf{new\_identity}(p,p',V,\mathit{IR}_{p'}))$ to $o_{p'}$.
\end{itemize}
The values $K_p$ and $\sigma_p$ are known to the person behind $p'$ out-of-band, the same person being behind $p$ and $p'$.  The vouch messages are not consumed, so $V$ and the $L_c$'s are recomputable at each firing; $N$ grows only as Integrate rebind installs stubs for friends of $p$ revealed by rebinds.  Because $N$ is monotone and the agent set is finite, the transaction notifies each reachable friend of $p$ and reaches a fixpoint after finitely many rounds; a duplicate $\mathsf{new\_identity}$ is harmless, as Integrate new identity drops it once the recipient has renamed $p$.

\mypara{Integrate new identity}  A unary transaction at $q$, unguarded, provided\\ $\mathsf{message}(\cdot,q,\mathsf{new\_identity}(p,p',V,R'))\in i_q$, $p\in\mathrm{dom}(\mathrm{FMap}_q)$, and, writing $(K,\sigma):=\mathit{IR}_q[p]$, $V\subseteq K$ and $|V|\ge\lceil\sigma\cdot|K|\rceil$:
\begin{itemize}
    \item Let $e_q:=\mathrm{epoch}_q(p)$, and $(K',\sigma'):=R'$.
    \item Remove the entry $\mathrm{FMap}_q[p]$; set $\mathrm{FMap}_q[p']:=(e_q,K',\sigma')$.
    \item For every $y\in\mathrm{dom}(\mathrm{FoFMap}_q)$ with $y\ne p$: if $\mathrm{FoFMap}_q[y][p]$ is defined, move its value to $\mathrm{FoFMap}_q[y][p']$ and delete $\mathrm{FoFMap}_q[y][p]$.
    \item If $\mathrm{FoFMap}_q[p]$ is defined, move the entire inner map to $\mathrm{FoFMap}_q[p']$ and delete $\mathrm{FoFMap}_q[p]$.
    \item For every $f\in\mathrm{dom}(\mathrm{FMap}_q)$ with $p\in\mathrm{FMap}_q[f].K$: set $\mathrm{FMap}_q[f].K:=(\mathrm{FMap}_q[f].K\setminus\{p\})\cup\{p'\}$.
    \item Let $L_q:=\{(f,e) : f\in\mathrm{dom}(\mathrm{FMap}_q),\,f\notin\{p,p',q\},\,e=\mathrm{epoch}_q(f)\}$ and $L_q^p:=\{(w,e'') : w\in\mathrm{dom}(\mathrm{FoFMap}_q[p']),\,w\notin\{p',q\},\,e''=\mathrm{FoFMap}_q[p'][w]\}$ (both computed after the rename).  Add $\mathsf{message}(q,p',\mathsf{rebind}(e_q,\mathit{IR}_q,L_q,L_q^p))$ to $o_q$.
\end{itemize}

\mypara{Integrate rebind}  A unary transaction at $p'$, unguarded, provided\\ $\mathsf{message}(q,p',\mathsf{rebind}(e,R,L,L_p))\in i_{p'}$ with $q\ne p'$, and the state change below is non-trivial:
\begin{itemize}
    \item Let $(K,\sigma):=R$.  If $\mathrm{epoch}_{p'}(q)<e$, or $\mathrm{FMap}_{p'}[q]$ is a stub entry, set $\mathrm{FMap}_{p'}[q]:=(\max(\mathrm{epoch}_{p'}(q),e),K,\sigma)$ (promoting a stub to a full record).
    \item For each $(f,e')\in L$: if $\mathrm{epoch}_{p'}(q,f)<e'$, set $\mathrm{FoFMap}_{p'}[q][f]:=e'$ and $\mathrm{FoFMap}_{p'}[f][q]:=e'$ (preserving FoFMap Symmetry).
    \item For each $(w,e'')\in L_p$ with $w\notin\{p,p',q\}$ and $w\notin\mathrm{dom}(\mathrm{FMap}_{p'})$: install the stub entry $\mathrm{FMap}_{p'}[w]:=(e'',\bot,\bot)$, marking $w$ as a friend of $p$ to be notified by Announce new identity.
\end{itemize}
\end{definition}

 \mypara{Verification locality}  Integrate new identity verifies the vouching set $V$ against $q$'s own stored $\mathit{IR}_q[p]=(K_p,\sigma_p)$, which was received from $p$ upon initial befriend and is immutable within $p$'s identity.  No global custodian registry is consulted; each $r\in F_p$ verifies independently and consistently against the same identity record.  On the same evidence the transaction also replaces $p$ by $p'$ in every custodian set $q$ stores, so that a later Replace vouched for by $p'$ passes this guard.

\mypara{Changing custodians}  Although the custodian set of an identity is fixed, a person may replace their custodians by invoking an identity change.   In particular, should a custodian die, defect, or become unavailable, the person replaces their identity, choosing a fresh custodian set and threshold for the new one.

\mypara{Post-replacement isolation}  After $q$ executes Integrate new identity, $p\notin\mathrm{dom}(\mathrm{FMap}_q)$.  Every reactive transaction at $q$ whose precondition references an incoming message from $p$ additionally requires $p\in\mathrm{dom}(\mathrm{FMap}_q)$ and therefore fails.  Messages from the compromised old key are silently ignored.

\mypara{Inactive friendships}  Integrate new identity makes no case split on the parity of $\mathrm{epoch}_q(p)$; both active and inactive friendships are renamed from $p$ to $p'$, carrying the pre-rename epoch over.  An inactive friendship between $q$ and $p$ becomes an inactive friendship between $q$ and $p'$ at the same epoch, without re-establishing the friendship.  Should the person behind $q$ later wish to re-friend, a fresh Befriend with $p'$ is initiated via the normal protocol (Section~\ref{section:gsg-cva-befriend}), starting a new active epoch.

\mypara{Chained identity}  A friend may miss one or more replacements of a given agent.  If $p$ is replaced in succession, $p\to p'\to p''$, and a friend $q$ processes $\mathsf{new\_identity}(p',p'',\cdot,\cdot)$ without having processed $\mathsf{new\_identity}(p,p',\cdot,\cdot)$, then $p'\notin\mathrm{dom}(\mathrm{FMap}_q)$, the precondition of Integrate new identity fails, and the message is dropped: $q$ has no record of $p'$ to rename.  Two recovery paths are available, the choice left to the person behind $q$.  (i)~$q$ verifies the chain of $\mathsf{new\_identity}$ announcements back to the most recent identity it recognises, checking each step against the identity record carried for the previous step --- the first step against the immutable $\mathit{IR}_q[p]$ it stored on befriend, each subsequent step against the identity record announced by the step before it --- and then applies the cumulative rename $p\to p''$, preserving the epoch.  (ii)~The person behind $q$ learns out of band, by the same means used to designate custodians, that $p''$ succeeds $p$, and wills the cumulative rename $p\to p''$: the retained $\mathsf{new\_identity}$ announcement in $i_q$ supplies $p''$'s identity record, the epoch is preserved, and $q$'s $\mathsf{rebind}$ promotes the stub $p''$ holds for $q$.  Both paths preserve the epoch.  If $q$ recognises no identity in the chain and cannot reach the person, it is a stranded friend in the sense of the residue below.
\udi{Resolution of the thread above, per Idit: recovery is only by $q$ learning that $p''$ succeeds $p$ --- by verifying the chain (path~i) or out of band from the person (path~ii) --- followed by the cumulative rename, epoch preserved.  The no-learning re-befriend is dropped: $q$'s offer can reach $p''$ before any vouch or rebind installs the stub for $q$, and $p''$ then completes a handshake at epoch $1$, breaking observer monotonicity --- Idit's race.}

\mypara{Coverage}  The cascade recovers every friendship between $p$ and a friend $w$ that was recorded, before the fault, at some friend of $p$ connected to the seed (Theorem~\ref{theorem:secure-gsg-cva-replace-convergence}).  The one edge it cannot reach is a $w$--$p$ friendship recorded nowhere but at $w$ and $p$ themselves, $p$'s $\mathsf{stream\_update}$ having reached no other agent before the fault.  This is the sole unrecoverable residue (Section~\ref{section:cva-impl-overview}, ``The unrecoverable fault''; Definition~\ref{def:friendship-preserving}).

%% file: sections/secure-gsg-cva-properties.tex
\subsection{Properties of the Security Layer}\label{section:secure-gsg-cva-properties}

We state the additional CVA-level invariants introduced by the security layer and the two main theorems: implementation of the abstract secure social graph spec, and post-quiescence convergence of the Replace cascade.  The base invariants (Knowledge Monotonicity, Friendship Monotonicity, Message Bounds, Observer Bound, FoFMap Symmetry; Section~\ref{section:gsg-cva-invariants}) and the safety properties proved from them carry over.  Their proofs enumerate the base writers; each writer the security layer adds either adopts an epoch its sender already held, re-files an entry under another identity of the same person without changing its epoch, or installs a skeleton at the inactive epoch $0$ that no odd-epoch claim sees.

\subsubsection{Invariants}\label{section:secure-gsg-cva-invariants}

\begin{restatable}[Half Friendship Mutuality at CVA]{lemma}{ThmHalfMutualityCva}\label{lemma:secure-gsg-cva-half-mutuality}
Consider a run of the secure social graph CVA implementation that uses only the Befriend, Unfriend, and
Replace transactions (no Periodic Re-broadcast and no Restore) in which, whenever an agent is replaced, its cascade reaches every
friend connected to its vouched seed through the disseminated graph (the
hypothesis of Theorem~\ref{theorem:secure-gsg-cva-replace-convergence}).  If
$\mathrm{epoch}_p(q)=x$ with $x$ odd at time $t$, then at $t$ at least one of:
\begin{enumerate}
  \item $\mathrm{epoch}_q(p)\in\{x,x+1\}$;
  \item a $\mathsf{friend\_request}$, $\mathsf{accept}$, $\mathsf{unfriend}$, or
        $\mathsf{rebind}$ message of epoch $\ge x$ between $p$ and $q$ is in transit;
  \item $q$ has been replaced---$q$ ran Replace to a new identity $q'$---and $p$
        has not yet integrated the rename, so a $\mathsf{new\_identity}(q,q',\cdot)$
        addressed to $p$ is in transit or in $i_p$.
\end{enumerate}
\end{restatable}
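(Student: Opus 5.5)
We prove the lemma as an invariant by induction over the run: it holds vacuously in the initial configuration, where every epoch is $0$, and we show that every transaction permitted by the hypothesis preserves the disjunction. We define the predicate $\Phi(p,q,x)$ as ``(1) or (2) or (3)'' for each ordered pair $(p,q)$ with $\mathrm{epoch}_p(q)=x$ odd. We then show that if $\Phi$ holds for all such triples before a step, it holds after. The argument is a case analysis over writers. We first classify the transactions that can \emph{establish} an odd $\mathrm{epoch}_p(q)$ (Accept offer, Resolve simultaneous offer, Integrate accept, Integrate new identity, Integrate rebind). We then classify those that can \emph{falsify} a disjunct while $\mathrm{epoch}_p(q)$ stays at $x$: a change of $\mathrm{epoch}_q(p)$, the consumption or loss of an in-transit message, or the processing of a $\mathsf{new\_identity}$ at $p$. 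Throughout we use Friendship Monotonicity (Lemma~\ref{lemma:gsg-friendship-mono}) and Message Bounds (Lemma~\ref{lemma:gsg-message-bounds}), which the security-layer remarks say carry over.

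For the establishing steps, the plan is as follows.
\begin{itemize}
\item Accept offer and Resolve at $p$ set $\mathrm{epoch}_p(q)=x$ and place $\mathsf{accept}(x)$ in $o_p$, so disjunct (2) holds immediately.
\item Integrate accept at $p$ consumes $\mathsf{accept}(x)$ from $q$. Message Bounds gives $\mathrm{epoch}_q(p)\ge x$. To get the upper bound $x+1$, we argue that $q$ can only move past $x$ by End friendship, which reaches $x+1$ and emits an $\mathsf{unfriend}(x+1)$. A further Offer by $q$ to $x+2$ would require a $\mathsf{friend\_request}$, and that message is either still in transit (so (2) holds) or has been consumed. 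In the consumed case we invoke Post-Accept State (Lemma~\ref{lemma:gsg-cva-post-accept}) in the mirrored direction.
\item Integrate new identity at $q$ renames $p\to q$'s new key. Here we must track the pair from the perspective of the renamed identity. If $p$ is the renaming agent and $q$ is the new identity $q'$, we check that the $\mathsf{rebind}(e_p,\ldots)$ message it emits witnesses (2) until $q'$ absorbs it. If instead the pair involves the old $q$, disjunct (3) applies.
\item Integrate rebind sets $\mathrm{epoch}_{p'}(q)$ to an epoch $q$ currently holds, which gives (1) directly by monotonicity at $q$. The exception is a later End friendship at $q$, which again lands in $\{x,x+1\}$ or leaves an $\mathsf{unfriend}$ in transit.
\end{itemize}

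For the preservation steps, the key facts are the following. While $\mathrm{epoch}_p(q)=x$ is unchanged, $q$ can raise $\mathrm{epoch}_q(p)$ beyond $x+1$ only through a fresh Offer/Accept cycle at $x+2$. That cycle needs $p$'s participation, and it either raises $p$'s epoch or leaves a message of epoch $\ge x$ in transit. Consuming an in-transit message in (2) must therefore either establish (1) at its recipient or itself emit a successor message in (2). For example, Integrate unfriend at $q$ moves $q$ to $x+1$, and an $\mathsf{unfriend}$ consumed at $p$ moves $p$ off $x$, so the triple disappears. For (3), the $\mathsf{new\_identity}$ remains in $i_p$ until integrated, since inboxes are monotone, and when it is integrated $p$ re-files the entry under $q'$. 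At that point we appeal to the cascade-reach hypothesis (the hypothesis of Theorem~\ref{theorem:secure-gsg-cva-replace-convergence}) to ensure that the $\mathsf{rebind}$ to $q'$ supplies (2) for the new pair $(p,q')$. The same hypothesis also guarantees that a replaced $q$ whose friend $p$ is reached actually receives the announcement, so the gap in (3) is never left unwitnessed.

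The main obstacle is the Replace interaction, specifically the window in which $q$'s new identity $q'$ holds only a stub for $p$ or has not yet received $p$'s $\mathsf{rebind}$. There, $p$ may already hold $\mathrm{epoch}_p(q')=x$, but none of the base message types is in transit toward $p$. We handle this by treating the $\mathsf{rebind}$ as the witnessing message in (2). We would also check that Integrate rebind cannot raise $q'$'s epoch past $x+1$ without $p$'s involvement, which follows because $e$ in the $\mathsf{rebind}$ equals $p$'s own epoch.
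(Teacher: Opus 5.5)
Your proposal is correct and is essentially the paper's argument: the same split on the transition that last wrote the odd epoch (Accept offer or Resolve simultaneous offer, Integrate accept, Integrate new identity, Integrate rebind), with the $\mathsf{accept}$ or $\mathsf{rebind}$ emitted in that step witnessing~(2) until it is processed, Message Bounds together with the End-friendship analysis giving~(1), and the cascade hypothesis giving~(3); the only difference is framing, since you prove forward preservation of the disjunction, whereas the paper fixes the last transition $T$ that set $\mathrm{epoch}_p(q):=x$ and reasons about what can have happened since. One misstep: Post-Accept State (Lemma~\ref{lemma:gsg-cva-post-accept}) bounds the offerer's epoch, so in neither direction does it cap $\mathrm{epoch}_q(p)$ at $x+1$, but you do not need it, because the sub-case where you invoke it ($q$'s $\mathsf{friend\_request}(x+2)$ consumed at $p$) would move $\mathrm{epoch}_p(q)$ to $x+2$ and is vacuous, and the cap instead follows from Message Bounds (Lemma~\ref{lemma:gsg-message-bounds}): while $\mathrm{epoch}_p(q)\le x$, no message from $p$ carries an epoch above $x$, so only $q$'s own End friendship can move $\mathrm{epoch}_q(p)$ past $x$, and only to $x+1$.
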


Base safety and the invariants carry over as above; base liveness carries over except at the unrecoverable residue.
\begin{corollary}[Secure Friendship Establishment]\label{cor:secure-gsg-cva-establishment}
On a friendship-preserving run (Definition~\ref{def:friendship-preserving}), if $p$ and $q$ each want to befriend the other from some time on, then from some time on $q\in\widetilde F_p$ and $p\in\widetilde F_q$, unless a Replace of $p$ or $q$ drops the $p$--$q$ edge as the unrecoverable residue.
\end{corollary}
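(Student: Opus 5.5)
The plan is to reduce the corollary to the base Friendship Establishment theorem (Theorem~\ref{theorem:gsg-cva-establishment}), after checking that the ingredients of its proof survive the security layer. That base proof rests on Mutual Friendship Liveness (Lemma~\ref{lemma:gsg-cva-mutual-liveness}). Mutual Friendship Liveness in turn uses three things: the Message Bounds (Lemma~\ref{lemma:gsg-message-bounds}), Friendship Monotonicity (Lemma~\ref{lemma:gsg-friendship-mono}), and eventual delivery of the unguarded Communicate and reactive transactions.

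\textbf{Step 1: the base invariants persist.} I would first argue that these invariants still hold in the secure implementation. The argument enumerates the writers the security layer adds to $\mathrm{epoch}_p(q)$:
\begin{itemize}
\item Integrate checkpoint adopts an epoch the sender already held (Message Bounds, item~4 analogue), or installs an epoch-$0$ skeleton.
\item Integrate new identity re-files the entry under $p'$ at the same epoch.
\item Integrate rebind takes a maximum with an epoch the sender held.
\end{itemize}
None of these decreases an epoch, and none creates an odd epoch the counterpart did not hold. So the monotone-handshake argument of Lemma~\ref{lemma:gsg-cva-mutual-liveness} applies unchanged between faults.

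\textbf{Step 2: faults are finite.} Runs are taken with finitely many faults, so there is a time $t_0$ after which no state loss or Replace of $p$ or $q$ occurs. A state loss after $t$ can erase a pending $\mathsf{friend\_request}$ or $\mathsf{accept}$ from an outbox. I would handle this in two ways:
\begin{itemize}
\item Each party keeps wanting to befriend the other, so the volition persists and Offer/Accept are re-enabled after recovery.
\item Periodic re-broadcast (Lemma~\ref{lemma:secure-gsg-cva-repair}) heals a stale direct edge via the $(r,e)\in L$ clause of Integrate checkpoint. Restore reinstates $q$ in $\mathrm{FMap}_p$, with epoch and identity record, from $q$'s checkpoints; this requires that $q$ still records $p$, which is what friendship-preservation guarantees for recoverable edges.
\end{itemize}
After $t_0$ the run is fault-free, and I would apply Lemma~\ref{lemma:gsg-cva-mutual-liveness} from $\max(t,t_0)$. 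This yields a time from which $\mathrm{epoch}_p(q)=\mathrm{epoch}_q(p)$ is odd, hence $q\in\widetilde F_p$ and $p\in\widetilde F_q$.

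\textbf{Step 3: Replace, the main obstacle.} If $q$ was replaced by $q'$, "$p$ wants to befriend $q$" is a relation between persons (Definition~\ref{definition:gsg-cva-wants}), so the target becomes $q'$. There are two cases, using the Replace convergence theorem (Theorem~\ref{theorem:secure-gsg-cva-replace-convergence}):
\begin{itemize}
\item \emph{The cascade reaches $p$.} Then $p$ renames $q$ to $q'$ at the same epoch, and Half Friendship Mutuality (Lemma~\ref{lemma:secure-gsg-cva-half-mutuality}) together with Integrate rebind shows that $q'$ holds a matching entry. Establishment then continues as in Step~2 with $q'$.
\item \emph{The cascade does not reach $p$.} Then the edge was recorded only at $p$ and $q$, which is exactly the unrecoverable residue excluded in the statement.
\end{itemize}
The delicate point is the race noted after Chained identity: an offer from $p$ that reaches $q'$ before $p$'s stub is installed. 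I would argue that a re-offer at a fresh odd epoch above both recorded epochs still converges by the base handshake, because the monotonicity shown in Step~1 ensures the higher epoch wins at both sides.
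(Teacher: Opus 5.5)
You use the paper's ingredients, but in a different order. The paper first gets the active edge from base Friendship Establishment (Theorem~\ref{theorem:gsg-cva-establishment}). It then argues the edge survives each later disturbance: End friendship is excluded by persistent wanting, Restore reinstalls the edge from the counterpart's checkpoints, and a Replace renames it, reaching the counterpart on a friendship-preserving run unless the edge is the residue (Theorem~\ref{theorem:secure-gsg-cva-replace-convergence}). You instead wait for the last fault and re-run base liveness after it, and treat identity loss through the cascade. For state loss this is essentially Repair after Faults (Lemma~\ref{lemma:secure-gsg-cva-repair}), which already states Friendship Establishment with $t^*$ in place of the initial time. Your ordering makes explicit the finiteness of faults that ``from some time on'' needs, which the paper's persist-or-recover argument uses only tacitly. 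It also attempts faults that strike before the handshake completes, which the paper's proof does not consider. The paper's ordering is shorter because an established edge is only ever carried over or reinstalled, never renegotiated.

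Two of your sub-arguments fail as stated.

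\emph{The race in Step~3.} Offer friendship$(q')$ takes $x:=\mathrm{epoch}_p(q')+1$ from $p$'s entry for $q'$, which does not exist before the rename, so a re-offer is at epoch $1$, not above both recorded epochs. Integrate new identity sets $\mathrm{FMap}_p[q']:=(e_p,K',\sigma')$ with $e_p=\mathrm{epoch}_p(q)$, and moves the $\mathrm{FoFMap}$ values onto $q'$, unconditionally, taking no maximum. So your Step~1 claim that no new writer decreases an epoch is false exactly when $p$ already holds an entry for $q'$ (e.g.\ $p$ and $q'$ have meanwhile reached epoch $3$ while $e_p=1$). What reconciles the entries is not a higher-epoch handshake but three things: the rename writing the old odd epoch at $p$, Integrate rebind taking the maximum at $q'$, and the direct-heal clause of a later checkpoint from $q'$. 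The paper needs no such case anyway: the edge active before the Replace is carried across by the rename, as your own first bullet already says. Half Friendship Mutuality (Lemma~\ref{lemma:secure-gsg-cva-half-mutuality}) is also unavailable there, since its hypothesis excludes Periodic Re-broadcast and hence every live run with friends; Integrate rebind alone gives $q'$ its entry.

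\emph{``The volition persists.''} This is false in this formalism. A volition is discharged when its transaction is taken (Definition~\ref{definition:vmat}), wanting to befriend is a historical predicate, and a state loss also resets $\mathit{known}_p$. So if a state loss erases a $\mathsf{friend\_request}$ or $\mathsf{accept}$ still in the outbox, nothing re-sends it. Re-broadcast cannot help either, because it only propagates epochs some agent still holds, and after the loss neither side holds the new odd one. Step~2 therefore does not cover that case. It needs an extra assumption (e.g.\ that a wanting person re-wills the handshake), or the restriction, implicit in the paper's proof, to faults occurring after the edge is established.
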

\begin{proof}
By Friendship Establishment (Theorem~\ref{theorem:gsg-cva-establishment}) the base handshake completes and both record the edge active. Unfriend fires only on an End-friendship volition, excluded by persistent wanting; Restore reinstalls the edge from the friends' checkpoints; Replace renames it, reaching the counterpart on a friendship-preserving run (Theorem~\ref{theorem:secure-gsg-cva-replace-convergence}) unless the edge is the residue. Hence it persists or recovers, the residue excepted.
\end{proof}

\subsubsection{Implementation}\label{section:secure-gsg-cva-refinement}

\begin{definition}[Abstract Friend Set at CVA]\label{definition:secure-gsg-cva-abstract-friend-set}
At a secure social graph CVA state, relabel each replaced identity to the original it replaced.  The \emph{abstract friend set} of a live agent $p$ is
\[
  \widetilde F_p \;:=\; \{\,q\in\mathrm{dom}(\mathrm{FMap}_p) : \mathrm{epoch}_p(q)\text{ is odd}\,\}:
\]
the set of friendships of $p$ as recorded in $p$'s state.
\end{definition}

Every quiescent state of a friendship-preserving run is mapped correctly to a state of the secure social graph (Section~\ref{section:secure-gsg-abstract}), except for the unrecoverable residue (Remark below).

Two results are shown, paralleling the abstract level (Theorems~\ref{thm:ssg-implements-sg} and~\ref{thm:ssg-resilient}): the Correspondence maps each completed protocol to its abstract transaction at quiescence on friendship-preserving runs, the unrecoverable residue excluded; the Fault-Resilience then shows that a state-loss fault and its Restore net to a stutter at quiescence, extending the mapping to runs that contain faults.

\ThmSecureCorrespondence*

\ThmSecureImplementsSG*

\begin{restatable}[CVA Fault-Resilience]{theorem}{ThmCvaResilient}\label{thm:secure-gsg-cva-resilient}
Let $F''$ be the CVA state-loss faults: the resets of $\mathit{known}_p$, $o_p$, $i_p$, $\mathrm{FMap}_p$, and $\mathrm{FoFMap}_p$ to empty (Section~\ref{section:secure-gsg-cva-restore}), disjoint from the CVA transactions, and let $\sigma'':=(F_p\mapsto\widetilde F_p)$.  Restricted to friendship-preserving runs, the secure social graph CVA implementation is, at quiescence, an $F''$-resilient implementation of the abstract secure social graph (Definition~\ref{def:fault-resilient}): a state-loss fault and the Restore that follows it return $\widetilde F_p$ to its pre-fault value, so at quiescence they are a stutter and $\sigma''$ maps every live run over $T''\cup F''$ to a correct run of the abstract secure social graph.
\end{restatable}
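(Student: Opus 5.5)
The plan is to reduce CVA Fault-Resilience to the Secure Quiescent Correspondence (Theorem~\ref{theorem:secure-gsg-cva-refinement}) by showing that each fault in $F''$, together with the Restore that follows it, projects under $\sigma''$ to a stutter at quiescence. Concretely, take a live run $r''$ over $T''\cup F''$ on a friendship-preserving run with finitely many faults (as assumed in the fault model). Segment it at its quiescent configurations. Between consecutive quiescent configurations, every segment consists of completed protocols (Befriend, Unfriend, Replace cascades) interleaved with faults and their Restores. Definition~\ref{def:implementation} removes stutters, so it suffices to show two things: the image of each segment is either a stutter or a sequence of abstract secure transactions, and the resulting abstract run is correct.

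\textbf{Step 1 (the fault is a stutter at quiescence).} Fix a state-loss fault at $p$ and the next quiescent configuration $c^\ast$ after it. Before the fault, $\widetilde F_p$ equals $\{q : \mathrm{epoch}_p(q)\text{ odd}\}$. The fault touches only $p$'s components, so every friend $q$ keeps $\mathrm{FMap}_q[p]$, and by Friendship Monotonicity (Lemma~\ref{lemma:gsg-friendship-mono}) its epoch does not regress. Re-broadcast is unguarded and enabled while $\mathrm{Rec}_q\neq\emptyset$, so liveness forces a checkpoint from each such $q$ to reach $p$ before $c^\ast$ (quiescence means no unprocessed messages). Integrate checkpoint then installs the skeleton and advances $\mathrm{epoch}_p(q)$ to $\mathrm{epoch}_q(p)$, which is exactly the Restore of Section~\ref{section:secure-gsg-cva-restore}.

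We must also show that at $c^\ast$ we have $\mathrm{epoch}_q(p)=\mathrm{epoch}_p(q)_{\text{pre-fault}}$ whenever no volitional End/Offer intervened. For this I would use Half Friendship Mutuality (Lemma~\ref{lemma:secure-gsg-cva-half-mutuality}) and Message Bounds (Lemma~\ref{lemma:gsg-message-bounds}) at the pre-fault state. Any in-transit message that the fault erased from $o_p$ is re-carried by $p$'s own later checkpoints and by $q$'s direct-heal clause. Hence at quiescence both sides agree, and $\widetilde F_p$ is restored. Friendships whose last record was erased are excluded by the friendship-preserving hypothesis (Definition~\ref{def:friendship-preserving}).

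\textbf{Step 2 (assembling the run).} Replace each fault–Restore pair by a stutter. The remaining completed protocols map to abstract Befriend, Unfriend, and Replace by items 1–3 of Theorem~\ref{theorem:secure-gsg-cva-refinement}, and the Restore maps to Recover by item~4, whose net effect is $\widetilde F_p$ unchanged. This yields a safe abstract run. Liveness of the image follows because any abstract class enabled forever corresponds to a persistent volition at CVA level. By Friendship Establishment and Unfriend Propagation (Theorems~\ref{theorem:gsg-cva-establishment},~\ref{theorem:gsg-cva-unfriend-propagation}) and Corollary~\ref{cor:secure-gsg-cva-establishment}, the live CVA run completes the corresponding protocol, so the image takes it.

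\textbf{Main obstacle.} The hard part is Step 1 when a fault overlaps an incomplete protocol, for example a fault at $p$ while an $\mathsf{accept}$ to $p$ is in transit or while $p$ is mid-cascade as $p'$. In that case ``pre-fault $\widetilde F_p$'' need not equal the value at $c^\ast$. I would first try to show that such overlaps either abort the protocol without effect, by Monotone Integration (Lemma~\ref{lemma:gsg-cva-stale-inert}) and the fact that inboxes are lost, or complete it later via checkpoints. Either way, the segment maps to an abstract sequence consisting of (possibly) the completed transaction followed by a stutter, so correctness is still preserved even where a literal stutter does not arise.
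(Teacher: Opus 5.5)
Your route is the paper's. Befriend, Unfriend and Replace go through items 1--3 of Theorem~\ref{theorem:secure-gsg-cva-refinement}, and a fault together with its Restore is argued to be a stutter at quiescence because the surviving friends' checkpoints rebuild $\mathrm{FMap}_p$. The paper cites Restore completeness and Lemma~\ref{lemma:secure-gsg-cva-repair} for this, and simply asserts that the intermediate states are non-quiescent.

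\textbf{Step~1 and the ``pre-fault'' clause.} Your worry about overlapping protocols is well founded, and sharper than you state it. The re-carrying claim in Step~1 holds only for information that also survives at the counterpart. After the fault, $p$'s checkpoints are built from the $\mathrm{FMap}_p$ it rebuilt from its friends, and the direct-heal clause only raises an epoch to a value some surviving agent holds. Suppose $p$ fires End friendship$(q)$, setting $\mathrm{epoch}_p(q):=x+1$, and the fault strikes while $\mathsf{unfriend}(x+1)$ is still in $o_p$. Then $q$'s next checkpoint carries $(p,x)$ and $p$ re-adopts the odd epoch $x$. The edge is resurrected: after Restore, $\widetilde F_p$ contains $q$, which it did not just before the fault. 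Friendship preservation does not exclude this run, since the edge is recorded at $q$ and $q$ was not in $p$'s pre-loss friend set. So the statement's clause ``return $\widetilde F_p$ to its pre-fault value'' is false as written. Neither the paper's argument nor Half Friendship Mutuality rescues it; Lemma~\ref{lemma:secure-gsg-cva-half-mutuality} in any case assumes no Re-broadcast and no Restore. What survives is exactly your fallback: compare with the last quiescent configuration before the fault, and show that every overlapped protocol either aborts (its new epoch was held only at $p$) or completes through the counterpart's checkpoint (e.g.\ an $\mathsf{accept}$ erased from $i_p$). That should be the main argument, not a contingency.

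\textbf{Two further steps that fail as written.}
\begin{enumerate}
\item ``Liveness forces a checkpoint to reach $p$ before $c^\ast$'' does not follow. Liveness gives only eventual delivery, and Re-broadcast does not block quiescence. A fault at an already quiescent configuration can therefore leave a quiescent configuration with $\widetilde F_p=\emptyset$ before any friend re-broadcasts, which would put a non-transaction into the image. You need a quiescence notion that also requires Integrate checkpoint to be trivial on every friend's current checkpoint. The paper's ``intermediate states non-quiescent'' tacitly presupposes the same thing.
\item In the aborted-Unfriend run, $p$ no longer wants to befriend $q$ from the End onward, yet the edge stays active forever. So Unfriend Propagation fails there, as does Lemma~\ref{lemma:secure-gsg-cva-repair}'s claim that it holds after repair, and it cannot carry your liveness step. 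The image is live only if abstract volitions are read off the CVA volitional states, where the End-friendship volition has already been consumed, so that no abstract Unfriend class remains enabled.
\end{enumerate}
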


\noindent As $\sigma$ maps correct runs of the abstract secure social graph to correct runs of the social graph (Theorem~\ref{thm:ssg-implements-sg}), the composite $\sigma\circ\sigma''$ is $F''$-resilient over the social graph at quiescence (Lemma~\ref{lemma:implementation-transitivity}).

The construction is modular: the base social graph CVA implementation (Appendix~\ref{section:gsg-cva}) is extended by the security layer, leaving the base transactions unchanged.

\subsubsection{Convergence}\label{section:secure-gsg-cva-convergence}

\begin{restatable}[Repair after Faults]{lemma}{ThmRepairAfterFaults}\label{lemma:secure-gsg-cva-repair}
Consider a friendship-preserving run with finitely many faults, the last at time $t_F$.  Then there is a time $t^*\ge t_F$ from which onward no staleness caused by the faults remains: $\mathrm{epoch}_p(q)$ agrees with $\mathrm{epoch}_q(p)$ except while a volitional transaction between $p$ and $q$ is in progress, and $\mathrm{epoch}_p(q,r)$ equals $\mathrm{epoch}_q(r)$ for every observer $p$.  From $t^*$ the run is indistinguishable from one of the same volitions in which no message was lost, so Friendship Establishment (Theorem~\ref{theorem:gsg-cva-establishment}), Friend-of-Friend Visibility (Theorem~\ref{theorem:gsg-cva-fof-visibility}) and Unfriend Propagation (Theorem~\ref{theorem:gsg-cva-unfriend-propagation}) hold with $t^*$ in place of the initial time.
\end{restatable}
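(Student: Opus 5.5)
The plan is to show that once faults stop, every channel of staleness they caused is drained by Periodic Re-broadcast, and that the run from then on behaves like a loss-free run. Fix the last fault time $t_F$. Since the run has finitely many faults, after $t_F$ there are no further resets, and every message then in an outbox or inbox is eventually delivered (eventual delivery of correct runs). I will argue in three stages: (i) direct-edge repair, (ii) observer repair, (iii) the indistinguishability claim and transfer of the base liveness theorems.

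For (i), take a pair $p,q$ with $q\in\mathrm{Rec}_p$ after $t_F$. Re-broadcast is unguarded and enabled whenever $\mathrm{Rec}_p\neq\emptyset$, so $p$ sends $\mathsf{checkpoint}(\mathit{IR}_p,L)$ with $(q,\mathrm{epoch}_p(q))\in L$ infinitely often. Integrate checkpoint at $q$ installs a skeleton if needed (which covers the state-lost side, as in the Restore coverage argument) and raises $\mathrm{epoch}_q(p)$ to $\mathrm{epoch}_p(q)$ whenever it is smaller. The symmetric argument applies from $q$ to $p$. By Friendship Monotonicity both epochs are non-decreasing, and by Message Bounds and the Observer Bound every epoch ever written is bounded by one the counterpart held. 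Hence, absent new volitions between $p$ and $q$, the values $\mathrm{epoch}_p(q),\mathrm{epoch}_q(p)$ can only climb to their common maximum, and they reach it after one delivered checkpoint in each direction. Values can differ only while an Offer, Accept or End friendship between them is still in flight, which is exactly the exception in the statement. For replaced identities, I rely on friendship-preservation: the cascade has renamed $p$ at every recoverable friend (Theorem~\ref{theorem:secure-gsg-cva-replace-convergence}), and stubs are promoted by the next checkpoint, so (i) applies to $p'$ as well.

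For (ii), fix an observer $p$ friend of $q$ and a third party $r$. Every checkpoint from $q$ carries $(r,\mathrm{epoch}_q(r))$, so $\mathrm{epoch}_p(q,r)\ge\mathrm{epoch}_q(r)$ eventually, at the latest value $q$ holds. For the reverse inequality, the Observer Bound gives $\mathrm{epoch}_p(q,r)\le\max(\mathrm{epoch}_q(r),\mathrm{epoch}_r(q))$, and by stage (i) these two values agree eventually. So equality holds, again modulo in-progress transactions. FoFMap Symmetry handles the $(r,q)$ entry. Since the set of agents is finite and each pair stabilises after finitely many further volition-induced changes or none, I take $t^*$ as the maximum over all pairs, or, when volitions recur forever, I phrase the claim per pair as "from some time on, except during in-progress transactions".

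Stage (iii) is the main obstacle: making "indistinguishable from a loss-free run" precise. I would construct the comparison run by keeping the same volitions and replacing each lost message by its checkpoint-delivered equivalent. Monotone Integration (Lemma~\ref{lemma:gsg-cva-stale-inert}) ensures that duplicates and stale retained messages are inert. Stages (i)–(ii) ensure that the state at $t^*$ equals a state reachable without loss, namely one in which all direct and observed epochs are consistent. Then the proofs of Friendship Establishment, Friend-of-Friend Visibility and Unfriend Propagation, which use only invariants that hold from any reachable state plus eventual delivery, go through verbatim with $t^*$ as the starting time. The delicate part is the unfriend/befriend races at $t_F$. For these, I would invoke Post-Accept State and Post-Unfriend State to bound epochs to $\{x,x\pm1\}$, so that checkpoints resolve each race to the higher epoch, consistently with the last volition.
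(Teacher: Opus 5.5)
Your overall route is the paper's. Stages~(i) and~(ii) are its proof. Periodic Re-broadcast heals the finitely many divergences the faults create: the direct-heal clause of Integrate checkpoint lifts the lagging endpoint to its counterpart's epoch, with the skeleton install covering a state-lost agent, and the observer-heal clause together with the Observer Bound fixes friend-of-friend entries. Your observer step is tighter than the paper's ``it rises no higher'', because you use stage~(i) to collapse $\max(\mathrm{epoch}_q(r),\mathrm{epoch}_r(q))$ to $\mathrm{epoch}_q(r)$.

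You do not need the per-pair fallback for $t^*$. As in the paper, take $t^*$ to be the time by which each fault-caused divergence has been repaired. There are finitely many, all created by $t_F$. This time exists however many volitions follow, since later disagreements are in-progress transactions, which the statement already exempts.

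Stage~(iii), however, would fail as proposed.

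\emph{The cited lemmas do not apply.} Post-Accept State and Post-Unfriend State are fault-free lemmas. A state loss resets an epoch to $0$, so their bounds do not survive $t_F$.

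\emph{The higher surviving epoch need not respect the last volition.} Let $p,q$ be friends at odd $x$. Let $q$ will End friendship, setting $\mathrm{epoch}_q(p)=x+1$, and then lose its state while $\mathsf{unfriend}(x+1)$ is still in $o_q$. The only surviving record is $\mathrm{epoch}_p(q)=x$, so $p$'s checkpoint installs a skeleton at $q$ and heals it to the active epoch $x$. Thereafter the epochs agree, so the staleness clause holds. But they stay odd forever, whereas the loss-free run of the same volitions ends the friendship. So the hypothesis of Unfriend Propagation holds for $q$ and its conclusion fails.

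\emph{Lost offers have no substitute.} A lost $\mathsf{friend\_request}$ has no checkpoint-delivered equivalent for your comparison run. The offerer's epoch stays even and checkpoints carry only epochs, so the handshake the loss-free run completes never happens without a fresh volition. The paper only notes that such a loss leaves no divergence.

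The paper's proof makes the same unargued step: ``from $t^*$ every direct and observed epoch holds the value a loss-free run of the same volitions would hold''. So the defect is in the lemma's final clause rather than in your route. The repair argument establishes the staleness clause. Transferring the three liveness theorems needs an extra hypothesis, for instance that no $\mathsf{unfriend}$ or $\mathsf{friend\_request}$ is lost in a fault, or that the affected persons re-will.
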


\begin{definition}[Quiescence before Replace]\label{definition:secure-gsg-cva-quiescence}
A run is \emph{quiescent at $p$ before Replace} if, at the moment Announce new identity fires at $p'$, every $\mathsf{stream\_update}$ message $p$ has ever sent has been delivered and integrated by its recipient---in particular, every recipient $c\in K_p$ has integrated every $\mathsf{stream\_update}(w,\cdot)$ ever emitted by $p$.
\end{definition}

\begin{restatable}[Replace Convergence]{theorem}
{ThmReplaceConvergence}\label{theorem:secure-gsg-cva-replace-convergence}
Suppose the run is quiescent at $p$ before Replace (Definition~\ref{definition:secure-gsg-cva-quiescence}) and every $c\in K_p$ vouches (so $V=K_p$).  Then the cascade reaches every friend of $p$: each $w\in F_p$ receives $\mathsf{new\_identity}$, renames $p\to p'$, and rebinds.  After the cascade terminates, $\widetilde F_{p'}=F_p$ and for every $r\in F_p$ the friendship between $r$ and $p$ has been renamed to one between $r$ and $p'$, realising the abstract Replace exactly.
\end{restatable}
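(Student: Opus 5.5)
The plan is to follow the cascade forward from the vouched seed and show that the set of notified friends grows until it equals $F_p$. Throughout, $F_p$ means the set of $w$ with $p\in\mathrm{dom}(\mathrm{FMap}_w)$ at the fault, the friends that record $p$. The key tool is the hypothesis of quiescence before Replace. Every $\mathsf{stream\_update}(w,\cdot)$ that $p$ emitted has been integrated by each recipient. So for every $r\in\mathrm{Rec}_p$ and every friend $w$ of $p$ we have $w\in\mathrm{dom}(\mathrm{FoFMap}_r[p])$. This follows from Knowledge Monotonicity (Lemma~\ref{lemma:gsg-knowledge-mono}), together with the observation that $p$ broadcasts each friendship change to all of $\mathrm{Rec}_p$, and that the on-join snapshot covers friendships formed before $r$ joined.

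The first step treats the seed. Since $V=K_p$ and we assume each custodian is a friend of $p$ (Section~\ref{section:secure-gsg-abstract}), every custodian $c$ has $p\in\mathrm{dom}(\mathrm{FMap}_c)$. By the observation above, its vouch carries $L_c\supseteq\{(w,\cdot):w\in F_p\setminus\{c\}\}$. Hence the set $N$ in Announce new identity already contains all of $F_p$. That is the main point: under full quiescence the seed alone covers everything, and the rebind-driven extension through $L_q^p$ serves only as a backup.

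The second step is delivery. Under eventual delivery in correct runs, together with the guarded volition of $p'$ for Announce new identity, each $w\in F_p$ receives $\mathsf{new\_identity}(p,p',K_p,\mathit{IR}_{p'})$. The guard of Integrate new identity holds at $w$ for three reasons:
\begin{itemize}
\item $p\in\mathrm{dom}(\mathrm{FMap}_w)$ holds.
\item The stored $\mathit{IR}_w[p]$ equals $\mathit{IR}_p$, since it was received on befriend and is immutable.
\item $V=K_p$ trivially satisfies $|V|\ge\lceil\sigma_p|K_p|\rceil$.
\end{itemize}
So $w$ renames $p\to p'$ while preserving the epoch, and sends $\mathsf{rebind}$. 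Integrate rebind at $p'$ then installs $\mathrm{FMap}_{p'}[w]$ with epoch $\max(\cdot,e_w)=\mathrm{epoch}_w(p)$, because $p'$ was fresh. Therefore $\widetilde F_{p'}=\{w:\mathrm{epoch}_w(p)\text{ odd}\}$. By Half Friendship Mutuality (Lemma~\ref{lemma:secure-gsg-cva-half-mutuality}) at quiescence, this set equals $\widetilde F_p$ before the fault, which is $F_p$ under the abstract reading.

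The final step is termination. $N$ is monotone and the agent set is finite. Each notification is sent at most once, by the outbox precondition. Duplicate messages are dropped once the rename has happened (Monotone Integration, Lemma~\ref{lemma:gsg-cva-stale-inert}, and post-replacement isolation). So the cascade reaches a fixpoint.

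I expect the main obstacle to be the step from quiescence to ``every custodian's $\mathrm{FoFMap}[p]$ lists every friend of $p$''. It needs care for friendships that $p$ formed before $c$ joined, which are handled by the snapshot, and for inactive epochs. If the covering argument through the seed falls short, my fallback is an induction over the rebind closure: each reached $w$ also reports $\mathrm{FoFMap}_w[p']$, so the reached set is closed under the recording relation.
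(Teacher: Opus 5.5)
Your argument is the paper's. The seed $N$ already contains every friend because, by quiescence at $p$ before Replace, each custodian has integrated $p$'s $\mathsf{stream\_update}$ for that friend. The guard of Integrate new identity then passes against the stored $\mathit{IR}_w[p]$ with $V=K_p$, $w$ renames with the epoch carried over and rebinds, and Half Friendship Mutuality gives $\mathrm{epoch}_w(p)=\mathrm{epoch}_p(w)$.

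\textbf{Reorder the coverage step.} As written, it runs in the wrong direction. The stream-update argument only covers agents that $p$ itself records, since $p$ emits $\mathsf{stream\_update}(w,\cdot)$ only for $w\in\mathrm{dom}(\mathrm{FMap}_p)$. Yet you define $F_p$ as the agents that record $p$. The hypothesis as stated constrains only $p$'s stream updates. So consider an agent $w$ that has accepted $p$'s offer while its $\mathsf{accept}$ is still in flight. Such a $w$ records $p$ at an odd epoch but need not appear in any $\mathrm{FoFMap}_c[p]$. Then neither ``$N$ already contains all of $F_p$'' nor $\widetilde F_{p'}=\{w:\mathrm{epoch}_w(p)\text{ odd}\}$ follows.

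The fix is the paper's order:
\begin{enumerate}
\item Fix $w$ with $\mathrm{epoch}_p(w)$ odd.
\item Get coverage from $p$'s own stream.
\item Only then apply Half Friendship Mutuality (case~1, nothing in transit) to obtain $p\in\mathrm{dom}(\mathrm{FMap}_w)$ with matching epoch.
\end{enumerate}
Your closing application of the lemma in the other direction remains useful. It yields the reverse inclusion $\widetilde F_{p'}\subseteq F_p$, which the paper asserts without argument.

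\textbf{Two smaller points.}
\begin{itemize}
\item Announce new identity addresses only $w\in\mathit{known}_{p'}$, so $p'$ must first Discover each named friend. The paper notes this step; you omit it.
\item When $w$'s rebind arrives, $p'$ may already hold a stub for $w$ installed by an earlier rebind. So $\max(\cdot,e_w)=e_w$ holds by the Observer Bound together with agreement of the direct epochs, not because $p'$ is fresh.
\end{itemize}
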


\mypara{Remark (the unrecoverable residue)}  Under the quiescence hypothesis the cascade reaches all of $F_p$ and recovery is exact; the identity-loss residue without it is the unrecoverable fault of Section~\ref{section:cva-impl-overview} (Definition~\ref{def:friendship-preserving}).

\begin{restatable}[Recovery Requires a Record]{lemma}{ThmIrrecoverable}\label{lemma:secure-gsg-cva-irrecoverable}
Let the friendship between $p$ and $w$ be unrecoverable at time $t$ (Definition~\ref{def:friendship-preserving}).  Then in every continuation in which neither person wills a fresh Offer friendship or acceptance toward the other, the two never again record each other.
\end{restatable}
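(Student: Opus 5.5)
The plan is to prove the statement by an invariant over all transactions of the secure social graph CVA implementation. The invariant says that neither endpoint ever reacquires an odd-epoch record of the other. Fix the continuation after time $t$. Let $w$ be the surviving party and $p$ the party whose state or identity was lost (renamed $p'$ under identity loss, relabelled to $p$ as in Definition~\ref{definition:secure-gsg-cva-abstract-friend-set}).

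The invariant $\mathcal{I}$ has four parts:
\begin{enumerate}
\item[(a)] no agent other than $p$ and $w$ holds $\mathrm{FoFMap}[p][w]$ or $\mathrm{FoFMap}[w][p]$ at an odd epoch;
\item[(b)] no message in transit or retained in an inbox carries an odd epoch for the $p$--$w$ edge, sent by any agent other than the surviving endpoint;
\item[(c)] the side that lost its record holds no odd $\mathrm{epoch}$ for the other;
\item[(d)] no $\mathsf{vouch}$ or $\mathsf{rebind}$ list contains $w$.
\end{enumerate}
The base case is exactly Definition~\ref{def:friendship-preserving}. Under state loss, the last surviving record was erased. Under identity loss, no third agent records the edge, so (a) holds. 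By Observed-Epoch Provenance (Lemma~\ref{lemma:gsg-cva-fof-provenance}) and Message Bounds (Lemma~\ref{lemma:gsg-message-bounds}), every message mentioning the edge originated at an endpoint, and the state-lost endpoint's outbox was erased.

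The inductive step is a case split over the writers of $\mathrm{FMap}$ and $\mathrm{FoFMap}$.

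Volitional writers: Offer, Accept, and Resolve are excluded by the hypothesis. Since no fresh offer or acceptance toward the other is willed, Integrate accept has no $\mathsf{accept}$ to consume. The only exception would be a pre-fault request still in transit, but under state loss it was wiped, and under identity loss it is addressed to $p$, whose messages are ignored after Post-replacement isolation.

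Integrate unfriend and End friendship only produce even epochs.

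Integrate stream update and the FoF clause of Integrate checkpoint copy epochs from messages. By (b), or from the surviving endpoint $w$'s own stream, such a message can only reach $w$'s friends. That refreshes $\mathrm{FoFMap}_r[w][p]$ only if $w$ still has $p$ in $\mathrm{Rec}_w$. This is the delicate point: I would argue these records do not help $p$ to record $w$, since recovery of $p$ reads only direct-edge information. So I weaken (a) to concern only records that can feed $p$'s direct entry.

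Direct-heal clause of Integrate checkpoint. This is the main danger: $w$ re-broadcasting $\mathsf{checkpoint}$ to $p$ after a state loss of $p$ would restore the edge passively.
\begin{itemize}
\item Under state loss with both lost, $w$ has $p\notin\mathrm{Rec}_w$ and vice versa, so no checkpoint crosses.
\item Under identity loss, $w$'s checkpoints are addressed to the old $p$, and $p'$ is not in $\mathrm{Rec}_w$ because $w$ never integrated $\mathsf{new\_identity}$.
\end{itemize}

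Replace writers. Vouch lists come from $\mathrm{FoFMap}_c[p]$, which omits $w$ by (a). Rebind lists $L^p_q$ come from $\mathrm{FoFMap}_q[p']$, also omitting $w$. So $w\notin N$ in Announce new identity, $w$ never receives $\mathsf{new\_identity}$, and it never renames. This preserves (d).

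The hardest step is formulating $\mathcal{I}$ precisely enough that the checkpoint direct-heal and stub-installation paths are closed. The key point is that every path by which an endpoint acquires an $\mathrm{FMap}$ entry for the other requires the other to appear in some $\mathrm{Rec}$ or in some $\mathrm{FoFMap}$ at a third agent. I would first try to prove a single reachability lemma: if $p$ (resp.\ $p'$) ever records $w$, there is a causal chain of messages back to a record of the edge held, after $t$, by some agent. Then I would show inductively that the only agents holding such records after $t$ are ones whose records cannot be addressed to the other endpoint. With that in hand, at every later state neither side records the other with an odd epoch, which is the claim.
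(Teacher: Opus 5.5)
Your plan has the same shape as the paper's proof. Both are an induction over the run that cases on every writer of $\mathrm{FMap}$ and $\mathrm{FoFMap}$. Both use that, apart from the excluded Accept offer, Resolve simultaneous offer and Integrate accept, a name enters an agent's maps only by being read from an incoming message, and messages carry names from their senders' maps. But your argument breaks at the ``delicate point'' you flag, and the repair you propose is wrong.

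Recovery after identity loss does \emph{not} read only direct-edge information. The Replace cascade is driven by third-party friend-of-friend records: $\mathsf{vouch}$ lists are drawn from $\mathrm{FoFMap}_c[p]$, and the list $L^p_q$ of a $\mathsf{rebind}$ from $\mathrm{FoFMap}_q[p']$ after the rename. Moreover, Integrate stream update and Integrate checkpoint write both mirror entries. So a friend $r$ of $w$ that absorbs $w$'s $\mathsf{stream\_update}(p,x)$ or checkpoint sets $\mathrm{FoFMap}_r[p][w]:=x$, not only $\mathrm{FoFMap}_r[w][p]$. Once (a) is weakened, three of your steps no longer follow: your Replace step (``vouch lists come from $\mathrm{FoFMap}_c[p]$, which omits $w$ by (a)''), hence (d), hence the claim $p'\notin\mathrm{Rec}_w$ in your direct-heal case. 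Each relies on the unweakened (a) you have just given up.

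Nor can (a) be re-proved as an invariant, because the step really fails. Suppose the stream updates announcing a freshly formed $p$--$w$ edge at odd epoch $x$ are still unabsorbed when $p$'s key is lost, so that at $t$ no third agent records the edge. A common friend $r$ of $p$ and $w$ that the cascade reaches may absorb $w$'s update before integrating $\mathsf{new\_identity}(p,p',\cdot,\cdot)$. Then $(w,x)\in L^p_r$, and Integrate rebind installs the stub $\mathrm{FMap}_{p'}[w]$ at epoch $x$, which already puts $w$ in $\widetilde F_{p'}$. A further Announce new identity, or $w$'s absorption of the next checkpoint from $p'$, then makes $w$ record $p'$ at $x$, with no Offer or acceptance willed. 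A custodian of $p$ that is a friend of $w$ and absorbs the update before vouching does the same.

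What closes the proof is therefore a hypothesis, not an invariant. ``No third agent records the edge'' must persist as long as some agent that can still vouch or rebind for $p$ has not renamed it; a record filed under the retired name after that feeds no list. The paper's short name-flow proof takes the same leap when it asserts that $w$ is named in no $L_c$ and no $L_p$. So your instinct was right, and the fix is a stronger reading of unrecoverability.

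The same asynchrony breaks your state-loss direct-heal case (``no checkpoint crosses''). A checkpoint from $w$ delivered into $i_p$ after $p$'s loss but before $w$'s survives both faults. Absorbing it reinstalls $\mathrm{FMap}_p[w]$ at the odd epoch. Such inbox contents must be counted among the surviving records, not argued away by an erased outbox.

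Finally, your conclusion that neither side records the other at an odd epoch overshoots under identity loss, where $w$ keeps its stale odd record of the retired $p$. What is provable is that $p'$ never records $w$ and $w$ never records $p'$.
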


%% file: sections/appendix-coins-bonds.tex
\input{sections/secure-bonds}

\input{sections/secure-coins-cva}

\section{Proofs for Grassroots Coins and Bonds}\label{appendix:coins-bonds-proofs}

\input{sections/proofs-secure-bonds}

\input{sections/proofs-secure-coins-cva}

%% file: sections/secure-bonds.tex
\section{Secure Grassroots Coins}\label{section:secure-bonds}

We extend the grassroots coins specification~\cite{shapiro2026bonds} with the sovereign's transaction log and state custodians, specified as guarded multiagent atomic transactions.

In grassroots coins~\cite{shapiro2024gc,shapiro2026bonds}, each agent issues their own currency: the issuer of a coin---its \emph{sovereign}---is the natural authority on transactions in that currency.  The Grassroots Flash payment system~\cite{lewis2023grassroots} realises this by having the sovereign approve each payment in their coins, with the sovereign's personal blockchain serving as the authoritative ledger.  Here, we abstract this as a transaction log $L_p$ maintained by each sovereign $p$, recording every transaction in $p$-coins.  Since $L_p$ is append-only, it is monotonic.  To enable recovery if $p$ suffers a fault, $p$ designates some of its friends as \emph{state custodians} at the formation of its currency, who hold copies of $L_p$.

\subsection{Specification with Guarded Transactions}

Each agent $p \in P$ has a machine state extending the secure social graph (Section~\ref{section:secure-gsg-abstract}) with:
\begin{itemize}
\item $S_p \subseteq F_p$: the \emph{state custodians} of $p$---a finite set of $p$'s friends, fixed at the formation of $p$'s currency and immutable thereafter.
\item $B_p$: the multiset of coins currently held by $p$.
\item $L_p$: the \emph{transaction log} of $p$-coins---an append-only sequence recording every transaction in $p$-coins.
\end{itemize}
A sovereign forms its currency once, before any transaction in its coins, fixing the custodian set $S_p$; until then $S_p$ is undefined.  Initially $B_p = \emptyset$ and $L_p = \varepsilon$ (the empty sequence) for all $p$, and on formation $L_p^r = \varepsilon$ for each $r \in S_p$.  Each state custodian $r \in S_p$ holds a copy $L_p^r$ of $L_p$ as part of $r$'s machine state.  The custodian set is intrinsic and immutable.  $\mathrm{Replace}(p,p')$ (Definition~\ref{definition:secure-gsg-transactions}) extends to the coins state: $S'_{p'} := S_p$, $B'_{p'} := B_p$, $L'_{p'} := L_p$, and each custodian copy $L_p^r$ is relabelled $L_{p'}^r$.

\begin{definition}[Valid Log, Derived Holdings]\label{definition:valid-log}
A sequence of transaction records is \emph{valid} if, executed in order from the initial state, each record's transaction is enabled in the state produced by the records before it.  The \emph{holdings derived from} a valid log are the coin holdings of the state it produces.  In particular, a coin is \emph{held by} an agent in a valid log if the derived holdings assign it to that agent, and \emph{unspent} by a prospective payer if the derived holdings still assign it to that payer.
\end{definition}

The transactions---Mint, Pay, Redeem, and Swap---correspond to the grassroots coins transactions~\cite{shapiro2026bonds}, extended with log updates at the sovereign and its state custodians.

\begin{definition}[Secure Coins Transactions]\label{definition:secure-bonds-transactions}
The \temph{secure coins transactions} are:
\begin{enumerate}
    \item \textbf{Form currency$(p, S)$}: where $S \subseteq F_p$ and $p$ has not yet formed its currency.  $S'_p := S$; for each $r \in S$: $L_p^{r} := \varepsilon$.  Participants $\{p\} \cup S$.  Guarded by $p$.

    \item \textbf{Mint$(p, k, t)$}: $B'_p := B_p \cup \text{\textcent}^k_{p,t}$, $L'_p := L_p \cdot [\text{mint}(k,t)]$.  For each $r \in S_p$: $L_p^{r\prime} := L'_p$.  Participants $\{p\}\cup S_p$.  Guarded by $p$.

    \item \textbf{Pay$(q, r, x)$}: where $x \subseteq B_q$ is a set of $s$-coins for some sovereign $s$.  $B'_q := B_q \setminus x$, $B'_r := B_r \cup x$, $L'_s := L_s \cdot [\text{pay}(q,r,x)]$.  For each $u \in S_s$: $L_s^{u\prime} := L'_s$.  Participants $\{q,r,s\}\cup S_s$.  Guarded by $q$.

    \item \textbf{Redeem$(q, s)$}: where $\text{\textcent}_s \in B_q$ and $\text{\textcent}_{r,t} \in B_s$.  $B'_q := (B_q \setminus \{\text{\textcent}_s\}) \cup \{\text{\textcent}_{r,t}\}$, $B'_s := (B_s \setminus \{\text{\textcent}_{r,t}\}) \cup \{\text{\textcent}_s\}$, $L'_s := L_s \cdot [\text{redeem}(q,\text{\textcent}_s,\text{\textcent}_{r,t})]$.  For each $u \in S_s$: $L_s^{u\prime} := L'_s$.  Participants $\{q,s\}\cup S_s$.  Guarded by $q$.

    \item \textbf{Swap$(p, q, x, y)$}: where $x \subseteq B_p$, $y \subseteq B_q$.  $B'_p := (B_p \setminus x) \cup y$, $B'_q := (B_q \setminus y) \cup x$.  For each sovereign $s$ whose coins appear in $x \cup y$: $L'_s := L_s \cdot [\text{swap}(p,q,x_s,y_s)]$, and for each $u \in S_s$: $L_s^{u\prime} := L'_s$.  Participants $\{p,q\}$ together with each such sovereign $s$ and its custodians $S_s$.  Guarded by $\{p, q\}$.
\end{enumerate}
\end{definition}

In each transaction in $p$-coins, the sovereign $p$ is a participant: $p$'s log grows, and all state custodians' copies are updated atomically.  Since every transaction in $p$-coins is enabled only when the coins it moves are held by the agent it debits, the log $L_p$ is valid by construction, and the holdings of $p$-coins by any agent are the holdings derived from $L_p$ (Definition~\ref{definition:valid-log}).  Pay and Redeem are guarded by the payer/redeemer alone---the sovereign's person need not consent---but the sovereign's machine must be live for the transaction to execute.  If the sovereign is unresponsive, transactions in their currency are blocked; holders who lose trust may treat the sovereign's coins as bad debt~\cite{shapiro2026bonds}.

\subsection{Properties}

\mypara{Safety}

\begin{restatable}[Log Consistency]{lemma}{ThmLogConsistency}\label{lemma:log-consistency}
In any run of the secure coins protocol, $L_p^r = L_p$ for every state custodian $r \in S_p$.
\end{restatable}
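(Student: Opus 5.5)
The plan is an induction on the length of the run: show that the invariant $L_p^r = L_p$ for every $r\in S_p$ holds in every reachable configuration. The invariant is read as quantifying over sovereigns $p$ that have formed their currency, since $S_p$ is undefined before formation. The base case is immediate. In the initial configuration no currency has been formed, so the statement holds vacuously. Moreover, Form currency$(p,S)$ sets $S'_p:=S$ and $L_p^r:=\varepsilon$ for each $r\in S$. No transaction precedes formation in $p$-coins, so $L_p=\varepsilon$ at that moment, and the invariant is established for $p$ at formation.

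For the inductive step I will case-split on the transaction taken, following Definition~\ref{definition:secure-bonds-transactions}.

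\textbf{Mint$(p,k,t)$.} The update sets $L_p^{r\prime}:=L'_p$ for every $r\in S_p$, so the invariant holds for $p$ afterwards. No other sovereign's log or custodian copies are touched, and unmentioned components are unchanged by the notation convention of Appendix~\ref{app:dts}.

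\textbf{Pay and Redeem.} These are the same pattern for the single sovereign $s$ whose log changes.

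\textbf{Swap.} Here the update is applied to each sovereign $s$ appearing in $x\cup y$. Every other sovereign is untouched.

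\textbf{Secure social graph transactions.} Befriend, Unfriend and Recover do not touch $L$ or $S$. Replace$(p,p')$ moves $L_p$ to $L_{p'}$, relabels every $L_p^r$ to $L_{p'}^r$, and sets $S'_{p'}:=S_p$, so the equality transfers verbatim to $p'$. One subtlety is that $S_p\subseteq F_p$ may contain an agent renamed by some other Replace. I handle this by noting that Replace's relabelling acts uniformly on every record naming the replaced agent, so it preserves equalities between records.

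\textbf{Change-volition transactions.} These leave all machine states unchanged.

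The main obstacle is the status of faults. A state-loss fault at $p$ or at a custodian would break the equality. That fault is not a protocol transaction, however, and the lemma is stated for runs of the secure coins protocol, which I take to be fault-free as in Lemma~\ref{lem:ssg-fof-exact}. I will make this explicit. I will also check that "atomic" updates really are simultaneous with the change to $L_s$. The delicate case is Swap over several sovereigns, where I must verify that each $L_s$ is appended exactly once and that its custodians copy that same new value. With that checked, the induction closes.
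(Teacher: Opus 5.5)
Your proof is correct and follows the paper's argument. The paper also starts both logs at $\varepsilon$, observes that every transaction in $p$-coins writes the same new value to $L_p$ and to each $L_p^r$ in one atomic step, and notes that nothing else touches them. Your separate treatment of Form currency, of Replace's relabelling, and of state-loss faults as non-transactions (which the paper's proof assumes implicitly) spells out cases that the paper's three-line proof leaves unstated.
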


\begin{restatable}[Conservation of Money]{lemma}{ThmConservation}\label{lemma:conservation}
In any run of the secure coins protocol, the $p$-coins across all agents' holdings are exactly the $p$-coins recorded as minted in $L_p$.
\end{restatable}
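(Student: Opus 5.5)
The proof is an induction on the length of a run of the secure coins protocol, with the invariant: for every sovereign $s$, the multiset of $s$-coins summed over all holdings $B_q$ equals the multiset of $s$-coins recorded as minted in $L_s$. Initially every $B_q=\emptyset$ and every $L_s=\varepsilon$, so both sides are empty. For the inductive step I will go through the transactions of Definition~\ref{definition:secure-bonds-transactions}, together with the social-graph transactions they extend. For each one I check that the two sides change in the same way.

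\textbf{Form currency} changes only $S_p$ and the custodian copies, leaving holdings and $L_p$ untouched. \textbf{Mint}$(p,k,t)$ adds $\text{\textcent}^k_{p,t}$ to $B_p$ and appends exactly the record $\text{mint}(k,t)$ to $L_p$, so both sides grow by the same coins. \textbf{Pay}$(q,r,x)$ removes $x$ from $B_q$ and adds it to $B_r$. Because $x\subseteq B_q$, the multiset sum over all holdings is unchanged. The appended record is a pay record, not a mint, so the minted side is unchanged as well.

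\textbf{Redeem}$(q,s)$ and \textbf{Swap} exchange coins between two holders. The precondition guarantees that each removed coin is actually present in the debited holding, so removing it and adding it to the other holding preserves the total for each sovereign. The appended records are again not mints. The social-graph transactions (Befriend, Unfriend, Recover) do not touch coin state. Replace$(p,p')$ moves $B_p$ to $B_{p'}$ and $L_p$ to $L_{p'}$. Under the relabelling of $p'$ to $p$, both the set of coins and the record of minted $p$-coins are preserved.

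The step I expect to need the most care is that each debit is well-defined as a multiset subtraction, so that summing over agents genuinely cancels. This rests on the enabling preconditions ($x\subseteq B_q$, $\text{\textcent}_s\in B_q$, $\text{\textcent}_{r,t}\in B_s$). I also need to check that Replace does not double-count: $p$ is emptied and $p'$ was fresh, so the balance transfers exactly rather than being duplicated. State-loss faults are outside this specification's runs, so no further case arises.
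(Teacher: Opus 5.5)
Your proposal is correct and is essentially the paper's proof. The paper likewise argues transaction by transaction that Mint adds $p$-coins together with a mint record in $L_p$, while Pay, Redeem and Swap only move existing coins between holdings and append non-mint records; you make the induction explicit and also handle Form currency, the social-graph transactions and Replace, which the paper leaves implicit. One small caveat: the coins extension of Replace only specifies $B'_{p'}:=B_p$ and $L'_{p'}:=L_p$, so your claim that $p$ is emptied reads the retirement of $p$ (as in $F'_p:=\emptyset$) into the specification---a reasonable reading, and one the paper's proof, which never mentions Replace, tacitly relies on as well.
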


\mypara{Recovery}

Recovery of the social graph proceeds via Replace (Definition~\ref{definition:secure-gsg-transactions}).  For the sovereign's transaction log, the person behind the recovered agent contacts any available state custodian to obtain a copy.  Since log copies are maintained in sync (Lemma~\ref{lemma:log-consistency}), a single available state custodian suffices.

\begin{restatable}[Coin Recovery]{theorem}{ThmBondRecovery}\label{theorem:bond-recovery}
Let $p$ be an agent with $K_p \neq \emptyset$ and $S_p \neq \emptyset$.  If $\mathrm{Replace}(p,p')$ is taken and some $r \in S_p$ is correct, then $F_{p'} = F_p$, $S_{p'} = S_p$, and $L_p^r = L_p$.
\end{restatable}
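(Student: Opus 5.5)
The argument reads the three conclusions off the definition of Replace$(p,p')$ (Definition~\ref{definition:secure-gsg-transactions}), together with its extension to the coins state in Section~\ref{section:secure-bonds}. Log Consistency (Lemma~\ref{lemma:log-consistency}) supplies the custodian copy. We fix a run and the configuration $c$ immediately before Replace$(p,p')$ is taken, and write $c'$ for the configuration just after it.

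First, $F_{p'}=F_p$. Replace is taken, so its guard and precondition held in $c$: a supermajority $G\subseteq K_p$ willed it (possible since $K_p\neq\emptyset$ and, by assumption, a supermajority lies within $F_p$), and $F_{p'}=\emptyset$ with $p'$ fresh. The update sets $N'_{p'}:=N_p$, hence $F'_{p'}=\mathrm{dom}(N'_{p'})=\mathrm{dom}(N_p)=F_p$. Here $F_p$ denotes $p$'s pre-Replace friend set; each friend $r$ is relabelled from $p$ to $p'$, which is exactly the identification the statement intends, via the relabelling of Definition~\ref{def:ssg-projection}. Second, $S_{p'}=S_p$ is the explicit clause $S'_{p'}:=S_p$ of the coins extension of Replace. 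It is consistent with $S_{p'}\subseteq F_{p'}$ because $S_p\subseteq F_p=F_{p'}$. Since $S_p$ is immutable after Form currency, no other transaction could have altered it.

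Third, the log. Before the Replace, Lemma~\ref{lemma:log-consistency} gives $L_p^r=L_p$ in $c$ for the correct custodian $r\in S_p$. Replace sets $L'_{p'}:=L_p$ and relabels each copy $L_p^r$ to $L_{p'}^r$ without changing its contents. Hence after the step the custodian's copy equals the recovered sovereign's log, and it is the log $p$ held before the fault. Because $r$ is correct, it suffered no state loss, so its copy is available for read-back; this is the sense in which a single available custodian suffices. The statement's ``$L_p^r=L_p$'' is read as this equality of $r$'s copy with the sovereign log, carried across the relabelling.

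The main obstacle is that Log Consistency is stated for runs of the secure coins protocol, while the hypothesis allows state-loss faults at $p$. I would argue that a fault at $p$ erases $L_p$ but not $L_p^r$, since the fault leaves every other agent unchanged and $r$ is correct. Moreover, no coins transaction in $p$-coins can fire while $p$ has lost its state and before recovery. So the last value of $L_p$ agreed with $L_p^r$, and recovery re-reads it from $r$. In the fault-free case, Lemma~\ref{lemma:log-consistency} applies directly.
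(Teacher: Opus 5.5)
Your proposal is correct and follows the paper's proof, which likewise reads $F_{p'}=F_p$ off Replace (Definition~\ref{definition:secure-gsg-transactions}), $S_{p'}=S_p$ off the coins extension of Replace, and $L_p^r=L_p$ off Log Consistency (Lemma~\ref{lemma:log-consistency}). Your last paragraph is unnecessary. The paper applies Log Consistency to the run directly, and the abstract state-loss fault (Definition~\ref{def:state-loss-fault}) resets only $N_p$, not $L_p$. You also should not rely on its claim that no $p$-coin transaction can fire while $p$'s state is lost, since Definition~\ref{definition:secure-bonds-transactions} imposes no such precondition (Mint, for example, has none).
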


By Lemma~\ref{lemma:conservation}, the recovered log $L_p^r$ determines the correct holdings of all $p$-coins, enabling the recovered agent $p'$ to resume as sovereign.

\mypara{Grassroots}

\begin{restatable}{theorem}{ThmBondsGrassroots}\label{theorem:bonds-grassroots}
The secure grassroots coins protocol is grassroots.
\end{restatable}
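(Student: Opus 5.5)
The plan is to use the Guarded Obliviousness criterion recalled in Appendix~\ref{app:dts}, as for the social graph (Theorem~\ref{theorem:gsg-grassroots}) and the secure social graph (Theorem~\ref{theorem:secure-gsg-grassroots}). We must show two things: that the secure coins protocol is \emph{oblivious}, and that it is \emph{interactive}. The local-states function is monotone in $P$. The coins state ($S_p$, $B_p$, $L_p$, and the copies $L_p^r$) adds components indexed only by agents of $P$. So $P\subseteq P'$ implies that the local states over $P$ are included in those over $P'$, and the protocol is transactions-based.

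\textbf{Obliviousness.} I will go through the transactions of Definition~\ref{definition:secure-bonds-transactions} and those inherited from Definition~\ref{definition:secure-gsg-transactions}. For each I will check that a cross-group instance, one whose participants span two disjoint groups $P_1,P_2$, is never enabled in an interleaving of correct runs of $P_1$ and $P_2$.

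Befriend, Swap, Form currency and Mint are guarded by an agent that is a participant. The guard acts only within its own group, so Guarded Obliviousness applies directly. Pay and Redeem are guarded by the payer $q$ alone, but their participants include the sovereign $s$ and $S_s$, and possibly a payee $r$ in the other group. Here I will argue that the guarding person $q$, running in its own group's run, only ever wills classes of transactions over its own group. So $[t]\notin c^v_q$ for any cross-group $t$, and the guard blocks it.

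The unguarded and either-guarded social-graph transactions (Unfriend, Recover) need the general criterion instead. Their precondition requires an existing friendship, $q\in F_p$ or $p\in F_q$. In an interleaving of independent runs, friendships never cross groups, because every Befriend was confined to one group. Likewise $S_p\subseteq F_p$ and every coin held lies within the group. So the participant sets $\{p,q\}\cup F_p\cup F_q$, $\{p,q\}\cup N_q(p)$ and $\{q,r,s\}\cup S_s$ all stay inside one group. Replace is guarded by custodians, so it falls under Guarded Obliviousness.

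\textbf{Interactivity.} I will exhibit a correct run of $P_1\cup P_2$ with a Befriend$(p,q)$, $p\in P_1$, $q\in P_2$, both willing it. It can be followed, if desired, by a Pay across the groups. This step spans both groups, so the run is no interleaving of independent runs.

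The main obstacle is Pay, whose guard is the payer alone while its participants may include a payee or sovereign outside the payer's group. The argument must rest on two invariants: that in an interleaving, holdings and custodian sets never cross groups, and that volitions are over transaction classes whose participants are fixed (Definition~\ref{definition:equivalence}). I would first prove a small invariant that in any interleaving every agent's $F_p$, $S_p$ and the sovereigns of the coins in $B_p$ lie within its own group, by induction over steps.
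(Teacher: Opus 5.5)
Your proof is correct. Its skeleton matches the paper's: obliviousness, plus interactivity witnessed by a cross-group Befriend (the paper follows it with a Swap, which is not needed). Where you differ is in how you discharge the coin transactions.

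The paper argues Pay, Redeem and Swap through a machine-precondition invariant: in an interleaving no agent ever holds coins of a sovereign in the other group, so a Pay or Redeem whose sovereign lies across the boundary is never enabled. You instead note that every coins transaction has a nonempty guard. Form currency, Mint, Pay and Redeem are guarded by a single participant, and Swap by both parties. So Guarded Obliviousness applies to all of them directly.

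Your route is the more robust one. The holdings invariant alone does not exclude two cases:
\begin{itemize}
\item a Pay from $q\in P_1$ to a payee $r\in P_2$, with $s$ and $S_s$ on $q$'s side;
\item a Swap in which $p\in P_1$ gives $P_1$-coins for $q$'s $P_2$-coins.
\end{itemize}
In both cases the machine precondition holds. What rules them out is that the guarding person, acting in its own group, never wills the cross-group class. The guard argument is therefore what actually does the work here.

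Some of the machinery you plan to build is unnecessary:
\begin{itemize}
\item What you call the main obstacle, Pay guarded by the payer alone, is not one. Guarded Obliviousness needs only a nonempty guard, not that every participant guard.
\item The holdings invariant you intend to prove is dispensable for obliviousness.
\item Unfriend is guarded (by $p$ or by $q$), so it too falls under Guarded Obliviousness.
\end{itemize}
Only the unguarded Recover genuinely needs your friendship-containment argument via the general criterion.
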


\mypara{Recovery without key loss}

If a sovereign $p$ suffers a state loss---losing machine state but retaining their key---coin recovery is a special case of Recover (Definition~\ref{definition:secure-gsg-transactions}).  The person authenticates to any state custodian $r \in S_p$ using their retained key and obtains $L_p^r = L_p$ (Lemma~\ref{lemma:log-consistency}).  The recovered log determines the correct holdings of all $p$-coins (Lemma~\ref{lemma:conservation}), and the sovereign resumes operation.  For a trader $q$ who suffers a state loss and loses their local coin cache, the person behind $q$ contacts each friend sovereign $s$ after Recover and asks whether $q$ holds any $s$-coins; each sovereign can answer from their log $L_s$.  No custodian authorization is required in either case.

%% file: sections/secure-coins-cva.tex
\section{Secure Grassroots Coins: CVA Implementation}\label{section:secure-coins-cva}

We implement the secure coins specification (Section~\ref{section:secure-bonds}) as a CVA platform (Section~\ref{sec:cva}), extending the secure social graph CVA implementation (Section~\ref{section:secure-gsg-cva}).  At the specification the state custodians are updated in the same atomic step as the sovereign, so each holds the complete log and a single available custodian suffices for exact recovery.  An asynchronous implementation has no such atomicity: a custodian's copy of the log may lag the sovereign's.  Recovering from a single lagging custodian could reinstate a log that omits an already-approved payment, and the recovered sovereign might then approve the same coin a second time---an inadvertent double-spend.  We instead restore exactness by coupling finality to recoverability through a supermajority of custodians: a payment is final, and a payee may rely on it, only once a supermajority of the sovereign's state custodians hold it, and recovery collects copies from a supermajority.  Any two supermajorities of the same set intersect, so every final payment is recovered. 

This is the supermajority-intersection structure of the All-to-All Flash payment system~\cite{lewispye2023flash}, here in two simpler forms.  Our fault model is crash-only (Section~\ref{section:secure-gsg-abstract}), so the quorums are plain intersecting read and write quorums---a write quorum and a read quorum of a set of size $n$ intersect once each exceeds $n/2$---with none of the slack a Byzantine setting requires.  And the sovereign is the sole writer of its log, so the task is the durability of one writer's log across its custodians, not agreement among competing writers; no consensus is invoked.

Every CVA protocol is grassroots by construction (Theorem~\ref{thm:cva-grassroots}); the platform below is therefore grassroots without a separate argument.

\subsection{Platform State}\label{section:secure-coins-cva-state}

The platform state of $p$ extends the secure social graph CVA state (Section~\ref{section:secure-gsg-cva-state}) with the components of a sovereign, a holder, and a custodian:
\begin{itemize}
    \item $e_p \in \mathbb{N}$: the current epoch number at agent $p$,  initially $0$.
    \item $B_p$: the multiset of \emph{coins} held by $p$, initially empty.  This is a conservative local account, debited when $p$ requests a payment and credited when a payment to $p$ becomes final; the authoritative record of holdings is the sovereigns' logs, from which holdings are derived (Definition~\ref{definition:valid-log}), and $B_p$ is $p$'s local view of its own entry in them.
    \item $L_p$: the \emph{log} of $p$'s currency, a sequence of blocks; blocks are associated with monotonically increasing epoch numbers; 
    the block at position $n\in\mathbb{N}$ in epoch $e$ is $\mathsf{block}(e,n,\tau)$ with $\tau$ a \emph{transaction record}, one of $\mathsf{mint}(k,t)$, $\mathsf{pay}(q,r,x)$, $\mathsf{redeem}(q,c,c')$ with $c$ the bond redeemed and $c'$ the coin returned, or $\mathsf{swap}(q,r,x,y)$.  Initially empty.
    \item $\ell_p:S_p\to\mathbb{N} \times \mathbb{N}$, the \emph{ack frontier}: $\ell_p(c)$ is the maximal pair $e,n$
    (pairs ordered lexicographically) associated with a block
    that custodian $c$ has acknowledged holding, initially $0,0$.
     \item for each sovereign $s$ with $p\in S_s$, a copy $\widehat L_p[s]$ of $L_s$ together with the epoch $\widehat e_p[s]\in\mathbb{N}$ of $s$'s currency that $p$ has registered and the epoch $\widehat\pi_p[s]\ge\widehat e_p[s]$ that $p$ has promised to a recovering $s$ (Recovery below), created empty and at epochs $0$ when $p$ is designated a custodian of $s$ (Become custodian below).  In the absence of failures, $\widehat L_p[s]$ is a prefix of $L_s$ and $\widehat e_p[s]=\widehat\pi_p[s]=e_s$.
\end{itemize}
The state-custodian record of $p$'s currency is the set $S_p\subseteq F_p$, fixed at currency formation and immutable thereafter (Section~\ref{section:secure-bonds}), together with a \emph{state-custodian threshold} $\sigma^S\in(1/2,1]$, taken global to the platform; making $\sigma^S$ specific to each sovereign changes nothing below.  A \emph{supermajority} of $S_p$ is a subset $G\subseteq S_p$ with $|G|\ge\lceil\sigma^S\cdot|S_p|\rceil$.  Since $\sigma^S>1/2$, any two supermajorities of $S_p$ intersect.

The custodians and the holders that transact in a currency are friends of its sovereign---custodians by $S_p\subseteq F_p$, holders by befriending the sovereign before transacting---so every message below travels between friends, whose channels the secure social graph layer maintains (Section~\ref{section:secure-gsg-cva}); the CVA known-peer precondition on emission (Section~\ref{sec:cva}) is thereby met.  A payer communicates only with the sovereign and the sovereign only with its custodians; a payer need not be a friend of the custodians.

\subsection{Messages}\label{section:secure-coins-cva-cargo}

The cargo space $C$ is extended with, for a transaction record $\tau$, epoch $e\in\mathbb{N}$, position $n\in\mathbb{N}$, and sovereign $s$:
\begin{itemize}
    \item $\mathsf{become\_custodian}$: sovereign $s$ designating the recipient as a state custodian of $s$-coins.
    \item $\mathsf{request}(\tau)$: a holder's request that sovereign $s$ approve the transaction $\tau$ in $s$-coins.
    \item $\mathsf{block}(e,n,\tau)$: the sovereign disseminating its $n$-th log block in epoch $e$ to a custodian.
    \item $\mathsf{ack}(e,n)$: a custodian attesting that it holds the prefix of the sovereign's log through position $n$ in epoch $e$.
    \item $\mathsf{final}(e,n,\tau)$: the sovereign notifying a party to $\tau$ that block $n$ in epoch $e$, which records $\tau$, is final.
\end{itemize}

\subsection{Transactions}\label{section:secure-coins-cva-transactions}

The protocol realises each spec transaction (Definition~\ref{definition:secure-bonds-transactions}) as a sequence of unary CVA platform transactions: a holder requests, the sovereign approves and appends a block and disseminates it, each custodian appends and acknowledges, and the sovereign binds finality once a supermajority have acknowledged.  A sovereign first forms its currency, fixing its custodian set and notifying each custodian, which registers an empty log copy.  We give the protocol for Pay; Mint, Redeem, and Swap differ only in the transaction record carried and in which holdings are updated on finality.

\begin{definition}[Secure Coins CVA Transactions]\label{definition:secure-coins-cva-transactions}
\leavevmode

\mypara{Form currency}  A unary transaction at sovereign $p$, guarded by $\{p\}$, for a chosen $S\subseteq F_p$, provided $S_p$ is undefined (the currency is not yet formed):
\begin{itemize}
    \item Set $S_p:=S$ and $\ell_p(c):=(0,0)$ for each $c\in S$.
    \item For each $c\in S$, add $\mathsf{message}(p,c,\mathsf{become\_custodian})$ to $o_p$.
\end{itemize}

\mypara{Become custodian}  A unary transaction at $c$, unguarded, provided $\mathsf{message}(s,c,\mathsf{become\_custodian})\in i_c$ and $\widehat L_c[s]$ is undefined:
\begin{itemize}
   \item Set $\widehat L_c[s]:=\varepsilon$ and $\widehat e_c[s]:=\widehat\pi_c[s]:=0$, registering $c$ as a state custodian of $s$.
\end{itemize}

\mypara{Request pay}  A unary transaction at a holder $q$, guarded by $\{q\}$, for a recipient $r$, a set $x\subseteq B_q$ of $s$-coins of some sovereign $s\in\mathit{known}_q$:
\begin{itemize}
    \item Remove $x$ from $B_q$.
    \item Add $\mathsf{message}(q,s,\mathsf{request}(\mathsf{pay}(q,r,x)))$ to $o_q$.
\end{itemize}

\mypara{Approve}  A unary transaction at sovereign $s$, unguarded, provided $\mathsf{message}(q,s,\mathsf{request}(\mathsf{pay}(q,r,x)))\in i_s$ and $x$ is unspent in $L_s$ (Definition~\ref{definition:valid-log})---each coin in $x$ is, in the holdings derived from $L_s$, currently held by $q$:
\begin{itemize}
     \item Let $n:=|L_s|+1$; append $\mathsf{block}(e_s,n,\mathsf{pay}(q,r,x))$ to $L_s$.
    \item For each $c\in S_s$, add $\mathsf{message}(s,c,\mathsf{block}(e_s,n,\mathsf{pay}(q,r,x)))$ to $o_s$.
\end{itemize}
A request that is not unspent in $L_s$ is dropped without appending.

\mypara{Append and acknowledge}  A unary transaction at custodian $c\in S_s$, unguarded, provided $\mathsf{message}(s,c,\mathsf{block}(e,n,\tau))\in i_c$, $e=\widehat e_c[s]$, and $n=|\widehat L_c[s]|+1$ (the next block in order):
\begin{itemize}
    \item Append $\mathsf{block}(e,n,\tau)$ to $\widehat L_c[s]$.
    \item Add $\mathsf{message}(c,s,\mathsf{ack}(e,n))$ to $o_c$.
\end{itemize}
A block whose epoch exceeds $\widehat e_c[s]$, or whose position is not next in order, is retained in $i_c$ until $c$ registers that epoch or the predecessors arrive; a block of an earlier epoch, or one already held, is stale and dropped.

\mypara{Bind finality}  A unary transaction at sovereign $s$, unguarded, provided $\mathsf{message}(c,s,\mathsf{ack}(e_s,n))\in i_s$ for a custodian $c\in S_s$ with $\ell_s(c)<(e_s,n)$:
\begin{itemize}
    \item Set $\ell_s(c):=(e_s,n)$.
    \item For each position $m$ in epoch $e_s$ that has become final at this step---$|\{c'\in S_s:\ell_s(c')\ge(e_s,m)\}|\ge\lceil\sigma^S|S_s|\rceil$ for the first time---with $\mathsf{block}(e_s,m,\tau)$ in $L_s$, add $\mathsf{message}(s,a,\mathsf{final}(e_s,m,\tau))$ to $o_s$ for each party $a$ of $\tau$: the payer $q$ and payee $r$ of a $\mathsf{pay}(q,r,x)$; both parties $q,r$ of a $\mathsf{swap}(q,r,x,y)$; the redeemer $q$ and $s$ for a $\mathsf{redeem}(q,\cdot,\cdot)$; and $s$ for a $\mathsf{mint}(k,t)$.
\end{itemize}

\mypara{Accept payment}  A unary transaction at recipient $r$, unguarded, provided $\mathsf{message}(s,r,\mathsf{final}(e, n,\mathsf{pay}(q,r,x)))\in i_r$ with $r$ the payee:
\begin{itemize}
    \item Add $x$ to $B_r$.
\end{itemize}
\end{definition}

An $\mathsf{ack}(e,n)$ attests a prefix through position $n$ in epoch $e$, so it counts toward the finality of every position $m\le n$ of that epoch; a custodian that has acknowledged any $n\ge m$ in epoch $e$ is among the holders of block $(e,m)$. Transactions in old epochs are already finalised in the recovery process before the epoch begins, and therefore only blocks in the current epoch are considered here. Bind finality therefore finalises, in one step, every position up to the new frontier that the new acknowledgment carries past the threshold.  Mint, Redeem, and Swap follow the same four-transaction protocol: Mint is requested by the sovereign's own person and adds the minted coins to $B_s$ on approval, Redeem and Swap carry their own records and update the relevant holdings on finality; in each the sovereign appends a block, disseminates it, and binds finality at a supermajority of acknowledgments exactly as for Pay.

\subsection{Finality}\label{section:secure-coins-cva-finality}

\begin{definition}[Finality]\label{definition:secure-coins-cva-finality}
A block at position $(e,n)$ of $L_s$ is \temph{final} when a supermajority of $s$'s state custodians hold the prefix of $L_s$ through $(e,n)$.
A payment is \temph{final} when its block is final.
\end{definition}

A payee relies on a payment only once it is final---only upon receiving $\mathsf{final}$, which the sovereign sends only after a supermajority have acknowledged.  Finality is the line between safe-to-rely and not: a payment the sovereign has approved but a supermajority has not yet acknowledged is not final, may be dropped on recovery, and is safe to drop precisely because no payee was told it was final.

A payment commits when it is requested: Request pay debits the payer's account at once, and a payment that is never approved (the request is dropped) or never reaches a supermajority is not refunded.  This loses the payer nothing of value.  A payment in $s$-coins can fail to finalise only if the sovereign $s$ or a supermajority of its custodians is permanently unavailable, or if $s$ loses its state before a supermajority hold the payment's block, in which case recovery drops it; in the first case no $s$-coin can ever finalise again, so every $s$-coin is worthless, and the coins stranded in the failed payment are worth no less than the rest, while in the second the payer loses the coins of a payment that was never final and that no payee relied on.  An unresponsive sovereign renders its currency bad debt regardless~\cite{shapiro2026bonds}; the conservative account merely reflects this.

\subsection{Recovery}\label{section:secure-coins-cva-recovery}

The state-custodian record of a currency---the set $S_p$ together with the threshold $\sigma^S$---is fixed at currency formation and is intrinsic in the same sense as the identity record (Section~\ref{section:secure-gsg-cva-state}): a read-only datum that survives a fault in which platform state is reset.  A recovering sovereign therefore knows from whom to collect and how many copies it needs before it has recovered any log---knowledge that cannot be made to depend on the log itself, which is what is being recovered.

The exchanges in the procedures below are conducted out-of-band by the recovering person and are not modelled as platform cargo; only their outcome matters to the properties---the promises, registered epochs and log copies of a supermajority for a sovereign, or the answer the sovereign reads from its log for a holder.

\mypara{State loss}  A sovereign $p$ that suffers a state loss and retains its key resumes with empty $L_p$ and recovers in three exchanges with its custodians.  It first queries them and waits for replies from a supermajority $G\subseteq S_p$, each custodian $c$ answering with its promised epoch $\widehat\pi_c[p]$, its registered epoch $\widehat e_c[p]$, and its copy $\widehat L_c[p]$.  It then reserves the epoch $e':=1+\max_{c\in G}\widehat\pi_c[p]$: a custodian with $\widehat\pi_c[p]<e'$ sets $\widehat\pi_c[p]:=e'$ and acknowledges, and $p$ waits for acknowledgements from a supermajority.  Finally $p$ adopts as $L_p$ the longest copy among those in $G$ registered in the maximal epoch $\max_{c\in G}\widehat e_c[p]$, sends it with $e'$ to every custodian, sets $e_p:=e'$, and resumes approving, its next block being $\mathsf{block}(e',|L_p|+1,\cdot)$; a custodian with $\widehat\pi_c[p]=e'$ and $\widehat e_c[p]<e'$ sets $\widehat L_c[p]:=L_p$ and $\widehat e_c[p]:=e'$, which may discard blocks of earlier epochs that never reached a supermajority and so were never final.  Since a supermajority promised $e'$ and any two supermajorities intersect, every later recovery hears a promise of at least $e'$ and reserves above it: an epoch is reserved, and a log installed under it, by one recovery only, so all copies registered in an epoch are prefixes of one log.  A custodian whose promise exceeds $e'$, left by an interrupted recovery, sits out until a later recovery reserves above it. 

\mypara{Identity loss}  A sovereign that loses its key is replaced by a fresh agent $p'$ through the secure social graph (Section~\ref{section:secure-gsg-cva-replace}).  The state-custodian record $(S_p,\sigma^S)$ is known out-of-band to the person behind $p'$---the same person behind $p$ and $p'$, exactly as the new identity record is in Replace Phase~1 (Section~\ref{section:secure-gsg-cva-replace})---so $p'$ adopts $S_{p'}=S_p$; the Replace cascade itself carries no log or custodian record. Once the custodians have processed the rename and address $p'$, $p'$ proceeds exactly as under state loss.

\mypara{Holder recovery}  A holder $q$ that suffers a state loss and loses $B_q$ recovers it from the sovereigns: for each friend sovereign $s$, the person behind $q$ asks $s$ which $s$-coins $q$ holds, which $s$ answers from $L_s$.  No custodian is involved.

\mypara{Changing custodians}  The custodian set is fixed for the life of a currency; to change custodians, a sovereign issues a new currency with a fresh custodian set and migrates holdings to it, which the existing Mint and Swap transactions already express.

\mypara{Custodian availability and delivery}  The properties below assume that, for each currency, a supermajority of its state custodians remains correct and retains its log copy, and that delivery between a sovereign and its custodians is reliable. Under this assumption a custodian copy changes only by the protocol---appending, or aligning at a recovery, which discards no final block, and no disseminated block is lost, so Prefix Consistency and Exact Recovery hold as stated. This is the standard quorum assumption: were a supermajority to fail, the currency would be permanently unable to finalise or recover and its coins would be worthless regardless.  It is also where the coins layer differs from the secure social graph, which tolerates message loss and the loss of a counterpart copy through periodic re-broadcast (Section~\ref{section:secure-gsg-cva-rebroadcast}); a custodian re-synchronisation that repairs a regressed or lagging copy, and so lifts the reliable-delivery assumption, is left to future work.

\subsection{Properties}\label{section:secure-coins-cva-properties}

\begin{restatable}[Prefix Consistency]{lemma}{ThmPrefixConsistency}\label{lemma:secure-coins-cva-prefix}
In any run, for every sovereign $s$ and custodian $c\in S_s$ registered in the current epoch, $\widehat L_c[s]$ is a prefix of $L_s$, and within an epoch both $L_s$ and $\widehat L_c[s]$ only grow.
\end{restatable}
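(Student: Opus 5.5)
The proof is an induction over the steps of the run. The invariant carried is the statement itself: for every sovereign $s$ and every custodian $c\in S_s$ with $\widehat e_c[s]=e_s$, the copy $\widehat L_c[s]$ is a prefix of $L_s$. The growth clause is checked along the way, by showing that within a fixed epoch every writer of $L_s$ or $\widehat L_c[s]$ only appends. The base case is immediate. Before formation nothing is defined. At formation and at Become custodian, both $L_s$ and $\widehat L_c[s]$ are $\varepsilon$ and all epochs are $0$, so the prefix relation holds trivially.

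For the inductive step I would enumerate the writers of these components.
\begin{itemize}
\item Approve appends to $L_s$ only. This preserves every prefix relation and is growth.
\item Append and acknowledge at $c$ appends $\mathsf{block}(e,n,\tau)$, but only when $e=\widehat e_c[s]$ and $n=|\widehat L_c[s]|+1$. Under the reliable-delivery and custodian-availability assumption (Custodian availability and delivery), the only $\mathsf{block}(e,n,\cdot)$ messages in existence for epoch $e$ were emitted by Approve at $s$ while $e_s=e$. Since $L_s$ within epoch $e$ only grows, the $n$-th block of that epoch is unique and equals position $n$ of $L_s$. Given the hypothesis that $\widehat L_c[s]$ is a prefix of length $n-1$, appending it therefore yields a prefix again.
\item Bind finality, Request pay and Accept payment touch neither log.
\item State-loss faults at custodians are excluded for the supermajority by the availability assumption. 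A faulty custodian is not registered in the current epoch until a recovery realigns it, so it falls outside the statement.
\end{itemize}

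The main obstacle is recovery, which changes $e_s$ and may truncate copies. I would argue it in three steps. First, using the intersection of supermajorities ($\sigma^S>1/2$), show that each epoch $e'$ is reserved by at most one recovery. Every later recovery reads a promise of at least $e'$ from some member of the intersection and reserves strictly above it. Second, at the moment $p$ sets $e_p:=e'$ and installs $L_p$, every custodian that registers $e'$ does so by setting $\widehat L_c[p]:=L_p$. That copy is equal to $L_p$, hence a prefix. Custodians still registered in earlier epochs are not ``registered in the current epoch,'' so nothing needs to hold for them. Third, within $e'$ no further truncation occurs: a second install under $e'$ would need a second reservation of $e'$, which the first step excludes, and the guard $\widehat e_c[p]<e'$ prevents re-installation. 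So after recovery only Approve and Append act, and the inductive step above applies. Recovery after identity loss is the same argument under the relabelling $p\mapsto p'$ carried out by Replace.

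Two points remain to be checked with care. One is that a custodian whose promise exceeds $e'$ never registers $e'$, and so never holds a non-prefix copy in the current epoch. The other is that blocks of epoch $e'$ retained in an inbox cannot be appended before the install sets $\widehat e_c[s]=e'$. Together with ordering by $n$, this holds because Append requires $e=\widehat e_c[s]$. These checks close the induction.
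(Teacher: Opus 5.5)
Your proposal is correct and follows the paper's proof. Both are an induction over the run in which registration aligns a custodian's copy to a prefix of $L_s$, Approve only appends to $L_s$, and Append and acknowledge appends exactly the block that Approve placed at position $n$ of the registered epoch, so neither sequence shrinks within an epoch. Your one addition is to justify explicitly, via supermajority intersection, that each epoch is reserved and installed by a single recovery. The paper's proof uses this fact silently when it identifies the appended block with position $n$ of $L_s$, and argues it only in the recovery discussion and in the proof of Exact Recovery.
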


\begin{restatable}[Log Validity]{lemma}{ThmLogValidity}\label{lemma:secure-coins-cva-log-validity}
In any run, every sovereign's log $L_s$ is valid (Definition~\ref{definition:valid-log}).
\end{restatable}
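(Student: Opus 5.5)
The proof is an induction over the steps of the run, with invariant $I$: for every sovereign $s$, $L_s$ is valid. At the initial configuration $L_s=\varepsilon$, and the empty sequence is trivially valid. I would then classify every transition according to how it can touch $L_s$. Most cannot: Request pay, Append and acknowledge, Bind finality, Accept payment, Become custodian, Form currency, all social-graph transactions, and state-loss faults at agents other than $s$ all leave $L_s$ unchanged. This needs one observation. Holdings are \emph{derived} from the log (Definition~\ref{definition:valid-log}), not read from the local accounts $B_q$. So when Request pay debits $B_q$, or Accept payment credits $B_r$, validity of $L_s$ is unaffected. This is what lets the conservative local accounts diverge from the log without breaking the invariant.

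The main step is Approve. Suppose $L_s$ is valid and Approve appends $\mathsf{block}(e_s,n,\mathsf{pay}(q,r,x))$. Its precondition is that $x$ is unspent in $L_s$: every coin of $x$ is assigned to $q$ in the holdings derived from $L_s$. That is exactly the enabling condition of the specification's Pay$(q,r,x)$ (Definition~\ref{definition:secure-bonds-transactions}), namely $x\subseteq B_q$ in the state produced by $L_s$. So the extended sequence is valid. Mint, Redeem and Swap are handled the same way, since each has an analogous unspent/held check on the sovereign's side; Mint is always enabled. I would note explicitly that the epoch and position tags are bookkeeping and do not enter validity.

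The step I expect to be the real obstacle is recovery. A state loss at $s$ resets $L_s$ to $\varepsilon$, which is valid. Identity loss relabels $L_s$ to $L_{p'}$ without changing its contents, so it also preserves $I$. But recovery then adopts some custodian copy $\widehat L_c[s]$. To show this copy is valid, I would prove an auxiliary invariant $J$: every custodian copy $\widehat L_c[s]$ is a prefix of some log that $L_s$ held at an earlier time. $J$ is maintained because Append and acknowledge only appends the block the sovereign disseminated, in order and within one epoch (Prefix Consistency, Lemma~\ref{lemma:secure-coins-cva-prefix}). Recovery installs an adopted copy, which itself satisfies $J$. Validity is prefix-closed, since each record is enabled by its predecessors only, so any prefix of a valid log is valid. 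Together with $I$ holding at every earlier time, this makes the adopted log valid.

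The delicate point within this step is that $J$ must survive epoch changes, when custodians overwrite their copies with the recovering sovereign's adopted log. I would argue that the installed log was $L_s$ at the moment of installation, so $J$ remains true by its "some earlier time" formulation. I would also rely on the availability and reliable-delivery assumption of the recovery section, which guarantees that copies change only by the protocol.
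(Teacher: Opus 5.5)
Your proof is correct. Its core step is the paper's: Approve's precondition, that $x$ is unspent in the holdings derived from $L_s$, is exactly the enabling condition of the specification transaction in the state produced by the valid log $L_s$, so appending preserves validity.

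The routes differ in what they cover.
\begin{itemize}
\item The paper inducts on the length of $L_s$ and asserts that blocks enter $L_s$ only through Approve. It therefore never addresses recovery, where the sovereign's log is \emph{set} to an adopted custodian copy rather than appended to.
\item You induct over the steps of the run instead. You handle the recovery case with the auxiliary invariant $J$ and the observation that validity is prefix-closed.
\end{itemize}
This gives you the lemma over runs containing state-loss faults and recoveries, which is how it is actually used, e.g.\ in Coins Fault-Resilience (Theorem~\ref{theorem:secure-coins-cva-resilient}). The paper's argument is shorter but literally covers only the stretches in which $L_s$ grows by Approve.

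The price is that you lean on more of the recovery machinery, and one step deserves to be made explicit. $J$ as you phrase it (a prefix of \emph{some} earlier $L_s$) is not inductive on its own under Append and acknowledge. You need the copy just before the append to coincide with $L_s$ at the moment the appended block was approved.
\begin{itemize}
\item For a custodian registered in the current epoch, Prefix Consistency (Lemma~\ref{lemma:secure-coins-cva-prefix}) gives this directly.
\item A custodian still registered in an older epoch $e$ can also append in-transit blocks of $e$ after the sovereign has moved on. Prefix Consistency, which speaks only of current-epoch custodians, does not cover that case. There the justification is that every block of epoch $e$ was approved in the single stint that reserved $e$: the one-log-per-epoch property argued in the Recovery section and in the proof of Exact Recovery (Theorem~\ref{theorem:secure-coins-cva-exact-recovery}). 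Hence such a copy stays a prefix of the log the sovereign held at the end of that stint.
\end{itemize}

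A minor correction: in the CVA protocol, under identity loss the fresh $p'$ does not inherit a relabelled log. It starts empty and adopts a custodian copy exactly as under state loss. That case is therefore covered by your $J$ argument, not by relabelling, though the conclusion is unchanged.
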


\ThmCoinsConservation*

\begin{restatable}[Finality Soundness]{lemma}{ThmFinalitySoundness}\label{lemma:secure-coins-cva-finality-soundness}
If a payee receives $\mathsf{final}(e,n,\mathsf{pay}(q,r,x))$ from $s$, then $\mathsf{block}(e,n,\mathsf{pay}(q,r,x))$ is in $L_s$ and was approved by $s$ against an unspent $x$.
\end{restatable}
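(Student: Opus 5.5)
The plan is to trace the $\mathsf{final}$ message back to the unique transaction that emits it, and then trace the block it names back to the unique transaction that writes such blocks. In the CVA protocol of Definition~\ref{definition:secure-coins-cva-transactions}, the only platform transaction that adds a $\mathsf{final}$ cargo to an outbox is Bind finality at the sovereign $s$. Communicate only moves a message from $o_s$ to the recipient's inbox without altering its cargo, so any $\mathsf{message}(s,r,\mathsf{final}(e,n,\mathsf{pay}(q,r,x)))$ in $i_r$ was produced by some Bind finality step at $s$. By its effect, that step emits $\mathsf{final}(e_s,m,\tau)$ only for a position $m$ with $\mathsf{block}(e_s,m,\tau)$ in $L_s$ at that moment. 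Hence, at emission time $t_0$, $e=e_s$ and $\mathsf{block}(e,n,\mathsf{pay}(q,r,x))\in L_s$.

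Next I would identify where blocks in $L_s$ come from. Blocks are written to $L_s$ only by Approve, or by an installation at recovery, where $s$ adopts a custodian copy. In the Approve case, the precondition is exactly that $x$ is unspent in $L_s$ (Definition~\ref{definition:valid-log}) at approval time, which gives the second clause. In the recovery case, the adopted copy $\widehat L_c[s]$ consists of blocks that custodians appended from $\mathsf{block}$ messages. Those messages are emitted only by Approve at $s$, or by earlier installations, so induction on the number of recoveries reduces this case to Approve as well. The block can therefore be attributed to an Approve step of $s$ against an unspent $x$.

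The remaining issue is persistence. We need the block still to be in $L_s$ when the payee holds the message, not only at $t_0$. Here I would argue from finality. Bind finality emits for position $(e,n)$ only once a supermajority of $S_s$ has acknowledged it, so by Definition~\ref{definition:secure-coins-cva-finality} the block is final. Within an epoch $L_s$ only grows (Lemma~\ref{lemma:secure-coins-cva-prefix}). Across a state loss, Exact Recovery (Theorem~\ref{theorem:secure-coins-cva-exact-recovery}) guarantees that the recovered log contains every block that was ever final. Combining the two, by induction on recoveries after $t_0$, $\mathsf{block}(e,n,\mathsf{pay}(q,r,x))$ remains in $L_s$ from then on. If the lemma is meant only at the moment of sending, this step can be dropped.

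I expect the persistence step to be the main obstacle. It relies on Exact Recovery, which appears later in the paper, and on the standing assumption that a supermajority of custodians is correct and has reliable delivery. To avoid circularity, I would first try to read the lemma as a statement about the state of $L_s$ at emission time. I would invoke Exact Recovery only for the stronger "remains in $L_s$" reading.
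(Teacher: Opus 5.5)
Your proposal is correct, and its core (the first two paragraphs) is the paper's proof. $\mathsf{final}$ is emitted only by Bind finality, which requires the block to be already in $L_s$; that block was placed by Approve under its unspent-$x$ precondition; and ``is in $L_s$'' is read at emission time, as you ultimately propose.

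Your extras are unnecessary. The recovery case of your provenance step can be dispatched directly: Bind finality only finalises blocks of the current epoch $e_s$, while a recovery installs only blocks of epochs below the newly reserved one, so the block must come from an Approve in the current epoch. Your persistence step also overclaims as phrased, because a later state loss empties $L_s$ until recovery completes, and Exact Recovery restores the block only afterwards.
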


\mypara{Correspondence with the specification}  The CVA protocol implements the secure coins specification (Section~\ref{section:secure-bonds}) in the sense that its quiescent states are exactly the specification's states.  We read an abstract state off a CVA configuration by the map $\alpha$ that takes the holdings $B_p$ of each agent, the log $L_p$ of each sovereign, and, for each custodian, its copy $\widehat L_p[s]$.

\ThmCorrespondenceQ*

The correspondence is at quiescence, not at every state, and necessarily so: between Request pay and the finalisation of that payment a coin is debited from the payer and not yet credited to any payee nor recorded in any final block, a configuration the atomic specification---in which a payment moves a coin and records it in one step---never exhibits.  The in-flight coin is below the abstraction; quiescence is exactly the condition under which no payment is in flight.

\ThmExactRecovery*

\begin{restatable}[Coins Fault-Resilience]{theorem}{ThmCoinsResilient}\label{theorem:secure-coins-cva-resilient}
Let $F$ be the coins state-loss faults: the resets of a sovereign's $L_p,\ell_p$ and custodian copies, a custodian's $\widehat L_p[\cdot]$, or a holder's $B_p$ to empty (Section~\ref{section:secure-coins-cva-recovery}), disjoint from the coins transactions.  Under the custodian-availability and reliable-delivery assumption (Section~\ref{section:secure-coins-cva-recovery}), the secure coins CVA implementation is, at quiescence, an $F$-resilient implementation of the secure coins specification (Definition~\ref{def:fault-resilient}) under the map $\alpha$: a state-loss fault and the recovery that follows it leave the derived holdings and every final block unchanged, so at quiescence they are a stutter and $\alpha$ maps every live run over the coins transactions and $F$ to a correct run of the specification.
\end{restatable}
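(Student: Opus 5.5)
The statement to prove is Coins Fault-Resilience (Theorem~\ref{theorem:secure-coins-cva-resilient}). I will reduce it to two results already stated: the Correspondence at Quiescence (Theorem~\ref{theorem:secure-coins-cva-correspondence}) and Exact Recovery (Theorem~\ref{theorem:secure-coins-cva-exact-recovery}). The proof shows that each fault in $F$, together with the recovery that follows it, projects under $\alpha$ to a stutter at quiescence. Definition~\ref{def:fault-resilient} then applies directly.

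Fix a live run $r'$ over the coins transactions and $F$. Consider its subsequence of quiescent configurations, which is what $\alpha$ is read off at. I split the faults by kind.

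\textbf{Holder resets ($B_q$).} Holder Recovery answers from each friend sovereign's log. By Log Validity (Lemma~\ref{lemma:secure-coins-cva-log-validity}), $B_q$ is restored to the holdings derived from the logs, which is its pre-fault value at quiescence.

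\textbf{Custodian resets ($\widehat L_c[s]$).} These are covered by the custodian-availability assumption: a supermajority retains its copies. The reset copy is realigned when the sovereign next installs its log, so at quiescence it again equals $L_s$.

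\textbf{Sovereign resets ($L_p$, $\ell_p$).} Exact Recovery gives that the adopted log contains every block that was final before the fault. For the reverse inclusion, I argue that the adopted log contains no block absent from the pre-fault quiescent state. A block that was appended but never finalised cannot have been present at the last quiescent configuration, because at quiescence every block is final (Theorem~\ref{theorem:secure-coins-cva-correspondence}). Such a block may be dropped, and dropping it is invisible to $\alpha$ at quiescence.

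With the pre-fault and post-recovery quiescent projections identified, the remaining non-fault steps project to specification transactions or stutters. These are Request, Approve, Append, Bind finality, and Accept. The composite of a full protocol for Pay, Mint, Redeem, or Swap becomes, at the next quiescent point, exactly the atomic specification transaction. This uses Prefix Consistency (Lemma~\ref{lemma:secure-coins-cva-prefix}) to equate custodian copies with $L_s$, Conservation (Lemma~\ref{lemma:secure-coins-cva-conservation}), and Finality Soundness (Lemma~\ref{lemma:secure-coins-cva-finality-soundness}) so that the credited payment was approved against unspent coins. The resulting step is therefore enabled in the specification.

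Correctness of the image run has two parts. Safety follows from the above. Liveness follows from Payment Liveness (Theorem~\ref{theorem:secure-coins-cva-payment-liveness}): any specification class enabled forever in the image corresponds to a request that the CVA run, being live, eventually finalises.

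The main obstacle is the reverse inclusion for a sovereign reset. A recovery may adopt a longest copy that contains blocks approved after the last quiescent point and not yet final. It may also discard them, and a payer's $B_q$ was already debited at request. I would handle this by choosing the comparison point carefully: the fault-plus-recovery is a stutter relative to the last quiescent configuration. In-flight requests whose blocks are dropped are then accounted as payments never taken. This is consistent with the conservative-account discussion in the Finality subsection. The residual debit is reconciled at the next quiescence via Holder Recovery, which rereads $B_q$ from the recovered log, so derived holdings and final blocks agree.
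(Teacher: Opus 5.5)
Your case split (holder, custodian and sovereign resets), with Exact Recovery for the sovereign, the sovereign's authoritative log for the other two, and the non-fault protocol steps realising spec transactions via the Correspondence and Log Validity, is the paper's own argument. The trouble is in what you add to the sovereign case.

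Your ``reverse inclusion'' claim is false. A block approved after the last quiescent configuration can reach one custodian in $G$, registered in the maximal epoch, without reaching a supermajority. The longest copy then contains it, so the adopted log holds a block absent at that quiescent point, as you concede at the end. Nor is dropping a non-final block invisible to $\alpha$, which reads $B_q$. Request pay debited the payer before the fault, so the next quiescent configuration has fewer $s$-coins in holdings than $L_s$ records as minted. Conservation of Money (Lemma~\ref{lemma:secure-coins-cva-conservation}), which you cite, is stated only for runs with no state-loss fault, and this is exactly where it stops holding.

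Your repair, Holder Recovery rereading $B_q$ from the recovered log, is not a step the protocol takes here. Holder Recovery is run only by a holder that has itself suffered a state loss, and the Finality subsection says explicitly that the payer of a dropped non-final payment loses those coins. Symmetrically, nothing in the recovery procedure as specified credits the payee of an adopted but previously non-final block: Bind finality acts only on acknowledgements in the current epoch, so no $\mathsf{final}$ is ever sent for it. So your argument does not establish that the fault and its recovery are a stutter under $\alpha$.

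The paper's proof does not attempt to restore $B_q$. It reads the stutter, as the statement does, off the derived holdings and the final blocks: Exact Recovery keeps every final block, so the derived holdings agree on every committed coin. It then dismisses the dropped payment as relied upon by no payee. To follow that route, drop both the reverse-inclusion claim and the Holder-Recovery reconciliation.

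Similarly, in the custodian case, ``realigned when the sovereign next installs its log'' presupposes a sovereign recovery that a custodian-only reset does not trigger. The paper instead simply appeals to repair from the sovereign's log under the availability assumption.
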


\ThmNoDoubleSpend*

\ThmPaymentLiveness*

\begin{restatable}[Grassroots]{theorem}{ThmCoinsGrassroots}\label{theorem:secure-coins-cva-grassroots}
The secure coins and bonds CVA platform is grassroots.
\end{restatable}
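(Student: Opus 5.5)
The plan is to lean on Theorem~\ref{thm:cva-grassroots}, which says every CVA protocol is grassroots. The work is then to check that the secure coins and bonds CVA platform really is a CVA protocol in the sense of Definition~\ref{def:cva-transactions}. The paper's remark at the start of the section, that the platform is grassroots by construction, already points this way.

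Concretely, I would go through the transactions of Definition~\ref{definition:secure-coins-cva-transactions}: Form currency, Become custodian, Request pay, Approve, Append and acknowledge, Bind finality, Accept payment, and the analogous Mint, Redeem and Swap variants. For each one I would confirm four things. It is unary at a single agent. Its guard is $\emptyset$ or that agent. Its precondition reads only $\mathit{known}_p$, $i_p$ and $a_p$. And it changes only $a_p$, plus adding messages with sender $p$ and a recipient in $\mathit{known}_p$.

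The recipient condition needs an argument. Recipients are always friends of the sovereign: custodians lie in $S_p\subseteq F_p$, and holders befriend the sovereign before transacting. By Addressability (Theorem~\ref{theorem:gsg-cva-channel-validity}), friends are known peers. None of these transactions consumes from inboxes. The secure social graph layer, with its befriend, unfriend, checkpoint and Replace transactions, has the same form, so the combined platform is a CVA protocol.

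The main obstacle is recovery, which the paper describes as three exchanges conducted out of band rather than as platform transactions. I would argue that these exchanges are not modelled as cargo. Their outcome, resetting $\widehat\pi_c$, $\widehat e_c$ and $\widehat L_c$ and installing $L_p$ and $e_p$, can be treated as follows:
\begin{itemize}
\item either as a sequence of unary, person-guarded platform updates at each custodian and at the sovereign, triggered by that person's volition;
\item or as faults in $F$, which lie outside the protocol's transitions and so do not affect the grassroots property of the fault-free protocol.
\end{itemize}
With either reading the transition set has the CVA shape, and Theorem~\ref{thm:cva-grassroots} applies.

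If a direct check were still wanted, a backup argument goes through the two halves of the grassroots property:
\begin{itemize}
\item \emph{Obliviousness}: CVA cross-group steps are Discover, which is guarded, and Communicate, which needs a message to an agent in $\mathit{known}_p$. In an interleaving of two runs on disjoint groups, that agent lies in the sender's own group by Lemmas~\ref{lem:known-containment} and~\ref{lem:outbox-containment}.
\item \emph{Interactivity}: a Discover between the two groups followed by Communicate gives a correct run of the union that is not an interleaving of two independent runs.
\end{itemize}
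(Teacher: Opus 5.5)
Your proposal is correct and is essentially the paper's proof. You check that every transaction of Definition~\ref{definition:secure-coins-cva-transactions} is unary at one agent, guarded by $\emptyset$ or that agent, changes only that agent's platform state and sends only to known peers (the paper likewise relies on custodians and holders being friends of the sovereign), and then you invoke Theorem~\ref{thm:cva-grassroots}. Your treatment of the out-of-band recovery exchanges and your backup argument go beyond the paper: its proof does not address recovery at all, since it treats those exchanges as out-of-band and not modelled as platform transactions, and your backup merely re-derives Theorem~\ref{thm:cva-grassroots}. In that backup, a single cross-group Discover already suffices for interactivity; the trailing Communicate would first need a platform step to fill the outbox, so it can be dropped.
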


%% file: sections/proofs-secure-bonds.tex
\ThmLogConsistency*
\begin{proof}
Initially $L_p^r = L_p = \varepsilon$ for every $r \in S_p$.  Every transaction involving $p$-coins atomically appends the same entry to $L_p$ and to $L_p^r$ for all $r \in S_p$.  No transaction modifies $L_p$ or $L_p^r$ independently.
\end{proof}

\ThmConservation*
\begin{proof}
Mint adds $p$-coins to $B_p$ and records the mint in $L_p$.  Pay, Redeem, and Swap transfer coins between agents without creating or destroying them, and record the transfer in the relevant sovereign's log.  No transaction creates coins without a corresponding log entry or destroys coins without recording the transfer.
\end{proof}

\ThmBondRecovery*
\begin{proof}
$F_{p'} = F_p$ follows from Replace (Definition~\ref{definition:secure-gsg-transactions}).  $S_{p'} = S_p$ follows from the inheritance of the custodian set under Replace.  $L_p^r = L_p$ follows from Lemma~\ref{lemma:log-consistency}.
\end{proof}

\ThmBondsGrassroots*
\begin{proof}
\emph{Obliviousness:}  Let $P, P' \subset \Pi$ be disjoint and nonempty.  By the obliviousness argument of Theorem~\ref{theorem:secure-gsg-grassroots}, no social graph transaction with participants spanning both $P$ and $P'$ is ever enabled in any interleaving.  For the coin transactions: Form currency$(p,S)$ is guarded by $p$, with participants $\{p\}\cup S\subseteq\{p\}\cup F_p$, which lie in a single group in any interleaving of disjoint runs; Mint is unary.  Pay$(q,r,x)$ in $s$-coins requires $x \subseteq B_q$, where $x$ consists of $s$-coins; in an interleaving, an agent in $P$ never acquires coins issued by agents in $P'$ (as this would require a prior Swap spanning both groups), so no Pay or Redeem with participants spanning both groups is enabled.  Swap$(p,q,x,y)$ with $p \in P$, $q \in P'$ requires $x \subseteq B_p$ and $y \subseteq B_q$; by the same argument, $p$ holds no $P'$-coins and $q$ holds no $P$-coins, so no swap with nonempty exchange spanning both groups is enabled.

\emph{Interactivity:}  Let $p \in P$, $q \in P'$.  A run in which $p$ and $q$ befriend and then execute a voluntary swap---$p$ mints $p$-coins, $q$ mints $q$-coins, and they exchange---changes the local states of both $p$ and $q$, producing a configuration not reachable by either group independently.  Hence the protocol is interactive, and by the interactivity criterion of~\cite{lewis2026volitional}, grassroots.
\end{proof}

%% file: sections/proofs-secure-coins-cva.tex
\ThmPrefixConsistency*
\begin{proof}
By induction on the run.  Both are initially empty, and a custodian registering the current epoch aligns its copy to a prefix of $L_s$.  Approve appends one block, in epoch $e_s$, to the end of $L_s$ and changes no copy.  Append and acknowledge appends to $\widehat L_c[s]$ only the block at position $n=|\widehat L_c[s]|+1$ in the registered epoch $\widehat e_c[s]=e_s$, and that block is the one Approve placed at position $n$ of $L_s$ and disseminated; so the appended block matches $L_s$ at that position, extending the prefix by one.  Within an epoch no transaction shortens or rewrites either sequence.  Hence each remains a prefix of $L_s$, and both are non-decreasing.
\end{proof}

\ThmLogValidity*
\begin{proof}
By induction on the length of $L_s$.  The empty log is valid.  A block is appended to $L_s$ only by Approve, whose precondition requires the recorded transaction's coins to be unspent in $L_s$---held, in the holdings derived from the current $L_s$, by the agent the transaction debits.  By the induction hypothesis $L_s$ is valid, so its derived holdings are those of the specification state its records produce; the recorded transaction is therefore enabled in that state, and the extended log is valid.  Mint, Redeem, and Swap append only via the same Approve discipline, each with its own holding check, so the same argument applies.
\end{proof}

\ThmCoinsConservation*
\begin{proof}
An $s$-coin enters a holding for the first time only through Mint, which adds the minted coins to $B_s$ upon Approve and records $\mathsf{mint}(k,t)$ in $L_s$; no other transaction creates a coin, and none destroys one.  Thereafter a coin leaves a holding only at the requesting step of Pay, Redeem, or Swap, which debits it from the requester, and enters one only at the finality step of the same transaction, Accept payment or its counterpart, which credits the same coin to the counterparty.  Hence at every configuration the $s$-coins across all holdings are those recorded as minted in $L_s$, less the coins of requests that have been debited but not yet credited.

Let $c$ be a quiescent configuration of a run with no state-loss fault.  We show no such request exists at $c$.  By the argument of Correspondence at Quiescence (Theorem~\ref{theorem:secure-coins-cva-correspondence}), which relies only on Prefix Consistency and Log Validity, every $\mathsf{request}$ has been delivered and processed by Approve, every approved block has been appended by every custodian and is final, and every $\mathsf{final}$ has been accepted.  It remains to rule out a request that Approve dropped.  In a run with no state loss, a holder's $B_q$ is credited only by the finality of a block of $L_s$ and debited only by $q$'s own requests, so $B_q$ agrees with the holdings derived from $L_s$ except for $q$'s pending requests, whose coins are absent from $B_q$; the coins of a request are therefore unspent in $L_s$ when Approve processes it, and it is approved.  Every debit at $c$ thus has a matching credit, and the holdings of $s$-coins are exactly those minted in $L_s$.
\end{proof}

\ThmFinalitySoundness*
\begin{proof}
$s$ emits $\mathsf{final}(e,n,\mathsf{pay}(q,r,x))$ only from Bind finality, with $e=e_s$, and only when $\mathsf{block}(e,n,\mathsf{pay}(q,r,x))$ is already in $L_s$; that block was placed by Approve, whose precondition requires $x$ unspent in $L_s$ at the time of approval.
\end{proof}

\ThmCorrespondenceQ*
\begin{proof}
At quiescence no reactive transaction is enabled, so every emitted message has been delivered and processed; otherwise the corresponding communicate, or the reactive transaction consuming it, would be enabled.  Hence: no $\mathsf{request}$ is in flight or pending, so no payment is mid-approval; every $\mathsf{block}(e,n,\tau)$ approved has reached every custodian and been appended (every custodian having registered the current epoch, Append and acknowledge is enabled while an undelivered or unappended block remains), so $\widehat L_c[s]=L_s$ for every $c\in S_s$, whence by Prefix Consistency (Lemma~\ref{lemma:secure-coins-cva-prefix}) every block of $L_s$ is held by all of $S_s$ and is therefore final; and every $\mathsf{final}(e,n,\cdot)$ has been accepted (Accept payment is enabled while an unprocessed $\mathsf{final}$ remains), so every finalised credit has been applied to the payee's $B_r$ and every committed debit was applied at request.  Thus $B_p$ equals the holdings derived from the logs.  By Log Validity (Lemma~\ref{lemma:secure-coins-cva-log-validity}) each log is a valid sequence of transactions, hence a run of the specification that produces exactly these holdings; so $\alpha$ is the state reached by that run and is a reachable specification state.
\end{proof}

\ThmExactRecovery*
\begin{proof}
By induction on the recoveries of $p$'s currency.  Let $\mathsf{block}(e_0,n,\cdot)$ be final before the recovery, witnessed by a supermajority $G'\subseteq S_p$ registered in $e_0$ whose copies held the prefix through $n$ at the finalising step (Definition~\ref{definition:secure-coins-cva-finality}).  A copy changes thereafter only by appending, or by aligning at a later registration to the log adopted by that recovery, which contains the block by the induction hypothesis; so every $c\in G'$ still holds the block at recovery time.  As $|G|\ge\lceil\sigma^S|S_p|\rceil$ and $|G'|\ge\lceil\sigma^S|S_p|\rceil$ with $\sigma^S>1/2$, the two supermajorities intersect; pick $c\in G\cap G'$, whose collected copy contains the block.  Reserved epochs are distinct and increase along the run: a recovery reserves $e'$ only once a supermajority have promised it, promises never decrease, and every later recovery queries a supermajority that intersects the promising one, hears a promise of at least $e'$, and reserves above it.  Hence every copy registered in an epoch was installed by the one recovery that reserved it and extended only by blocks that recovery's sovereign approved in that epoch.  Let $e$ be the maximal epoch among the collected copies.  Every copy registered in $e$ is thus a prefix of one log, the log adopted at the reservation of $e$ extended by the blocks of $e$; this log contains the block, either because $c$ is registered in $e$, or because $e>\widehat e_c[p]\ge e_0$, so $e$ was reserved by a recovery later than the one that reserved $e_0$, hence after the fault that ended the sovereign's stint in $e_0$ and so after the block became final, and that recovery's adopted log contains the block by the induction hypothesis.  The longest copy in $e$ therefore contains the block.
\end{proof}

\ThmCoinsResilient*
\begin{proof}
The coins transactions realise the spec transactions at quiescence under $\alpha$, where each quiescent log is valid (Lemma~\ref{lemma:secure-coins-cva-log-validity}) and the derived holdings are the specification's.  A state-loss fault at a sovereign $p$ empties $L_p$; under the custodian-availability assumption (Section~\ref{section:secure-coins-cva-recovery}) a supermajority retains its copies, so by Exact Recovery (Theorem~\ref{theorem:secure-coins-cva-exact-recovery}) the recovered log contains every final block, and the holdings derived from it agree with the pre-fault holdings on every committed coin.  A custodian or holder loss is repaired likewise from the sovereign's log, which is authoritative for holdings (Definition~\ref{definition:valid-log}).  The fault and its recovery thus leave $\alpha$ unchanged at quiescence---a stutter---and a non-final payment that recovery may drop was relied upon by no payee (Section~\ref{section:secure-coins-cva-finality}).  Hence the quiescent image of a live run over the coins transactions and $F$ uses only spec transactions and is a correct run of the specification; that is, $\alpha$ is $F$-resilient.
\end{proof}

\ThmNoDoubleSpend*
\begin{proof}
By Exact Recovery (Theorem~\ref{theorem:secure-coins-cva-exact-recovery}) the recovered log contains every final block, so the holdings derived from the recovered $L_p$ (Definition~\ref{definition:valid-log}) record every coin spent in a final payment as spent.  Approve appends a payment of $x$ only when $x$ is unspent in $L_p$ (Definition~\ref{definition:secure-coins-cva-transactions}); a coin spent in a final payment is thus never re-approved.  A payment that was not final may be absent from the recovered log, but no payee relied on it (Section~\ref{section:secure-coins-cva-finality}), so no relied-upon payment is reversed.
\end{proof}

\ThmPaymentLiveness*
\begin{proof}
By eventual delivery in a correct run (Section~\ref{section:dts}) the $\mathsf{request}$ reaches $s$, which executes Approve, appends $\mathsf{block}(e_s,n,\mathsf{pay}(q,r,x))$, and disseminates it to every $c\in S_s$.  Fix a custodian $c$ in the live supermajority, registered in $e_s$.  By eventual delivery $c$ receives every block at positions $1,\dots,n$; Append and acknowledge fires in order as each predecessor arrives, so the frontier $|\widehat L_c[s]|$ reaches $n$ and $c$ sends $\mathsf{ack}(e_s,m)$ for $m$ up to $n$.  Each such ack reaches $s$, which sets $\ell_s(c)$ accordingly.  Once every $c$ in the live supermajority has done so, $|\{c:\ell_s(c)\ge(e_s,n)\}|$ meets the threshold, block $(e_s,n)$ is final, and Bind finality emits $\mathsf{final}(e_s,n,\mathsf{pay}(q,r,x))$ to $q$ and $r$, delivered by eventual delivery.
\end{proof}

\ThmCoinsGrassroots*
\begin{proof}
The platform transactions of Definition~\ref{definition:secure-coins-cva-transactions} are unary, each at a single agent, with guard $\{q\}$ or $\{p\}$ (Request pay, Form currency, and the requesting steps of Mint, Redeem, Swap) or empty (the reactive Approve, Append and acknowledge, Bind finality, Accept payment, Become custodian), and each modifies only its own agent's platform state and emits messages to known peers.  They are therefore CVA platform transactions in the sense of Definition~\ref{def:cva-transactions}, and by CVA Grassroots (Theorem~\ref{thm:cva-grassroots}) the platform is grassroots.
\end{proof}

%% file: sections/appendix-proofs.tex
\section{Proofs}\label{appendix:proofs}

\input{sections/proofs-social-graph}

\input{sections/proofs-cva}

\input{sections/proofs-gsg-cva-properties}

\input{sections/proofs-secure-gsg-cva-properties}

%% file: sections/proofs-social-graph.tex
\ThmGsgMutuality*
\begin{proof}
By induction on run length.  Initially $F_p=\emptyset$ for all $p$, so the biconditional holds vacuously.  Befriend$(p,q)$ adds $q$ to $F_p$ and $p$ to $F_q$ symmetrically; Unfriend$(p,q)$ removes them symmetrically; unrelated pairs are unchanged.
\end{proof}

\ThmGsgGrassroots*
\begin{proof}
\emph{Obliviousness.}  Both transactions have nonempty guards: Befriend by $\{p,q\}$ and Unfriend by $\{p\}$ or $\{q\}$.  By Guarded Obliviousness (Section~\ref{section:dts}), the social graph is oblivious.
\emph{Interactivity.}  For disjoint nonempty $P,P'\subset\Pi$, a correct run of $\calF(P\cup P')$ in which some $p\in P$ and $q\in P'$ will Befriend and the transaction is taken is no interleaving of correct runs of $\calF(P)$ and $\calF(P')$, since the Befriend step changes the local states of both $p$ and $q$.
\end{proof}

\ThmSsgFofExact*
\begin{proof}
By induction on run length.  Initially every friend set is empty, so the claim holds vacuously.  We check each transaction that writes a friend set or a friend-of-friend map.

\emph{Befriend$(p,q)$.}  It sets $N'_p(q)=F'_q$ and $N'_q(p)=F'_p$ for the new edge, and for every prior friend $r\in F_p$ sets $N'_r(p)=F'_p$ and likewise for $F_q$, so every record of the two changed sets $F'_p,F'_q$ is updated to match.  Records of unchanged friend sets are untouched and remain exact by hypothesis.

\emph{Unfriend$(p,q)$.}  Symmetric: $p$ and $q$ are removed from each other's maps, and every remaining friend's record of the two changed sets is updated to $F'_p$, $F'_q$.

\emph{Replace$(p,p')$.}  $p'$ inherits $N_p$, exact by hypothesis; each friend $r$ replaces its record of $p$ by a record of $p'$ at the same value $N_r(p)=F_p=F_{p'}$; and every $s$ recording a changed $F_r$ has $N_s(r)$ refreshed to $F'_r$.  All changed sets are thus matched.

No state loss occurs by hypothesis, and Recover is enabled only after a state loss; so these are the only transactions to consider, and exactness is preserved.
\end{proof}

\ThmSecureGsgGrassroots*
\begin{proof}
\emph{Obliviousness.}  Befriend, Unfriend, and Replace have nonempty guards.  StateLoss and Recover are unguarded but never cross-group-enabled in an interleaving of disjoint runs: in a run of $\calF(P)$ every friend set lies within $P$, so StateLoss's participant $\{p\}\subseteq P$ and Recover's participants $\{p,q\}$ with $q$ a friend (custodian) of $p$ lie in $P$.  Hence every cross-group transaction has a nonempty guard, and by Guarded Obliviousness (Section~\ref{section:dts}) the secure social graph is oblivious.
\emph{Interactivity.}  Inherited from the social graph (Theorem~\ref{theorem:gsg-grassroots}): Befriend changes both participants' states.
\end{proof}

\ThmSsgImplements*
\begin{proof}
\emph{$\sigma$ is an implementation.}  At the initial secure configuration every $F_p=\emptyset$, so $\sigma(c)_p=\emptyset$ for every $p$; this is the initial social-graph configuration, as required.

\emph{Each secure transaction maps to a social-graph transition or a stutter.}  Befriend$(p,q)$ adds the mutual pair to $F_p,F_q$; under $\sigma$ this adds $q$ to $\sigma(c)_p$ and $p$ to $\sigma(c)_q$ and changes no other projected set, the social-graph Befriend$(p,q)$.  Unfriend$(p,q)$ removes the pair from both $F_p$ and $F_q$, so $q$ leaves $\sigma(c)_p$ and $p$ leaves $\sigma(c)_q$, the social-graph Unfriend$(p,q)$.  Recover$(p,q)$, on a run with no state-loss fault, sets $F'_p=N_q(p)=F_p$ by Friend-of-Friend Exactness (Lemma~\ref{lem:ssg-fof-exact}), so $\sigma(c)_p$ is unchanged---a stutter.  Replace$(p,p')$, after relabelling $p'$ to $p$, leaves every projected friend set fixed: $F'_{p'}=F_p$ and each friend $r$ has $p'$ in place of $p$, which under the relabelling is $p$ again---a stutter.

\emph{Correct.}  Let $r'$ be a correct run of the secure social graph using only its transactions (Definition~\ref{definition:secure-gsg-transactions}).  Its non-stutter image $\sigma(r')$ consists of social-graph Befriend and Unfriend transitions in order, as Recover and Replace project to stutters.  Consecutive distinct configurations of $\sigma(r')$ are therefore social-graph transitions, so $\sigma(r')$ is a social-graph run.  For liveness: a social-graph class enabled throughout a suffix of $\sigma(r')$ is a Befriend or Unfriend class whose precondition is a condition on friend sets alone; the corresponding secure class is enabled throughout the corresponding suffix of $r'$, since the projected precondition holds exactly when the secure one does and the guards coincide.  As $r'$ is live, that class is taken in $r'$, hence in $\sigma(r')$.  So $\sigma(r')$ is live, hence correct.

\emph{Complete.}  Let $\rho=c_0\to c_1\to\cdots$ be a correct run of the social graph.  We construct a secure run $\rho'=c_0'\to c_1'\to\cdots$ with $\sigma(c_i')=c_i$ and $F_p(c_i')=F_p(c_i)$ for every $i$ and $p$, by induction on $i$.  For $i=0$, $c_0'$ is the initial secure configuration, and $\sigma(c_0')=c_0$ as shown above.  For the step $c_i\to c_{i+1}$: if it is a change-volition transaction of some $p$, let $c_{i+1}'$ apply the same change to $p$'s volitional state, which $\sigma$ preserves.  If it is Befriend$(p,q)$, then $q\notin F_p$ and $p,q$ will it at $c_i$, hence at $c_i'$, where friend sets and volitional states coincide; so the secure Befriend$(p,q)$ is enabled at $c_i'$, and we take it.  Its friend-set effect is that of the social-graph Befriend, so $F_r(c_{i+1}')=F_r(c_{i+1})$ for every $r$, and its friend-of-friend updates are determined.  Unfriend is the same.  As Befriend and Unfriend update both endpoints, friend sets along $\rho'$ are mutual, so $\sigma(c_{i+1}')_r=F_r(c_{i+1}')=F_r(c_{i+1})$ for every $r$, that is, $\sigma(c_{i+1}')=c_{i+1}$.  No step of $\rho'$ is a stutter under $\sigma$, each projecting to the corresponding step of $\rho$; hence $\sigma(\rho')=\rho$.

It remains to show that $\rho'$ is correct.  It is safe by construction.  For liveness, a Befriend or Unfriend class enabled throughout a suffix of $\rho'$ is enabled throughout the corresponding suffix of $\rho$, as preconditions, volitional states, and guards coincide; $\rho$ being live, it is taken in $\rho$, and by construction in $\rho'$.  Replace is never enabled, since the volitional states of $\rho'$ are those of $\rho$, which has no Replace class to will.  Recover$(p,q)$ in a configuration with no state loss has $N_q(p)=F_p$ and $N_p(r)=F_r$ for every $r\in N_q(p)$ by Friend-of-Friend Exactness (Lemma~\ref{lem:ssg-fof-exact}), so its effect is the identity; a volitional machine transaction requires $c\ne c'$ (Definition~\ref{definition:vmat}), so it induces no transition of $\rho'$ and imposes no liveness obligation.  Hence $\rho'$ is live, so correct, and $\rho=\sigma(\rho')$ as Definition~\ref{def:implementation-properties} requires.
\end{proof}

\ThmSsgResilient*
\begin{proof}
Let $r'\subseteq T'\cup F$ be a live friendship-preserving run, $F$ the state-loss faults (Definition~\ref{def:state-loss-fault}).  The secure social graph transactions map to social-graph transitions or stutters as in the proof of Theorem~\ref{thm:ssg-implements-sg}.  A state-loss fault at $p$ sets $F_p:=\emptyset$ but changes no other agent's friend set, so for each friend $r$ of $p$ the membership $p\in F_r$ is untouched and $p\in\sigma(c)_r$ still holds; hence $\sigma(c)_p=\{r:p\in F_r\}$ is unchanged and every $\sigma(c)_r$ is unchanged---a stutter.  Recover$(p,q)$ on a friendship-preserving run sets $F'_p=N_q(p)=F_p^-$, the friend set held just before the loss; since a state-lost agent takes no step until recovery and Befriend and Unfriend update both endpoints symmetrically, $F_p^-=\{r:p\in F_r\}$ and $\sigma(c)_p$ is unchanged---a stutter.  Thus the non-stutter image $\sigma(r')$ consists of social-graph Befriend and Unfriend transitions in order, so it is a social-graph run, and the liveness argument of Theorem~\ref{thm:ssg-implements-sg} applies verbatim, the faults and the Recover transitions being stutters.  Hence $\sigma(r')$ is a correct run of the social graph.
\end{proof}

%% file: sections/proofs-cva.tex
\ThmKnownContainment*
\begin{proof}
By induction on run length.  Initially $\mathit{known}_p = \emptyset \subseteq P$.  Only discover modifies $\mathit{known}$, and a discover transaction with participants $\{p,q\}$ lies in $R(P)$ only if $\{p,q\}\subseteq P$, so it adds only members of $P$.
\end{proof}

\ThmOutboxContainment*
\begin{proof}
Only platform transactions write to $o_p$, and by outbox well-formedness any message added to $o_p$ has recipient in $\mathit{known}_p$.  By Lemma~\ref{lem:known-containment}, $\mathit{known}_p\subseteq P$.
\end{proof}

\ThmCvaGrassroots*
\begin{proof}
\emph{Obliviousness.}  Discover is guarded by $\{p\}$, so by Guarded Obliviousness~\cite{lewis2026volitional} no cross-group discover is enabled in any interleaving of independent runs.  Platform transactions and Advance-date are unary, so they pose no cross-group case.  The remaining case is a cross-group communicate with $p\in P$, $q\in P'$: its precondition requires $\mathsf{message}(p,q,\cdot)\in o_p$, but by Lemma~\ref{lem:outbox-containment} applied to a $P$-run every message in $o_p$ has recipient in $P$, so the precondition fails.

\emph{Interactivity.}  For any disjoint nonempty $P, P'\subset \Pi$, pick $p\in P$ and $q\in P'$.  The transaction $\mathit{discover}(p,q)$ is in $R(P\cup P')$ and, when $p$ wills it, is enabled at $c0(P\cup P')$.  A correct run of $\calF(P\cup P')$ whose first step carries out $\mathit{discover}(p,q)$ exists, and is no interleaving of correct runs of $\calF(P)$ and $\calF(P')$ since its first step's participants span both groups.

Obliviousness and interactivity together yield grassroots~\cite{lewis2026volitional}.
\end{proof}

%% file: sections/proofs-gsg-cva-properties.tex
\ThmKnowledgeMono*
\begin{proof}
By induction on the length of the run.  The claim concerns the domains $\mathrm{dom}(\mathrm{FMap}_p)$ and $\mathrm{dom}(\mathrm{FoFMap}_p[q])$, so it suffices to show that no transaction removes a key from either.  Every transaction that writes platform state --- Accept offer, Resolve simultaneous offer, Integrate accept, End friendship, Integrate unfriend (Sections~\ref{section:gsg-cva-befriend},~\ref{section:gsg-cva-unfriend}), and Integrate stream update (Section~\ref{section:gsg-cva-stream}) --- either leaves $\mathrm{FMap}_p$ and $\mathrm{FoFMap}_p$ unchanged, or assigns $\mathrm{FMap}_p[q]:=x$ for some $q$, or assigns $\mathrm{FoFMap}_p[q][f]:=x$ for some $q,f$.  An assignment adds the key when absent and retains it when present; none deletes a key.  Hence both domains are non-decreasing along the run.
\end{proof}

\ThmFriendshipMono*
\begin{proof}
By induction on the run, inspecting every transaction that writes an epoch.  For the direct epoch $\mathrm{epoch}_p(q)$: Accept offer and Resolve simultaneous offer set $\mathrm{FMap}_q[p]:=x$ under the precondition $\mathrm{epoch}_q(p)<x$; Integrate accept sets $\mathrm{FMap}_p[q]:=x$ under $\mathrm{epoch}_p(q)<x$; End friendship sets $\mathrm{FMap}_p[q]:=\mathrm{epoch}_p(q)+1$; and Integrate unfriend sets $\mathrm{FMap}_q[p]:=x$ under $\mathrm{epoch}_q(p)<x$.  For the observed epoch $\mathrm{epoch}_p(q,f)$: Integrate stream update sets it under $\mathrm{epoch}_r(q,f)<x$.  In every case the value written strictly exceeds the value stored, and no other transaction writes an epoch; hence both epochs are non-decreasing.
\end{proof}

\ThmMessageBounds*
\begin{proof}
Each bound holds at the instant the message is emitted and persists thereafter by Friendship Monotonicity (Lemma~\ref{lemma:gsg-friendship-mono}).
(1) Offer friendship at $B$ sends $\mathsf{friend\_request}(x)$ with $x=\mathrm{epoch}_B(A)+1$ and does not modify $\mathrm{FMap}_B$, so $\mathrm{epoch}_B(A)=x-1$ when sent.
(2) Accept offer and Resolve simultaneous offer at $B$ set $\mathrm{FMap}_B[A]:=x$ before sending $\mathsf{accept}(x)$, so $\mathrm{epoch}_B(A)=x$ when sent.
(3) End friendship at $B$ sets $\mathrm{epoch}_B(A):=x$ before sending $\mathsf{unfriend}(x)$, so $\mathrm{epoch}_B(A)=x$ when sent.
(4) A $\mathsf{stream\_update}(f,x)$ is emitted either as a \emph{per-edge update} or as a \emph{snapshot entry}. A per-edge update is emitted by a transaction (Accept offer, Resolve simultaneous offer, Integrate accept, End friendship, or Integrate unfriend) that has, in the same step, just set $\mathrm{epoch}_B(f):=x$. A snapshot entry is emitted by an activating transaction with $x:=\mathrm{epoch}_B(f)$ for an existing friend $f$. In both cases $\mathrm{epoch}_B(f)=x$ when sent.
\end{proof}

\ThmObserverBound*
\begin{proof}
By induction on the run.  Initially $\mathrm{epoch}_A(p,q)=0$ and the bound holds.  The only transaction that writes $\mathrm{FoFMap}_A$ is Integrate stream update, which sets $\mathrm{epoch}_A(p,q):=x$ either on processing $\mathsf{stream\_update}(q,x)$ from $p$, in which case $x\le\mathrm{epoch}_p(q)$ by Message Bounds~(4), or, through its symmetric assignment, on processing $\mathsf{stream\_update}(p,x)$ from $q$, in which case $x\le\mathrm{epoch}_q(p)$ by Message Bounds~(4).  Either way the new value is at most $\max(\mathrm{epoch}_p(q),\mathrm{epoch}_q(p))$.  All other transactions leave $\mathrm{epoch}_A(p,q)$ unchanged, and by Friendship Monotonicity the right-hand side never decreases, so the bound is preserved.
\end{proof}

\ThmFofMapSymmetry*
\begin{proof}
By induction on the run.  In the initial state $\mathrm{FoFMap}_p=\emptyset$, so $\mathrm{epoch}_p(q,f)=\mathrm{epoch}_p(f,q)=0$.  Only Integrate stream update (Section~\ref{section:gsg-cva-stream}) writes $\mathrm{FoFMap}$, setting $\mathrm{FoFMap}_r[q][f]:=x$ and $\mathrm{FoFMap}_r[f][q]:=x$ in one step.  It assigns the two mirror entries the same value simultaneously; entries it does not touch are unchanged.  By the induction hypothesis the mirror entries were equal before the step, so they remain equal after it.  Hence $\mathrm{epoch}_p(q,f)=\mathrm{epoch}_p(f,q)$ at every reachable state.
\end{proof}

\ThmStaleInert*
\begin{proof}
Here $e$ is $\mathrm{epoch}_q(p)$ when $r=q$ and $\mathrm{epoch}_q(p,r)$ otherwise.  The transactions at $q$ that consume $m$ are, for $r=q$, Accept offer and Resolve simultaneous offer on a $\mathsf{friend\_request}$, Integrate accept on an $\mathsf{accept}$, and Integrate unfriend on an $\mathsf{unfriend}$ (Sections~\ref{section:gsg-cva-befriend},~\ref{section:gsg-cva-unfriend}); and, for $r\ne q$, Integrate stream update on a $\mathsf{stream\_update}$ (Section~\ref{section:gsg-cva-stream}).  Each requires $e<m.e$.  With $m.e\le e$ none is enabled, and by Friendship Monotonicity (Lemma~\ref{lemma:gsg-friendship-mono}) $e$ never decreases, so none becomes enabled later and $m$ is never consumed.  The only preconditions at $q$ that mention a message other than the one consumed concern $q$'s own outbox, so a message sitting in $i_q$ enables and disables nothing: $q$ takes the same transactions, with the same effects, as in the run without $m$.
\end{proof}

\ThmPostAccept*
\begin{proof}
$p$ sends $\mathsf{accept}(x)$ only from Accept offer or Resolve simultaneous offer, each fired on receiving $\mathsf{friend\_request}(x)$ from $q$, which $q$ sent from Offer friendship toward $p$. By Message Bounds~(1) (Lemma~\ref{lemma:gsg-message-bounds}), $\mathrm{epoch}_q(p)\ge x-1$ from that send onward. The only transaction that raises $\mathrm{epoch}_q(p)$ to $x$ is Integrate accept on the reply from $p$; any value above $x$ requires End friendship or a fresh Offer, both excluded on $(t_a,t]$ by hypothesis. By Friendship Monotonicity (Lemma~\ref{lemma:gsg-friendship-mono}), $\mathrm{epoch}_q(p)\in\{x-1,x\}$ at $t$.
\end{proof}

\ThmPostUnfriend*
\begin{proof}
Without loss of generality $p$ willed End friendship$(q)$ at $t_1$, setting $\mathrm{epoch}_p(q):=x+1$ and sending $\mathsf{unfriend}(x+1)$. By Friendship Monotonicity $\mathrm{epoch}_p(q)\ge x+1$ thereafter, and advancing past $x+1$ needs a fresh Offer (excluded), so $\mathrm{epoch}_p(q)=x+1$. For $q$: starting from $x$ at $t_0$, the only writes available without a fresh Offer are $q$ willing End friendship toward $p$ or $q$ executing Integrate unfriend on $\mathsf{unfriend}(x+1)$, each setting $\mathrm{epoch}_q(p):=x+1$. Hence $\mathrm{epoch}_q(p)\in\{x,x+1\}$.
\end{proof}

\ThmFofProvenance*
\begin{proof}
By induction on the run. Initially every observed epoch is $0$, so the claim is vacuous. The only transaction writing $\mathrm{FoFMap}_p$ is Integrate stream update, which sets $\mathrm{epoch}_p(q,f):=x$ (with its mirror) on processing $\mathsf{stream\_update}(f,x)$ from $q$, or on processing $\mathsf{stream\_update}(q,x)$ from $f$ through the symmetric assignment. In the first case the message was emitted by $q$ at a time $t_2\le t_1$ when $\mathrm{epoch}_q(f)=x$, in the second by $f$ when $\mathrm{epoch}_f(q)=x$, in both cases by the emission rule underlying Message Bounds~(4) (Lemma~\ref{lemma:gsg-message-bounds}).
\end{proof}

\ThmFriendListSoundness*
\begin{proof}
That $q\in\widetilde F_p$ means $\mathrm{epoch}_p(q)$ is odd at $t_3$; let $s\le t_3$ be the last time $p$ wrote this odd value. The writing transaction is Accept offer$(q)$, Resolve simultaneous offer$(q)$, or Integrate accept on an $\mathsf{accept}$ from $q$.

\emph{Intent of $p$.} In the first two cases $p$ willed the transaction, an intent act toward $q$; set $t_1:=s$. In the third, Integrate accept completes an Offer friendship$(q)$ that $p$ willed at some $t_1<s$; take that $t_1$. In all cases $p$ willed an offer or acceptance toward $q$ at $t_1\le t_3$. Since $s$ is the last write of $\mathrm{epoch}_p(q)$ and it is odd, $p$ willed no End friendship$(q)$ in $(t_1,t_3]$ (such a transaction would advance the epoch to an even value). Hence $p$ wants to befriend $q$ at $t_1$.

\emph{Intent of $q$.} The transaction at $s$ was triggered by a message from $q$: a $\mathsf{friend\_request}$ (triggering Accept offer or Resolve simultaneous offer), sent by the willed Offer friendship$(p)$ of $q$; or an $\mathsf{accept}$ (triggering Integrate accept), sent by the willed Accept offer or Resolve simultaneous offer of $q$ toward $p$. Either is a willed offer or acceptance by $q$ toward $p$ at some $t_2\le s\le t_3$; at that instant the interval $(t_2,t_2]$ is empty, so $q$ wants to befriend $p$ at $t_2$.
\end{proof}

\ThmFofSoundness*
\begin{proof}
Let $x=\mathrm{epoch}_p(q,r)$ at $t_3$, odd. By Observed-Epoch Provenance (Lemma~\ref{lemma:gsg-cva-fof-provenance}) there is $t'\le t_3$ with $\mathrm{epoch}_q(r)=x$ or $\mathrm{epoch}_r(q)=x$; without loss of generality $\mathrm{epoch}_q(r)=x$, odd, so $r\in\widetilde F_q$ at $t'$. Friend List Soundness (Theorem~\ref{theorem:gsg-cva-friend-list-soundness}) applied to $q$ at $t'$ yields $t_1,t_2\le t'\le t_3$ with $q$ wanting to befriend $r$ at $t_1$ and $r$ wanting to befriend $q$ at $t_2$.
\end{proof}

\ThmChannelValidity*
\begin{proof}
Addressability in CVA is mutual known-set membership, which permits each side to send to the other (Section~\ref{sec:cva}). Known sets never shrink: only Discover writes them, and only by union (Definition~\ref{def:cva-transactions}). It therefore suffices to exhibit, for each direction, a handshake step that placed the peer in the set.

The value $\mathrm{epoch}_p(q)$ became odd via Accept offer, Resolve simultaneous offer, or Integrate accept. In the first two, $p$ sends $\mathsf{accept}(x)$ to $q$ in the same step; in the third, Integrate accept completes an Offer friendship$(q)$ in which $p$ had sent $\mathsf{friend\_request}$ to $q$. A platform transaction emits a message to $q$ only when $q\in\mathit{known}_p$, hence $q\in\mathit{known}_p$. Symmetrically, $q$ sent $p$ a $\mathsf{friend\_request}$ (if $q$ offered) or an $\mathsf{accept}$ (if $q$ accepted) during the handshake, hence $p\in\mathit{known}_q$. Both memberships persist by monotonicity.
\end{proof}

\ThmMutualLiveness*
\begin{proof}
By hypothesis there are witnessing offer-or-accept acts by $p$ toward $q$ and by $q$ toward $p$ at times $\le t$; and, since wanting holds at every $t'\ge t$, neither agent wills End friendship on the other at any time $\ge t$ --- an End friendship would set its own epoch even and, being its latest act toward the other, falsify wanting at the next state until a later offer, contradicting the hypothesis. Hence after $t$ no $\mathsf{unfriend}$ passes between $p$ and $q$, and no epoch between them is advanced by End friendship or Integrate unfriend.

\emph{Finitely many direct messages after $t$.} The direct messages between $p$ and $q$ are $\mathsf{friend\_request}$, $\mathsf{accept}$, and $\mathsf{unfriend}$; the last is excluded after $t$. Offer friendship emits a $\mathsf{friend\_request}$ only while the offerer is inactive, and Accept offer or Resolve simultaneous offer emits an $\mathsf{accept}$ only in response to a $\mathsf{friend\_request}$. In a correct run every such message is eventually delivered and integrated; a delivered $\mathsf{friend\_request}$ from a wanting recipient yields an $\mathsf{accept}$ whose integration makes both sides active, after which Offer friendship emits nothing further. Thus only finitely many direct messages pass between $p$ and $q$ after $t$, and there is a time $t_0>t$ by which all have been integrated and no further direct message between $p$ and $q$ is ever sent.

\emph{Both active at $t_0$.} A handshake completes, so $\mathrm{epoch}_p(q)$ and $\mathrm{epoch}_q(p)$ are odd at $t_0$; with no even-writing transition available after $t$, Friendship Monotonicity (Lemma~\ref{lemma:gsg-friendship-mono}) keeps them odd thereafter.

\emph{Agreement via Message Bounds.} Let $x=\max(\mathrm{epoch}_p(q),\mathrm{epoch}_q(p))$ at $t_0$; without loss of generality $\mathrm{epoch}_p(q)=x$. The last transaction setting $\mathrm{epoch}_p(q):=x$ is Integrate accept on an $\mathsf{accept}(x)$ from $q$, giving $\mathrm{epoch}_q(p)\ge x$ by Message Bounds~(2) (Lemma~\ref{lemma:gsg-message-bounds}); or Accept offer / Resolve simultaneous offer, which sends $\mathsf{accept}(x)$ to $q$ in the same step, a message integrated by $t_0$, so again $\mathrm{epoch}_q(p)\ge x$. As $x$ is the maximum, $\mathrm{epoch}_q(p)=x=\mathrm{epoch}_p(q)$.

Set $t^*:=t_0$. After $t^*$ no direct message between $p$ and $q$ is sent or pending and no volition between them remains, so neither epoch changes; hence $\mathrm{epoch}_p(q)=\mathrm{epoch}_q(p)=x$, odd, from $t^*$ onward.
\end{proof}

\ThmFriendshipEstablishment*
\begin{proof}
By Mutual Friendship Liveness (Lemma~\ref{lemma:gsg-cva-mutual-liveness}) there is $t^*>t$ from which onward $\mathrm{epoch}_p(q)=\mathrm{epoch}_q(p)$ is a common odd value. By the definition of $\widetilde F$ as the odd-epoch entries, $q\in\widetilde F_p$ and $p\in\widetilde F_q$ from $t^*$ onward.
\end{proof}

\ThmFofVisibility*
\begin{proof}
By Mutual Friendship Liveness (Lemma~\ref{lemma:gsg-cva-mutual-liveness}) applied to $(q,r)$, there is a time from which $\mathrm{epoch}_q(r)=\mathrm{epoch}_r(q)=e$, a common odd value, stably. By Friendship Establishment (Theorem~\ref{theorem:gsg-cva-establishment}) applied to $(p,q)$, there is a time from which $q\in\widetilde F_p$ and $p\in\widetilde F_q$ stably; in particular $p\in\mathrm{dom}(\mathrm{FMap}_q)=\mathrm{Rec}_q$.

We show $p$ eventually processes a $\mathsf{stream\_update}(r,e)$ from $q$. Consider the transaction at which $q$ set $\mathrm{epoch}_q(r):=e$. If $p\in\mathrm{Rec}_q$ at that step, $q$ emits the per-edge $\mathsf{stream\_update}(r,e)$ to $p$. Otherwise $p$ joins $\mathrm{Rec}_q$ only afterwards, and the activating transaction of the $p$--$q$ edge sends $p$ the on-join snapshot, which includes $\mathsf{stream\_update}(r,\mathrm{epoch}_q(r))$ for the existing friend $r$; since $\mathrm{epoch}_q(r)=e$ is stable by then, the value conveyed is $e$. Either way, by eventual delivery $p$ processes a $\mathsf{stream\_update}(r,e)$ from $q$ with $q\in\mathrm{dom}(\mathrm{FMap}_p)$, and Integrate stream update sets $\mathrm{epoch}_p(q,r):=e$.

Let $t^*>\max(t_1,t_2)$ be a time past all the above. From $t^*$ onward, Friendship Monotonicity (Lemma~\ref{lemma:gsg-friendship-mono}) gives $\mathrm{epoch}_p(q,r)\ge e$, and Observer Bound (Lemma~\ref{lemma:gsg-observer-bound}) gives $\mathrm{epoch}_p(q,r)\le\max(\mathrm{epoch}_q(r),\mathrm{epoch}_r(q))=e$. Hence $\mathrm{epoch}_p(q,r)=e$, odd, from $t^*$ onward.
\end{proof}

\ThmUnfriendPropagation*
\begin{proof}
Without loss of generality $p$ does not want to befriend $q$ at any time $\ge t$.

\emph{(1).} Suppose $\mathrm{epoch}_p(q)$ were odd at some $t''\ge t$, that is $q\in\widetilde F_p$. By Friend List Soundness (Theorem~\ref{theorem:gsg-cva-friend-list-soundness}), $p$ wants to befriend $q$ at some $t_1\le t''$, with no End friendship$(q)$ willed by $p$ in $(t_1,t'']$. If $t_1\ge t$ this contradicts the hypothesis directly; if $t_1<t$ then, since $t\in(t_1,t'']$, $p$ wants to befriend $q$ at $t$ (witness $t_1$), again a contradiction. Hence $\mathrm{epoch}_p(q)$ is even for all $t''\ge t$. Applied to $\mathrm{epoch}_q(p)$, the same argument (an odd value yields, by Friend List Soundness at $q$, a time $\le t''$ at which $p$ wants to befriend $q$, with the same contradiction) shows $\mathrm{epoch}_q(p)$ is even for all $t''\ge t$. Thus $q\notin\widetilde F_p$ and $p\notin\widetilde F_q$ from $t$ onward.

\emph{(2).} If $\mathrm{epoch}_r(p,q)$ is never odd from $t$ onward, the claim holds. Otherwise, by Observed-Epoch Provenance (Lemma~\ref{lemma:gsg-cva-fof-provenance}) some direct side of the $p$--$q$ edge was once odd, so $p$ and $q$ were mutual friends at a common odd epoch which, by part~(1) and the hypothesis, was then ended and never re-established; by Post-Unfriend State (Lemma~\ref{lemma:gsg-cva-post-unfriend}) and eventual delivery of the $\mathsf{unfriend}$, $\mathrm{epoch}_p(q)=\mathrm{epoch}_q(p)=e'$ for a common even value $e'$ from some time on. Let $r$ be a sustained mutual friend of $p$ (the case of $q$ is symmetric, using the stream of $q$). By Friendship Establishment (Theorem~\ref{theorem:gsg-cva-establishment}) applied to $(p,r)$, eventually $r\in\mathrm{Rec}_p$ stably. The transaction that set $\mathrm{epoch}_p(q):=e'$ emits $\mathsf{stream\_update}(q,e')$ to $\mathrm{Rec}_p$; if $r\in\mathrm{Rec}_p$ then, $r$ receives it; otherwise $r$ joins afterwards and the on-join snapshot from $p$ carries $\mathsf{stream\_update}(q,\mathrm{epoch}_p(q))=\mathsf{stream\_update}(q,e')$. Either way, by eventual delivery $r$ processes $\mathsf{stream\_update}(q,e')$ from $p$ with $p\in\mathrm{dom}(\mathrm{FMap}_r)$, and Integrate stream update sets $\mathrm{epoch}_r(p,q):=e'$. By Friendship Monotonicity (Lemma~\ref{lemma:gsg-friendship-mono}), $\mathrm{epoch}_r(p,q)\ge e'$ thereafter; by Observer Bound (Lemma~\ref{lemma:gsg-observer-bound}), $\mathrm{epoch}_r(p,q)\le\max(\mathrm{epoch}_p(q),\mathrm{epoch}_q(p))=e'$. Hence $\mathrm{epoch}_r(p,q)=e'$, even, so $(p,q)$ does not appear in the friend-of-friend view at $r$.

Take $t^*$ to be a time past all the events above.
\end{proof}

%% file: sections/proofs-secure-gsg-cva-properties.tex
\ThmHalfMutualityCva*
\begin{proof}
By Friendship Monotonicity (Lemma~\ref{lemma:gsg-friendship-mono}),
$\mathrm{epoch}_p(q)$ is non-decreasing; let $T$ be the last transition up to $t$
that set $\mathrm{epoch}_p(q):=x$, so $\mathrm{epoch}_p(q)=x$ throughout $[T,t]$.

\emph{Replaced $q$.}  Suppose $q$ ran Replace, to $q'$, at some time $\le t$.  By
this hypothesis $q$'s cascade reaches $p$, so a
$\mathsf{new\_identity}(q,q',\cdot)$ is addressed to $p$.  Had $p$ executed
Integrate new identity on it, $\mathrm{FMap}_p[q]$ would have been removed
(renamed to $q'$), contradicting $\mathrm{epoch}_p(q)=x$ at $t$; so $p$ has not,
and the message is in transit to $p$ or lies in $i_p$---case~(3).  Henceforth
assume $q$ is not replaced by $t$; then $\mathrm{epoch}_q(p)$ is defined and
non-decreasing.

The transitions writing an odd value to $\mathrm{epoch}_p(q)$ in this fragment
are Accept offer / Resolve simultaneous offer, Integrate accept, Integrate new
identity (renaming some $o$ to $q$ at $p$), and Integrate rebind.  We case on $T$.

\emph{Accept offer / Resolve simultaneous offer.}  $T$ fires on
$\mathsf{friend\_request}(x,\cdot)$ from $q$ and adds $\mathsf{accept}(x,\cdot)$ to
$o_p$ toward $q$ in the same step.  If that $\mathsf{accept}$ is unprocessed it is
in transit with epoch $x$---case~(2).  Otherwise $q$ processed it by Integrate
accept, setting $\mathrm{epoch}_q(p):=x$.  As $\mathrm{epoch}_p(q)=x$ is odd at
$t$, $p$ integrated no $\mathsf{unfriend}$ from $q$ after $T$; so either no End
friendship between them occurred and $\mathrm{epoch}_q(p)=x$ (case~1), or one did
and Post-Unfriend State (Lemma~\ref{lemma:gsg-cva-post-unfriend}) gives
$\mathrm{epoch}_q(p)\in\{x,x+1\}$ (case~1) unless the $\mathsf{unfriend}(x+1)$ is
still in transit to $p$ (case~2).

\emph{Integrate accept.}  $T$ fires on $\mathsf{accept}(x,\cdot)$ from $q$, which
$q$ sent only after setting $\mathrm{epoch}_q(p):=x$ (Message
Bounds~(2), Lemma~\ref{lemma:gsg-message-bounds}); so $\mathrm{epoch}_q(p)\ge x$,
and the End-friendship analysis above gives case~(1) or case~(2).

\emph{Integrate new identity (rename $o\to q$ at $p$).}  $p$ renamed a replaced
friend $o$ to its new identity $q$, setting
$\mathrm{epoch}_p(q):=\mathrm{epoch}_p(o)=x$ and adding
$\mathsf{rebind}(x,\cdot,\cdot,\cdot)$ to $o_p$ toward $q$ in the same step.  If
unprocessed, that $\mathsf{rebind}$ is in transit with epoch $x$---case~(2);
otherwise $q$ (the recovered identity) processed it by Integrate rebind, setting
$\mathrm{epoch}_q(p):=x$---case~(1).

\emph{Integrate rebind.}  Here $p$ is a recovered identity; $T$ fires on
$\mathsf{rebind}(x,\cdot,\cdot,\cdot)$ from $q$, which $q$ sent (in its Integrate
new identity) only after setting $\mathrm{epoch}_q(p):=x$.  Hence
$\mathrm{epoch}_q(p)\ge x$, giving case~(1) absent further activity or case~(2)
with a later message in transit.

In every case one of (1)--(3) holds.
\end{proof}

\ThmSecureCorrespondence*
\begin{proof}
The configuration is read against the abstract friend set $\widetilde F$ (Definition~\ref{definition:secure-gsg-cva-abstract-friend-set}), each live agent's friendships as recorded in its own state, with each replaced identity relabelled to the original it replaced.

\emph{(1), (2).}  A completed Befriend handshake (Offer friendship, Accept offer or Resolve simultaneous offer, Integrate accept, with the emitted stream updates delivered) leaves $\mathrm{epoch}_p(q)=\mathrm{epoch}_q(p)$ at a common odd value: at quiescence, with neither party replaced, Half Friendship Mutuality (Lemma~\ref{lemma:secure-gsg-cva-half-mutuality}, case~1 with no message in transit) gives $\mathrm{epoch}_q(p)\in\{\mathrm{epoch}_p(q),\mathrm{epoch}_p(q)+1\}$, and the completed handshake with no intervening unfriend forces equality; each records the other active---$q\in\widetilde F_p$ and $p\in\widetilde F_q$, the effect of abstract Befriend$(p,q)$ on each side's friend set.  A completed Unfriend (End friendship, Integrate unfriend) advances both epochs to the next even value, removing $q$ from $\widetilde F_p$ and $p$ from $\widetilde F_q$---abstract Unfriend$(p,q)$.  The identity-record fields, set once and preserved, are inert to the mapping.

\emph{(3).}  By Replace Convergence (Theorem~\ref{theorem:secure-gsg-cva-replace-convergence}), on a run quiescent at $p$ before Replace the cascade renames every $r\in F_p$ from $p$ to $p'$, yields $\widetilde F_{p'}=F_p$, and leaves no $\mathrm{FMap}_r$ retaining $p$.  This is exactly abstract Replace$(p,p')$: $F'_{p'}:=F_p$, $F'_p:=\emptyset$, and each $r\in F_p$ substitutes $p'$ for $p$.  More generally, on any friendship-preserving run (Definition~\ref{def:friendship-preserving}) the cascade reaches every recoverable friend, and abstract Replace holds on each such edge.  The one unrecoverable friendship---recorded, at the fault, only by the two friends, so that no $\mathsf{stream\_update}$ carried it to another agent---is not recovered: the fresh identity $p'$ holds no record of $w$, so $w\notin\widetilde F_{p'}$, and $w$ retains a stale record of the retired key.  The specification does not require its recovery (Definition~\ref{def:friendship-preserving}; the Remark in Section~\ref{section:secure-gsg-cva-convergence}).

\emph{(4).}  After a state loss at $p$, Restore (Section~\ref{section:secure-gsg-cva-restore}) is carried entirely by Integrate checkpoint: each friend $r$ that still records $p$ keeps $p\in\mathrm{Rec}_r$ and re-broadcasts $\mathsf{checkpoint}(\mathit{IR}_r,L_r)$ with the pair $(p,\mathrm{epoch}_r(p))\in L_r$; on the first such checkpoint, $p$ installs the skeleton $\mathrm{FMap}_p[r]$ from the carried $\mathit{IR}_r$ and the direct-heal clause sets $\mathrm{epoch}_p(r):=\mathrm{epoch}_r(p)$.  A carried epoch is odd only when $r$ records the $r$--$p$ edge active, which by Message Bounds (Lemma~\ref{lemma:gsg-message-bounds}) holds only after an offer or acceptance by $p$ at that epoch; so each recovered active $r$ has $p\in F_r$ at the recovering state, and the edge is mutual once $p$ re-records it.  At quiescence every such $r$ has re-broadcast to $p$ (Section~\ref{section:secure-gsg-cva-rebroadcast}), so $\widetilde F_p$ becomes exactly the set of friendships recorded at $p$ before the state loss, the value the abstract Recover$(p,q)$ assigns from any one custodian $q$ that holds the friend-of-friend record for $p$ (Section~\ref{section:secure-gsg-abstract}).  Passive recovery is complete and needs no custodian authorisation, the asymmetry with Replace noted in Section~\ref{section:secure-gsg-cva-restore}.
\end{proof}

\ThmReplaceConvergence*
\begin{proof}
Fix $w\in F_p$, so $\mathrm{epoch}_p(w)$ is odd.

\emph{Coverage.}  When $p$ last set $\mathrm{epoch}_p(w)$ it emitted $\mathsf{stream\_update}(w,\mathrm{epoch}_p(w))$ to every $r\in\mathrm{Rec}_p=\mathrm{dom}(\mathrm{FMap}_p)$.  By quiescence at $p$ before Replace, every custodian $c\in K_p$ has integrated it, so $w\in\mathrm{dom}(\mathrm{FoFMap}_c[p])$ and $(w,\cdot)\in L_c$.  Hence $w$ lies in the known-friend set $N$ at the first firing of Announce new identity.  As $p'$ discovers each named friend (Discover is always enabled at $p'$; Section~\ref{sec:cva}), eventually $w\in\mathit{known}_{p'}$, and Announce sends $\mathsf{message}(p',w,\mathsf{new\_identity}(p,p',K_p,\mathit{IR}_{p'}))$.

\emph{Rename and rebind.}  By Half Friendship Mutuality at quiescence (Lemma~\ref{lemma:secure-gsg-cva-half-mutuality}, case~1, with no message in transit) $\mathrm{epoch}_w(p)=\mathrm{epoch}_p(w)$, so $p\in\mathrm{dom}(\mathrm{FMap}_w)$ with $\mathit{IR}_w[p]=(K_p,\sigma_p)$.  On delivery, $w$ executes Integrate new identity: the vouching set $V=K_p$ satisfies $V\subseteq K_p$ and $|V|=|K_p|\ge\lceil\sigma_p|K_p|\rceil$, so the guard passes; $w$ renames $p\to p'$ in $\mathrm{FMap}_w$ and $\mathrm{FoFMap}_w$, carrying the epoch $e_w:=\mathrm{epoch}_w(p)$, and sends $\mathsf{rebind}(e_w,\mathit{IR}_w,L_w,L_w^p)$ to $p'$.  By reliable delivery $p'$ processes it through Integrate rebind, setting $\mathrm{FMap}_{p'}[w]:=(e_w,\mathit{IR}_w)$, a full entry; since $e_w=\mathrm{epoch}_p(w)$ is odd, $w\in\widetilde F_{p'}$.

\emph{Outcome.}  As $w\in F_p$ was arbitrary, every friend of $p$ is renamed and rebinds, so $\widetilde F_{p'}=F_p$; and each such $r$ executed Integrate new identity, after which $p\notin\mathrm{dom}(\mathrm{FMap}_r)$, so no $\mathrm{FMap}_r$ retains $p$.  This is abstract Replace$(p,p')$ exactly.
\end{proof}

\ThmCvaResilient*
\begin{proof}
Befriend, Unfriend, and Replace protocols realise the corresponding abstract transitions at quiescence by the Correspondence (Theorem~\ref{theorem:secure-gsg-cva-refinement}).  A state-loss fault empties $\mathrm{FMap}_p$, so momentarily $\widetilde F_p=\emptyset$; on a friendship-preserving run Restore is complete (Section~\ref{section:secure-gsg-cva-restore}, Lemma~\ref{lemma:secure-gsg-cva-repair}), so at the next quiescence $\widetilde F_p$ is again the set of friendships $p$ recorded before the fault.  The fault and its Restore thus net to no change in $\widetilde F_p$, with the intermediate states non-quiescent, so observed at quiescence the round-trip is a stutter.  Hence the quiescent image of a live friendship-preserving run over $T''\cup F''$ uses only abstract Befriend, Unfriend, and Replace, and is a correct run of the abstract secure social graph; that is, $\sigma''$ is $F''$-resilient.
\end{proof}

\ThmIrrecoverable*
\begin{proof}
Every transaction that installs a name in $\mathrm{FMap}_x$ or $\mathrm{FoFMap}_x$ other than Accept offer, Resolve simultaneous offer and Integrate accept---which the excluded volition rules out---reads that name from an incoming message, and every message carries only names drawn from its sender's own maps.  By unrecoverability no agent other than $p$ and $w$ records the edge at $t$, and the fault erases it at the endpoint that lost its state or its identity.  Hence no message from $t$ onward carries $w$ to $p$'s side, or $p$'s successor to $w$: under state loss neither endpoint holds the other, and under identity loss $w$ is named in no $L_c$ and in no $L_p$, so Announce new identity never addresses $w$ and no rebind reveals $w$ to $p'$.  By induction on the run neither ever records the other.
\end{proof}

\ThmRepairAfterFaults*
\begin{proof}
A fault erases only the faulting agent's own state and the messages in its boxes, so the only staleness it introduces is at an agent that missed a message the faulting agent had sent or was to receive.  Faults are finite in number and none occurs after $t_F$, so finitely many such divergences arise, all by $t_F$.

\emph{Direct.}  Suppose $\mathrm{epoch}_p(q)>\mathrm{epoch}_q(p)$, the effect of a lost $\mathsf{accept}$ or $\mathsf{unfriend}$; a lost $\mathsf{friend\_request}$ writes no entry and leaves no divergence.  Re-broadcast is unguarded and enabled whenever $\mathrm{Rec}_p\ne\emptyset$, so $p$ re-broadcasts infinitely often, and by eventual delivery $q$ integrates a checkpoint carrying $(q,\mathrm{epoch}_p(q))$ in $L$; the direct-heal clause finds it above $\mathrm{epoch}_q(p)$ and adopts it (Definition~\ref{definition:secure-gsg-cva-rebroadcast}).

\emph{Observed.}  Suppose $\mathrm{epoch}_r(p,q)<\mathrm{epoch}_p(q)$, the effect of a lost $\mathsf{stream\_update}$.  Every periodic checkpoint of $p$ carries $(q,\mathrm{epoch}_p(q))$ in its snapshot, which the observer-heal clause absorbs; by the Observer Bound (Lemma~\ref{lemma:gsg-observer-bound}) it rises no higher.

A state loss is the case in which the faulting agent lost its whole state; on a friendship-preserving run its friends still record it, and their checkpoints rebuild it by the same two clauses (Section~\ref{section:secure-gsg-cva-restore}).  Each divergence is therefore repaired within finitely many re-broadcast rounds; let $t^*$ be the time by which all have been.  No divergence arises after $t_F$, so from $t^*$ every direct and observed epoch holds the value a loss-free run of the same volitions would hold, and the proofs of the three cited theorems apply verbatim with $t^*$ in place of the initial time.
\end{proof}